\newenvironment{proofof}[1]{\begin{proof}[{Proof of #1}]}{\end{proof}}
\newcommand{\dout}{d_{\sf out}}
\newcommand{\din}{d_{\sf in}}
\newcommand{\Nout}{N_{\sf out}}
\newcommand{\Nin}{N_{\sf in}}
\newcommand{\sfX}{\mathsf{X}}
\newcommand{\sfY}{\mathsf{Y}}
\newcommand{\INDEX}{\mathsf{INDEX}}
\newcommand{\wt}{\widetilde}
\newcommand{\eps}{\varepsilon}
\newcommand{\N}{\mathbb{N}}
\newcommand{\R}{\mathbb{R}}
\renewcommand{\tilde}{\wt}
\renewcommand{\bar}{\overline}
\newcommand{\poly}{\mathrm{poly}}
\newcommand{\dist}{\mathrm{dist}}
\newcommand{\ie}{\textit{i}.\textit{e}.\@\xspace}
\DeclareMathOperator{\supp}{supp}
\DeclareMathOperator{\nnz}{nnz}
\newcommand{\calD}{\mathcal{D}}
\newcommand{\calX}{\mathcal{X}}
\newcommand{\calY}{\mathcal{Y}}
\newcommand{\calZ}{\mathcal{Z}}
\newcommand{\RW}{\mathsf{RW}}
\newcommand{\visit}{\textsf{visit}}
\newcommand{\Vheavy}{V_{\sf heavy}}
\newcommand{\Vlight}{V_{\sf light}}
\newcommand{\WT}{\widetilde}
\newcommand{\WTVheavy}{\WT{V}_{\sf heavy}}
\newcommand{\WTVlight}{\WT{V}_{\sf light}}
\newcommand{\calS}{\mathcal{S}}
\newcommand{\calH}{\mathcal{H}}
\newcommand{\algo}{\mathbb{A}}
\newcommand{\Lsave}{L^{\sf save}}
\newcommand{\prealgo}{\mathbb{P}}
\newcommand{\sampalgo}{\mathbb{S}}
\newcommand{\prealgosub}{\mathbb{P}_{\sf one\text{-}pass}}
\newcommand{\sampalgosub}{\mathbb{S}_{\sf one\text{-}pass}}
\newcommand{\prealgomain}{\mathbb{P}_{\sf two\text{-}pass}}
\newcommand{\sampalgomain}{\mathbb{S}_{\sf two\text{-}pass}}
\newcommand{\algotwopass}{\algo_{\sf two\text{-}pass}}
\newcommand{\algosub}{\algo_{\sf one\text{-}pass}}
\newcommand{\ustart}{u_{\sf start}}
\newcommand{\Vfull}{V_{\sf full}}
\newcommand{\Vsamp}{V_{\sf samp}}
\newcommand{\vecLsave}{\vec{L}^{\sf save}}
\newcommand{\bvecLsave}{\vec{\mathsf{L}}^{\sf save}}
\newcommand{\calE}{\mathcal{E}}
\newcommand{\bWTVheavy}{\WT{\mathsf{V}}_{\sf heavy}}
\newcommand{\bWTVlight}{\WT{\mathsf{V}}_{\sf light}}
\newcommand{\eg}{\textit{e}.\textit{g}.\@\xspace}
\newtcolorbox{construction}[2][]
{
	breakable,
	colframe = gray!50,
	colback  = gray!10,
	coltitle = gray!10!black,
	before skip = 10pt,
	after skip = 10pt,
	title    = \textbf{#2},
	#1,
}
\newtcolorbox{graphview}[2][]
{
	breakable,
	colframe = black!30,
	colback  = black!0,
	coltitle = gray!10!black,
	before skip = 10pt,
	after skip = 10pt,
	title    = \textbf{#2},
	#1,
}
\DeclarePairedDelimiter{\card}{\lvert}{\rvert}
\DeclarePairedDelimiter{\paren}{\lparen}{\rparen}
\DeclarePairedDelimiter{\set}{\lbrace}{\rbrace}
\DeclarePairedDelimiter{\sq}{[}{]}
\DeclarePairedDelimiter{\norm}{\lVert}{\rVert}
\DeclarePairedDelimiter{\tvdbasic}{\lVert}{\rVert}
\newcommand{\@tvdstar}[2]{\tvdbasic*{#1 - #2}_{\mathrm{TV}}}
\newcommand{\@tvdnostar}[3][]{\tvdbasic[#1]{#2 - #3}_{\mathrm{TV}}}
\newcommand{\tvd}{\@ifstar\@tvdstar\@tvdnostar}
\newenvironment{reminder}[1]{\smallskip
	\noindent {\bf Reminder of #1.  }\em}{\smallskip}
\newtheorem{theorem}{Theorem}[section]
 \newtheorem*{theorem*}{Theorem}
 \newaliascnt{definition}{theorem}
 \newtheorem{definition}[definition]{Definition}
 \newtheorem*{definition*}{Definition}
 \newaliascnt{lemma}{theorem}
 \newtheorem{lemma}[lemma]{Lemma}
 \newtheorem*{lemma*}{Lemma}
 \newaliascnt{claim}{theorem}
 \newtheorem{claim}[claim]{Claim}
 \newtheorem*{claim*}{Claim}
 \newaliascnt{fact}{theorem}
 \newtheorem{fact}[fact]{Fact}
 \newtheorem*{fact*}{Fact}
 \newaliascnt{observation}{theorem}
 \newtheorem{observation}[observation]{Observation}
 \newtheorem*{observation*}{Observation}
 \newaliascnt{conjecture}{theorem}
 \newtheorem*{conjecture*}{Conjecture}
 \newaliascnt{corollary}{theorem}
 \newtheorem{corollary}[corollary]{Corollary}
 \newtheorem*{corollary*}{Corollary}
 \newaliascnt{remark}{theorem}
 \newtheorem{remark}[remark]{Remark}
 \newtheorem*{remark*}{Remark}
 \newaliascnt{proposition}{theorem}
 \newtheorem*{proposition*}{Proposition}
\patchcmd{\ALG@step}{\addtocounter{ALG@line}{1}}{\refstepcounter{ALG@line}}{}{}
\newcommand{\ALG@lineautorefname}{Line}
\def\dist{\mathcal{D}}
\def\E{\mathop{{}\mathbb{E}}}
\def\H{\mathbb{H}}
\def\h{\mathsf{h}}
\def\I{\mathbb{I}}
\def\rv{\mathsf}
\def\B{\mathcal{B}}
\def\A{\mathcal{A}}
\def\e{\mathrm{e}}
\title{Near-Optimal Two-Pass Streaming Algorithm \\ for Sampling Random Walks over Directed Graphs}
\author{
Lijie Chen\thanks{MIT. \texttt{lijieche@mit.edu}}
\and
Gillat Kol\thanks{Princeton University. \texttt{gillat.kol@gmail.com}}
\and
Dmitry Paramonov\thanks{Princeton University. \texttt{dp20@princeton.edu}}
\and
Raghuvansh R. Saxena\thanks{Princeton University. \texttt{rrsaxena@princeton.edu}}
\and
Zhao Song\thanks{Institute for Advanced Study. \texttt{zhaos@ias.edu}}
\and
Huacheng Yu\thanks{Princeton University. \texttt{yuhch123@gmail.com}}
}
\begin{document}

 

 \begin{titlepage}
\maketitle

\begin{abstract}
For a directed graph $G$ with $n$ vertices and a start vertex $\ustart$, we wish to (approximately) sample an $L$-step random walk over $G$ starting from $\ustart$ with minimum space using an algorithm that only makes few passes over the edges of the graph. This problem found many applications, for instance, in approximating the PageRank of a webpage. If only a single pass is allowed, the space complexity of this problem was shown to be $\tilde{\Theta}(n \cdot L)$. Prior to our work, a better space complexity was only known with $\tilde{O}(\sqrt{L})$ passes. 
 
We settle the space complexity of this random walk simulation problem for {\em two-pass} streaming algorithms, showing that it is $\tilde{\Theta}(n \cdot \sqrt{L})$, by giving almost matching upper and lower bounds. Our lower bound argument extends to every constant number of passes $p$, and shows that any $p$-pass algorithm for this problem uses $\tilde{\Omega}(n \cdot L^{1/p})$ space. 
In addition, we show a similar $\tilde{\Theta}(n \cdot \sqrt{L})$ bound on the space complexity of any algorithm (with {\em any} number of passes) for the related problem of sampling an $L$-step random walk from {\em every} vertex in the graph.
\end{abstract}
 \thispagestyle{empty}
 \end{titlepage}




\section{Introduction}

\subsection{Background and Motivation}

{\em \bf Graph streaming algorithms.} {\em Graph streaming algorithms} have been the focus of extensive study over the last two decades, mainly due to the important practical motivation in analyzing potentially {\em huge structured data} representing the relationships between a set of entities (\eg, the link graph between webpages and the friendship graph in a social network).  In the graph streaming setting, an algorithm gets access to a sequence of graph edges given in an arbitrary order and it can read them one-by-one in the order in which they appear in the sequence. The goal here is to design algorithms solving important graph problems that only make one or few passes through the edge sequence, while using as little memory as possible.

Much of the streaming literature was devoted to the study of {\em one-pass} algorithms and an~$\Omega(n^2)$ space lower bound for such algorithms was shown for many fundamental graph problems. A partial list includes: maximum matching and minimum vertex cover \cite{fkms+04,gkk12}, $s$-$t$ reachability and topological sorting \cite{cgmv20,fkms+04,hrr98},  shortest path and diameter \cite{fkms+04,fkms+09},  maximum and (global or $s$-$t$) minimum cut \cite{z11}, maximal independent set \cite{ack19_soda,cdk19}, and dominating set \cite{akl16,er14}.

Recently, the {\em multi-pass} streaming setting received quite a bit of attention. For some graph problems, allowing a few passes instead of a single pass can reduce the memory consumption of a streaming algorithm dramatically. In fact, even a single additional pass over the input can already greatly enhance the capability of the algorithms. For instance, minimum cut and $s$-$t$ minimum cut in undirected graphs can be solved in two passes with only $\tilde{O}(n)$ and $O(n^{5/3})$ space, respectively \cite{rsw18} (as mentioned above, any one-pass algorithm for these problems must use $\Omega(n^2)$ space). Additional multi-pass algorithms include an $O(1)$-pass algorithm for approximate matching \cite{gkms19,gkk12,k13,m05}, an $O(\log \log n)$-pass algorithm for maximal independent set \cite{ack19_soda,cdk19,ggkm+18}, and $O(\log n)$-pass algorithms for approximate dominating set \cite{akl16,cw16,himv16} and weighted minimum cut \cite{mn20}. 

{\em \bf Simulating random walks on graphs.} Simulating random walks on graphs is a well-studied algorithmic problem with may applications in different areas of computer science, such as connectivity testing \cite{r08}, clustering \cite{acl07,ap09,cop03,st13}, sampling \cite{jvv86}, generating random spanning tree \cite{s18}, and approximate counting \cite{js89}. Since most applications of random-walk simulation are concerned with huge networks that come from practice, it is of practical interest to design low-space graph streaming algorithms with few passes for this problem.

In an influential paper by Das Sarma, Gollapudi and Panigrahy~\cite{SarmaGP11}, an $\tilde{O}(\sqrt{L})$-pass and $\tilde{O}(n)$ space algorithm for simulating $L$-step random walks on directed graphs was established. (Streaming algorithms with almost linear space complexity, like this one, are often referred to as {\em semi-streaming} algorithms). Using this algorithm together with some additional ideas, \cite{SarmaGP11} obtained space-efficient algorithms for estimating {\em PageRank} on graph streams. Recall that the PageRank of a webpage corresponds to the probability that a person that randomly clicks on web links arrives at this particular page\footnote{Given a web-graph $G = (V,E)$ representing the webpages and links between them, the PageRank of the vertices satisfy $\mathsf{PageRank}(u) = \sum_{(v,u) \in E} \mathsf{PageRank}(v)/d(v)$, simultaneously for all $u$, where $d(\cdot)$ denotes the out-degree, \cite{bp98}.}. However, scanning the sequence of edges $\tilde{O}(\sqrt{L})$ times may be time-inefficient in many realistic settings.

In the one-pass streaming setting, a folklore algorithm with $\tilde{O}(n \cdot L)$ space complexity for simulating $L$-step random walks is known~\cite{SarmaGP11} (see \autoref{sec:tech-algo} for a description of this algorithm), and it is proved to be optimal~\cite{Jin19}. We mention that the work of~\cite{Jin19} also considers random walks on \emph{undirected graphs}, and shows that $\tilde{\Theta}(n \cdot \sqrt{L})$ space is both necessary and sufficient for simulating $L$-step random walks on undirected graphs with $n$ vertices in one pass.

Both of these known algorithms for general directed graphs have their advantages and disadvantage (either requiring many passes or more space). A natural question is whether one can interpolate between these two results and obtain an algorithm with pass complexity much smaller than $\sqrt{L}$, yet with a space complexity much smaller than $n \cdot L$. Prior to our work, it was not even known if an $o(\sqrt{L})$-pass streaming algorithm with $n \cdot L^{0.99}$ space is possible.

\subsection{Our Results}

We answer the above question in the affirmative by giving a two-pass streaming algorithm with $\WT{O}(n \cdot \sqrt{L})$ space for sampling a random walk of length $L$ on a directed graph with $n$ vertices. We complement this result by an almost matching $\WT{\Omega}(n \cdot \sqrt{L})$ lower bound on the space complexity of every two-pass streaming algorithm for this problem. In fact, our two-pass lower bound generalizes to an $\WT{\Omega}(n \cdot L^{1/p})$ lower bound on the space consumption of any $p$-pass algorithm, for a constant $p$.

\subsubsection{Two-Pass Algorithm for Random Walk Sampling} \label{sec:intro-ub}

For a directed graph $G = (V,E)$, a vertex $\ustart \in V$ and a non-negative integer $L$, we use $\RW_L^G(\ustart)$ to denote the distribution of $L$-step random walks $(v_0,\ldots,v_L)$ in $G$ starting from $v_0=\ustart$ (see~\autoref{sec:prelim:graphs} for formal definitions). For a distribution $\calD$ over a finite domain $\Omega$, we say that a randomized algorithm {\em samples} from $\calD$ if, over its internal randomness, it outputs an element $\omega \in \Omega$ distributed according to $\calD$. We give a space-efficient streaming algorithm for (approximate) sampling from~$\RW_L^G(\ustart)$ with small error:

\begin{theorem}[Two-pass algorithm]\label{theo:algo-main}
	There exists a streaming algorithm $\algotwopass$ that given an $n$-vertex {\em directed} graph $G = (V,E)$, a starting vertex $\ustart \in V$, a non-negative integer $L$ indicating the number of steps to be taken, and an error parameter $\delta \in (0,1/n)$, satisfies the following conditions:
	\begin{enumerate}
		\item $\algotwopass$ uses at most $\tilde{O}(n \cdot \sqrt{L} \cdot \log \delta^{-1})$ space\footnote{The $\tilde{O}$ hides logarithmic factors in $n$. We may assume without loss of generality that $L \leq n^2$, as otherwise $n \cdot \sqrt{L} > n^2$ and that algorithm can store the entire input graph.} and makes two passes over the input graph~$G$.
		\item $\algotwopass$ samples from some distribution $\calD$ over $V^{L+1}$ satisfying $\tvd{\calD}{\RW^G_L(\ustart)} \le \delta$. 
	\end{enumerate}
\end{theorem}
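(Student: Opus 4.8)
The plan is to reduce the task of sampling a length-$L$ walk to sampling roughly $\sqrt{L}$ shorter walks, each of length about $\sqrt{L}$, and to stitch these short walks together. Concretely, set $\ell \approx \sqrt{L}$ and think of the walk as a concatenation of $L/\ell \approx \sqrt{L}$ "blocks," each of length $\ell$. If we had, for every vertex $v$, a freshly sampled independent length-$\ell$ random walk $W_v = \RW_\ell^G(v)$, then we could assemble a full length-$L$ walk greedily: start at $\ustart$, follow $W_{\ustart}$ for $\ell$ steps to reach some vertex $v_1$, then follow $W_{v_1}$ for $\ell$ steps to reach $v_2$, and so on. The subtlety is that along a length-$L$ walk we may revisit a vertex, and reusing the same stored short walk $W_v$ on a second visit would introduce correlations and bias the output distribution. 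The fix is the standard one: store enough independent short-walk samples from each vertex so that, with probability $\ge 1-\delta$, we never exhaust the supply. Since a length-$L$ walk makes only $L/\ell \approx \sqrt{L}$ block-transitions total, storing $O(\sqrt{L}\cdot \log\delta^{-1})$ independent length-$\ell$ walks per vertex suffices by a union bound; this accounts for $n \cdot \sqrt{L}\cdot \ell \cdot \log\delta^{-1}$... which is too much. So instead we store only $O(\log(n/\delta))$ short walks per vertex and re-randomize: when the stored walks from some vertex $v$ run out, we have used up our budget and abort — but we will arrange the parameters and the two passes so this essentially never happens.

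Here is where the two passes come in, and this is the key idea. In the \textbf{first pass}, we do not yet know which vertices will actually be visited as block-endpoints along the real walk, but we can afford to run, in parallel and in a streaming fashion, the simulation of one length-$\ell$ random walk \emph{from every vertex} $v\in V$: maintaining a single current position per vertex costs $O(n)$ words, and advancing all $n$ walks by one step requires reading each edge once — but a single pass lets us advance only by following whatever edges appear, so to take $\ell$ steps we'd need $\ell$ passes. To avoid this, we instead use the first pass to build, for every vertex, a compressed "edge sample" structure: for each vertex $v$ we store $O(\sqrt{L}\cdot\log\delta^{-1})$ independent uniformly random out-neighbors of $v$ (via reservoir sampling, $O(n\sqrt{L}\log\delta^{-1})$ space total). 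Given these samples, a single step of a random walk from any vertex can be simulated \emph{without} looking at the stream again. But one independent out-neighbor sample per step is consumed, so $O(\sqrt L \log \delta^{-1})$ samples per vertex support only $O(\sqrt L \log\delta^{-1})$ total walk-steps across \emph{all} simulations from $v$ — and a length-$\ell$ walk from $v$ already uses $\ell \approx \sqrt{L}$ of them. The resolution: the number of \emph{times} any fixed vertex $v$ is \emph{entered} during the length-$L$ walk is, in expectation and with high probability, small — but not small enough in the worst case. This is the crux, and it is handled by using the \textbf{second pass} to "refill" on demand: in the second pass we know exactly the sequence of block-endpoints $\ustart = v_0, v_1, v_2, \ldots$ produced so far (computed from the first-pass samples), and for each we can replay the stream to extend or resample the short walk from $v_i$ using fresh randomness drawn against the actual edge set, at a cost of $O(n)$ space per maintained walk.

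The main obstacle — and the heart of the argument — is controlling the total randomness budget: bounding, with probability $1-\delta$, the number of distinct short-walk samples consumed at each vertex over the whole construction, and proving that the resulting distribution on full walks is within total variation $\delta$ of $\RW_L^G(\ustart)$ despite the adaptive reuse/refill logic. I would formalize this with a coupling argument: define an idealized process that has an inexhaustible independent supply of length-$\ell$ walks from each vertex and trivially outputs an exact sample from $\RW_L^G(\ustart)$; then show the real two-pass algorithm agrees with the idealized process unless a "depletion" event occurs; and finally bound $\Pr[\text{depletion}] \le \delta$ by a union bound over the $\le L/\ell \cdot \ell = L$ walk-steps and a concentration bound (Chernoff) on how often each vertex is revisited, using $\log\delta^{-1}$ extra stored samples per vertex per "level" and $O(\log n)$ levels of a geometric refresh schedule. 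The space bookkeeping — $\tilde O(n)$ per level, $O(\log n)$ levels for the first-pass sampling, times $\sqrt L$ for the per-vertex sample count, times $\log\delta^{-1}$ for the failure probability — then yields the claimed $\tilde O(n\cdot\sqrt L\cdot \log\delta^{-1})$ bound. The second pass is what lets us pay only $\tilde O(n\sqrt L)$ rather than $\tilde O(nL)$: it lets the refilling be driven by the \emph{actual} trajectory rather than having to pre-provision for all $n^{?}$ possible trajectories in advance.
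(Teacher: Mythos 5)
There is a genuine gap. Your plan reduces to the ``naive attempt'' the paper itself rules out: you store only $O(\sqrt{L}\log\delta^{-1})$ random out-neighbors per vertex from the first pass, and you acknowledge that a single vertex can be revisited $\Theta(L)$ times (e.g.\ the center of a star graph, where every other step of the walk is at the center). Your proposed cure, a second pass that ``refills on demand'' along the trajectory, does not work as described. It is circular: the block-endpoints $v_1,v_2,\dots$ that are supposed to drive the refilling can only be computed by simulating the walk from the first-pass samples, and that simulation stalls precisely at the depleted vertex after only $\widetilde{O}(\sqrt{L})$ steps, so the later endpoints are unknown when the second pass runs. And even if you somehow knew the full trajectory, one additional pass with per-vertex reservoir sampling cannot supply $\Theta(L)$ \emph{fresh independent} neighbor samples at a vertex unless you decide in advance to allocate that many samples to it (which again requires knowing the trajectory) or to record \emph{all} of its out-edges --- and the latter is exactly the step whose affordability needs an argument you do not have.

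What is missing is the paper's two-part mechanism. First, a heavy/light classification computable in the first pass without knowing the start vertex: call $v$ heavy if a $\sqrt{L}$-step walk from $v$ returns to $v$ with constant probability; this is estimated by running $O(\log\delta^{-1})$ copies of the folklore one-pass sampler and simulating short walks from every vertex. Second, and crucially, a structural lemma (Lemma~\ref{lemma:heavy-total-out-degree-bounded}): the total out-degree of all heavy vertices is $O(n\sqrt{L})$, proved by double-counting pairs $(u,v)$ with $\visit_{[0,\sqrt{L}]}(u,v)\ge 1/6$. This is what makes it affordable to store \emph{all} out-neighbors of heavy vertices in the second pass, while an Azuma--Hoeffding argument (Lemma~\ref{lemma:light-is-not-visited-much}) shows each light vertex is visited at most $\widetilde{O}(\sqrt{L})$ times, so its sampled list never runs out; the coupling-to-an-idealized-process step you sketch then closes the TV bound. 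Without the heavy-vertex structural lemma (or some substitute), your space accounting has no justification, and the star-graph obstruction you yourself flag as ``the crux'' remains unresolved.
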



Our algorithm can also be generalized to the \emph{turnstile} model, paying a $\poly\log n$ factor in the space usage. See \autoref{sec:turnstile}.

Observe that our algorithm $\algotwopass$  allows for a considerable saving in space compared to the folklore single-pass algorithm ($\WT{O}(n \cdot \sqrt{L})$ vs $\WT{O}(n \cdot L)$) and considerable saving in the number of passes compared to~\cite{SarmaGP11} ($2$ vs $\tilde{O}(\sqrt{L}$)), at least if we allow some small error $\delta$.

We mention that $\algotwopass$ can also be used to sample a random path from {\em every} vertex\footnote{Note, however, that the random walks from different vertices in the graph may be correlated.} of $G$ with the same storage cost of $\tilde{O}(n \cdot \sqrt{L} \cdot \log \delta^{-1})$ and two passes\footnote{We count towards the space complexity only the space on the work tape used by the algorithm and do not count space on the output tape (otherwise an $\Omega(n \cdot L)$ lower bound is trivial). }. This is because $\algotwopass$ satisfies the useful property of \emph{obliviousness to the starting vertex} $\ustart$, meaning that it scans the input graph before the start vertex is revealed. More formally, we say that an algorithm $\algo$ is oblivious to the starting vertex if it first runs a {\em preprocessing} algorithm $\prealgo$ and then a {\em sampling} algorithm $\sampalgo$; the algorithm $\prealgo$ reads the input graph stream \emph{without} knowing the starting vertex $\ustart$ (if $\algo$ is a $p$-pass streaming algorithm, $\prealgo$ makes $p$ passes over the input graph stream), and outputs a string;~$\sampalgo$ takes both the string outputted by $\prealgo$ and a starting vertex $\ustart$ as an input, and outputs a walk on the input graph $G$. 


\newpage
\subsubsection{Lower Bounds}  \label{sec:intro-lb}

We prove the following lower bound:

\newcommand{\thmstmtLB}{Fix a constant $\beta\in(0, 1]$ and an integer $p\geq 1$. Let $n \geq 1$ be a sufficiently large integer and let $L=\lceil n^{\beta}\rceil$.
Any randomized $p$-pass streaming algorithm that, given an $n$-vertex {\em directed} graph $G = (V,E)$ and a starting vertex $\ustart \in V$, samples from a distribution~$\calD$ such that $\tvd{\calD}{\RW^G_L(\ustart)} \le 1 - \frac{1}{ \log^{10} n }$
requires  $\WT{\Omega}(n \cdot L^{1/p})$ space.}
\begin{theorem}[Multi-pass lower bound]\label{theo:lowb-main-multi-pass}
\thmstmtLB
\end{theorem}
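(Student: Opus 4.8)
The plan is to prove the lower bound via a multi-round communication complexity reduction, building on the standard template for streaming lower bounds but tailored to the structure of random-walk sampling on directed graphs. The central idea is to construct a ``layered pointer-chasing''-style hard instance: a directed graph whose vertex set is partitioned into $p+1$ layers $V_0, V_1, \dots, V_p$, each of size roughly $n/(p+1)$, with all edges going from layer $i$ to layer $i+1$. Inside each layer-to-layer bipartite block we plant a structure of $L^{1/p}$ edge-disjoint (or induced) matchings — a Ruzsa-Szemer\'edi-type gadget, for which the paper already sets up notation ($\mathsf{RS}$, $\GRS$, etc.). The start vertex $\ustart$ sits in $V_0$, and an $L$-step walk is forced to traverse the layers; because each block contributes a branching factor of $\approx L^{1/p}$ and the walk ``bounces'' $\approx L / p$ times within the layered structure (revisiting each block about $L^{1/p}$ times via a return mechanism, e.g. a single back-edge per layer that resets the walk), the total walk distribution encodes which of the $L^{1/p}$ matchings is ``active'' in each of the $p$ blocks. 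Recovering even a $1/\log^{10} n$ fraction of the correct walk distribution lets a referee decode a hard $\Theta(p)$-party pointer-chasing instance.

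The key steps, in order, are as follows. \textbf{Step 1: the hard distribution.} Define a product distribution over instances: independently for each layer $i$, sample a ``hidden index'' $j_i \in [L^{1/p}]$ and an RS-graph $\GRS$ on block $i$; the edges actually present in block $i$ are the union of all matchings, but only the $j_i$-th matching is relevant for the walk (the others act as noise/decoys that the algorithm cannot distinguish without storing them). This is exactly the kind of $\SetHidingLong$ / $\IndexHidingLong$ primitive the preamble anticipates. \textbf{Step 2: reduction from communication.} Assign block $i$ (or a designated portion of the stream) to player $i$ in a one-way or round-robin communication game; a $p$-pass, $s$-space streaming algorithm yields a protocol with $O(p \cdot s \cdot (\text{number of players}))$ communication, and we will show that solving the sampling task lets the players solve an instance of iterated INDEX / pointer-chasing whose communication lower bound is $\tilde\Omega(n \cdot L^{1/p})$ — this is where the factor $L^{1/p}$ enters, since each player must essentially transmit the identity of one matching out of $L^{1/p}$ candidates over a universe of size $\Theta(n/p)$. \textbf{Step 3: from sampling to decoding.} Show that a sample from any $\calD$ with $\tvd{\calD}{\RW_L^G(\ustart)} \le 1 - 1/\log^{10} n$ reveals, with probability $\ge 1/\log^{10} n$, a walk consistent with the true graph, and that such a walk pins down the hidden indices $(j_1,\dots,j_p)$ (or at least one of them with non-negligible advantage, which after amplification / a direct-sum argument over many independent blocks suffices). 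Standard boosting — running $O(\log^{10} n)$ independent copies, or embedding $\mathrm{polylog}(n)$ independent sub-instances and using that the algorithm must succeed on an average one — upgrades the weak $1 - 1/\log^{10} n$ guarantee into something usable. \textbf{Step 4: information-theoretic core.} Prove the communication lower bound for the layered index/pointer-chasing game with the RS-gadget, using a round-elimination or information-complexity argument: conditioned on a transcript that is too short, the distribution of the walk in the streaming algorithm's output is statistically close to one that ignores block $i$, contradicting the decoding guarantee.

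The main obstacle I expect is \textbf{Step 3 combined with the geometry of the walk}: ensuring that an $L$-step random walk genuinely ``feels'' all $p$ hidden indices with enough probability, and that a corrupted sample (allowed to be wrong $1 - 1/\log^{10} n$ of the time) still carries $\tilde\Omega(n L^{1/p})$ bits of information about the instance. Getting the walk to revisit each block $\approx L^{1/p}$ times — so that the branching compounds multiplicatively to depth $L$ across $p$ blocks rather than just additively — requires a careful graph gadget (short cycles or reset edges) that does not accidentally make the indices recoverable with a short walk or collapse the total variation distance. A secondary subtlety is the pass-count dependence: the reduction must route the $p$ blocks so that a $p$-pass algorithm can be simulated with only $O(p)$ ``effective rounds'' of communication per block, so that the per-pass, per-player communication is forced to be $\tilde\Omega(n L^{1/p} / p) = \tilde\Omega(n L^{1/p})$ for constant $p$; this is where the constant-$p$ hypothesis is genuinely used and where one must be careful that the $\mathrm{polylog}$ overheads of boosting do not swamp the bound.
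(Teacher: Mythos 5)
Your high-level template (layered graph, reduction from sampling to learning the deep neighborhood of $\ustart$, then a multi-player communication/information argument) matches the paper's strategy, but two of your central design choices have genuine gaps that would make the argument fail as proposed.

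First, the Ruzsa--Szemer\'edi / hidden-matching gadget with a ``hidden index'' $j_i$ per block is not coherent for random-walk sampling, and in any case it does not carry enough information. If the union of all $L^{1/p}$ matchings is actually present as edges, then a random walk chooses uniformly among \emph{all} out-edges, so no single matching is ``active'' and the decoy edges are traversed just as often as the planted ones; if instead only one matching per block is present, then decoding the indices $(j_1,\dotsc,j_p)$ recovers only about $p\cdot\log(L^{1/p})=\log L$ bits, and your own accounting in Step 2 (``transmit the identity of one matching out of $L^{1/p}$ candidates'') yields at most $\tilde O(n)$ bits per player, nowhere near the $\tilde\Omega(n\cdot L^{1/p})$ you need. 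The paper avoids this entirely: its hard distribution $\dist_{n,p}$ has no hidden indices; every intermediate vertex gets a uniformly random set of $\Delta\approx L^{1/p}$ out-neighbors, the hardness target is the \emph{entire} set of length-$(p+1)$ paths $P^{p+1}(s)$ (whose entropy is comparable to that of a full edge layer, $\approx n\Delta\cdot\mathrm{polylog}$), and the reduction (Lemma~\autoref{lemma:reduction}) shows by a coupon-collector argument that an $L$-step walk, which returns to $s$ every $p+2$ steps, covers all $\approx p\Delta^p\ll L$ edges of that neighborhood, so even a sample that is correct only with probability $1/\log^{10}n$ reveals it. (The $\RS$/$\SetHidingLong$ macros in the preamble are unused for this proof; they do not indicate the construction.)

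Second, your pass/depth accounting is missing the key trick. With $p+1$ vertex layers, i.e.\ $p$ edge blocks, of branching $L^{1/p}$, a $p$-pass algorithm can simply do one BFS level per pass from the known start vertex, storing at most $O(L)\le O(n)$ vertices plus $\tilde O(n)$ per pass --- so without an additional mechanism your instance admits a cheap $p$-pass algorithm and no lower bound of the claimed form can hold. The paper squeezes out the extra pass by giving $s$ a \emph{single} random out-edge into $V_2$, placing that edge last in the stream (the blocks are streamed in reverse order $E_{p+2},\dotsc,E_1$), so that during the first pass the algorithm does not know which depth-$1$ vertex matters; the remaining $p-1$ passes then face a Guruswami--Onak-style chase over $p$ dense blocks, and the entropy induction (Lemma~\autoref{lemma:induction}, with the conditional-independence and bad-set lemmas and a final min-entropy step) shows that any protocol with communication $\ll n\Delta$ leaves $\rv{P}^{p+1}(s)$ essentially unpredictable. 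You flag the pass-count issue as a ``secondary subtlety,'' but it is the crux: without the single-edge-revealed-last device (or an equivalent), and without a target whose entropy is genuinely $\tilde\Omega(n\cdot L^{1/p})$, neither the reduction nor the information-theoretic core of your plan goes through.
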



Plugging in $p=2$ in \autoref{theo:lowb-main-multi-pass}, implies that our two-pass algorithm  from \autoref{theo:algo-main} is essentially optimal. Also, with $p=1$, the theorem reproduces the one-pass lower bound by~\cite{Jin19}. In addition, \autoref{theo:lowb-main-multi-pass} rules out the possibility of a semi-streaming algorithm with any constant number of passes. 

Recall from \autoref{sec:intro-ub}, that our two-pass algorithm $\algotwopass$ utilizes $\WT{O}(n \cdot \sqrt{L})$ space and is oblivious to the starting vertex. Interestingly, we are able to show that {\em any} oblivious algorithm for random walk sampling (with any number of passes) requires $\WT{\Omega}(n \cdot \sqrt{L})$ space. Thus, any algorithm for random walk sampling with significantly less space than ours, has to be inherently different and have its storage depend on the starting vertex. Our lower bound for oblivious algorithms also implies that $\algotwopass$ gives an almost optimal algorithm for sampling a pass from every start vertex, even if any number of passes are allowed.

\begin{theorem}[Lower bound for oblivious algorithms]\label{theo:lowb-main-oblivious-algo}
	Let $n \geq 1$ be a sufficiently large integer and let $L$ denote an integer satisfying that $L \in [\log^{40} n, n]$.  Any randomized algorithm that is {\em oblivious to the start vertex} and given an $n$-vertex {\em directed} graph $G = (V,E)$ and a starting vertex $\ustart \in V$, samples from a distribution~$\calD$ such that $\tvd{\calD}{\RW^G_L(\ustart)} \le 1 - \frac{1}{ \log^{10} n }$ requires  $\WT{\Omega}(n \cdot \sqrt{L})$ space\footnote{In fact, we show that \autoref{theo:lowb-main-oblivious-algo} holds even if the preprocessing algorithm $\prealgo$ and the sampling algorithm $\sampalgo$ are allowed to use an arbitrarily large amount of memory, as long as $\prealgo$ passes a string of length at most (roughly) $n \cdot \sqrt{L}$ to $\sampalgo$.}.
\end{theorem}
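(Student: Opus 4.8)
The plan is to recast the statement as a one-way compression bound, as the displayed footnote already indicates: it suffices to show that whenever a (computationally unbounded) preprocessing map $\prealgo$ reads the stream for $G$ \emph{without} seeing $\ustart$ and outputs a string $s$, and a (computationally unbounded) sampling map $\sampalgo$ turns $(s,\ustart)$ into a walk whose law $\calD_{\ustart}$ satisfies $\tvd{\calD_{\ustart}}{\RW^G_L(\ustart)}\le 1-\log^{-10}n$, then $s$ must be long. I would prove this by designing a distribution over instances $(G,\ustart)$ in which $G$ carries a block $\bX$ of $\WT{\Theta}(n\sqrt L)$ uniformly random hidden bits that $s$ is forced to approximately preserve, and then lower bounding $H(s)$ by $I(s;\bX)$.

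For the hard instance I would build the $n$-vertex directed graph $G=G(\bX)$ from \rs graphs, in a two-level arrangement roughly matching the $p=2$ parameters behind \autoref{theo:lowb-main-multi-pass}, with the following properties. (i) There is a set $U$ of designated start vertices with $\abs{U}=\WT{\Omega}(n)$. (ii) The bits of $\bX$ split into mutually independent \emph{private chunks} $\set{\bX_u}_{u\in U}$, each of $\Theta(\sqrt L)$ bits. (iii) The vertex labels are chosen so that every walk in $\supp(\RW^G_L(u))$ spells out $\bX_u$: from $u\in U$, the length-$L$ walk is forced to run through, one after another, the $\Theta(\sqrt L)$ gadgets that encode the bits of $\bX_u$, spending $\approx\sqrt L$ steps per gadget, and the label of the vertex the walk reaches at each gadget boundary encodes the corresponding bit of $\bX_u$. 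The induced-matching property of the \rs gadgets is what lets all $\WT{\Omega}(n)$ length-$L$ walk-tracks share a common vertex set of size $n$ while guaranteeing that the track from $u$ never leaks into, nor is perturbed by, the gadgets of any $u'\neq u$; hence the chunks are genuinely independent and the walk from $u$ exposes $\bX_u$ and nothing else about $\bX$.

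Given such a construction the extraction step is routine. Fix the optimal pair $(\prealgo,\sampalgo)$, put $s=\prealgo(G(\bX))$, and for each $u\in U$ let $W_u:=\sampalgo(s,u)$ and let $\hat{\bX}_u$ be the chunk read off from the vertex sequence of $W_u$ via (iii). Since every walk in $\supp(\RW^G_L(u))$ decodes to $\bX_u$ and the law of $W_u$ is within total variation distance $1-\log^{-10}n$ of $\RW^G_L(u)$, we get $\Pr[\hat{\bX}_u=\bX_u]\ge\log^{-10}n$. As $\abs{\bX_u}=\Theta(\sqrt L)$ and $L\ge\log^{40}n$, this probability is far larger than the $2^{-\Theta(\sqrt L)}$ chance of a blind guess, so Fano's inequality yields $I(W_u;\bX_u)\ge\log^{-10}n\cdot\abs{\bX_u}-O(1)=\Omega(\sqrt L/\log^{10}n)$. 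Writing $T$ for the length of the longest string $\prealgo$ ever emits, we have $T\ge H(s)-1$, so it suffices to bound $H(s)$ from below; because the chunks are independent and each $W_u$ depends on $\bX$ only through $s$, the chain rule, subadditivity of conditional entropy, and the data-processing inequality give
\[
H(s)\ \ge\ I(s;\bX)\ \ge\ \sum_{u\in U} I(s;\bX_u)\ \ge\ \sum_{u\in U} I(W_u;\bX_u)\ \gtrsim\ \abs{U}\cdot\frac{\sqrt L}{\log^{10}n}\ =\ \WT{\Omega}(n\sqrt L),
\]
which yields $T=\WT{\Omega}(n\sqrt L)$, the claimed space lower bound.

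The main obstacle is the construction, \ie arranging (i)--(iii) together. Encoding $\WT{\Theta}(n\sqrt L)$ bits of mutually independent information in $n$ vertices and keeping all of it recoverable through random walks forces the $\WT{\Omega}(n)$ walk-tracks to overlap heavily on the vertex set; the delicate points are (a) keeping the length-$L$ walk from each $u\in U$ on its track, so that it neither gets absorbed nor wanders off and every walk reads its full chunk; (b) using the \rs isolation strongly enough that the chunks are genuinely independent and the walk from $u$ is oblivious to the rest of $\bX$; and (c) achieving (a) and (b) with only a two-level structure, so that the \rs parameters stay good enough to keep $\abs{U}=\WT{\Omega}(n)$ rather than $n^{1-o(1)}$. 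The stated range $L\in[\log^{40}n,n]$ should emerge precisely here: each track visits $\Theta(L)$ gadget-vertices that must fit within the $n$ vertices (forcing $L\le n$), while each chunk must contain $\poly(\log n)$ bits so that the $\log^{-10}n$ correctness loss is absorbed into $\WT{\Omega}$ (forcing $L$ to be at least a fixed power of $\log n$). The information-theoretic wrapper above carries no real difficulty once the construction is in hand.
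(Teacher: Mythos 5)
Your information-theoretic wrapper (Fano, data processing, superadditivity over independent chunks) is sound and is essentially the same argument the paper runs inside its one-way $\INDEX_{m,\ell}$ lower bound, but the entire content of the theorem lives in the construction (i)--(iii) that you explicitly leave open, and the parameterization you commit to there is the wrong one. You want $\abs{U}=\WT{\Omega}(n)$ start vertices, each owning a \emph{private}, mutually independent chunk of $\Theta(\sqrt L)$ bits that the $L$-step walk from $u$ reveals while "staying on its track." If the private regions are vertex-disjoint, a region encoding $\Theta(\sqrt L)$ bits needs $\Omega(L^{1/4})$ vertices, so you can have at most $n/L^{1/4}$ of them and the total hidden information is $O(n\,L^{1/4})\ll n\sqrt L$ once $L\ge\log^{40}n$; so disjointness forces the opposite split (few start vertices, large chunks). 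If instead the tracks share vertices -- and with $\WT{\Omega}(n)$ tracks each visiting $\Theta(L)$ vertices among only $n$ vertices, each vertex lies on $\WT{\Omega}(L)$ tracks -- then the isolation you attribute to the Ruzsa--Szemer\'edi induced-matching property does not hold for \emph{random} walks: a Markovian walk at a shared vertex exits along a uniformly random out-edge with no memory of which track it is on, so either the shared vertices have out-degree $1$ and the tracks merge (destroying the claimed independence and privacy of the chunks), or the walk leaks off its track with constant probability per shared step and cannot reliably read $\Theta(\sqrt L)$ designated bits in $L$ steps. Induced matchings constrain which edges exist; they do not steer a memoryless walk. So (a)--(c) are not technicalities to be absorbed into $\WT{\Omega}$: as parameterized, the gadget you need appears not to exist.

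The paper resolves exactly this tension by taking the other split of the product $n\sqrt L$: the hard instance is a \emph{disjoint union} of $\ell\approx n\,\mathrm{polylog}/\sqrt L$ gadgets, each a dense two-layer graph on $O(\sqrt L)$ vertices whose $\approx L/\mathrm{polylog}$ bits are encoded as presence/absence of edges; disjointness makes "no leakage" trivial (the walk can never leave its component), and a coupon-collector argument (\autoref{claim:recover}) shows an $L$-step walk traverses every present edge with probability $1-\exp(-\log^{20}n)$, so the walk reveals the whole chunk. The extraction is then packaged as a reduction to $\INDEX_{m,\ell}$ with $m=\tau^2$, $\ell=n/(2\tau+1)$, $\tau=\sqrt L/\log^{20}n$ (\autoref{lemma:lowb-INDEX}), which is your same Fano computation. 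To repair your write-up you would essentially have to swap your chunk/start-vertex parameters to the paper's ($\approx n/\sqrt L$ start vertices, $\approx L$ bits each) and drop the RS/track machinery in favor of disjoint dense gadgets; as it stands, the proposal has a genuine gap at its central step.
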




\subsection{Discussions and Open Problems}

\paragraph*{Better space complexity with more passes?} Our results leave open a couple of interesting directions for future work. The most significant open question is to understand the streaming space complexity of sampling random walks with more than two passes. In particular, \autoref{theo:lowb-main-multi-pass} implies that a three-pass streaming algorithm has space complexity at least $\tilde{\Omega}(n \cdot L^{1/3})$. Can one get $\tilde{O}(n \cdot L^{1/3})$ space with three passes, or at least  $O(n \cdot L^{1/2 - \eps})$ space, for some constant $\eps > 0$? Note that, as explained in \autoref{sec:intro-lb}, such an algorithm must utilize its knowledge of the starting vertex when it reads the graph stream.

\autoref{theo:lowb-main-multi-pass} does not rule out semi-streaming $\tilde{O}(n)$ space algorithms even when $p$ is a moderately growing function of $n$ and $L$. In~\cite{SarmaGP11}, it is shown that such an $\tilde{O}(n)$ space algorithm exists with $p=\tilde{O}(\sqrt{L})$ passes. Does a semi-streaming algorithm with, say, $\poly\log(L)$ passes exist?

\paragraph*{Undirected graphs?} It would also be interesting to see what is the best two-pass streaming algorithm for simulating random walks on \emph{undirected graphs}. Specifically, is it possible to combine our algorithm with the algorithm from~\cite{Jin19} to obtain an improvement over the optimal $\tilde{O}( n \cdot \sqrt{L} )$ space complexity of a one-pass streaming algorithm for this problem? 

\paragraph*{Only outputting the end vertex?} Finally, our lower bounds only apply to the case where the algorithms need to output an entire random path $(v_0,\ldots,v_L)$. If instead only the last vertex $v_L$ in the random walk is required, can one design better two-pass algorithms or prove a non-trivial lower bound?



\section{Techniques}

\subsection{The Two-Pass Algorithm} \label{sec:tech-algo}

We next overview our two-pass algorithm from \autoref{theo:algo-main}, that simulates random walks with only $\WT{O}(n \cdot \sqrt{L})$ space.

\paragraph*{The folklore one-pass algorithm.} Before discussing our algorithm, it would be instructive to review the folklore $\tilde{O}(n \cdot L)$-space one-pass algorithm for simulating $L$-step random walks in a directed graph $G = (V,E)$ (for simplicity, we will always assume $L \le n$ in the discussions). The algorithm is quite simple: 

\begin{enumerate}
	\item For every vertex $v \in V$, sample $L$ of its outgoing neighbors \emph{with replacement} and store them in a list $\Lsave_v$ of length $L$ (that is, for each $j \in [L]$, the $j$-th element of $\Lsave_v$ is an \emph{independent uniformly random} outgoing vertex of $v$). This can be done in a single pass over input graph stream using reservoir sampling~\cite{Vitter85}.
	
	
	\item Given a starting vertex $\ustart \in V$, our random walk starts from $\ustart$ and repeats the following for $L$ steps: suppose we are currently at vertex $v$ and it is the $k$-th time we visit this vertex, then we go from $v$ to the $k$-th vertex in the list $\Lsave_v$.
\end{enumerate}

It is not hard to see that the above algorithm works: whenever we visit a vertex $v \in V$, the next element in the list $\Lsave_v$ will always be a uniformly random outgoing neighbor of $v$, \emph{conditioned on} the walk we have produced so far; and we will never run out of the available neighbors of $v$ as $|\Lsave_v| = L$.

\paragraph*{A naive attempt and the obstacles.} Since we are aiming at only using $\WT{O}(n \cdot \sqrt{L})$ space, a naive attempt to improve the above algorithm is to just sample and store $\tau = O(\sqrt{L})$ outgoing neighbors instead of $L$ neighbors, and simulate the walk starting from $\ustart$ in the same way. 
The issue here is that, during the simulation of an $L$-step walk, whenever one visits a vertex~$v$ more than~$\tau$ times, one would run out of available vertices in the list $\Lsave_v$, and the algorithm can no longer produce a legit random walk. For a simple example, imagine we have a star-like graph where $n-1$ vertices are connected to a center vertex via two-way edges. An $L$-step random walk starting at the center would require at least $\Omega(L)$ samples from the center's neighbors, and our naive algorithm completely breaks.

\paragraph*{Our approach: heavy and light vertices.} Observe, however, that in the above example of a star-like graph, we are only at risk of not storing enough random neighbors of the center node, as an $L$-step \emph{random} walk would only visit the other non-center vertices a very small number of times. Thus, the algorithm may simply record all edges from the center with only $O(n)$ space.
This observation inspires the following approach for a two-pass algorithm: 
\begin{enumerate}
\item In the first pass, we identify all the vertices that are likely to be visited many times by a random walk (starting from \emph{some} vertex). We call such vertices \emph{heavy}, while all other vertices are called {\em light}.
\item In the second pass, we record {\em all} outgoing neighbors of all heavy vertices, as well as $O(\tau)$ random outgoing neighbors with replacement of each of the light vertices.
\end{enumerate}

Observe that the obtained algorithm is indeed {\em oblivious to the starting vertex}: the two passes described above do not use the starting vertex. Still, given the set of outgoing neighbors stored by the second pass, we are able to sample a random walk from any start vertex.

\paragraph*{First pass: how do we detect heavy vertices?} The above approach requires that we detect, in a single pass, all vertices $v$ that with a decent probability (say, $1/\poly(n)$), are visited more than $O(\tau)$ times by an $L$-step random walk. 
To this end, we observe that if a random walk visits a vertex $v$ more than $\tau$ times,  this random walk must follow more than $\tau - 1$ self-circles around $v$ in $L$ steps. This, in turn, implies that a random walk that starts from $v$ is likely to return to $v$ in roughly $L/\tau = O(\sqrt{L})$ steps.

The above discussion suggests the following definition of heavy vertices: a vertex $v$ is \emph{heavy}, if a random walk starting from $v$ is likely (say, with probability at least $1/3$) to revisit~$v$ in $O(\sqrt{L})$ steps. Indeed, this property is much easier to detect: we can run $O(\log n)$ independent copies of the folklore one-pass streaming algorithms to sample $O(\log n)$ $O(\sqrt{L})$-step random walks starting from~$v$, and count how many of them return to~$v$ at some step. 

\paragraph*{Second pass: can we afford to store the neighbors?}
In~\autoref{lemma:light-is-not-visited-much}, we show that for a light vertex $v$, an $L$-step random walk starting at any vertex visits $v$ $O(\sqrt{L})$ times with high probability. Therefore, in the second pass, we can safely record only $O(\sqrt{L})$ outgoing neighbors for all light vertices. Still, we have to record all the outgoing neighbors for heavy vertices. 

The crux of our analysis is a {\em structural result} about directed graphs, showing that the total outgoing degree of all heavy vertices is bounded by $O(n \cdot \sqrt{L})$, and therefore we can simply store all of their outgoing neighbors. This is proved in \autoref{lemma:heavy-total-out-degree-bounded}, which may also be of independent interest. 

\paragraph*{Intuition behind the structure lemma.} Finally, we discuss the insights behind the above structure lemma for directed graphs. We will use $\dout(v)$ to denote the number of outgoing neighbors of $v$. For concreteness, we now say a vertex $v$ is heavy if a random walk starting from $v$ revisits $v$ in $\sqrt{L}$ steps with probability at least $1/3$. 

Let $\Vheavy \subseteq V$ be the set of heavy vertices and let $v \in \Vheavy$. By a simple calculation, one can see that for at least a $1/6$ fraction of outgoing neighbors $u$ of $v$, a random walk starting from $u$ visits $v$ in $\sqrt{L}$ steps with probability at least $1/6$. The key insight is to consider the number of pairs $(u,v) \in V^2$ such that a random walk starting from $u$ visits $v$ in $\sqrt{L}$ steps with probability at least $1/6$. We will use $\calS$ to denote this set.

\begin{itemize}
	\item By the previous discussions, we can see that for each heavy vertex $v$, it adds at least $1/6$ $\dout(v)$ pairs to the set $\calS$. Hence, we have 
	\begin{align}\label{eq:lower_bound_on_calS}
		|\calS| \ge \frac{1}{6} \cdot \sum_{v \in \Vheavy} \dout(v).
	\end{align}
	
	
	\item On the other hand, it is not hard to see that for each vertex $v$, there are at most $O(\sqrt{L})$ many pairs of the form $(v,u) \in \calS$, since a $\sqrt{L}$-step walk can visit only $\sqrt{L}$ vertices. So we also have 
	\begin{align}\label{eq:upper_bound_on_calS}
		|\calS| \le O(n \sqrt{L}).
	\end{align} 
\end{itemize}

Putting the above (\autoref{eq:lower_bound_on_calS} and \autoref{eq:upper_bound_on_calS}) together, we get the desired bound 
\begin{align*}
\sum_{v \in \Vheavy} \dout(v) \le O(n \sqrt{L}).
\end{align*}

\subsection{Lower Bound for $p$-Pass Algorithms}

We now describe the ideas behind the proof of \autoref{theo:lowb-main-multi-pass}, our $\WT{\Omega}(n \cdot L^{1/p})$ space lower bound for $p$-pass randomized streaming algorithms for sampling random walks.  We mention that many of the tools developed for proving space lower bounds are not directly applicable when one wishes to lower bound the space complexity of a {\em sampling} task and are more suitable for proving lower bounds on the space required to compute a function or a search problem\footnote{One such tool that cannot be used directly for our purpose is the very useful {\em Yao's minimax principle} \cite{y77} that allows proving randomized communication lower bounds by proving the corresponding distributional (deterministic) communication lower bounds.}.

\paragraph*{From sampling to function computation.} Our way around this is to first prove a reduction from streaming algorithms that sample a random walk from $\ustart$ to streaming algorithms that compute the $(p+1)$-neighborhood of the vertex $\ustart$. This is done by considering a graph where a random walk returns to the vertex $\ustart$ every $p+2$ steps. If $p$ is a constant, then a random walk of length~$L$ on such a graph can be seen as $L/(p+2)=O(L)$ copies of a random walk of length $p+2$. Observe that if the $(p+1)$-neighborhood of the vertex $\ustart$ has (almost) $L$ vertices (and the probability of visiting each vertex is more or less uniform), then a random walk of length $L$ is likely to visit all the vertices in the neighborhood and an algorithm that samples a random walk also outputs the entire neighborhood with high probability. 

\paragraph*{A lower bound for computing the $(p+1)$-neighborhood via pointer-chasing.} Having reduced sampling a random walk to outputting the $(p+1)$-neighborhood, we now need to prove that a space efficient $p$-pass streaming algorithms cannot output the $(p+1)$-neighborhood of $\ustart$, if this neighborhood has roughly $L$ vertices. This is reminiscent of the ``{\em pointer-chasing}'' lower bounds found in the literature.

Pointer-chasing results are typically concerned with a graph with $p+1$ layers of vertices ($p$ layers of edges) and show that given a vertex in the first layer, finding a vertex that is reachable from it in the last layer cannot be done with less than $p$ passes, unless the memory is huge. Classical pointer-chasing lower bounds ({\em e.g.}, \cite{NW91}), consider graphs where the out-degree of each vertex is $1$, thus the start vertex reaches a unique vertex in the last layer. Unfortunately, this type of pointer-chasing instances are very {\em sparse} and a streaming algorithm can simply remember the entire graph in one pass using $\tilde{O}(n)$ memory.


Since we wish to have roughly $L$ vertices in a  $(p+1)$-neighborhood of $\ustart$, the out-degree of each vertex should be roughly $\Omega(L^{\frac{1}{p+1}})$ (assuming uniform degrees). Pointer-chasing lower bounds for this type of {\em dense} graphs were also proved ({\em e.g.}, \cite{go16} and \cite{fkms+09}), showing that $p$-pass algorithms essentially need to store an entire layer of edges, which is $\Omega(n \cdot L^{\frac{1}{p+1}})$ in our case. However, this still does not give us the $\Omega(n \cdot L^{\frac{1}{p}})$ lower bound we aspire for (and which is tight, at least for two passes). 


\paragraph*{Towards a tight lower bound: combining dense and sparse.}  To get a better lower bound, we construct a hard instance that is a combination of the two above mentioned types of pointer-chasing instances, the dense and the sparse. Specifically, for a $p$-pass lower bound, we construct a layered graph with $p+2$ layers of vertices $V_1,\ldots,V_{p+2}$, where the first layer has only one vertex $\ustart$ and all the other layers are of equal size (see \autoref{fig:graph}). To ensure that vertex $\ustart$ is reached every $p+2$ steps, we connect all vertices in the last layer to $\ustart$. Every vertex in layers $V_2,\ldots,V_{p+1}$ connects to a random set of roughly $L^{\frac{1}{p}}$ vertices in the next layer. Using Guruswami and Onak style arguments (\cite{go16}), it can be shown that when the edges are presented to the algorithm from right to left, finding a vertex in layer $V_{p+2}$ that is reachable form a given vertex in $V_2$ with a ($p-1$)-pass algorithm requires $\Omega(n \cdot L^{\frac{1}{p}})$ space. We ``squeeze out'' an extra pass in the algorithm by connecting the start vertex $\ustart$ in $V_1$ to a single random vertex in $V_2$. Note that with this construction, it is indeed the case that a $(p+1)$-neighborhood of $\ustart$ consists of only roughly $L$ vertices, but still, the out-degrees of vertices in $V_2,\ldots,V_{d+1}$ are roughly $L^{\frac{1}{p}}$ instead of only $L^{\frac{1}{p+1}}$.

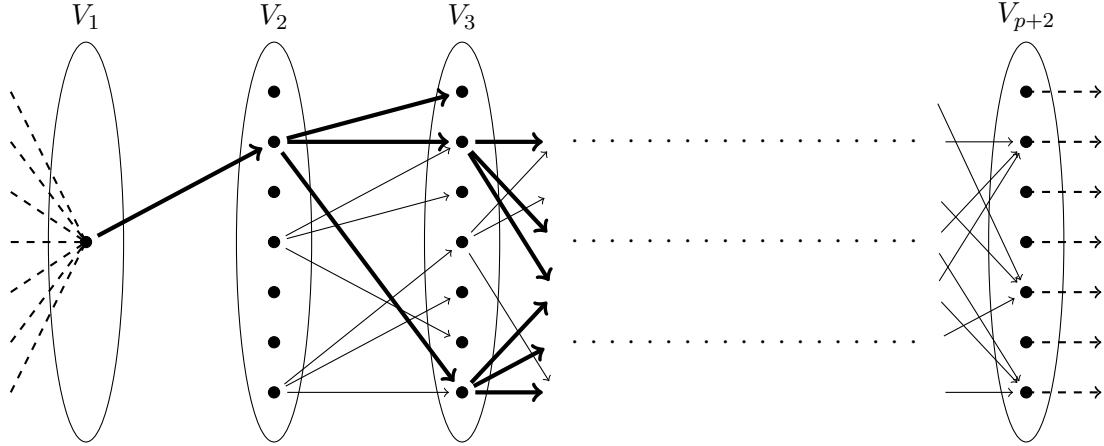
\begin{figure}
	\centering
	\begin{tikzpicture}[vrtx/.style = {draw, fill, circle, inner sep = 1.5pt} ] 
	\node[vrtx] (s00) at (0,0) {};
	\foreach \x in {1,2,5} {
		\foreach \y in {-3, -2, ..., 3} {
			\node[vrtx] (s\x\y) at (\x*2.5, \y/1.5) {};
		}
	}
	\node (n0) at (0,3) {$V_1$};
	\foreach \x/\n in {1/2,2/3,5/p+2} {
		\node (n\x) at (\x*2.5, 3) {$V_{\n}$};
	}
	\draw (0, 0) ellipse (0.5cm and 2.66cm);
	\foreach \x in {1,2,5} {
		\draw (\x*2.5, 0) ellipse (0.5cm and 2.66cm);
	}

	\foreach \x in {5} {
		\foreach \y in {-3, -2, ..., 3} {
			\draw[thick, dashed, ->] (\x*2.5, \y/1.5) -- (\x*2.5 + 1, \y/1.5) ;
			\draw[thick, dashed] (-1, \y/1.5) -- (0,0) ;
		}
	}

	\foreach \x in {1, ..., 19} {
		\foreach \y in {-2,0,2} {
			\node (s\x\y) at (\x/4 + 6.25, \y/1.5) {$\cdot$};
		}
	}

	\draw[ultra thick, ->, shorten <=5pt, shorten >= 5pt] (0, 0) -- (2.5, 4/3);

	\foreach \x/\y/\z in {1/2/-3, 1/2/2, 1/2/3} {	
		\draw[ultra thick, ->, shorten <=5pt, shorten >= 5pt] (\x*2.5, \y/1.5) -- (\x*2.5 + 2.5, \z/1.5) ;
	}
	
	\foreach \x/\y/\z in {1/-3/-3, 1/-3/-1, 1/-3/0, 1/0/-2, 1/0/2, 1/0/1} {	
		\draw[->, shorten <=5pt, shorten >= 5pt] (\x*2.5, \y/1.5) -- (\x*2.5 + 2.5, \z/1.5) ;
	}
	
	\foreach \x/\y/\z in {2/2/2, 2/2/0, 2/2/-1, 2/-3/-3, 2/-3/-2, 2/-3/-1} {	
		\draw[ultra thick, ->, shorten <=5pt, shorten >= 5pt] (\x*2.5, \y/1.5) -- (\x*2.5 + 1.25, \z/1.5) ;
	}
	
	\foreach \x/\y/\z in {2/0/1, 2/0/2, 2/0/-3} {	
		\draw[->, shorten <=5pt, shorten >= 5pt] (\x*2.5, \y/1.5) -- (\x*2.5 + 1.25, \z/1.5) ;
	}
	
	\foreach \x/\y/\z in {5/2/2, 5/2/0, 5/2/-1, 5/-1/-2, 5/-1/3, 5/-1/1, 5/-3/0, 5/-3/-3, 5/-3/-1} {	
		\draw[ ->, shorten <=5pt, shorten >= 5pt] (\x*2.5 - 1.25, \z/1.5) -- (\x*2.5, \y/1.5) ;
	}

	\end{tikzpicture}
	\caption{A depiction of our hard instance for $p$-pass streaming algorithms. Some edges omitted.}
	\label{fig:graph}
\end{figure}


\subsection{Lower bounds for Oblivious Algorithms} Finally, we discuss the intuitions behind the proof of~\autoref{theo:lowb-main-oblivious-algo}, showing that any algorithm that is oblivious to the starting vertex must use $\WT{\Omega}(n \cdot \sqrt{L})$ space space. Our proof is based on a reduction from a multi-output generalization of the well-studied $\INDEX$ problem for one-way communication protocols, denoted by $\INDEX_{m,\ell}$. In $\INDEX_{m,\ell}$, Alice gets $\ell$ strings $X_1,\dotsc,X_\ell \in \{0,1\}^m$ and Bob gets an index $i \in [\ell]$. Alice sends a message to Bob and then Bob is required to output the string $X_i$. (Note that when $m = 1$ it becomes the original $\INDEX$ problem). 

It is not hard to show that any one-way communication protocol solving $\INDEX_{m,\ell}$ with non-trivial probability (say, $1/\poly\log(m)$) requires Alice to send at least $\widetilde{\Omega}(m\ell)$ bits to Bob (see~\autoref{lemma:lowb-INDEX}).

Our key observation here is that if there is a starting vertex oblivious algorithm $\algo = (\prealgo,\sampalgo)$ with $S$ space for approximate simulation of an $L = \WT{O}(m)$-step random walk on a graph with $n = O(\sqrt{m} \cdot \ell)$ vertices, then it implies a one-way communication protocol for $\INDEX_{m,\ell}$ with communication complexity $S$ and a decent success probability. Recall the lower bound for $\INDEX_{m,\ell}$, we immediately have $S = \widetilde{\Omega}(m\ell) = \widetilde{\Omega}(n\sqrt{L})$.

In more detail, given an $m$-bit string $X$, we will build an $O(\sqrt{m})$-vertex graph $H(X)$ by encoding all bits of $X$ as \emph{existence/non-existence} of edges in $H$ (this is possible since there are more than $m$ potential edges in $H$). We also add some artificial edges to $H$ to make sure it is strongly connected. Our construction will make sure that an $L = \WT{O}(m)$ steps random walk in $H$ will reveal all edges in $H$ with high probability, which in turn reveals all bits of $X$ (see the proof of~\autoref{theo:lowb-main-oblivious-algo} for more details).

Now the reduction can be implemented as follows: given $\ell$ strings $X_1,\dotsc,X_\ell \in \{0,1\}^m$, Alice constructs a graph $G = \bigsqcup_{i=1}^{\ell} H(X_i)$, as the joint union of $\ell$ graphs. Note that $G$ has $n = O(\sqrt{m} \cdot \ell)$ vertices. Alice then runs the preprocessing algorithm $\prealgo$ on $G$ to obtain a string $M$, and sends it to Bob. Given an index $i \in [\ell]$, Bob simply runs $\sampalgo$ with $M$ together with a suitable starting vertex inside the $H(X_i)$ component of $G$. By previous discussions, this reveals the string $X_i$ with high probability and proves the correctness of this reduction. Hence, the space complexity of $\algo$ must be $\widetilde{\Omega}(m\ell) = \widetilde{\Omega}(n\sqrt{L})$.

\subsection*{Organization of this paper}
In~\autoref{sec:prelim} we introduce the necessary preliminaries for this paper. In~\autoref{sec:two-pass-algo} we present our nearly optimal two-pass streaming algorithm for simulating random walks and prove \autoref{theo:algo-main}. In~\autoref{sec:lb} we prove our lower bounds against general multi-pass streaming algorithms for simulating random walks (\autoref{theo:lowb-main-multi-pass}). In~\autoref{sec:info} we present some additional preliminaries in information theory. In~\autoref{sec:missing-proofs} we provide some missing proofs in~\autoref{sec:lb}. In~\autoref{sec:lowb-input-oblivious} we prove~\autoref{theo:lowb-main-oblivious-algo}.    


\section{Preliminaries}
\label{sec:prelim}

\subsection{Notation}

Let $n \in \mathbb{N}$. We use $[n]$ to denote the set $\{1,\dotsc,n\}$. We often use sans-serif letters (\eg, $\sfX$) to denote random variables, and calligraphic font letters (\eg, $\calX$) to denote distributions. For two random variables $\sfX$ and $\sfY$, and for $Y \in \supp(\sfY)$, we use $(X|\sfY = Y)$ to denote $\sfX$ conditioned on $\sfY = Y$. For two lists $a$ and $b$, we use $a \circ b$ to denote their concatenation. 

For two distributions $\calD_1$ and $\calD_2$ on set $\calX$ and $\calY$ respectively, we use $\calD_1 \otimes \calD_2$ to denote their product distribution over $\calX \times \calY$, and $\tvd{\calD_1}{\calD_2}$ to denote the total variation distance between them.

\subsection{Graphs}\label{sec:prelim:graphs}
In this paper we will always consider directed graphs without multi-edges. A directed $G$ is a pair $(V,E)$, where $V$ is the vertex set and $E \subseteq V \times V$ is the set of all edges.

For a vertex $u$ in a graph $G = (V,E)$,  we let $\Nout^G(u) \coloneqq \{ v : (u,v) \in E \}$ and $\Nin^G(u)  \coloneqq \{ v : (v,u) \in E \}$. We also use $\dout^G(u)$ and $\din^G(u)$ to denote its out and in degrees (\ie, $|\Nout^G(u)|$ and $|\Nin^G(u)|$). For an edge $(u,v) \in E$, we say $v$ is the \emph{out-neighbor} of $u$ and $u$ is the \emph{in-neighbor} of $v$.

\paragraph*{Random walks on directed graphs.} For a vertex $u$ in a graph $G = (V,E)$ and an non-negative integer $L$, an $L$-step random walk $(v_0,v_1,\dotsc,v_L)$ starting at $u$ is generated as follows: set $v_0 = u$, for each $i \in [L]$, we draw $v_{i}$ uniformly random from $\Nout^G(u)$.  We say that $v_0 = u$ is the $0$-th vertex on the walk, and $v_i$ is the $i$-th vertex for each $i \in [L]$. We use $\RW^G_L(u)$ to denote the distribution of an $L$-step random walk starting from $u$ in $G$.

We use $\visit^G_{[a,b]}(u,v)$ to denote the probability of a $b$-step random walk starting from $u$ visits $v$ between the $a$-th vertex and $b$-th vertex on the walk.

We often omit the superscript $G$ when the graph $G$ is clear from the context.

\paragraph*{Starting vertex oblivious algorithms.} Now we formally define a starting vertex oblivious streaming algorithm for simulating random walks.

\begin{definition}\label{defi:starting-vertex-oblivious}
	We say a $p$-pass $S$-space streaming algorithm $\algo$ for simulating random walks is starting vertex oblivious, if $\algo$ can be decomposed into a preprocessing subroutine $\prealgo$ and a sampling subroutine $\sampalgo$, such that: 
	\begin{enumerate}
		\item (\textbf{Starting vertex oblivious preprocessing phase}) $\prealgo$ makes $p$ passes over the input graph stream, using at most $S$ words of space. After that, $\prealgo$ outputs at most $S$ words, denoted as $M$.
		\item (\textbf{Sampling phase}) $\sampalgo$ takes both the starting vertex $\ustart$ and $M$ as input, and outputs a desired walk starting from $\ustart$, using at most $S$ words of space.
	\end{enumerate}
\end{definition}

\subsection{Useful Concentration Bounds on Random Variables}

The following standard concentration bounds will be useful for us.

\begin{lemma}[Multiplicative Chernoff bound, \cite{c52}]
	\label{lemma:chernoff}
	Suppose $X_1, \cdots, X_n$ are independent random variables taking values in $[0,1]$. Let $X$ denote their sum and let $\mu = \E[X]$ denote the sum's expected value. Then,
	\begin{alignat*}{2}
	\Pr\left(X\geq (1+\delta )\mu\right)&\leq \e^{-{\frac {\delta ^{2}\mu }{2 + \delta}}},\hspace{1cm} &&\forall 0 \leq \delta ,\\
	\Pr\left(X\leq (1-\delta )\mu\right)&\leq \e^{-{\frac {\delta^{2}\mu }{2}}}, &&\forall 0 \leq \delta \leq 1.
	\end{alignat*}
	In particular, we have that:
	\begin{alignat*}{2}
	\Pr\left(X\geq (1+\delta )\mu\right)&\leq \e^{-\frac{\delta \mu }{3} \cdot \min(\delta, 1) },\hspace{1cm} &&\forall 0 \leq \delta,\\
	\Pr\left( \lvert{X  - \mu }\rvert \geq \delta \mu  \right)&\leq 2 \cdot \e^{-{\frac {\delta ^{2}\mu }{3}}}, &&\forall 0 \leq \delta \leq 1.
	\end{alignat*}
\end{lemma}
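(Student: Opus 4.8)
The plan is to prove Lemma~\ref{lemma:chernoff} directly via the standard moment generating function (MGF) method, and then derive the two ``in particular'' consequences by elementary estimates. First I would recall the exponential Markov inequality: for any $t > 0$, $\Pr(X \ge a) = \Pr(e^{tX} \ge e^{ta}) \le e^{-ta}\,\E[e^{tX}]$, and symmetrically for the lower tail using $t<0$. Since the $X_i$ are independent, $\E[e^{tX}] = \prod_{i=1}^n \E[e^{tX_i}]$, so it suffices to bound each factor. For $X_i \in [0,1]$ with $\mu_i = \E[X_i]$, convexity of $x \mapsto e^{tx}$ on $[0,1]$ gives $e^{tX_i} \le 1 - X_i + X_i e^{t}$ pointwise, hence $\E[e^{tX_i}] \le 1 + \mu_i(e^t - 1) \le \exp\!\big(\mu_i(e^t-1)\big)$ using $1+y \le e^y$. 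Multiplying over $i$ yields the clean bound $\E[e^{tX}] \le \exp\!\big(\mu(e^t-1)\big)$, valid for all real $t$.

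Next I would optimize the resulting bound $\Pr(X \ge (1+\delta)\mu) \le \exp\!\big(\mu(e^t-1) - t(1+\delta)\mu\big)$ over $t>0$. Setting $t = \ln(1+\delta) \ge 0$ gives the classical form $\Pr(X \ge (1+\delta)\mu) \le \big(e^{\delta}/(1+\delta)^{1+\delta}\big)^{\mu}$. The next step is the routine analytic lemma that $\delta - (1+\delta)\ln(1+\delta) \le -\frac{\delta^2}{2+\delta}$ for all $\delta \ge 0$; I would prove this by defining $f(\delta) = (1+\delta)\ln(1+\delta) - \delta - \frac{\delta^2}{2+\delta}$, checking $f(0)=0$, and showing $f'(\delta) \ge 0$ (the derivative simplifies to $\ln(1+\delta) - \frac{\delta(\delta+4)}{(2+\delta)^2}$, which vanishes at $0$ and has nonnegative derivative). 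This yields the first displayed upper-tail bound. For the lower tail, I take $t = -\ln(1-\delta) < 0$ (valid since $\delta \le 1$, with the endpoint $\delta=1$ handled by a limiting/continuity argument or directly since $\Pr(X \le 0)$ is then bounded by $e^{-\mu}$) and use the companion inequality $-\delta - (1-\delta)\ln(1-\delta) \le -\frac{\delta^2}{2}$, proved the same way via a single-variable derivative computation.

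For the two ``in particular'' statements: the second one, $\Pr(|X-\mu| \ge \delta\mu) \le 2e^{-\delta^2\mu/3}$ for $0 \le \delta \le 1$, follows by union-bounding the two tails, noting $\frac{\delta^2}{2+\delta} \ge \frac{\delta^2}{3}$ when $\delta \le 1$, and $\frac{\delta^2}{2} \ge \frac{\delta^2}{3}$ always. The first ``in particular'' bound, $\Pr(X \ge (1+\delta)\mu) \le \exp\!\big(-\frac{\delta\mu}{3}\min(\delta,1)\big)$ for all $\delta \ge 0$, splits into two regimes: for $\delta \le 1$ use $\frac{\delta^2}{2+\delta} \ge \frac{\delta^2}{3}$; for $\delta > 1$ I would return to the bound $\exp\!\big(\mu(\delta - (1+\delta)\ln(1+\delta))\big)$ and show $\delta - (1+\delta)\ln(1+\delta) \le -\frac{\delta}{3}$ for $\delta \ge 1$ (e.g.\ since at $\delta=1$ the left side is $1 - 2\ln 2 \approx -0.386 \le -1/3$, and the derivative $-\ln(1+\delta) \le -\ln 2 < -1/3$ keeps it decreasing fast enough relative to $-\delta/3$).

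I do not anticipate a genuine obstacle here — this is the textbook Chernoff argument — but the step requiring the most care is the collection of elementary inequalities of the form $\delta - (1+\delta)\ln(1+\delta) \le -\frac{\delta^2}{2+\delta}$ and its analogue for the lower tail and for the $\delta > 1$ regime; these require a clean choice of auxiliary function and a correct derivative computation, and are the only place where one can slip. Since the paper states this as a known result citing \cite{c52}, an acceptable alternative is simply to cite the standard reference and omit the MGF derivation entirely; but if a self-contained proof is wanted, the route above is the one I would take.
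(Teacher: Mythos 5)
Your proposal is correct, but note that the paper itself gives no proof of this lemma at all: it is stated as a known result with a citation to Chernoff's paper, exactly as you anticipate in your last sentence. The self-contained argument you sketch is the standard one and the delicate steps check out: the convexity bound $\E[e^{tX_i}] \le 1+\mu_i(e^t-1) \le e^{\mu_i(e^t-1)}$ is valid for $[0,1]$-valued (not just Bernoulli) variables, your auxiliary function $f(\delta)=(1+\delta)\ln(1+\delta)-\delta-\frac{\delta^2}{2+\delta}$ indeed has $f(0)=f'(0)=0$ and $f''(\delta)=\frac{1}{1+\delta}-\frac{8}{(2+\delta)^3}\ge 0$, and the $\delta\ge 1$ comparison at the endpoint $1-2\ln 2\le -1/3$ with derivative $-\ln(1+\delta)\le -1/3$ works. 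One small simplification you could use: both ``in particular'' statements follow directly from the first two displays via the single elementary inequality $\frac{\delta^2}{2+\delta}\ge \frac{\delta}{3}\min(\delta,1)$ (since $2+\delta\le 3$ when $\delta\le 1$ and $2+\delta\le 3\delta$ when $\delta\ge 1$), so no return to the $\bigl(e^{\delta}/(1+\delta)^{1+\delta}\bigr)^{\mu}$ form is needed for the large-$\delta$ regime.
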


We also need the following Azuma-Hoeffding inequality.

\begin{lemma}[Azuma-Hoeffding inequality, \cite{azuma1967weighted,hoeffding1994probability}]\label{lemma:azuma-hoeffding}
	Let $Z_0,\dotsc,Z_n$ be random variables satisfying (1) $\E[|Z_i|] < \infty$ for every $i \in \{0,\dotsc,n\}$ and $\E[Z_i | Z_0,\dotsc,Z_{i-1}] \le Z_{i-1}$ for every $i \in [n]$ (\ie, $\{Z_i\}$ forms a supermartingale) and (2) for every $i \in [n]$, $|Z_i - Z_{i-1}| \le 1$, then for all $\lambda > 0$, we have
	\[
	\Pr[Z_n - Z_0 \ge \lambda] \le \exp(-\lambda^2/2n).
	\]
	
\end{lemma}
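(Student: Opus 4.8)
The plan is to run the standard exponential-moment (Chernoff-style) argument, adapted to supermartingales. Set $W_i \coloneqq Z_i - Z_{i-1}$ for $i \in [n]$, so that $Z_n - Z_0 = \sum_{i=1}^n W_i$; by hypothesis each $W_i$ satisfies $|W_i| \le 1$ and $\E[W_i \mid Z_0,\dotsc,Z_{i-1}] \le 0$. Fix a parameter $t > 0$ to be optimized at the end. The goal is to upper bound the moment generating function $\E[\exp(t(Z_n - Z_0))]$ and then apply Markov's inequality.

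The key technical ingredient is a conditional version of Hoeffding's lemma: if $W$ is a random variable with $|W|\le 1$ and $\E[W\mid \mathcal F]\le 0$ for a $\sigma$-algebra $\mathcal F$, then $\E[\exp(tW)\mid \mathcal F]\le \exp(t^2/2)$. I would prove this by first reducing to the conditionally-mean-zero case: writing $\mu=\E[W\mid\mathcal F]$, which satisfies $-1\le\mu\le 0$, and $W'=W-\mu$, we have $\exp(tW)=\exp(t\mu)\exp(tW')\le\exp(tW')$ because $t\mu\le 0$, so it suffices to bound $\E[\exp(tW')\mid\mathcal F]$. Since (conditionally) $W'$ has mean $0$ and is supported in an interval of length at most $2$, bounding the convex function $x\mapsto\exp(tx)$ by its chord over that interval and then estimating the resulting one-variable expression by elementary calculus gives $\E[\exp(tW')\mid\mathcal F]\le\exp(t^2\cdot 2^2/8)=\exp(t^2/2)$.

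With the conditional lemma in hand, I would peel off the increments one at a time via the tower property. Conditioning on $Z_0,\dotsc,Z_{n-1}$, the first $n-1$ factors of $\prod_{i=1}^n\exp(tW_i)$ are measurable and the conditional expectation of the last factor is at most $\exp(t^2/2)$, so
\[
\E\!\left[\exp\!\Big(t\sum_{i=1}^n W_i\Big)\right]\le \exp(t^2/2)\cdot\E\!\left[\exp\!\Big(t\sum_{i=1}^{n-1} W_i\Big)\right].
\]
Iterating this $n$ times yields $\E[\exp(t(Z_n-Z_0))]\le\exp(nt^2/2)$. Finally, by Markov's inequality, $\Pr[Z_n-Z_0\ge\lambda]\le\exp(-t\lambda)\,\E[\exp(t(Z_n-Z_0))]\le\exp(-t\lambda+nt^2/2)$, and choosing $t=\lambda/n$ gives the claimed bound $\exp(-\lambda^2/2n)$.

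The only mildly delicate point is the conditional Hoeffding lemma: one must be careful to run the reduction ($\mu\le 0$), the chord bound, and the calculus estimate entirely \emph{pointwise in the conditioning}, i.e., as inequalities between conditional expectations, so that they compose cleanly with the tower property. Everything else—the telescoping, the peeling, and the optimization over $t$—is routine bookkeeping. I expect the conditional lemma to be the (very mild) main obstacle.
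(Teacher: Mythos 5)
Your proposal is correct: the conditional Hoeffding lemma (handled pointwise in the conditioning, with the reduction to the mean-zero case using $t\mu\le 0$), the tower-property peeling, and the optimization $t=\lambda/n$ together give exactly the stated bound $\exp(-\lambda^2/2n)$. The paper does not prove this lemma at all --- it is quoted as a standard result with citations to Azuma and Hoeffding --- and your argument is precisely the classical exponential-moment proof behind those citations, so there is nothing to reconcile.
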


In particular, the following corollary will be useful for us.

\begin{corollary}[Azuma-Hoeffding inequality for Boolean random variables, \cite{azuma1967weighted,hoeffding1994probability}]\label{cor:azuma-hoeffding}
	Let $X_1,\dotsc,X_n$ be random variables satisfying $X_i \in \{0,1\}$ for each $i \in [n]$. Suppose that $\E[X_i | X_1,\dotsc,X_{i-1}] \le p_i$ for all $i$. Then for any $\lambda > 0$,
	\[
	\Pr\left[ \sum_{i=1}^{n} X_i \ge \lambda + \sum_{i=1}p_i \right] \le \exp(-\lambda^2/2n).
	\]
\end{corollary}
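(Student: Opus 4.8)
The plan is to derive the corollary directly from the Azuma--Hoeffding inequality (\autoref{lemma:azuma-hoeffding}) by exhibiting an appropriate bounded-increment supermartingale built out of the partial sums of the $X_i$'s. First I would make the harmless reduction to the case $p_i\in[0,1]$ for every $i\in[n]$. Note $p_i\ge 0$ is already forced by the hypothesis, since $X_i\ge 0$ implies $\E[X_i\mid X_1,\dots,X_{i-1}]\ge 0$. Replacing each $p_i$ by $p_i'\coloneqq\min(p_i,1)$ only shrinks the threshold $\lambda+\sum_i p_i$, so the event in question can only grow, while the hypothesis is preserved because $\E[X_i\mid X_1,\dots,X_{i-1}]\le 1$ always holds for $\{0,1\}$-valued $X_i$, giving $\E[X_i\mid X_1,\dots,X_{i-1}]\le p_i'$. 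Hence it suffices to prove the bound assuming $p_i\in[0,1]$.

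Next I would set $Z_0\coloneqq 0$ and, for $k\in[n]$, $Z_k\coloneqq\sum_{i=1}^{k}(X_i-p_i)$. Each $Z_k$ is bounded (indeed $|Z_k|\le 2k$), so $\E[|Z_k|]<\infty$. To see that $\{Z_k\}$ is a supermartingale with respect to itself, observe that the $\sigma$-algebra generated by $Z_0,\dots,Z_{k-1}$ is contained in the one generated by $X_1,\dots,X_{k-1}$, so by the tower rule together with the hypothesis,
\[
\E[Z_k\mid Z_0,\dots,Z_{k-1}]=Z_{k-1}+\E\big[\,\E[X_k\mid X_1,\dots,X_{k-1}]-p_k\,\big|\,Z_0,\dots,Z_{k-1}\big]\le Z_{k-1}.
\]
Moreover $|Z_k-Z_{k-1}|=|X_k-p_k|\le 1$, since both $X_k$ and $p_k$ lie in $[0,1]$. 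Thus both hypotheses of \autoref{lemma:azuma-hoeffding} are satisfied.

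Finally I would invoke \autoref{lemma:azuma-hoeffding} for this supermartingale: for every $\lambda>0$,
\[
\Pr[Z_n-Z_0\ge\lambda]\le\exp(-\lambda^2/2n),
\]
and since $Z_n-Z_0=\sum_{i=1}^{n}X_i-\sum_{i=1}^{n}p_i$, this is precisely the claimed inequality $\Pr\big[\sum_{i=1}^{n}X_i\ge\lambda+\sum_{i=1}^{n}p_i\big]\le\exp(-\lambda^2/2n)$.

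I do not anticipate a genuine obstacle in this argument; the only two points needing a little care are the opening reduction to $p_i\le 1$, which is what guarantees the unit increments required by \autoref{lemma:azuma-hoeffding}, and the observation that conditioning on the partial sums $Z_0,\dots,Z_{k-1}$ reveals no more information than conditioning on $X_1,\dots,X_{k-1}$, so that the assumed bound on $\E[X_k\mid X_1,\dots,X_{k-1}]$ transfers to the supermartingale condition stated in terms of the $Z_i$'s.
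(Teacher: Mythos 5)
Your proposal is correct and follows essentially the same route as the paper: define $Z_k=\sum_{i\le k}(X_i-p_i)$, verify the supermartingale and bounded-increment conditions, and invoke \autoref{lemma:azuma-hoeffding}. The extra care you take (truncating $p_i$ to $\min(p_i,1)$ so that $|Z_k-Z_{k-1}|\le 1$, and noting that conditioning on $Z_0,\dots,Z_{k-1}$ is coarser than conditioning on $X_1,\dots,X_{k-1}$) just fills in details the paper leaves implicit.
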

\begin{proof}
	For $i \in \{0,\dotsc,n\}$, let $Z_i = \sum_{j=1}^{i} (X_j - p_j)$. From the assumption one can see that all the $Z_i$ form a supermartingale and $|Z_i - Z_{i-1}| \le 1$, hence the corollary follows directly from~\autoref{lemma:azuma-hoeffding}.
\end{proof}

\subsection{Standard Lemmas} \label{sec:concineq}

We will need (a weak form of) Stirling's approximation. We include a proof for completeness.
\begin{lemma}
	\label{lemma:stirling} For all $n > 0$, we have
	\[
	\e \cdot \paren*{ \frac{n}{\e} }^n \leq n! \leq \e n \cdot \paren*{ \frac{n}{\e} }^n .
	\]
\end{lemma}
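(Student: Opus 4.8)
The plan is to prove the weak Stirling bound $\e \cdot (n/\e)^n \le n! \le \e n \cdot (n/\e)^n$ by the standard trick of comparing $\ln(n!) = \sum_{k=1}^n \ln k$ against the integral $\int_1^n \ln x\, dx = n\ln n - n + 1$. Since $\ln x$ is increasing, for each integer $k$ we have $\int_{k-1}^{k} \ln x\, dx \le \ln k \le \int_{k}^{k+1} \ln x\, dx$. Summing the left inequality over $k = 2,\dots,n$ (noting $\ln 1 = 0$) gives $\ln(n!) \ge \int_1^n \ln x\, dx = n \ln n - n + 1$, which exponentiates to $n! \ge \e \cdot (n/\e)^n$. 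Summing the right inequality over $k = 1,\dots,n$ gives $\ln(n!) \le \int_1^{n+1} \ln x\, dx = (n+1)\ln(n+1) - (n+1) + 1 = (n+1)\ln(n+1) - n$.

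The remaining step is to massage the upper bound $\ln(n!) \le (n+1)\ln(n+1) - n$ into the claimed form $\ln(n!) \le \ln(\e n) + n \ln n - n = 1 + \ln n + n\ln n - n$. Equivalently I need $(n+1)\ln(n+1) - n \le 1 + (n+1)\ln n - n$, i.e. $(n+1)\ln(1 + 1/n) \le 1$. This follows from $\ln(1+x) \le x$ applied with $x = 1/n$: $(n+1)\ln(1+1/n) \le (n+1)/n = 1 + 1/n$, which is \emph{slightly} too weak. To fix this I would instead use the sharper comparison $\ln k \le \int_{k-1/2}^{k+1/2} \ln x\, dx$ (valid by concavity of $\ln$, the midpoint bound), summing over $k=1,\dots,n$ to get $\ln(n!) \le \int_{1/2}^{n+1/2} \ln x\, dx = (n+\tfrac12)\ln(n+\tfrac12) - (n + \tfrac12) - (\tfrac12 \ln \tfrac12 - \tfrac12)$; one then checks $(n+\tfrac12)\ln(1 + \tfrac1{2n}) \le \tfrac12 + (\text{small})$ to land inside $\ln(\e n) + n\ln n - n$ with room to spare. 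Alternatively, and more cleanly, I would avoid integrals on the upper side entirely: prove $n! \le \e n (n/\e)^n$ directly by induction on $n$. The base case $n=1$ reads $1 \le \e \cdot 1 \cdot \e^{-1} = 1$, which holds. For the inductive step, assuming $n! \le \e n (n/\e)^n$, we get $(n+1)! \le \e n (n+1)(n/\e)^n$, and it suffices to show this is $\le \e(n+1)((n+1)/\e)^{n+1}$, i.e. $n (n/\e)^n \le ((n+1)/\e)^{n+1}$, i.e. $n \e \le (n+1)(1 + 1/n)^n$. Since $(1+1/n)^n < \e$ we'd want $(1+1/n)^n \ge n\e/(n+1) = \e/(1 + 1/n)$, i.e. $(1+1/n)^{n+1} \ge \e$, which is the standard fact that $(1 + 1/n)^{n+1}$ is decreasing to $\e$ from above — hence indeed $\ge \e$. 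That closes the induction cleanly.

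For the lower bound I would likewise offer the induction alternative as a backup: $n=1$ gives $1 \ge \e \cdot \e^{-1} = 1$; assuming $n! \ge \e(n/\e)^n$, we need $(n+1)! \ge \e((n+1)/\e)^{n+1}$, reducing to $n(n/\e)^n \cdot \e \ge \e((n+1)/\e)^{n+1} \cdot \e / \e$... more precisely to $(1+1/n)^n \le \e$, which is standard. So both directions reduce to the two classical monotonicity facts about $(1+1/n)^n$ and $(1+1/n)^{n+1}$.

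The main obstacle is purely the constant-tracking in the upper bound: the crude integral comparison $\ln k \le \int_k^{k+1}\ln x\,dx$ overshoots the target $\e n$ by a whisker, so one must either switch to the midpoint/concavity refinement or bypass integrals with the induction above. I expect to present the induction proof since it is self-contained, uses only the stated hypotheses together with the elementary inequality $(1+1/n)^n < \e < (1+1/n)^{n+1}$ (which I would either cite as standard or prove in one line via $\ln$ concavity), and avoids any delicate numeric estimates. Everything else — summation of telescoping logs, exponentiation — is routine.
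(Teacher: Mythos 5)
Your proposal is correct, and the induction you settle on is essentially the paper's own argument in disguise: the paper writes the same per-step inequalities $(1+1/i)^i \le \e$ and $(1-1/(i+1))^{i+1} \le \e^{-1}$ (equivalently $(1+1/i)^{i+1} \ge \e$) as telescoping products for $n^n/n!$ and $n!/n^{n+1}$, which is exactly your induction unrolled. You also correctly spot that the crude integral comparison overshoots the upper bound, so your fallback to induction (or the midpoint refinement) is the right fix and matches the paper's route.
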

\begin{proof}
	We have:
	\[
	\frac{ n^n }{ n! } = \frac{ n^{n-1} }{ (n-1)! } = \prod_{i = 1}^{n-1} \paren*{ \frac{i+1}{i} }^i = \prod_{i = 1}^{n-1} \paren*{ 1 + \frac{1}{i} }^i \leq \prod_{i = 1}^{n-1} \e = \e^{n-1} .
	\]
	We also have:
	\[
	\frac{ n! }{ n^{n+1} } = \frac{ (n-1)! }{ n^n } = \prod_{i = 1}^{n-1} \paren*{ \frac{i}{i+1} }^{i+1} \leq \prod_{i = 1}^{n-1} \paren*{ 1 - \frac{1}{i+1} }^{i+1} \leq \prod_{i = 1}^{n-1} \e^{-1} = \e^{1-n} .
	\]
	Rearranging gives the result.
\end{proof}

The following bound on binomial coefficients follows:
\begin{lemma}
	\label{lemma:binomial} For all $0 < k < n$, we have
	\[
	\frac{1}{10n^2} \cdot \paren*{ \frac{n}{k} }^k \cdot \paren*{ \frac{n}{n-k} }^{n-k} \leq \binom{n}{k} \leq n \cdot \paren*{ \frac{n}{k} }^k \cdot \paren*{ \frac{n}{n-k} }^{n-k} .
	\]
\end{lemma}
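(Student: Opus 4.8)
The plan is to derive \autoref{lemma:binomial} directly from \autoref{lemma:stirling} by writing $\binom{n}{k} = \frac{n!}{k!\,(n-k)!}$ and substituting the two-sided Stirling estimates for each factorial. For the upper bound I would use $n! \le \e n \paren*{n/\e}^n$ in the numerator together with $k! \ge \e\paren*{k/\e}^k$ and $(n-k)! \ge \e\paren*{(n-k)/\e}^{n-k}$ in the denominator. The three $\e$-prefactors combine to a single factor $\frac{\e n}{\e\cdot\e} = \frac n\e$, and the exponential parts collapse because $\e^{-n} = \e^{-k}\cdot\e^{-(n-k)}$, leaving
\[
\binom{n}{k} \le \frac{n}{\e}\cdot \frac{n^n}{k^k (n-k)^{n-k}} = \frac{n}{\e}\cdot\paren*{\frac nk}^k\paren*{\frac{n}{n-k}}^{n-k} \le n\cdot\paren*{\frac nk}^k\paren*{\frac{n}{n-k}}^{n-k},
\]
using $n/\e \le n$.

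For the lower bound I would symmetrically use $n! \ge \e\paren*{n/\e}^n$ in the numerator and the upper estimates $k! \le \e k\paren*{k/\e}^k$, $(n-k)! \le \e(n-k)\paren*{(n-k)/\e}^{n-k}$ in the denominator. Now the $\e$-prefactors combine to $\frac{\e}{\e k\cdot\e(n-k)} = \frac{1}{\e k (n-k)}$, and again the powers of $\e$ cancel, yielding
\[
\binom{n}{k} \ge \frac{1}{\e\, k(n-k)}\cdot\frac{n^n}{k^k(n-k)^{n-k}} = \frac{1}{\e\, k(n-k)}\cdot\paren*{\frac nk}^k\paren*{\frac{n}{n-k}}^{n-k}.
\]
To finish I would bound the prefactor: since $0 < k < n$ we have $k(n-k) < n^2$, and $\e < 3 < 10$, so $\e\,k(n-k) < 10 n^2$, hence $\frac{1}{\e k(n-k)} > \frac{1}{10 n^2}$, which gives the claimed bound.

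There is essentially no obstacle here — the only things to be careful about are (i) checking that the exponential factors $\e^{-n}$ versus $\e^{-k}\e^{-(n-k)}$ indeed cancel exactly, so that the messy-looking ratio $n^n/(k^k(n-k)^{n-k})$ is exactly $\paren*{n/k}^k\paren*{n/(n-k)}^{n-k}$, and (ii) choosing the crude numeric slack ($n/\e \le n$ on one side, $\e k(n-k) < 10n^2$ on the other) generously enough that the stated constants hold for every integer $n>k>0$. Both are one-line verifications, so the proof is short.
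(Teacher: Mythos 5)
Your proof is correct and follows essentially the same route as the paper: plug the two-sided Stirling bounds from \autoref{lemma:stirling} into $\binom{n}{k}=\frac{n!}{k!(n-k)!}$, note the powers of $\e$ cancel, and absorb the prefactors ($n/\e \le n$ for the upper bound, $\e\,k(n-k) < 10n^2$ for the lower bound) into the stated constants. No issues.
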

\begin{proof}
	We have:
	\begin{align*}
	\binom{n}{k} &= \frac{ n! }{ (n-k)! \cdot k! } \\
	&\geq \frac{ \e \cdot \paren*{ \frac{n}{\e} }^n }{ \e k \cdot \paren*{ \frac{k}{\e} }^k \cdot \e (n-k) \cdot \paren*{ \frac{n-k}{\e} }^{n-k} } \tag{\autoref{lemma:stirling}} \\
	&\geq \frac{1}{10n^2} \cdot \paren*{ \frac{n}{k} }^k \cdot \paren*{ \frac{n}{n-k} }^{n-k} .
	\end{align*}
	We also have:
	\begin{align*}
	\binom{n}{k} &= \frac{ n! }{ (n-k)! \cdot k! } \\
	&\leq \frac{ \e n \cdot \paren*{ \frac{n}{\e} }^n }{ \e \cdot \paren*{ \frac{k}{\e} }^k \cdot \e \cdot \paren*{ \frac{n-k}{\e} }^{n-k} } \tag{\autoref{lemma:stirling}} \\
	&\leq n \cdot \paren*{ \frac{n}{k} }^k \cdot \paren*{ \frac{n}{n-k} }^{n-k} .
	\end{align*}
\end{proof}

\section{Two-Pass Streaming Algorithms for Simulating Directed Random Walk}\label{sec:two-pass-algo}


In this section, we present our two-pass streaming algorithms for simulating random walks on directed graphs.

\subsection{Heavy and Light Vertices}

We first define the notion of heavy and light vertices.

\begin{definition}[Heavy and light vertices]
	Given a directed graph $G = (V,E)$ with $n$ vertices and $\ell \in \N$.
	\begin{itemize}
		\item (\textbf{Heavy vertices}.) We say a vertex $u$ is $\ell$-heavy in $G$, if $\visit_{[1,\ell]}(u,u) \ge 1/3$ (\ie, if a random walk starting from $u$ will revisit $u$ in at most $\ell$ steps with probability at least $1/3$.)
		
		\item (\textbf{Light vertices}.) We say a vertex $u$ is $\ell$-light in $G$, if $\visit_{[1,\ell]}(u,u) \le 2/3$ (\ie, if a random walk starting from $u$ will revisit $u$ in at most $\ell$ steps with probability at most $2/3$.)
	\end{itemize}

	We also let $\Vheavy^\ell(G)$ and $\Vlight^\ell(G)$ be the sets of $\ell$-heavy and $\ell$-light vertices in $G$. When $G$ and $\ell$ are clear from the context, we simply refer to them as $\Vheavy$ and $\Vlight$.
\end{definition}

\begin{remark}
	Note that if the revisiting probability is between $[1/3,2/3]$, then the vertex is considered to be both heavy and light.
\end{remark}

The following lemma is crucial for the analysis of our algorithm.
\begin{lemma}[Upper bounds on the total out-degrees of heavy vertices]\label{lemma:heavy-total-out-degree-bounded}
	Given a directed graph $G$ with $n$ vertices and $\ell \in \N$, it holds that 
	\[
	\sum_{u \in \Vheavy^\ell(G)} \dout(u) \le O(n \cdot \ell).
	\]
\end{lemma}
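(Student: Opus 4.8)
The plan is to follow exactly the counting argument sketched in the technical overview (Section~\ref{sec:tech-algo}), making each step precise. Define the \emph{witness set}
\[
\calS \coloneqq \set*{ (u,v) \in V \times V : \visit_{[1,\ell]}(u,v) \ge 1/6 },
\]
i.e.\ the set of ordered pairs such that an $\ell$-step random walk started at $u$ visits $v$ (at some positive step) with probability at least $1/6$. I will bound $|\calS|$ from above and from below, and the lemma drops out.

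\textbf{Lower bound on $|\calS|$.} Fix an $\ell$-heavy vertex $v$, so $\visit_{[1,\ell]}(v,v) \ge 1/3$. Condition on the first step of the walk from $v$: it lands on a uniformly random out-neighbor $u \in \Nout(v)$, and then the remaining $\ell-1$ steps form a walk from $u$. If the walk revisits $v$ within $\ell$ steps, it must in particular visit $v$ during its last $\ell-1$ steps, hence $\visit_{[1,\ell-1]}(u,v) \ge$ the conditional revisit probability from $u$. Averaging over $u$, we get $\E_{u \in \Nout(v)}\bigl[\visit_{[1,\ell-1]}(u,v)\bigr] \ge \visit_{[1,\ell]}(v,v) \ge 1/3$. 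Since each term is a probability (at most $1$), a Markov/averaging argument shows that for at least a $1/6$ fraction of $u \in \Nout(v)$ we have $\visit_{[1,\ell-1]}(u,v) \ge 1/6$, and so $\visit_{[1,\ell]}(u,v) \ge 1/6$ as well, i.e.\ $(u,v) \in \calS$. Summing over all $v \in \Vheavy^\ell(G)$, and noting that distinct $v$ contribute pairs with distinct second coordinates (hence no double counting),
\[
|\calS| \ \ge\ \frac{1}{6}\sum_{v \in \Vheavy^\ell(G)} \dout(v).
\]

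\textbf{Upper bound on $|\calS|$.} Fix any vertex $u$. An $\ell$-step random walk from $u$ touches at most $\ell+1$ vertices (including $u$ itself), so at most $\ell$ vertices are visited at a positive step. Thus $\sum_{v \in V} \visit_{[1,\ell]}(u,v) \le \sum_{v\in V}\E[\mathbf 1\{\text{walk from }u\text{ visits }v\text{ at a positive step}\}] = \E[\#\{\text{distinct vertices visited at positive steps}\}] \le \ell$. Since every $v$ with $(u,v) \in \calS$ contributes at least $1/6$ to this sum, the number of such $v$ is at most $6\ell$. Summing over the $n$ choices of $u$ gives $|\calS| \le 6n\ell$.

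Combining the two bounds yields $\sum_{v \in \Vheavy^\ell(G)} \dout(v) \le 6\,|\calS| \le 36\,n\ell = O(n\ell)$, as claimed.

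\textbf{Anticipated obstacle.} The only subtle point is the lower-bound step: making rigorous that conditioning on the first step of the walk from a heavy $v$ really does push probability mass of ``visiting $v$'' onto the out-neighbors, and that the shortened walk length ($\ell-1$ vs.\ $\ell$) causes no trouble. This is handled cleanly by writing $\visit_{[1,\ell]}(v,v) = \E_{u\in \Nout(v)}\bigl[\Pr[\text{walk from } u \text{ of length } \ell-1 \text{ hits } v]\bigr]$ via the law of total probability over the (uniform) first step, together with the monotonicity $\visit_{[1,\ell-1]}(u,v) \le \visit_{[1,\ell]}(u,v)$. Everything else is an elementary averaging argument, so I expect no real difficulty beyond bookkeeping the constants.
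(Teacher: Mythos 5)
Your proof is correct and is essentially the paper's own argument: the paper defines the same witness set of pairs, shows each $\ell$-heavy $v$ contributes at least $\tfrac16\dout(v)$ pairs (phrased by contradiction rather than your averaging step), and upper-bounds the set by $O(n\ell)$ via $\sum_{v}\visit(u,v)\le \ell+1$. The only hairline difference is that the paper uses $\visit_{[0,\ell]}$ in its witness sets so that visits at step $0$ (relevant if self-loops are allowed) are handled automatically; your intermediate claim that a revisit of $v$ forces a visit of $v$ by the length-$(\ell-1)$ walk from $u$ at a \emph{positive} step can fail when the revisit occurs via a self-loop at step $1$, though your conclusion still stands since a heavy vertex's self-loop neighbor is itself in $\calS$ (or one can simply switch to $[0,\cdot]$ windows as the paper does).
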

\begin{proof}
	We define a set $\calS$ of pairs of vertices as follows:
	\[
	\calS \coloneqq \{ (u,v) \in V^2 : \visit_{[0,\ell]}(u,v) \ge 1/6 \}.
	\]
	That is, a pair of vertices $u$ and $v$ belongs to $\calS$ if and only if an $\ell$-step random walk starting from $u$ visits $v$ with probability at least $1/6$.
	
	For each fixed vertex $u$, we further define 
	\[
	\calS_u \coloneqq \set*{v \in \Nout(u) \mid \visit_{[0,\ell]}(v,u) \ge 1/6},
	\]
	and
	\[
	\calH_u \coloneqq \set*{v \in V \mid \visit_{[0,\ell]}(u,v) \ge 1/6}.
	\]
	
	The following claim will be useful for the proof.
	\begin{claim}\label{claim:properties}
		The following two statements hold:
		\begin{enumerate}
			\item For every $u \in V$, it holds that $|\calH_u| \le O(\ell)$.
			\item For every $u \in \Vheavy$, it holds that $|\calS_u| \ge 1/6 \cdot \dout(u)$.
		\end{enumerate}
	\end{claim}
	\begin{proof}
	Fixing $u \in V$, the first item follows from the simple fact that
	\[
	\sum_{v \in V} \visit_{[0,\ell]}(u,v) \le \ell + 1.
	\]
	
	Now we move to the second item, and	fix $u \in \Vheavy$. For the sake of contradiction, suppose that $\left| \calS_u \right| < 1/6 \cdot \dout(u)$. We have
	\begin{align*}
		\visit_{[1,\ell]}(u,u) &= \E_{v \in \Nout(u)} [\visit_{[0,\ell-1]}(v,u)] \\
		&\le \E_{v \in \Nout(u)} [\visit_{[0,\ell]}(v,u)] \\
		&< \Pr_{v \in \Nout(u)}[v \in \calS_u] \cdot 1 + \Pr_{v \in \Nout(u)}[v \notin \calS_u] \cdot 1/6 < 1/6 + 1/6 < 1/3,
	\end{align*}
	a contradiction to the assumption that $u$ is heavy.
	\end{proof}
	
	Finally, note that by definition of $\calH_u$ and $\calS_u$ we immediately have
	\[
	|\calS| = \sum_{u \in V} |\calH_u| \ge \sum_{u \in V} |\calS_u|.
	\]
	
	By~\autoref{claim:properties}, we have
	\[
	\sum_{u \in \Vheavy} \dout(u) \le 6 \cdot \sum_{u \in V} |\calS_u| \le 6 \cdot \sum_{u \in V} |\calH_u| \le O\left(n \cdot \ell\right),
	\]
	which completes the proof.

\end{proof}

\subsection{A Simple One-Pass Algorithm for Simulating Random Walks}
We first describe a simple one-pass algorithm for simulating random walks, which will be used as a sub-routine in our two-pass algorithm. Moreover, this one-pass algorithm is starting vertex oblivious, which will be crucial for us later.

\paragraph*{Reservoir sampling in one pass.} Before describing our one-pass subroutine, we need the following basic reservoir sampling algorithm. 

\begin{lemma}[\cite{Vitter85}]
	Given input access to a stream of $n$ items such that each item can be described by $O(1)$ words, we can uniformly sample $m$ of them without replacement using $O(m)$ words of space.
\end{lemma}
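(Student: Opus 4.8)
The plan is to use Vitter's classical \emph{Algorithm R}. I would maintain an array $R$ of $m$ cells (the ``reservoir'') together with a counter $i$ recording how many stream items have been read so far. The stream is processed as follows: for the first $m$ items, simply place the $i$-th item into cell $i$ of $R$; for each subsequent item (the $i$-th one, with $i > m$), draw a uniform random number and, with probability $m/i$, replace a uniformly random one of the $m$ cells of $R$ by the new item, while with probability $1 - m/i$ discard the new item. When the stream ends, output the contents of $R$. (If $n < m$ the statement is vacuous, and we just output all items.)

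For correctness I would establish the invariant: after reading the first $i \ge m$ items, the set stored in $R$ is a uniformly random size-$m$ subset (without replacement) of those $i$ items, i.e.\ each of the $\binom{i}{m}$ subsets is equally likely. The base case $i = m$ is immediate. For the inductive step, fix a target subset $S \subseteq \{1,\dots,i\}$ with $|S| = m$ and compute $\Pr[R = S]$ after step $i$, distinguishing two cases. If $i \notin S$, then $R = S$ after step $i$ iff $R = S$ after step $i-1$ and the $i$-th item is discarded, which by the induction hypothesis has probability $\frac{1}{\binom{i-1}{m}} \cdot \frac{i-m}{i} = \frac{1}{\binom{i}{m}}$. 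If $i \in S$, then $R = S$ after step $i$ iff, for some $j \in \{1,\dots,i-1\} \setminus S$, we had $R = (S \setminus \{i\}) \cup \{j\}$ after step $i-1$, the $i$-th item was inserted, and it evicted exactly cell $j$; there are $i-m$ choices of $j$, each contributing $\frac{1}{\binom{i-1}{m}} \cdot \frac{m}{i} \cdot \frac{1}{m}$, for a total of $\frac{i-m}{i\binom{i-1}{m}} = \frac{1}{\binom{i}{m}}$. This proves the invariant, and taking $i = n$ yields the desired uniform sample.

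For the space bound I would note that $R$ stores $m$ items, each taking $O(1)$ words by assumption, the counter $i$ fits in $O(1)$ words (we may assume $\log n$ fits in a machine word), and generating the random choices at each step needs only $O(1)$ words of scratch space; hence the total space is $O(m)$ words, and the algorithm makes a single pass over the stream.

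The argument is entirely elementary; the only place that needs care is the bookkeeping in the inductive step — in particular correctly enumerating, in the case $i \in S$, the post-step-$(i-1)$ states of $R$ that can lead to $R = S$ — together with the binomial identity $\binom{i}{m} = \binom{i-1}{m} \cdot \frac{i}{i-m}$ used in both cases.
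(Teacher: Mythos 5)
Your proof is correct and is exactly the standard reservoir-sampling (Algorithm R) argument from the reference \cite{Vitter85} that the paper cites for this lemma without proof; both the inductive computation showing each $m$-subset has probability $1/\binom{i}{m}$ and the $O(m)$-word space accounting are accurate.
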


Using $m$ independent reservoir samplers each with capacity $1$, one can also sample $m$ items from the stream \emph{with replacement} in a space-efficient way.
\begin{corollary}\label{cor:reservoir-with-replacement}
	Given input access to a stream of $n$ items such that each item can be described by $O(1)$ words, we can uniformly sample $m$ of them with replacement using $O(m)$ words of space.
\end{corollary}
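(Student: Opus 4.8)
The plan is to reduce sampling with replacement to $m$ independent invocations of sampling without replacement, each of size one. Concretely, I would instantiate $m$ copies of the single-item reservoir sampler guaranteed by the preceding lemma (that lemma, applied with parameter $1$, gives an algorithm that reads the stream once, uses $O(1)$ words, and outputs an element distributed uniformly over the $n$ stream items). Crucially, each copy is seeded with fresh, mutually independent random bits, and all $m$ copies are run \emph{simultaneously} during a single pass: when the $i$-th stream item arrives, it is fed in turn to each of the $m$ samplers, which each update their $O(1)$-word state. Since each item is described by $O(1)$ words, the state of each sampler (which stores at most one item plus a counter) fits in $O(1)$ words, so the total space is $m \cdot O(1) = O(m)$ words.

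For correctness, I would argue as follows. Fix the $m$ independent random strings used by the $m$ samplers. Conditioned on those, each sampler deterministically produces some item, and by the guarantee of the base lemma the $j$-th sampler's output is marginally uniform over the $n$ items. Because the samplers interact with the stream only through their private randomness and the (common, fixed) stream content, the output of sampler $j$ is a function of its own random string alone; hence the $m$ outputs are mutually independent random variables, each uniform over the $n$ items. That is exactly the definition of drawing $m$ items uniformly \emph{with replacement}. The pass count is one, since all samplers share the single pass.

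I do not anticipate a real obstacle here; the statement is essentially a packaging of the base reservoir lemma. The only points that merit a sentence of care are (i) verifying that running $m$ samplers in parallel still constitutes a single streaming pass (it does, since on each arriving token we simply perform $m$ independent constant-time updates), and (ii) making explicit that independence of the seed randomness across the $m$ samplers yields joint independence of the outputs, not merely pairwise or marginal uniformity. Both are immediate once phrased correctly, so the write-up will be short.
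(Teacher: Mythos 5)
Your proposal is correct and is exactly the paper's argument: the paper derives the corollary by running $m$ independent capacity-$1$ reservoir samplers in parallel over the single pass, which is precisely your construction, and your independence/space accounting is right. Nothing further is needed.
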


\paragraph*{Description of the one-pass algorithm.} Now we describe our one-pass algorithm for simulating random walks. Our algorithm $\algosub$ is starting vertex oblivious, and can be described by a preprocessing subroutine $\prealgosub$ and a sampling subroutine $\sampalgosub$. Recall that as defined in~\autoref{defi:starting-vertex-oblivious}, $\prealgosub$ takes a single pass over the input graph streaming without knowing the starting vertex $\ustart$, and $\sampalgosub$ takes the output of $\prealgosub$ together with $\ustart$, and outputs a desired sample fo the random walk.


\begin{algorithm}[H]
	\caption{Preprocessing phase of $\algosub$: $\prealgosub(G,\tau,\Vfull)$}
	\label{algo:one-pass-preprocess}
	
	\begin{algorithmic}[1]
		\renewcommand{\algorithmicrequire}{\textbf{Input:}}
		\renewcommand{\algorithmicensure}{\textbf{Output:}}
		
		\Require One pass streaming access to a directed graph $G = (V,E)$. A parameter $\tau \in \N$. A subset $\Vfull \subseteq V$, and we also let $\Vsamp = V \setminus \Vfull$.\smallskip
		
		\State For each vertex $v \in \Vfull$, we record all its out-neighbors in the list $\Lsave_v$. (That is, $\Vfull$ stands for the set of vertices that we keep all its edges.) \smallskip
		
		\State For each vertex $v \in \Vsamp$, using~\autoref{cor:reservoir-with-replacement}, we sample $\tau$ of its out-neighbors uniformly at random with replacement in the list $\Lsave_v$. (That is, $\Vsamp$ stands for the set of vertices that we sample some of its edges.) \smallskip
		
		\State For a big enough constant $c_2 > 1$, whenever the number of out-neighbors stored exceeds $c_2 \cdot \tau \cdot n$, the algorithm stops recording them. If this happens, we say the algorithm \emph{operates incorrectly} and otherwise we say it \emph{operates correctly}. \smallskip
		
		\Ensure A collection of lists $\vecLsave = \{ \Lsave_v \}_{v \in V}$.
	\end{algorithmic}
	
\end{algorithm}
\begin{algorithm}[H]
	\caption{Sampling phase of $\algosub$: $\sampalgosub(V,\ustart,L,\Vfull, \vecLsave = \{ \Lsave_v \}_{v \in V})$}
	\label{algo:one-pass-sampling}
	
	\begin{algorithmic}[1]
		\renewcommand{\algorithmicrequire}{\textbf{Input:}}
		\renewcommand{\algorithmicensure}{\textbf{Output:}}
		
		\Require A starting vertex $\ustart$. The path length $L \in \N$. A subset $\Vfull \subseteq V$, and we also let $\Vsamp = V \setminus \Vfull$.\smallskip
		
		\State Let $v_0 = \ustart$. For each $v \in V$, we set $k_v = 1$.\smallskip
		
		\For{ $i \coloneqq 1 \to L$} 
		\If{$v_{i-1} \in \Vfull$}
		\State $v_i$ is set to be a uniformly random element from $\Lsave_{v_{i-1}}$
		\ElsIf{$k_{v_{i-1}} > |\Lsave_{v_{i-1}}|$} 
		\State
		\Return failure
		\Else 
		\State $v_i \leftarrow (\Lsave_{v_{i-1}})_{k_{v_{i-1}}}$.
		\State $k_{v_{i-1}} \leftarrow k_{v_{i-1}} + 1$. 
		\EndIf
		\EndFor\smallskip
		
		\Ensure The walk $(v_0,v_1,\dotsc,v_L)$.
	\end{algorithmic}
	
\end{algorithm}

\paragraph*{Analysis of the one-pass algorithm.} Now we analyze the correctness of our one-pass algorithm. We first observe its space complexity can be easily bounded.

\begin{observation}[Space complexity of $\algosub$]\label{ob:sub-procedure-space}
	Given a directed graph $G = (V,E)$ with $n$ vertices. For every $\tau \in \N$ and subset $\Vfull \subseteq V$, $\prealgosub(G,\tau,\Vfull)$ always takes at most $O(\tau \cdot n)$ words of space.
\end{observation}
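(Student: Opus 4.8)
The plan is to bound the working memory of $\prealgosub$ (\autoref{algo:one-pass-preprocess}) by accounting separately for its three ingredients: the out-neighbor lists $\{\Lsave_v\}_{v \in \Vfull}$ that store \emph{all} out-neighbors of the vertices in $\Vfull$, the reservoir-sampler states maintained for the vertices in $\Vsamp = V \setminus \Vfull$, and the auxiliary bookkeeping (a description of the set $\Vfull$, the parameter $\tau$, and a running counter of how many out-neighbors have been stored so far). I will argue that each of the three contributes $O(\tau \cdot n)$ words, with no dependence on the structure of $G$.

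For the reservoir-sampler states: there are $|\Vsamp| \le n$ of them, one per vertex of $\Vsamp$, and each one maintains $\tau$ out-neighbors sampled with replacement; by \autoref{cor:reservoir-with-replacement} each such sampler uses $O(\tau)$ words, for a total of $O(\tau \cdot n)$ words. The key point is that these samplers are independent and can be advanced in parallel during the single pass while operating on disjoint state, so running all $n$ of them simultaneously incurs no hidden overhead beyond the $O(\tau \cdot n)$ sum.

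For the lists $\{\Lsave_v\}_{v\in\Vfull}$: a priori this could be as large as $\sum_{v \in \Vfull}\dout(v) = \Theta(n^2)$ words, so the bound must come from the explicit stopping rule (the third step) of \autoref{algo:one-pass-preprocess}: as soon as the total number of stored out-neighbors would exceed $c_2 \cdot \tau \cdot n$, the algorithm ceases recording. Since this check is applied \emph{continuously} as edges arrive rather than only at the end of the pass, the number of stored out-neighbors never exceeds $c_2 \cdot \tau \cdot n + O(1)$ at any point in time, so this component uses $O(\tau \cdot n)$ words throughout the pass. The auxiliary bookkeeping fits in $O(n)$ words, which is absorbed into $O(\tau \cdot n)$ since $\tau \ge 1$. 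Summing the three contributions yields the claimed $O(\tau \cdot n)$-word bound, and because each of the three bounds holds regardless of $G$ and of the choice of $\Vfull$, the bound is unconditional.

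There is essentially no obstacle here; the only points that require a moment of care are (i) reading the stopping rule as a bound on the \emph{peak} memory over the course of the stream rather than merely the memory at the end — which is how the algorithm is specified — and (ii) observing that maintaining many reservoir samplers in parallel does not blow up the space because their states are disjoint and individually of size $O(\tau)$.
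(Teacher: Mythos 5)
Your proposal is correct and matches the paper's (implicit) reasoning: the paper states this observation without proof precisely because it follows immediately from \autoref{cor:reservoir-with-replacement} giving $O(\tau)$ words per vertex of $\Vsamp$ and from the explicit stopping rule in \autoref{algo:one-pass-preprocess} capping the total number of recorded out-neighbors at $c_2 \cdot \tau \cdot n$. Your accounting (samplers, capped full lists, $O(n)$ bookkeeping) is exactly the intended argument.
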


Next we bound the statistical distance between its output distribution and the correct distribution of the random walk by the following lemma.

\begin{lemma}[Correctness of $\algosub$]\label{lemma:sub-procedure-correctness}
	Given a directed graph $G = (V,E)$ with $n$ vertices. For every integers $\tau,L \in \N$ and subset $\Vfull \subseteq V$ such that $\tau \cdot (n - |\Vfull|) + \sum_{v \in \Vfull} \dout(v) \le c_2 \cdot \tau \cdot n$, let $\bvecLsave$ be random variable of the output of $\prealgosub(G,\tau,\Vfull)$. For every $\ustart \in V$, the output distribution of $\sampalgosub(V,\ustart,L,\Vfull, \bvecLsave)$ has statistical distance $\beta$ to $\RW^{G}_L(\ustart)$, where $\beta$ is the probability that $\sampalgosub(V,\ustart,L,\Vfull, \bvecLsave)$ outputs failure. 
\end{lemma}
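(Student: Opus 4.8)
The plan is to exhibit a coupling between the output of $\sampalgosub(V,\ustart,L,\Vfull,\bvecLsave)$ and a genuine $L$-step random walk from $\ustart$ under which the two coincide exactly on the event that $\sampalgosub$ does not return failure; the total variation bound then falls out of a one-line counting argument.

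First I would dispatch the (routine) role of the degree hypothesis. For each $v\in\Vsamp$ the subroutine $\prealgosub$ stores $\tau$ out-neighbors (sampled with replacement via $\tau$ capacity-$1$ reservoir samplers, cf.~\autoref{cor:reservoir-with-replacement}), and for each $v\in\Vfull$ it stores all $\dout(v)$ out-neighbors, so the total number of stored out-neighbors is at most $\tau\,(n-|\Vfull|)+\sum_{v\in\Vfull}\dout(v)\le c_2\cdot\tau\cdot n$; since this count is nondecreasing as the pass proceeds, the threshold in Line~3 of $\prealgosub$ is never triggered, $\prealgosub$ operates correctly, and $\bvecLsave$ has the following structure: the lists $\{\Lsave_v\}_{v\in V}$ are mutually independent, $\Lsave_v$ consists of $\tau$ i.i.d.\ uniform elements of $\Nout(v)$ for $v\in\Vsamp$, and $\Lsave_v$ lists all of $\Nout(v)$ for $v\in\Vfull$.

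Next, the coupling. Introduce independent random variables $\{R_{v,j}\}_{v\in V,\,j\ge 1}$ with $R_{v,j}$ uniform on $\Nout(v)$. Realize $\bvecLsave$ and $\sampalgosub$'s internal randomness from these: for $v\in\Vsamp$ take $(R_{v,1},\dots,R_{v,\tau})$ as $\Lsave_v$, and for $v\in\Vfull$ let $\sampalgosub$ use $R_{v,k}$ as the uniform element of $\Lsave_v=\Nout(v)$ drawn on its $k$-th visit to $v$. From the same variables define a process $W^\ast=(v_0^\ast,\dots,v_L^\ast)$ by $v_0^\ast=\ustart$ and $v_i^\ast=R_{v_{i-1}^\ast,\,k}$, where $k$ is the number of $t\le i-1$ with $v_t^\ast=v_{i-1}^\ast$ (the running visit count to $v_{i-1}^\ast$). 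Since each pair $(v,k)$ is consumed at most once in generating $W^\ast$ — precisely at the step following the $k$-th visit to $v$ — and at that moment $R_{v,k}$ is independent of the prefix produced so far, $W^\ast$ is distributed exactly as $\RW^G_L(\ustart)$. A straightforward induction on $i$ then shows that, as long as $\sampalgosub$ has not returned failure through step $i$, its partial walk equals $(v_0^\ast,\dots,v_i^\ast)$: the base case is $v_0=v_0^\ast=\ustart$, and if the prefixes agree up to $i-1$ then the visit count $k$ to $v_{i-1}=v_{i-1}^\ast$ agrees in both processes, so $\sampalgosub$ either fails (possible only when $v_{i-1}\in\Vsamp$ and $k>\tau=|\Lsave_{v_{i-1}}|$) or sets $v_i=R_{v_{i-1},k}=v_i^\ast$.

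Finally I would collect the bound. Let $E$ be the event that $\sampalgosub$ does not fail, so $\Pr[\neg E]=\beta$. By the induction, on $E$ the output of $\sampalgosub$ equals $W^\ast$, hence for every $w\in V^{L+1}$ we get $\calD(w)=\Pr[W^\ast=w,\,E]\le\Pr[W^\ast=w]=\RW^G_L(\ustart)(w)$, while $\calD(\text{failure})=\beta$ and $\sum_{w\in V^{L+1}}\calD(w)=\Pr[E]=1-\beta$. Therefore $\sum_{w\in V^{L+1}}\bigl(\RW^G_L(\ustart)(w)-\calD(w)\bigr)=\beta$, and
\[
\tvd{\calD}{\RW^G_L(\ustart)}=\tfrac12\Bigl(\sum_{w\in V^{L+1}}\bigl(\RW^G_L(\ustart)(w)-\calD(w)\bigr)+\calD(\text{failure})\Bigr)=\tfrac12(\beta+\beta)=\beta.
\]
I expect the only genuinely delicate point to be the verification that $W^\ast$ is a bona fide random walk, i.e.\ that each indexed sample $R_{v,k}$ is "fresh" at the time it is used; this hinges on the observation that distinct steps of the walk consume distinct index pairs $(v,k)$, after which the visit-counter bookkeeping and the total variation computation are elementary.
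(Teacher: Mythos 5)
Your proof is correct and takes essentially the same route as the paper: a coupling with an exact $L$-step random walk that never fails and coincides with the sampler's output on the no-failure event (the paper realizes this by appending $L$ fresh samples to the lists, you by generating both processes from a common indexed family $R_{v,k}$), after which agreement-except-failure yields statistical distance $\beta$. Your explicit treatment of the fresh uniform draws at $\Vfull$ vertices and the exact total-variation computation are fine, and if anything slightly more careful than the paper's short argument.
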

\begin{proof}
	Conclude from $\tau \cdot (n - |\Vfull|) + \sum_{v \in \Vfull} \dout(v) \le c_2 \cdot \tau \cdot n$ that $\prealgosub(G,\tau,\Vfull)$ always operates correctly.
		
	To bound the statistical distance between the distribution of $\sampalgosub(V,\ustart,L,\Vfull, \bvecLsave)$ and $\RW^{G}_L(\ustart)$. We construct another random variable $(\bvecLsave)'$, in which for every vertex $u$, we sample another $L$ out-neighbors of $u$ uniformly at random with replacement, and add them to the end of the list $\Lsave_v$ in $\bvecLsave$.
	
	Note that $\sampalgosub(V,\ustart,L,\Vfull, (\bvecLsave)')$ never outputs failure, and distributes exactly the same as $\RW^{G}_L(\ustart)$. On the other hand, $\sampalgosub(V,\ustart,L,\Vfull, (\bvecLsave)')$ and $\sampalgosub(V,\ustart,L,\Vfull, \bvecLsave)$ are the same as long as $\sampalgosub(V,\ustart,L,\Vfull, \bvecLsave)$ does not output failure, which completes the proof.
\end{proof}

The following corollary follows immediately from the lemma above. (Note that this special case exactly corresponds to the folklore one-pass streaming algorithm for simulating random walks.)

\begin{corollary}\label{cor:sub-procedure-correctness}
	Given a directed graph $G = (V,E)$ with $n$ vertices and an integer $L \in \N$. Let $\bvecLsave$ be random variable of the output of $\prealgosub(G,L,\emptyset)$. For every $\ustart \in V$, the output distribution of $\sampalgosub(V,\ustart,L,\emptyset, \bvecLsave)$ distributes identically as $\RW^{G}_L(\ustart)$.
\end{corollary}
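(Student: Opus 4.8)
The plan is to obtain the corollary as an immediate specialization of \autoref{lemma:sub-procedure-correctness}. I would instantiate that lemma with $\tau = L$ and $\Vfull = \emptyset$ (hence $\Vsamp = V$). To apply it one must check the hypothesis $\tau \cdot (n - |\Vfull|) + \sum_{v \in \Vfull} \dout(v) \le c_2 \cdot \tau \cdot n$; with $\Vfull = \emptyset$ the left-hand side equals exactly $L \cdot n$, and since $c_2 > 1$ by construction the inequality holds, so $\prealgosub(G,L,\emptyset)$ operates correctly and \autoref{lemma:sub-procedure-correctness} applies. It yields that the output distribution of $\sampalgosub(V,\ustart,L,\emptyset,\bvecLsave)$ has statistical distance exactly $\beta$ from $\RW^G_L(\ustart)$, where $\beta$ is the probability that this invocation of $\sampalgosub$ returns failure.

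It then remains only to argue that $\beta = 0$, i.e.\ that $\sampalgosub(V,\ustart,L,\emptyset,\bvecLsave)$ never returns failure. Since $\Vfull = \emptyset$, every vertex lies in $\Vsamp$, so \autoref{algo:one-pass-preprocess} stores a list $\Lsave_v$ of length exactly $\tau = L$ for each $v \in V$; in particular the branch ``$v_{i-1} \in \Vfull$'' of \autoref{algo:one-pass-sampling} is never taken, and failure is returned only if at some iteration $i$ the counter satisfies $k_{v_{i-1}} > |\Lsave_{v_{i-1}}| = L$. A short induction on $i$ shows that, as long as no failure has occurred, before iteration $i$ the counters have received a total of exactly $i-1$ increments (one per previous iteration), so $k_{v_{i-1}} \le 1 + (i-1) = i \le L$, and the failure test is never satisfied. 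Hence $\beta = 0$.

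Combining the two steps shows that $\sampalgosub(V,\ustart,L,\emptyset,\bvecLsave)$ is distributed identically to $\RW^G_L(\ustart)$, which is the claim. I do not expect any real obstacle here: the only point requiring a moment's care is the bookkeeping observation that an $L$-step walk departs from any fixed vertex at most $L$ times, which is precisely what guarantees the length-$L$ reservoir lists never run dry; everything else is a direct substitution into \autoref{lemma:sub-procedure-correctness} together with the trivial verification of its precondition.
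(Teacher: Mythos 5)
Your proposal is correct and matches the paper's intended argument: the paper states the corollary as an immediate specialization of \autoref{lemma:sub-procedure-correctness} with $\tau = L$ and $\Vfull = \emptyset$, and your verification that the precondition holds and that failure never occurs (since an $L$-step walk makes at most $L$ total departures, so the length-$L$ lists never run dry, giving $\beta = 0$) is exactly the omitted routine check.
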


\subsection{Two-Pass Streaming Algorithm for Simulating Random Walks}

\paragraph*{Description of the two-pass algorithm.} Now we are ready to describe our two pass algorithm $\algotwopass$, which is also starting vertex oblivious, and can be described by the following two sub-routines $\prealgomain$ and $\sampalgomain$.

\begin{algorithm}[H]
	\caption{Preprocessing phase of $\algotwopass$: $\prealgomain(G,L,\delta)$}
	\label{algo:two-pass-preprocess}
	\begin{algorithmic}[1]
		\renewcommand{\algorithmicrequire}{\textbf{Input:}}
		\renewcommand{\algorithmicensure}{\textbf{Output:}}
		
		\Require A directed graph $G = (V,E)$ with $n$ vertices. An integer $L \in \N$. A failure parameter $\delta \in (0,1/n)$. We also let $\ell = \sqrt{L}$, and $\gamma = c_1 \cdot \log \delta^{-1}$ where $c_1 \ge 1$ is a sufficiently large constant to be specified later. 
		\smallskip
		
		\State \textbf{First pass: estimation of heavy and light vertices.}
		
		\begin{enumerate}
			\item Run $\gamma$ independent instances of $\prealgosub(G,\ell,\emptyset)$ and let $(\Lsave)^{(1)},\dotsc,(\Lsave)^{(\gamma)}$ be the corresponding collections of lists.
			
			
			\item For each vertex $u \in V$, by running $\sampalgosub(V,u,\ell,\emptyset,(\Lsave)^{(j)})$ for each $j \in [\gamma]$, we take $\gamma$ independent samples from $\RW^{G}_{\ell}$. Let $w_{u}$ be the fraction of these random walks that revisit $u$ in $\ell$ steps.
			
			
			\item Let $\WTVheavy$ be the set of vertices with $w_u \ge 0.5$, and $\WTVlight$ be the set of vertices with $w_u < 0.5$.
		\end{enumerate}
		
		\State \textbf{Second Pass: heavy-light edge recording}
		
		\begin{enumerate}
			\item Let $\Vfull = \WTVheavy$.
			
			\item Run $\prealgosub(G, \gamma \cdot \ell,\Vfull)$ to obtain a collection of lists $\vecLsave$.
			
		\end{enumerate}
		
		\Ensure The set $\Vfull$ and the collection of lists $\vecLsave$.
	\end{algorithmic}
\end{algorithm}

\begin{algorithm}[H]
	\caption{Sampling phase of $\algotwopass$: $\sampalgomain(V,\ustart,L,\Vfull, \vecLsave = \{ \Lsave_v \}_{v \in V})$}
	\label{algo:two-pass-sampling}
	
	\begin{algorithmic}[1]
		\renewcommand{\algorithmicrequire}{\textbf{Input:}}
		\renewcommand{\algorithmicensure}{\textbf{Output:}}
		
		\Require A starting vertex $\ustart$. The path length $L \in \N$. A subset $\Vfull \subseteq V$, and a collection of lists $\vecLsave$.
		
		\Ensure Simulate $\sampalgosub(V,\ustart,L,\Vfull,\vecLsave)$ and return its output.
	\end{algorithmic}
	
\end{algorithm}


\paragraph*{Analysis of the algorithm.} We first show that with high probability, $\WTVlight$ and $\WTVheavy$ are subsets of $\Vlight$ and $\Vheavy$ respectively.

\begin{lemma}\label{lemma:subsample-light-and-heavy}
	Given a directed graph $G = (V,E)$ with $n$ vertices, $L \in \N$ and $\delta \in (0,1/n)$, letting $\ell = \sqrt{L}$, with probability at least $1 - \delta/2$ over the internal randomness of $\prealgomain(G,L,\delta)$, it holds that $\WTVlight \subseteq \Vlight$ and $\WTVheavy \subseteq \Vheavy$.
\end{lemma}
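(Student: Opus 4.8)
The plan is to show that the empirical estimates $w_u$ concentrate around the true revisit probabilities $\visit_{[1,\ell]}(u,u)$, and then apply a union bound over all $n$ vertices. Recall that by \autoref{cor:sub-procedure-correctness}, running $\sampalgosub(V,u,\ell,\emptyset,(\Lsave)^{(j)})$ produces a genuine sample from $\RW^G_\ell(u)$ (the special case $\Vfull=\emptyset$, $\tau = \ell$ never fails). Moreover, the $\gamma$ instances $(\Lsave)^{(1)},\dotsc,(\Lsave)^{(\gamma)}$ are independent, so for a fixed vertex $u$ the indicator random variables ``the $j$-th walk from $u$ revisits $u$ within $\ell$ steps'' are i.i.d.\ Bernoulli with mean $\visit_{[1,\ell]}(u,u)$, and $w_u$ is their average over $j \in [\gamma]$.

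First I would fix a vertex $u \in \WTVlight$; by definition this means $w_u < 0.5$. Suppose for contradiction $u \notin \Vlight$, i.e.\ $\visit_{[1,\ell]}(u,u) > 2/3$. Then $w_u < 0.5$ represents a deviation of more than $1/6$ below the mean $\mu_u > 2/3$, so by the multiplicative Chernoff bound (\autoref{lemma:chernoff}) applied to the sum $\gamma \cdot w_u$ of $\gamma$ independent $[0,1]$ variables with mean $\gamma \mu_u$, this event has probability at most $\exp(-\Omega(\gamma))$; taking $\gamma = c_1 \log \delta^{-1}$ with $c_1$ large enough makes this at most $\delta/(2n)$. Symmetrically, fix $u \in \WTVheavy$, so $w_u \ge 0.5$; if $u \notin \Vheavy$ then $\visit_{[1,\ell]}(u,u) < 1/3$, and $w_u \ge 0.5$ is a deviation of at least $1/6$ above a mean below $1/3$, again of probability at most $\exp(-\Omega(\gamma)) \le \delta/(2n)$ for suitable $c_1$.

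Finally I would union bound over all $n$ vertices: the probability that there exists some $u$ with $u \in \WTVlight \setminus \Vlight$ or $u \in \WTVheavy \setminus \Vheavy$ is at most $n \cdot \delta/(2n) \cdot (\text{constant for the two cases}) $; choosing $c_1$ to absorb the constant factor, this is at most $\delta/2$. Hence with probability at least $1 - \delta/2$ we simultaneously have $\WTVlight \subseteq \Vlight$ and $\WTVheavy \subseteq \Vheavy$, as claimed. The only mild subtlety — and the one point worth stating carefully rather than the main obstacle — is that for each vertex $u$ we must verify the revisit indicators are genuinely i.i.d.\ with the correct mean: this follows because the $\gamma$ preprocessing instances use fresh independent randomness and \autoref{cor:sub-procedure-correctness} guarantees each produces an exact sample of $\RW^G_\ell(u)$, so ``revisiting $u$ within $\ell$ steps'' has probability exactly $\visit_{[1,\ell]}(u,u)$ by the definition of $\visit$. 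Everything else is a routine Chernoff-plus-union-bound argument, so I do not anticipate a serious obstacle here.
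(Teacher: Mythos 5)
Your proposal is correct and follows essentially the same route as the paper: use \autoref{cor:sub-procedure-correctness} to argue that, for each fixed $u$, the $\gamma$ revisit indicators are i.i.d.\ Bernoulli with mean $\visit_{[1,\ell]}(u,u)$, then apply the Chernoff bound and a union bound over all $n$ vertices, with $c_1$ chosen large enough to make the total failure probability at most $\delta/2$. The only cosmetic difference is that the paper phrases the concentration step as $|w_u - \visit_{[1,\ell]}(u,u)| \le 0.1$ for all $u$ simultaneously and then invokes the definitions of heavy and light, whereas you argue the two inclusion failures separately; the content is the same.
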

\begin{proof}
	Setting $c_1$ in~\autoref{algo:two-pass-preprocess} to be a large enough constant and applying~\autoref{cor:sub-procedure-correctness} and the Chernoff bound, with probability at least $1 - n \cdot \delta^3 \ge 1 - \delta/2$, $|w_u - \visit_{[1,\ell]}(u,u)| \le 0.1$ for every $u \in V$. The lemma then follows from the definition of heavy and light vertices.
\end{proof}


Next, we show that with high probability, a random walk does not visit a light vertex too many times.


\begin{lemma}\label{lemma:light-is-not-visited-much}
	Given a directed graph $G = (V,E)$ with $n$ vertices, $L \in \N$ and $\delta \in (0,1/n)$, letting $\ell = \sqrt{L}$ and $\gamma = c_1 \cdot \log \delta^{-1}$, where $c_1 > 1$ is the sufficiently large constant, for every vertex $\ustart \in V$ and vertex $v \in \Vlight^\ell(G)$, an $L$-step random walk starting from $\ustart$ visits $v$ more than $\gamma \cdot \ell$ times with probability at most $\delta / 2n$.
\end{lemma}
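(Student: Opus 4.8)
The plan is to bound the number of visits a length-$L$ walk makes to a fixed $\ell$-light vertex $v$ by separating two effects: a \emph{deterministic} pigeonhole bound on how many ``long gaps'' between consecutive visits to $v$ can occur, and a \emph{probabilistic} martingale bound showing that long runs of ``short gaps'' are unlikely precisely because $v$ is light. Fix $\ustart\in V$ and $v\in\Vlight^\ell(G)$, run an $L$-step random walk from $\ustart$, let $\tau_1<\tau_2<\cdots$ be the (random) times at which it visits $v$ (with $\tau_j\coloneqq\infty$ if there is no $j$-th visit), and for each integer $j\ge 1$ set $R_j\coloneqq\mathbf{1}[\tau_{j+1}<\infty \text{ and } \tau_{j+1}-\tau_j\le\ell]$, so $R_j=0$ unless both the $j$-th and $(j{+}1)$-th visits exist. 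A maximal block of consecutive $1$'s is a burst of tightly clustered visits to $v$, and the $0$'s mark the ends of bursts.

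The first, deterministic, step is to observe that spread-out visits are few: if $R_j=0$ while $\tau_{j+1}<\infty$, then $(\tau_j,\tau_{j+1}]$ is an interval of length $>\ell=\sqrt L$, and for distinct such $j$ these intervals are pairwise disjoint and contained in $(0,L]$, so fewer than $L/\ell=\ell$ of them occur. Hence, on the event that the walk visits $v$ more than $\gamma\ell$ times, putting $m\coloneqq\lceil\gamma\ell\rceil$, all of $\tau_1,\dots,\tau_m$ are finite, every $0$ among $R_1,\dots,R_{m-1}$ is a genuine long gap, and therefore
\[
  \sum_{j=1}^{m-1}R_j \;\ge\; (m-1)-\ell \;\ge\; (\gamma-2)\ell
\]
using $\ell\ge 1$ and $m-1\ge\gamma\ell-1$. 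So it suffices to prove $\Pr\bigl[\sum_{j=1}^{m-1}R_j\ge(\gamma-2)\ell\bigr]\le\delta/2n$.

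For this I would use the martingale structure of the $R_j$'s. The key claim is $\E[R_j\mid R_1,\dots,R_{j-1}]\le 2/3$ for every $j$: by the strong Markov property at the stopping time $\tau_j$, on $\{\tau_j<\infty\}$ the walk after time $\tau_j$ is a fresh random walk from $v$, and $R_j=1$ exactly when this continuation revisits $v$ within its next $\min(\ell,L-\tau_j)$ steps, which has probability at most $\visit^G_{[1,\ell]}(v,v)\le 2/3$ by $\ell$-lightness; on $\{\tau_j=\infty\}$ we have $R_j=0$; and since $R_1,\dots,R_{j-1}$ are measurable with respect to the walk up to time $\tau_j$, conditioning on this coarser information only averages the bound down. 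Now \autoref{cor:azuma-hoeffding} applied to $R_1,\dots,R_{m-1}$ with $p_j=2/3$, taking $\lambda\coloneqq(\tfrac{\gamma}{3}-2)\ell$ (which is at least $\tfrac{\gamma}{6}\ell>0$ once $\gamma$ exceeds a small absolute constant, and satisfies $\lambda+\tfrac23(m-1)\le(\gamma-2)\ell$ since $m-1\le\gamma\ell$), gives
\[
  \Pr\Bigl[\textstyle\sum_{j=1}^{m-1}R_j\ge(\gamma-2)\ell\Bigr] \;\le\; \exp\!\Bigl(-\frac{\lambda^2}{2(m-1)}\Bigr) \;\le\; \exp\!\Bigl(-\frac{\gamma\ell}{72}\Bigr).
\]
Plugging in $\gamma=c_1\log\delta^{-1}$, $\ell\ge 1$ and $\delta<1/n$, the right-hand side is $\delta^{\Omega(c_1)}$, which is below $\delta/2n$ for $c_1$ a large enough absolute constant (using $\delta^2<1/n^2<1/2n$); this also fixes how large $c_1$ must be taken in \autoref{algo:two-pass-preprocess}.

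\textbf{Main obstacle.} The subtle point — and the reason this is not a one-line Chernoff computation — is that $\ell$-lightness does \emph{not} bound the expected number of visits to $v$: a walk started at $v$ may in expectation visit an $\ell$-light $v$ as many as $\Theta(\ell)$ times within $\ell$ steps, so the ``expected'' value of $\sum_j R_j$ taken over \emph{all} visits is of order $L\gg\gamma\ell$ and concentration alone says nothing. The whole argument is the interaction between the per-visit probabilistic bound ($2/3$) and the deterministic pigeonhole bound that at most $\ell$ inter-visit gaps can exceed $\ell$: together they force that exceeding $\gamma\ell$ visits entails an improbably long run of quick returns. Getting this coupling exactly right — in particular handling the random and possibly infinite stopping times $\tau_j$ cleanly via the strong Markov property, and verifying the conditional-expectation bound against $\sigma(R_1,\dots,R_{j-1})$ rather than the full history — is the part that requires care.
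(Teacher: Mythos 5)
Your proposal is correct and follows essentially the same route as the paper's proof: the same decomposition into inter-visit gaps with indicators for "short gap", the same pigeonhole observation that at most $\ell$ gaps of length more than $\ell$ fit into $L$ steps, the same conditional bound $\E[R_j\mid R_{<j}]\le 2/3$ from $\ell$-lightness via the Markov property, and the same application of \autoref{cor:azuma-hoeffding}. Your write-up is merely a bit more explicit about the stopping-time conditioning and the final constants than the paper's version.
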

\begin{proof}
	Suppose we have an infinite random walk $\rv{W}$ starting from $\ustart$ in $G$. Letting $\tau = \gamma \ell$, the goal here is to bound the probability that during the first $L$ steps, $\rv{W}$ visits $v$ more than $\tau$ times. We denote this as the bad event $\calE_{\sf bad}$.
	
	Let $\rv{Z}_{i}$ be the random variable representing the step at which $\rv{W}$ visits $v$ for the $i$-th time (if $\rv{W}$ visits $v$ less than $i$ times in total, we let $\rv{Z}_i = \infty$). $\calE_{\sf bad}$ is equivalent to that $\rv{Z}_{\tau+1} \le L$.
	
	$\rv{Z}_{\tau+1} \le L$ further implies that for at least $(\tau - \ell)$ $i \in [\tau]$, $\rv{Z}_{i+1} - \rv{Z}_i \le \ell$ and $\rv{Z}_{i} < \infty$. In the following we denote this event as $\calE_1$ and bounds its probability instead. 
	
	For each $i \in [\tau]$, let $\rv{Y}_i$ be the random variable which takes value $1$ if both $Z_{i} < \infty$ and $\rv{Z}_{i+1} - \rv{Z}_i \le \ell$ hold, and $0$ otherwise. Letting $\rv{Y}_{<i} = (\rv{Y}_1,\dotsc,\rv{Y}_{i-1})$, the following claim is crucial for us.
	\begin{claim}
		For every $i \in[\tau]$ and every possible assignments $Y_{<i} \in \{0,1\}^{i-1}$, we have
		\[
		\E[\rv{Y}_i | \rv{Y}_{<i} = Y_{<i}] \le 2/3.
		\]
	\end{claim}
	\begin{proof}
		By the Markov property of the random walk, and noting that $\rv{Y}_i$ is always $0$ when $\rv{Z}_i = \infty$, we have.
		\begin{align*}
		\E[\rv{Y}_i | \rv{Y}_{<i} = Y_{<i}] &= \sum_{j=0}^{\infty} \Pr[\rv{Z}_i = j | \rv{Y}_{<i} = Y_{<i}] \cdot \E[\rv{Y}_i | \rv{Y}_{<i} = Y_{<i},\rv{Z}_i = j]\\
		&= \sum_{j=0}^{\infty} \Pr[\rv{Z}_i = j | \rv{Y}_{<i} = Y_{<i}] \cdot \E[\rv{Y}_i | \rv{Z}_i = j].
		\end{align*}
		
		To further bound the quantity above, recall that the event $\rv{Z}_i = j$ means that the random walk $\rv{W}$ starting from $\ustart$ visits the light vertex $v$ for the $i$-th time at $\rv{W}$'s $j$-th step, and we have
		\[
		\E[\rv{Y}_i | \rv{Z}_i = j] = \Pr[\rv{Y}_i = 1 | \rv{Z}_i = j] = \Pr[\rv{Z}_{i+1} \le j + \ell | \rv{Z}_i = j].
		\]		
		
		By the Markov property of the random walk $\rv{W}$, $\Pr[\rv{Z}_{i+1} \le j + \ell | \rv{Z}_i = j]$ equals the probability that a random walk starting from $v$ revisits $v$ in at most $\ell$ steps. By the definition of light vertices, we can bound that by $2/3$, which completes the proof.
		
	\end{proof}
	
	Then by the Azuma-Hoeffding inequality (\autoref{cor:azuma-hoeffding}),
	\begin{align*}
	\Pr_{\rv{W}}[\calE_{\sf bad}] 
	&\le \Pr_{\rv{W}}[ \calE_1 ] \\
	&= \Pr_{\rv{W}}\left[ \sum_{i=1}^{\tau} \rv{Y}_i \ge (\tau - \ell)\right]\\
	&\le \exp(-\Omega(\tau - \ell - 2/3 \cdot \tau)) \le \delta / 2n,
	\end{align*}
	the last inequality follows from the fact that $\gamma = c_1 \cdot \log \delta^{-1}$ for a sufficiently large constant $c_1$.
	
	
\end{proof}

\newcommand{\bVfull}{\rv{V}_{\sf full}}
The correctness of the algorithm is finally completed by the following theorem.
\begin{theorem}[Formal version of~\autoref{theo:algo-main}]
	Given a directed graph $G = (V,E)$ with $n$ vertices, $L \in \N$ and $\delta \in (0,1/n)$. Let $\bvecLsave$ and $\bVfull$ be the two random variables of the output of $\prealgomain(G,L,\delta)$. For every $\ustart \in V$, the following hold:
	
	\begin{itemize}
		\item The output distribution of $\sampalgomain(V,\ustart,L,\bVfull, \bvecLsave)$ has statistical distance at most $\delta$ from $\RW^G_L(\ustart)$. 
		
		\item Both of $\prealgomain(G,L,\delta)$ and $\sampalgomain(V,\ustart,L,\bVfull, \bvecLsave)$ use at most $O(n \cdot \sqrt{L} \cdot \log \delta^{-1})$ words of space.
	\end{itemize}
	
\end{theorem}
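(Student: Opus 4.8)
The plan is to prove the two bullets separately: the space bound is a direct accounting against the subroutines already analysed, and the statistical-distance bound is where \autoref{lemma:heavy-total-out-degree-bounded}, \autoref{lemma:subsample-light-and-heavy}, \autoref{lemma:light-is-not-visited-much}, and \autoref{lemma:sub-procedure-correctness} are chained together.

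\textbf{Space.} For the space bound I would trace through \autoref{algo:two-pass-preprocess} using \autoref{ob:sub-procedure-space}. The first pass runs $\gamma$ copies of $\prealgosub(G,\ell,\emptyset)$, each using $O(\ell\cdot n)$ words, hence $O(\gamma\cdot\ell\cdot n)$ total; computing the scores $w_u$ only invokes $\sampalgosub$ (to generate $\ell$-step walks), which needs only $O(n)$ extra words for its counters $k_v$ and no storage of the walk itself since we only detect a return to $u$, and this space is reused across the $\gamma$ copies and the $n$ choices of $u$; the sets $\WTVheavy,\WTVlight$ take $O(n)$ words. The second pass runs $\prealgosub(G,\gamma\cdot\ell,\Vfull)$, which by \autoref{ob:sub-procedure-space} (and the explicit cap inside the subroutine) stores $O(\gamma\cdot\ell\cdot n)$ words. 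Finally $\sampalgomain$ merely runs $\sampalgosub$ on the lists $\vecLsave$ it is handed (of total size $O(\gamma\cdot\ell\cdot n)$, again because of the cap) plus $O(n)$ words of counters. Since $\ell=\sqrt{L}$ and $\gamma=c_1\log\delta^{-1}$, every quantity above is $O(n\cdot\sqrt{L}\cdot\log\delta^{-1})$.

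\textbf{Correctness.} Let $\calE_{\sf good}$ denote the event, over the randomness of the first pass, that $\WTVheavy\subseteq\Vheavy^\ell(G)$ and $\WTVlight\subseteq\Vlight^\ell(G)$; by \autoref{lemma:subsample-light-and-heavy}, $\Pr[\calE_{\sf good}]\ge 1-\delta/2$. I would fix any realization of the first pass inside $\calE_{\sf good}$, which fixes $\bVfull=\WTVheavy$ and therefore makes $\Vsamp=V\setminus\bVfull=\WTVlight\subseteq\Vlight^\ell(G)$. The crucial step is to verify the hypothesis of \autoref{lemma:sub-procedure-correctness} with $\tau=\gamma\cdot\ell$: since $\bVfull\subseteq\Vheavy^\ell(G)$, \autoref{lemma:heavy-total-out-degree-bounded} gives $\sum_{v\in\bVfull}\dout(v)\le O(n\cdot\ell)$, so $\tau\cdot(n-|\bVfull|)+\sum_{v\in\bVfull}\dout(v)\le\gamma\ell n+O(n\ell)\le c_2\cdot\tau\cdot n$ once $c_2$ is a large enough constant (using $\gamma\ge 1$). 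Hence \autoref{lemma:sub-procedure-correctness} applies, and conditioned on this realization the output of $\sampalgomain=\sampalgosub(V,\ustart,L,\bVfull,\bvecLsave)$ is at statistical distance exactly $\beta$ from $\RW^G_L(\ustart)$, where $\beta$ is the probability that $\sampalgosub$ returns failure. It then remains to bound $\beta$ by $\delta/2$: by the coupling built in the proof of \autoref{lemma:sub-procedure-correctness}, $\sampalgosub$ fails precisely when a genuine $L$-step random walk from $\ustart$ uses some vertex $v\in\Vsamp$ as its current vertex strictly more than $|\Lsave_v|=\gamma\ell$ times, and since $\Vsamp\subseteq\Vlight^\ell(G)$, a union bound over $\Vsamp$ together with \autoref{lemma:light-is-not-visited-much} (which bounds the over-visit probability of a single light vertex by $\delta/2n$) yields $\beta\le n\cdot\delta/(2n)=\delta/2$. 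Averaging over first-pass realizations and using convexity of total variation distance, the overall output distribution $\calD$ of $\sampalgomain$ satisfies $\tvd{\calD}{\RW^G_L(\ustart)}\le\Pr[\neg\calE_{\sf good}]\cdot 1+\Pr[\calE_{\sf good}]\cdot(\delta/2)\le\delta$, which is the claim.

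\textbf{Where the difficulty lies.} The real content — that the heavy vertices have total out-degree $O(n\sqrt{L})$, so one can afford to record all of their edges — is already isolated in \autoref{lemma:heavy-total-out-degree-bounded}. Within this proof the only delicate points are bookkeeping ones: one must keep straight which source of randomness each bound is over (first pass, second pass, or the internal coins of the sampler), argue that on the good first-pass event the set $\Vsamp$ contains \emph{only} light vertices so that \autoref{lemma:light-is-not-visited-much} is applicable at all, check the degree budget so that \autoref{lemma:sub-procedure-correctness} is valid, and translate its abstract ``failure probability'' $\beta$ into the over-visiting probability of a genuine random walk. None of these needs a new idea, but they have to be combined in the right order.
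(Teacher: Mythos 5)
Your proposal is correct and follows essentially the same route as the paper: condition on the good first-pass event from \autoref{lemma:subsample-light-and-heavy}, use \autoref{lemma:heavy-total-out-degree-bounded} to verify that the second-pass subroutine operates correctly, bound the failure probability via \autoref{lemma:light-is-not-visited-much} plus a union bound, and combine through \autoref{lemma:sub-procedure-correctness}. Your write-up is in fact somewhat more explicit than the paper's (the degree-budget check, the coupling interpretation of $\beta$, and the averaging over first-pass realizations), but there is no substantive difference.
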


\begin{proof}
	Note that we can safely assume $L \le n^2$, since otherwise one can always use $O(n^2)$ words to store all the edges in the graph. In this case, we have that $L \le n \cdot \sqrt{L}$ and the space for restoring the $L$-step output walk can be ignored.
	
	Let $\bWTVheavy = \bVfull$ and $\bWTVlight = V \setminus \bWTVheavy$. Let $\calE_{\sf good}$ be the event that $\bWTVlight \subseteq \Vlight$ and $\bWTVheavy \subseteq \Vheavy$. By~\autoref{lemma:subsample-light-and-heavy}, we have that $\Pr[\calE_{\sf good}] \ge 1 - \delta / 2$.

	Now we condition on the event $\calE_{\sf good}$. In this case, it follows from~\autoref{lemma:heavy-total-out-degree-bounded} that $\prealgosub(G, \gamma \cdot \ell,\bWTVheavy)$ operates correctly (by setting the constant $c_2$ in~\autoref{algo:one-pass-preprocess} to be sufficiently large).
	
	By \autoref{lemma:light-is-not-visited-much} and a union bound, the probability of $\sampalgomain(V,\ustart,L,\bWTVheavy, \bvecLsave)$ outputs failure is at most $\delta / 2$. By~\autoref{lemma:sub-procedure-correctness}, it follows that the statistical distance between the output distribution of $\sampalgomain(V,\ustart,L,\bWTVheavy, \bvecLsave)$ and $\RW^G_L(\ustart)$ is at most $\delta / 2$.
	
	The theorem follows by combing the above with the fact that $\Pr[\calE_{\sf good}] \ge 1 - \delta / 2$.
	
\end{proof}

\subsection{Two-pass Streaming in the Turnstile Model}\label{sec:turnstile}

Similar to the algorithm in~\cite{Jin19}, our algorithms can also be easily adapted to work for the \emph{turnstile graph streaming model}, where both insertions and deletions of edges are allowed. Note that our two-pass algorithm $\algotwopass$ only accesses the input graph stream via the one-pass preprocessing subroutine $\prealgosub$. Hence, it suffices to implement $\prealgosub$ in the turnstile model as well. There are two distinct tasks in $\prealgosub$: (1) for light vertices, we need to sample their outgoing neighbors with replacement and (2) for heavy vertices, we need to record all their outgoing neighbors.

\paragraph*{Uniformly sampling via $\ell_1$ sampler.} For light vertices, uniformly sampling some out-neighbors from each vertex without replacement can be implemented via the following $\ell_1$ sampler in the turnstile model.

\begin{lemma}[$\ell_1$ sampler in the turnstile model~\cite{JayaramW18}]\label{lemma:ell-1-sampler}
	Let $n \in \N$, failure probability $\delta \in (0,1/2)$ and $f \in \R^n$ be a vector defined by a streaming of updates to its coordinates of the form $f_i \leftarrow f_i + \Delta$, where $\Delta \in \{-1,1\}$. There is a randomized algorithm which reads the stream, and with probability at most $\delta$ it outputs FAIL, otherwise it outputs an index $i \in [n]$ such that:
	\begin{align*}
	\Pr(i=j) = \frac{|f_j|}{\|f\|_1} + O(n^{-c}) , ~~~~~~ \forall j \in [n]
	\end{align*}
	where $c \geq 1$ is some arbitrarily large constant.

	The space complexity of this algorithm is bounded by $O(\log^2(n) \cdot \log(1/\delta) )$ bits in the random oracle model, and $O(\log^2(n) \cdot (\log \log n)^2 \cdot \log(1/\delta) )$ bits otherwise.
\end{lemma}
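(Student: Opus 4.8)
The statement is an externally cited result, so the plan is to recall the standard ``precision sampling'' argument behind it rather than reprove it from scratch. First I would apply the inverse-scaling trick: draw i.i.d.\ variables $u_1,\dots,u_n$ uniform on $(0,1]$ (from the random oracle, or from a pseudorandom generator in the non-oracle case) and maintain the rescaled vector $z$ with $z_i = f_i/u_i$. Each stream update $f_i \leftarrow f_i + \Delta$ becomes the linear update $z_i \leftarrow z_i + \Delta/u_i$, so $z$ is maintainable by a linear sketch. The point of the rescaling is that for a threshold $t \gg \|f\|_\infty$ one has $\Pr[\,|z_j|\ge t\,] = |f_j|/t$ for every $j$, so the (essentially unique) coordinate whose $|z_j|$ exceeds $t$ is distributed proportionally to $|f_j|$; thus $\ell_1$-proportional sampling is reduced to finding a dominant heavy hitter of $z$, and to reporting \textsf{FAIL} when none exists.

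Next I would instantiate the two ingredients this reduction needs. To pick the threshold I would run a standard constant-factor $\ell_1$-norm estimator (an Indyk-style median-of-Cauchy sketch) and set $t = \Theta(\|f\|_1)$. To recover the heavy coordinate of $z$ in the turnstile model I would use a CountSketch-type linear sketch with $O(\log n)$-bit counters together with a small number of independent hash repetitions; this identifies the top coordinate of $z$ and lets me certify, against the residual mass estimated by the same sketch, that it is genuinely dominant, all within $O(\log^2 n)$ bits and with failure probability $n^{-\Omega(1)}$. Running $O(\log(1/\delta))$ independent copies and outputting the first success yields overall failure probability at most $\delta$ and space $O(\log^2 n \cdot \log(1/\delta))$; the extra $(\log\log n)^2$ factor in the non-oracle model accounts for derandomizing the $u_i$'s and the hash functions via a PRG.

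The hard part is making the output distribution match $|f_j|/\|f\|_1$ to within a $1/\poly(n)$ \emph{additive} error rather than the $(1\pm\eps)$ \emph{multiplicative} error that the vanilla scheme produces (whose cost would scale polynomially in $1/\eps$). Two biases must be killed: the chance that two coordinates of $z$ simultaneously clear the threshold, making the argmax ambiguous, and the conditioning induced by the heavy-hitter sketch succeeding. I would remove them with a rejection-sampling layer: compute a high-accuracy estimate of the candidate's scaled value from a second, finer sketch, accept the candidate only with a carefully chosen probability so that the composed acceptance law is exactly $\propto |f_j|$ up to $1/\poly(n)$, and report \textsf{FAIL} otherwise. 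The delicate points --- which are the technical core of \cite{JayaramW18} --- are controlling the anti-concentration of $\max_i|z_i|$ and the pairwise collision probabilities so that the acceptance rate stays $\Omega(1)$ while the whole correction still fits in $\poly\log n$ space; for those I would defer to that paper.
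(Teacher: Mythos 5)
The paper offers no proof of this lemma at all: it is imported as a black-box citation of \cite{JayaramW18}, so your proposal — which correctly recounts the precision-sampling/heavy-hitter construction behind that result and defers its technical core (anti-concentration of the scaled maximum, collision control, and the rejection-based bias correction) to the same reference — matches the paper's treatment. Nothing to correct.
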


\begin{remark}
	To get error in the statistical distance also to be at most $\delta$, one can simply set $n$ to be larger than $1/\delta$. And in that case the space complexity can be bounded by $O(\log^4(n/\delta))$.
\end{remark}

\paragraph*{Recording all outgoing neighbors via $\ell_1$ heavy hitter.} For heavy vertices, recording all their outgoing neighbors can be implemented using the following $\ell_1$ heavy hitter in the turnstile model. (Recall that we assumed our graphs is a simple graph without multiple edges.)

\begin{lemma}[$\ell_1$ heavy hitter in the turnstile model~\cite{ccf02}]\label{lemma:ell-1-heavy-hitter}
	Let $n,k \in \N$, $\delta \in (0,0.1)$ and $f \in \R^n$ be a vector defined by a streaming of updates to its coordinates of the form $f_i \leftarrow f_i + \Delta$, where $\Delta \in \{-1,1\}$. There is an algorithm which reads the stream and returns a subset $L \subset [n]$ such that $i \in L$ for every $i \in [n]$ such that $|f_i| \ge \|f\|_1 / k$ and $i \not\in L$ for every $i \in [n]$ such that $|f_i| \le \|f\|_1 / 2k$. The failure probability is at most $\delta$, and the space complexity is at most $O(k \cdot \log (n) \cdot \log (n/\delta))$.
\end{lemma}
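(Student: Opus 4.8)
The statement is the familiar guarantee of a Count-Median sketch (the unsigned sibling of CountSketch), so the plan is to recall its construction and analysis and check that the parameters give exactly the claimed two one-sided guarantees within the claimed space. First I would set up $t = \Theta(\log(n/\delta))$ hash tables, each consisting of $w = \Theta(k)$ counters, together with hash functions $h_1,\dots,h_t\colon [n]\to[w]$ drawn independently from a $2$-universal family; write $C[j][\cdot]$ for the $j$-th counter array, all initialized to $0$. On each update $f_i \leftarrow f_i + \Delta$ in the stream, for every $j \in [t]$ I add $\Delta$ to $C[j][h_j(i)]$. After the stream, define the estimate $\hat f_i \coloneqq \median_{j \in [t]} C[j][h_j(i)]$ for each $i \in [n]$.

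For the accuracy analysis, fix $i \in [n]$ and a row $j$. Then $C[j][h_j(i)] = f_i + E_{i,j}$ with $E_{i,j} = \sum_{i' \neq i,\ h_j(i') = h_j(i)} f_{i'}$, and by the triangle inequality, linearity of expectation, and $2$-universality, $\E[\abs{E_{i,j}}] \le \sum_{i' \neq i} \Pr[h_j(i') = h_j(i)]\cdot \abs{f_{i'}} \le \norm{f}_1 / w$. Taking $w \ge 33k$, Markov's inequality gives $\Pr[\abs{E_{i,j}} > \norm{f}_1/(8k)] < 1/4$. Since the $t$ rows use independent hash functions, and since the median of $t$ reals lies inside any interval that contains a strict majority of them, a Chernoff bound over the $t$ rows shows $\Pr[\abs{\hat f_i - f_i} > \norm{f}_1/(8k)] \le \delta/n$ once $t = \Theta(\log(n/\delta))$. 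A union bound over $i \in [n]$ yields that, with probability at least $1 - \delta$, every estimate obeys $\abs{\hat f_i - f_i} \le \norm{f}_1/(8k)$; condition on this event.

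It then remains to read off the output set. In parallel I would maintain a constant-factor estimate $\hat N$ of $\norm{f}_1$ with $\abs{\hat N - \norm{f}_1} \le \norm{f}_1/100$ — this is trivial in our application, where $f$ ends up $\{0,1\}$-valued so $\norm{f}_1 = \sum_i f_i$ is tracked by a single counter, and in full generality it follows from a standard turnstile $\ell_1$-norm estimator using $O(\log n \log(1/\delta))$ extra bits — and output $L \coloneqq \{\, i \in [n] : \abs{\hat f_i} \ge \tfrac{3}{4}\,\hat N / k \,\}$. If $\abs{f_i} \ge \norm{f}_1/k$ then $\abs{\hat f_i} \ge \norm{f}_1/k - \norm{f}_1/(8k) = \tfrac{7}{8}\norm{f}_1/k > \tfrac34 \hat N/k$, so $i \in L$; if $\abs{f_i} \le \norm{f}_1/(2k)$ then $\abs{\hat f_i} \le \norm{f}_1/(2k) + \norm{f}_1/(8k) = \tfrac{5}{8}\norm{f}_1/k < \tfrac34 \hat N /k$, so $i \notin L$. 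For the space bound, the sketch stores $tw = O(k \log(n/\delta))$ counters of $O(\log n)$ bits each (the stream length is $\poly(n)$), plus $t$ hash-function seeds of $O(\log n)$ bits, for a total of $O(k \log n \log (n/\delta))$ bits.

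The only mild obstacle — and the point deserving care — is that with \emph{signed} updates the clean $\ell_1$ error bound $\norm{f}_1/w$ must be obtained from the median of the plain bucket counters, since the random-sign trick of CountSketch only buys an $\ell_2$ guarantee; and that the generous slack $\norm{f}_1/(8k)$, together with a decision threshold strictly between $\tfrac58$ and $\tfrac78$ of $\norm{f}_1/k$, is precisely what makes both one-sided requirements (include every $i$ with $\abs{f_i} \ge \norm{f}_1/k$, exclude every $i$ with $\abs{f_i}\le\norm{f}_1/(2k)$) hold simultaneously under the same high-probability event. Everything else is routine.
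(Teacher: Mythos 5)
This lemma is stated in the paper as an imported black box: it is simply cited to [ccf02] (CountSketch) and no proof is given, so there is no in-paper argument for your proof to match or diverge from. Your self-contained Count-Median argument is correct and is the standard way to prove exactly this statement: the per-row bound $\E[\lvert E_{i,j}\rvert]\le \|f\|_1/w$ from pairwise independence, Markov with $w=\Theta(k)$, the median-over-$\Theta(\log(n/\delta))$-rows amplification, the union bound over coordinates, and the thresholding at $\tfrac34\hat N/k$ against a constant-factor $\ell_1$-norm estimate all check out, and the space accounting matches the claimed $O(k\log n\log(n/\delta))$. Two minor bookkeeping points: the failure probability of the auxiliary $\ell_1$-norm estimator should be folded into the final union bound (in the paper's application this is moot, since $f$ is a $\{0,1\}$-valued edge-indicator vector at the end of the stream, so $\|f\|_1$ equals the running sum of updates and a single counter suffices, as you note), and the $O(\log n)$ bits per counter implicitly assumes the stream length is $\poly(n)$, which holds here. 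It is also worth noting that the cited [ccf02] sketch uses random signs and yields an $\ell_2$-tail guarantee from which the $\ell_1$ statement can be derived; your unsigned Count-Median route is a legitimately different (and arguably more direct) way to get the $\ell_1$ guarantee at the stated space.
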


\paragraph*{Algorithm in the turnstile model.} Modifying $\prealgosub$ with ~\autoref{lemma:ell-1-sampler} and~\autoref{lemma:ell-1-heavy-hitter}, we can generalize our two-pass streaming algorithm to work in two-pass turnstile model.\footnote{In more details, for each light vertex $u$, we run $\tau$ independent copies of the $\ell_1$ sampler to obtain $\tau$ samples from its outgoing neighbors with replacement. We also let $k = c_2 \cdot \tau \cdot n$ and use the $\ell_1$ heavy hitter to record all outgoing neighbors for all heavy vertices in $\widetilde{O}(n \cdot \sqrt{L} \cdot \log(1/\delta))$ space.} 

\begin{remark}[Two-pass algorithm in the turnstile model]\label{remark:turnstile}
	There exists a streaming algorithm $\algo_{\sf turnstile}$ that given an $n$-vertex {\em directed} graph $G = (V,E)$ via a stream of both edge insertions and edge deletions, a starting vertex $\ustart \in V$, a non-negative integer $L$ indicating the number of steps to be taken, and an error parameter $\delta \in (0,1/n)$, satisfies the following conditions:
	\begin{enumerate}
		\item $\algo_{\sf turnstile}$ uses at most $\tilde{O}(n \cdot \sqrt{L} \cdot \log \delta^{-1})$ space and makes two passes over the input graph~$G$.
		\item $\algo_{\sf turnstile}$ samples from some distribution $\calD$ over $V^{L+1}$ satisfying $\tvd{\calD}{\RW^G_L(\ustart)} \le \delta$. 
	\end{enumerate}
\end{remark}    

\section{Proof of \autoref{theo:lowb-main-multi-pass}}
\label{sec:lb}

\begin{reminder}{\autoref{theo:lowb-main-multi-pass}} 
\thmstmtLB
\end{reminder}
\begin{proof}
We show \autoref{theo:lowb-main-multi-pass} in two steps, that are captured in \autoref{lemma:reduction} and \autoref{thm:lb} below. \autoref{theo:lowb-main-multi-pass} is a direct corollary of \autoref{lemma:reduction} and \autoref{thm:lb}.
\end{proof}

The following distribution is used in \autoref{lemma:reduction} and \autoref{thm:lb}. We sometimes omit the subscript $L$ when it is clear from context.

\begin{construction}{Hard Input Distribution $\dist_{n,p,L}$}
	
	\begin{itemize} 

		\item \textbf{Setup:} Define $\Delta = \paren*{ \frac{L}{ \paren*{ \log n }^{ 10^{10p} } } }^{\frac{1}{p}}$ and $\Delta' = \Delta / n$ .
		
		\item \textbf{Vertices:} We construct a layered graph $G$ with $p+2$ layers $V_1, V_2, \cdots, V_{p+2}$ satisfying $\card*{V_1} = 1$ and $\card*{V_i} = n$ for all $i \in [p+2] \setminus \{1\}$. We use $s$ to refer to the unique vertex in $V_1$.
		
		\item \textbf{Edges:} There are $p+2$ sets of edges $E_1, E_2, \cdots, E_{p+2}$ where edges $E_i$ are between $V_i$ and layer $V_{i+1}$ (indices taken modulo $p+2$). These are constructed as follows:
		\begin{itemize}
			\item The set $E_1$ is a singleton $\{(s, v)\}$ where $v$ is a vertex sampled uniformly at random from $V_2$.
			
			\item For all $1 < i < p + 2$, and all $v \in V_i$, sample a a subset $S_v \subseteq V_{i+1}$ uniformly and independently such that $\card*{ S_v } = \Delta$. Set $E_i = \set*{ (v, v') \mid v \in V_i, v' \in S_v }$.
			
			\item Define the set $E_{p+2} = \{(v, s) \mid v \in V_{p+2}\}$.			
		\end{itemize}
		
		\item \textbf{Edge ordering:} The edges are revealed to the streaming algorithm in the order $E_{p+2}, E_{p+1}, \cdots, E_1$.
		\end{itemize}
\end{construction}

\begin{lemma}
\label{lemma:reduction}
Suppose there exists a constant $\beta\in(0, 1]$, an integer $p\geq 1$, integers $n, L$ that are sufficiently large and satisfy $L=\lceil n^{\beta}\rceil$ such that there exists a (randomized) $p$-pass streaming algorithm $\A$ that takes space $\frac{n \cdot L^{1/p}}{ \paren*{ \log n }^{ 10^{20p} } }$ and, given an $n$-vertex {\em directed} graph $G = (V,E)$ and a starting vertex $\ustart \in V$, can sample from a distribution $\calD$ such that $\tvd{\calD}{\RW^G_L(\ustart)} \le 1 - \frac{1}{ \paren*{ \log n }^{10} }$.

Then, there exists another randomized $p$-pass streaming algorithm $\A'$ that takes space $\frac{n \cdot L^{1/p}}{ \paren*{ \log n }^{ 10^{15p} } }$ and satisfies:
\[
\Pr_{\rv{G} \sim \dist_{n,p}}\paren*{ \A'\paren*{ \rv{G} } = \rv{P}^{p+1}(s) } \geq \frac{1}{ \paren*{ \log n }^{20} } .
\]
\end{lemma}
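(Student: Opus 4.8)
The plan is to take $\A'$ to be ``run $\A$, then read off a neighborhood from the sampled walk.'' Concretely, on input $\rv{G}\sim\dist_{n,p}$, algorithm $\A'$ runs $\A$ with starting vertex $s$ (the unique vertex of $V_1$, fixed by the construction) to obtain a length-$L$ walk $(u_0,u_1,\dots,u_L)$, and outputs $\widehat{P}\coloneqq\{u_j : 1\le j\le L,\ j\equiv p+1 \pmod{p+2}\}$ as its guess for $\rv{P}^{p+1}(s)$. Since $\A'$ merely runs $\A$ and intercepts its output walk using an $O(\log n)$-bit position counter, it is a $p$-pass algorithm whose working space is at most that of $\A$ plus $O(\log n)$; as $L=\lceil n^\beta\rceil$ makes $n\cdot L^{1/p}$ polynomial in $n$, the polylogarithmic gap between $(\log n)^{10^{20p}}$ and $(\log n)^{10^{15p}}$ easily absorbs this overhead, so $\A'$ fits in space $\frac{n\cdot L^{1/p}}{(\log n)^{10^{15p}}}$.

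Next I would isolate two deterministic structural facts that hold for \emph{every} graph $G$ in the support of $\dist_{n,p}$. Let $\mu_G$ denote the distribution on $V_{p+2}$ of the $(p+1)$-st vertex of a random walk from $s$. Because $s$ has out-degree $1$ (as $E_1$ is a singleton) and each $v\in V_i$ with $1<i<p+2$ has out-degree exactly $\Delta=|S_v|$, every length-$(p+1)$ walk from $s$ has probability exactly $\Delta^{-p}$; consequently (i) $|\rv{P}^{p+1}(s)|\le\Delta^{p}$, and (ii) $\mu_G(w)\ge\Delta^{-p}$ for every $w\in\rv{P}^{p+1}(s)$, since each such $w$ is the endpoint of at least one length-$(p+1)$ walk. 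A length-$L$ walk from $s$ contains $T\coloneqq\lfloor L/(p+2)\rfloor$ complete loops $s\rightsquigarrow s$, and conditioned on $G$ the $V_{p+2}$-vertices $W_1,\dots,W_T$ visited on these loops --- precisely the walk-vertices at positions $\equiv p+1\pmod{p+2}$ --- are i.i.d.\ copies of $\mu_G$. By (i), (ii) and a coupon-collector union bound,
\[
\Pr\big[\{W_1,\dots,W_T\}\neq\rv{P}^{p+1}(s)\ \big|\ G\big]\ \le\ \Delta^{p}\bigl(1-\Delta^{-p}\bigr)^{T}\ \le\ \Delta^{p}e^{-T/\Delta^{p}}.
\]
Since $\Delta^{p}=L/(\log n)^{10^{10p}}$ and $T\ge L/(2(p+2))$, the exponent satisfies $T/\Delta^{p}\ge(\log n)^{10^{10p}}/(2(p+2))$, which dominates $\log\Delta^{p}\le\log L$ once $n$ is large; hence this probability is at most $n^{-100}$. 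Let $\calE_2$ be the complementary (good) event, so $\Pr_{\rv{G}}[\calE_2]\ge 1-n^{-100}$.

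To bring in $\A$, I would use the guarantee $\tvd{\calD}{\RW^G_L(s)}\le 1-\tfrac{1}{(\log n)^{10}}$ for each fixed $G$: by the coupling characterization of total variation distance there is a coupling of $\A$'s output walk with a genuine random walk $\rv{R}\sim\RW^G_L(s)$ under which the two are equal with probability at least $\tfrac{1}{(\log n)^{10}}$. Gluing these couplings over $\rv{G}\sim\dist_{n,p}$ produces one probability space carrying $(\rv{G},\ \A\text{'s output},\ \rv{R})$ on which $\calE_1\coloneqq\{\A\text{'s output}=\rv{R}\}$ has $\Pr[\calE_1]\ge\tfrac{1}{(\log n)^{10}}$, while $\calE_2$ (a function of $\rv{G}$ and $\rv{R}$) still has $\Pr[\neg\calE_2]\le n^{-100}$. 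On $\calE_1$ the walk $\A$ produces is $\rv{R}$, so $\widehat{P}$ equals the set $\{W_1,\dots,W_T\}$ of $V_{p+2}$-vertices $\rv{R}$ visits; on $\calE_1\cap\calE_2$ that set is exactly $\rv{P}^{p+1}(s)$. Therefore
\[
\Pr_{\rv{G}\sim\dist_{n,p}}\!\big[\A'(\rv{G})=\rv{P}^{p+1}(s)\big]\ \ge\ \Pr[\calE_1\cap\calE_2]\ \ge\ \Pr[\calE_1]-\Pr[\neg\calE_2]\ \ge\ \frac{1}{(\log n)^{10}}-n^{-100}\ \ge\ \frac{1}{(\log n)^{20}}
\]
for $n$ large, which is the claimed bound.

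The argument is tight in exactly one place: $\A$'s guarantee lets us see a genuine walk only with probability $\tfrac{1}{(\log n)^{10}}$, so we cannot afford to lose even a constant factor afterwards --- what makes this work is that the coupon-collector failure is $n^{-\omega(1)}$, negligible beside $\tfrac{1}{(\log n)^{10}}$. The only bookkeeping I would double-check is the rounding of $\Delta=(L/(\log n)^{10^{10p}})^{1/p}$ to an integer (harmless for large $n$) and the precise meaning of $\rv{P}^{p+1}(s)$: if it denotes all vertices reachable from $s$ in \emph{at most} $p+1$ steps rather than exactly $p+1$, one runs the same coupon-collector estimate layer by layer (each layer's reachable set has size $\le\Delta^{p}$ and each of its vertices is hit on a given loop with probability $\ge\Delta^{-p}$), covering all layers simultaneously with probability $\ge 1-n^{-99}$, and then one takes $\widehat{P}=\{u_1,\dots,u_L\}\setminus\{s\}$ instead.
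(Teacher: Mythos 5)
There is a genuine gap, and it lies in what $\A'$ outputs. In this paper $\rv{P}^{p+1}(s)$ is \emph{the set of paths of length $p+1$ starting from $s$} (this is the definition of $P^i(v)$ given just before the properties of $\dist_{n,p}$), not the set of vertices of $V_{p+2}$ reached at step $p+1$, nor the set of vertices reachable within $p+1$ steps. Your $\A'$ outputs $\widehat{P}$, a subset of $V_{p+2}$ (or, in your alternative fix, the set of all visited vertices), so it can never equal $\rv{P}^{p+1}(s)$: knowing which endpoints are reachable does not determine which intermediate vertices lie on which path (and even for $p=1$ the paths contain the $V_2$-vertex, which your output omits). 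This is not a cosmetic issue for the larger argument either, since the companion lower bound (\autoref{thm:lb}) is an entropy argument about the full path set $\rv{P}^{p+1}(s)$ — roughly the edges leaving $N^{\leq p}(s)$ — whose entropy is of order $n\cdot L^{1/p}$ up to polylogs, whereas the endpoint set carries only about $\Delta^p\log n \approx L$ bits, far too little to support the claimed space bound. Correspondingly, your coupon-collector analysis is run over the wrong objects: you show the walk hits every \emph{endpoint}, whereas what is needed is that it traverses every \emph{edge} of $P(N^{\leq p}(s))$, of which there are at most $p\cdot\Delta^{p}$.

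The good news is that the rest of your argument is essentially the paper's proof and the repair is small. The paper's $\A'$ runs $\A$, records the edge set $E(W)$ of the sampled walk, and outputs all length-$(p+1)$ paths from $s$ that can be formed from $E(W)$; since $E(W)\subseteq E$ always, there are no spurious paths, so the only failure mode is that the genuine walk misses some edge of $P(N^{\leq p}(s))$. Your per-loop probability bound (each relevant edge is traversed in a given $s\rightsquigarrow s$ loop with probability at least $\Delta^{-p}$, loops are i.i.d.\ by the Markov property, and there are about $L/(p+2)$ of them) then gives, by a union bound over the at most $p\cdot\Delta^{p}\ll L$ edges, a miss probability that is negligible compared to $\paren*{\log n}^{-10}$ — exactly the computation you did, just applied to edges instead of endpoints. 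Your coupling/TV step and the final accounting $\paren*{\log n}^{-10} - o\paren*{\paren*{\log n}^{-20}} \geq \paren*{\log n}^{-20}$, as well as the space slack between $\paren*{\log n}^{10^{20p}}$ and $\paren*{\log n}^{10^{15p}}$, match the paper and are fine as written.
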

\begin{proof}
Let $\A'$ be the algorithm that first runs $\A$ on its input and $s$ to get as output a walk $W = (v_0,v_1,\dotsc,v_L)$. Define $E(W) = \set*{ (v_{i-1}, v_i) \mid i \in [L] }$ to be the set of edges witnessed by $W$. The algorithm $\A'$ then outputs all paths $P^{p+1}(s, W)$ of length $p+1$ starting for $s$ using only the edges $E(W)$. 

Let $N^{\leq p}(s) = \bigcup_{i' = 0}^p N^i(s)$. Observe that if a walk $W$ satisfies $P^{p+1}(s, W) \neq P^{p+1}(s)$, then either $E(W) \not\subseteq E$ or $P(N^{\leq p}(s)) \not\subseteq E(W)$. Thus, we have, for all $G \in \supp(\dist_{n,p})$ that:
\begin{align*}
\Pr\paren*{ \A'\paren*{ G } \neq P^{p+1}(s) } &= \Pr_{W \sim \A(G)}\paren*{ P^{p+1}(s, W) \neq P^{p+1}(s) } \\
&\leq \Pr_{W \sim \RW^G_L(s)}\paren*{ P^{p+1}(s, W) \neq P^{p+1}(s) } + 1 - \frac{1}{ \paren*{ \log n }^{10} } \\
&\leq \Pr_{W \sim \RW^G_L(s)}\paren*{ E(W) \not\subseteq E \vee P(N^{\leq p}(s)) \not\subseteq E(W) } + 1 - \frac{1}{ \paren*{ \log n }^{10} } \tag{Union bound} .
\end{align*}
As $E(W) \subseteq E$ for all $W \sim \RW^G_L(s)$, we have:
\[
\Pr\paren*{ \A'\paren*{ G } \neq P^{p+1}(s) } \leq \Pr_{W \sim \RW^G_L(s)}\paren*{ P(N^{\leq p}(s)) \not\subseteq E(W) } + 1 - \frac{1}{ \paren*{ \log n }^{10} } .
\]
Thus, to finish the proof, it suffices to show that $\Pr_{W \sim \RW^G_L(s)}\paren*{ P(N^{\leq p}(s)) \not\subseteq E(W) } \leq \frac{1}{ \paren*{ \log n }^{20} }$. This is done in the rest of the proof. First, observe from the definition of $\dist_{n,p}$ that $P(N^{\leq p}(s))$ is a collection of at most $p \cdot \Delta^p \ll \frac{L}{ \paren*{ \log n }^{ 10^{8p} } }$ edges. We get by a union bound:
\begin{equation}
\label{eq:reduction1}
\Pr_{W \sim \RW^G_L(s)}\paren*{ P(N^{\leq p}(s)) \not\subseteq E(W) } \leq L \cdot \max_{e \in P(N^{\leq p}(s))} \Pr_{W \sim \RW^G_L(s)}\paren*{ e \notin E(W) } .
\end{equation}
Fix $e \in P(N^{\leq p}(s))$ and observe that $v_i = s$ for every $i$ that is a multiple of $p+2$. Using the Markov property of random walks, we get:
\[
\Pr_{W \sim \RW^G_L(s)}\paren*{ e \notin E(W) } \leq \paren*{ \Pr_{W \sim \RW^G_L(s)}\paren*{ \forall i \in [p+2] : e \neq (v_{i-1}, v_i) } }^{ \frac{L}{10p} } .
\]
As the out-degree of $s$ is $1$ and the out-degree of every other vertex is at most $\Delta$ (\autoref{obs:smallnbr}), we conclude that: 
\[
\Pr_{W \sim \RW^G_L(s)}\paren*{ e \notin E(W) } \leq \paren*{ 1 - \frac{1}{ \Delta^p } }^{ \frac{L}{10p} } \leq \e^{ -\frac{L}{10p \cdot \Delta^p} } \leq \frac{1}{n^{20}} ,
\]
as $p \cdot \Delta^p \ll \frac{L}{ \paren*{ \log n }^{ 10^{8p} } }$. Plugging into \autoref{eq:reduction1} finishes the proof.

\end{proof}

\newcommand\statementthmlb{ 
Let a constant $\beta \in (0,1]$ and an integer $p \ge 1$ be given. Let $n \geq 1$ be sufficiently large and $L=\lceil n^{\beta}\rceil$. For all (randomized) $p$-pass streaming algorithms $\A$ that takes space $\frac{n \cdot L^{1/p}}{ \paren*{ \log n }^{ 10^{15p} } }$, we have that:
\[
\Pr_{\rv{G} \sim \dist_{n,p}}\paren*{ \A\paren*{ \rv{G} } = \rv{P}^{p+1}(s) } \leq \frac{1}{ \paren*{ \log n }^{50} } ,
\]
}
\begin{theorem}
\label{thm:lb}
\statementthmlb
\end{theorem}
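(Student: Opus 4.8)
The plan is to prove \autoref{thm:lb} by a Guruswami--Onak style multi-round communication argument. Since $E_{p+2}$ is a deterministic star into $s$ and carries no information, the informative suffix of the stream is $E_{p+1},E_p,\dots,E_1$, revealed in that order, and $\rv{P}^{p+1}(s)$ is determined exactly by the chain of quantities $u_1$ (the endpoint of the unique edge of $E_1$), then $N^2(s)=E_2|_{u_1}$, then $N^3(s)=E_3|_{N^2(s)}$, and so on, together with the restrictions $E_k|_{N^{k-1}(s)}$ for every $k\le p+1$; crucially, to read off $E_k|_{N^{k-1}(s)}$ one must already know the frontier $N^{k-1}(s)$. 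Cutting the stream at the $p+1$ layer boundaries turns a $p$-pass, $S$-space streaming algorithm into a protocol for the following $(p+1)$-party game: player $P_j$ privately holds $E_{p+2-j}$ (so $P_1$ holds the dense random graph $E_{p+1}\colon V_{p+1}\to V_{p+2}$, while $P_{p+1}$ holds the single edge $E_1$); players write on a shared blackboard in the fixed cyclic order $P_1,P_2,\dots,P_{p+1},P_1,\dots$ for $p$ full cycles; each message is at most $S$ bits; and at the end $P_{p+1}$ must announce $\rv{P}^{p+1}(s)$. The point of this order is that the dependency chain runs \emph{downhill} ($P_{p+1}\to P_p\to\dots\to P_1$, since resolving $E_k|_{N^{k-1}(s)}$ needs $N^{k-1}(s)$, owned upstream of $P_{p+2-k}$), whereas inside one cycle information flows \emph{uphill} ($P_1$ speaks first) and crosses from a higher- to a lower-indexed player only with a one-cycle delay; hence one cycle advances the deepest resolvable frontier by at most one layer. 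The chain has length $p+1$ but there are only $p$ cycles --- and this mismatch is precisely what the sparse first edge $E_1$ buys us (with a fixed known start vertex the chain would have length $p$, and $p$ cycles would suffice).

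To make \emph{one layer per cycle} robust against $S=\tfrac{n\cdot L^{1/p}}{(\log n)^{10^{15p}}}$, I would prove by induction on $k=1,\dots,p+1$ that, with these message lengths, no protocol can let the owner of $E_k$ learn $N^{k-1}(s)$ (equivalently, resolve $E_k|_{N^{k-1}(s)}$) within $k-1$ cycles; in the appropriate quantitative form, conditioned on the first $k-1$ cycles of the transcript and on everything $P_{p+2-k}$ ever holds, $N^{k-1}(s)$ remains close to uniform over its possible values. The inductive step splits into two cases. If the owner of $E_k$ never learns $N^{k-1}(s)$, then conveying $E_k|_{N^{k-1}(s)}$ for an essentially uniform ``pointer'' $N^{k-1}(s)\subseteq V_k$ with $\lvert N^{k-1}(s)\rvert=\Delta^{k-2}$, out of a fresh random $\Delta$-regular graph $E_k$ independent of everything upstream, is an instance of the multi-output index problem $\INDEX_{m,n}$ with $m=\Theta(\Delta\log n)$; by the index lower bound (as in \autoref{lemma:lowb-INDEX}) this needs $\widetilde\Omega(mn)=\widetilde\Omega(n\Delta)$ bits of communication to succeed with probability $1/\poly(\log n)$, which is impossible since each player sends only $pS\ll n\Delta$ bits over the whole protocol and $p$ is constant. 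Otherwise the owner of $E_k$ does learn $N^{k-1}(s)$ at some cycle $r$; but that frontier can reach it only from the owner of $E_{k-1}$, and only after that player has itself resolved $E_{k-1}|_{N^{k-2}(s)}$ --- which by the inductive hypothesis takes at least $k-1$ cycles --- plus one extra cycle for the delay in the cyclic order, whence $r\ge k$. Taking $k=p+1$: after all $p$ cycles, $E_{p+1}|_{N^p(s)}$ is undetermined except with probability $1/\poly(\log n)$, so $P_{p+1}$ cannot announce $\rv{P}^{p+1}(s)$; choosing the polylogarithmic slack generously gives the stated bound $1/(\log n)^{50}$.

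The main obstacle is carrying out this induction honestly. First, the conditioning must be chosen so that after conditioning on $N^{k-1}(s)$ and the upstream transcript, the relevant part of $E_k$ and of all deeper layers really remains freshly random; this uses that the sets $S_v$ are sampled independently, that \autoref{obs:smallnbr} bounds every non-$s$ out-degree by $\Delta$ so $\lvert N^k(s)\rvert\le\Delta^k$, and that $\Delta^p\ll L/(\log n)^{10^{8p}}\ll n$, which keeps each frontier a negligible fraction of its layer and prevents collisions, so that $E_k|_{N^{k-1}(s)}$ is a clean product of $\lvert N^{k-1}(s)\rvert$ independent coordinates. Second, one must connect the single per-layer index instance to the $p$ separate $S$-bit messages a player sends across the $p$ cycles --- a direct-sum/chain-rule bookkeeping over cycles costing only a $\poly(p)=O(1)$ factor. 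Third, one must control how the error accumulates over the $p$ layers and $p$ cycles so that it degrades only from $1$ to $1/(\log n)^{50}$; this is where the large gap between the budget $nL^{1/p}/(\log n)^{10^{15p}}$ and $n\Delta\approx nL^{1/p}/(\log n)^{O(1)}$ is spent, comfortably absorbing the constant number of polylogarithmic losses. Together with \autoref{lemma:reduction}, \autoref{thm:lb} yields \autoref{theo:lowb-main-multi-pass}.
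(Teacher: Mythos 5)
Your setup is the same as the paper's (the $(p+1)$-player cyclic game in which player $j$ holds $E_{p+2-j}$, $p$ cycles of messages, and the last player must output $\rv{P}^{p+1}(s)$), and your intuition---one layer of the frontier per cycle, with the sparse edge $E_1$ forcing a chain of length $p+1$ against only $p$ cycles---is the right one. But the inductive step as you state it has a genuine gap. The dichotomy ``either the owner of $E_k$ never learns $N^{k-1}(s)$, or it learns it at some cycle $r$'' is not a well-defined case split for a communication protocol: the transcript generically carries \emph{partial, probabilistic} information about the frontier, correlated with everything else, and an adversarial protocol can combine a mildly skewed posterior on $N^{k-1}(s)$ with modest communication about $E_k$ in a way that falls into neither of your cases. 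Moreover, the first case does not reduce to $\INDEX$ in the form of \autoref{lemma:lowb-INDEX}: that lemma is a \emph{one-way} bound with the index independent of Alice's message, whereas in your game the protocol is interactive over $p$ cycles, the ``index'' $N^{k-1}(s)$ is itself a function of upstream layers and is correlated with the transcript that the owner of $E_k$ has already seen, and information about the frontier can flow back to the data holder across cycles. Making either case rigorous requires exactly the quantitative bookkeeping you defer to ``carrying out the induction honestly'': a conditional-independence statement for the layers given the transcript (the paper's \autoref{lemma:ind}), a statement that for most transcripts the conditional entropy of the frontier stays near its maximum (\autoref{lemma:bad1}) and that most vertices retain near-maximal probability of lying in the frontier (\autoref{lemma:size}), and then a chain-rule computation lower-bounding $\H(\rv{P}^k(s)\mid \rv{M}_{\leq kp})$ (the paper's \autoref{lemma:induction}). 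Your second case (``the frontier can reach the owner of $E_k$ only after $k$ cycles'') is an information-flow intuition that is precisely what these entropy inequalities formalize; asserted as a standalone step it is not a proof.

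A second, smaller gap: your induction maintains only that $N^{k-1}(s)$ is ``close to uniform,'' but the final output is the path collection $\rv{P}^{p+1}(s)$, which includes $E_{p+1}$ restricted to $N^p(s)$; to conclude a success probability of at most $(\log n)^{-50}$ you need high conditional entropy of $\rv{P}^{p+1}(s)$ itself given the full transcript, converted into a point-mass (min-entropy) bound. The paper does this via \autoref{lemma:bad2} together with \autoref{lemma:minentropypath}; your sketch stops at ``$P_{p+1}$ cannot announce $\rv{P}^{p+1}(s)$,'' which does not follow from approximate uniformity of the frontier alone. So the proposal is the right skeleton but is missing the substance of the argument: replace the binary case analysis and the one-way $\INDEX$ reduction by the conditional entropy induction on $\H(\rv{P}^k(s)\mid \rv{M}_{\leq kp})$, and finish with a min-entropy (Fano-type) step.
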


The proof of \autoref{thm:lb} spans the rest of this section.  We start with some notation and some properties of the distribution $\dist_{n,p}$. Fix $p \geq 1$ and $n$ large enough (as a function of $p$) for the rest of this subsection.

\subsection{Properties of $\dist_{n,p}$} \label{sec:distprop}

\paragraph*{Notation.} As the set $E_{p+2}$ is fixed, we shall sometimes view $\dist_{n,p}$ as a distribution over the sets $E_1, E_2, \cdots, E_{p+1}$. We shall use $V$ to denote the set of all vertices and $E$ to denote the set of all edges, thus $G = (V, E)$. For $i \in [p + 1]$ we define $E_{-i}$ to be $E \setminus E_i$.

For a vertex $v \in V$ and $i \geq 0$, define the set $P^i(v)$ to be the set of paths of length $i$ starting from $v$. Also define $P^i(S)$ for a subset $S \subseteq V$ of vertices as $P^i(S) = \bigcup_{v \in S} P^i(v)$. We drop the superscript $i$ when $i = 1$. Observe that for all $i \in [p+2]$, we have $P(V_i) = E_i$. Similarly, define $N^i(v)$ to be the set of all {\em vertices} that can be reached by a path of length exactly $i$ from $v$, {\em i.e.}, a vertex $v' \in N^i(v)$ if and only if there is a path ending at $v'$ in $P^i(v)$. 

Throughout, we shall use $\h(x) = - x \log(x) - (1-x) \log (1-x)$ to denote the binary entropy function. Observe that $\h(\cdot)$ is concave and monotone increasing for $0 < x < \frac{1}{2}$.

In this section, we collect some useful properties of the distribution $\dist_{n,p}$ defined above. All these properties can be proved by straightforward but tedious calculations, so we defer their proofs to~\autoref{sec:missing-proofs-distprop}.

\subsubsection{Size of $\rv{N}^k(s)$}
\label{sec:distprop1}

We state without proof the following observation:
\begin{observation}
\label{obs:smallnbr}
It holds that:
\begin{enumerate}
\item \label{item:obs:smallnbr1} For all $v \in V_1 \cup V_{p+2}$, we have $\card*{ \rv{N}(v) } = 1$. 
\item \label{item:obs:smallnbr2} For all $v \in V \setminus \paren*{ V_1 \cup V_{p+2} }$, we have $\card*{ \rv{N}(v) } = \Delta$. 
\item \label{item:obs:smallnbr3} For all $k \in [p+1]$, we have $\card*{ \rv{N}^k(s) } \leq \Delta^{k-1}$. 
\end{enumerate}
\end{observation}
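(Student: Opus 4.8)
\textbf{Proof plan for \autoref{obs:smallnbr}.}

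This is a direct structural consequence of the definition of $\dist_{n,p}$, so the plan is simply to unwind the construction layer by layer. For Item~\ref{item:obs:smallnbr1}: the vertex $s \in V_1$ has exactly one outgoing edge, namely the singleton edge in $E_1$, so $\card{N(s)} = 1$; and every vertex $v \in V_{p+2}$ has exactly one outgoing edge, namely $(v, s) \in E_{p+2}$, so $\card{N(v)} = 1$. For Item~\ref{item:obs:smallnbr2}: every vertex $v$ in an intermediate layer $V_i$ with $1 < i < p+2$ has outgoing edges given by $E_i = \set{(v, v') \mid v' \in S_v}$ with $\card{S_v} = \Delta$, hence $\card{N(v)} = \Delta$. (One should also note that since $S_v \subseteq V_{i+1}$ is drawn as a subset, these are distinct out-neighbors and there are no multi-edges, consistent with the simple-graph convention.)

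For Item~\ref{item:obs:smallnbr3}, I would induct on $k$. The base case $k = 1$ is immediate: $N^1(s) = N(s)$ has size $1 = \Delta^0$. For the inductive step, write $N^{k}(s) = \bigcup_{v \in N^{k-1}(s)} N(v)$, so $\card{N^k(s)} \le \sum_{v \in N^{k-1}(s)} \card{N(v)}$. Now observe that for $k \in [p+1]$, every vertex of $N^{k-1}(s)$ lies in layer $V_k$ (the walk advances exactly one layer per step starting from $V_1$), and for $k \le p+1$ these are layers $V_1, \dots, V_{p+1}$, none of which is the final layer $V_{p+2}$. Hence by Items~\ref{item:obs:smallnbr1} and \ref{item:obs:smallnbr2}, each such $v$ has $\card{N(v)} \le \Delta$ (with the $s$-case giving $\card{N(s)} = 1 \le \Delta$). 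Combining with the inductive hypothesis $\card{N^{k-1}(s)} \le \Delta^{k-2}$ (for $k \ge 2$) gives $\card{N^k(s)} \le \Delta^{k-2} \cdot \Delta = \Delta^{k-1}$, as desired.

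There is no real obstacle here; the only thing to be slightly careful about is the layer bookkeeping — that a length-$(k-1)$ path from $s$ ends in $V_k$, and that for the range $k \in [p+1]$ this never forces us to use an out-degree bound for a vertex in $V_{p+2}$ (which would be $1$ anyway, so even a careless argument still works, but the clean statement uses $\Delta$). This is exactly why the excerpt states the observation "without proof" and defers the analogous but more involved estimates to \autoref{sec:missing-proofs-distprop}.
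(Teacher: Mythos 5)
Your proof is correct and is exactly the intended argument: the paper states this observation without proof precisely because Items 1 and 2 are read off directly from the definition of $\dist_{n,p}$ (the singleton $E_1$, the edges $(v,s)$ in $E_{p+2}$, and the sets $S_v$ of size $\Delta$), and Item 3 follows by the straightforward induction you give, using that a length-$(k-1)$ path from $s$ ends in layer $V_k$ with $k \leq p+1$, so each vertex encountered has out-degree at most $\Delta$. Your layer bookkeeping is accurate, so there is nothing to add.
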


Owing to \autoref{item:obs:smallnbr3} above, it shall be useful to define, for all $k \in [p+1]$, the notation
\begin{equation}
\label{eq:deltak}
\Delta_k = \Delta^{k-1} \hspace{1cm} \text{and} \hspace{1cm} \Delta'_k = \frac{ \Delta_k }{n} .
\end{equation}
Also recall that we used $\Delta'$ to denote $\Delta / n$.

\begin{lemma}
\label{lemma:largenbr}
For all $k \in [p+1]$, we have:
\[
\Pr\paren*{ \card*{ \rv{N}^k(s) } \leq \Delta_k \cdot \paren*{ 1 - \frac{2k}{ \paren*{ \log n }^{ 10^{ 10p } } } } } \leq \frac{k}{n^{200}} .
\]
\end{lemma}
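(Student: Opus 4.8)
The plan is to prove the lower bound on $|\rv N^k(s)|$ by induction on $k$. The base case $k=1$ is trivial since $|\rv N^1(s)| = 1 = \Delta_1$ (recall $V_1 = \{s\}$, and $s$ has a single out-edge). For the inductive step, suppose we have established that with probability at least $1 - \frac{k-1}{n^{200}}$ the set $\rv N^{k-1}(s)$ has size at least $\Delta_{k-1}\paren*{1 - \frac{2(k-1)}{(\log n)^{10^{10p}}}}$. Condition on any such realization of $\rv N^{k-1}(s)$, call it $T$ with $|T| = t \geq \Delta_{k-1}(1 - \varepsilon_{k-1})$ where $\varepsilon_{k-1} = \frac{2(k-1)}{(\log n)^{10^{10p}}}$. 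The set $\rv N^k(s)$ is exactly $\bigcup_{v \in T} S_v$, where each $S_v \subseteq V_{k+1}$ (which has $n$ vertices) is an independent uniformly random subset of size $\Delta$. So I need a lower-tail bound on the size of a union of $t$ independent uniform random $\Delta$-subsets of an $n$-element ground set.

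For that union-size concentration, I would expose the sets $S_{v_1}, \dots, S_{v_t}$ one at a time and track $X_j = |S_{v_1} \cup \dots \cup S_{v_j}|$. Given $X_{j-1} = x$, the conditional expected increment is $\Delta \cdot \frac{n - x}{n} \geq \Delta(1 - \frac{t\Delta}{n})$ as long as $x \leq t\Delta$; since $t \leq \Delta_{k-1} = \Delta^{k-2}$ and $\Delta^{k-1} = \Delta_k \leq L^{1/p} \cdot (\log n)^{-\Omega(1)}$, while the ground set has size $n \geq L^{1/\beta} \geq (L^{1/p})^{p/\beta}$, the ratio $\frac{t\Delta}{n} = \frac{\Delta_k}{n} = \Delta'_k$ is at most $n^{-\Omega(1)}$, a tiny quantity. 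Thus $\E[X_t] \geq t\Delta(1 - \Delta'_k) \geq \Delta_k(1-\varepsilon_{k-1})(1-\Delta'_k)$. For concentration I would apply a bounded-differences / Azuma-type argument: revealing $S_{v_j}$ changes the final union size by at most $\Delta$, so rescaling by $\Delta$ and applying \autoref{lemma:azuma-hoeffding} (or more precisely a version of \autoref{cor:azuma-hoeffding} after noting the increments minus their conditional means form a supermartingale) to the lower tail gives
\[
\Pr\paren*{ X_t \leq \E[X_t] - \lambda \Delta } \leq \exp\paren*{ -\lambda^2 / (2t) }.
\]
Choosing $\lambda \approx \Delta_k \cdot \frac{1}{(\log n)^{10^{10p}}} / \Delta = \Delta_{k-1} / (\log n)^{10^{10p}}$ makes the deviation term of the right order, and since $\lambda^2 / t \geq \Delta_{k-1} / (\log n)^{2 \cdot 10^{10p}}$, which is polynomially large in $n$ (because $\Delta_{k-1}$ is polynomially large while the $\log$ factor is only polylogarithmic — here I use $\beta$ constant and $n$ large), this probability is at most $\frac{1}{n^{200}}$ for $n$ large enough.

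Combining: on the good event for step $k-1$ and the good event for the union-size bound (each failing with probability at most $\frac{k-1}{n^{200}}$ and $\frac{1}{n^{200}}$ respectively), a union bound gives failure probability at most $\frac{k}{n^{200}}$, and on the intersection
\[
|\rv N^k(s)| \geq \Delta_k (1-\varepsilon_{k-1})(1-\Delta'_k) - \lambda\Delta \geq \Delta_k\paren*{1 - \varepsilon_{k-1} - \Delta'_k - \tfrac{1}{(\log n)^{10^{10p}}}} \geq \Delta_k\paren*{1 - \tfrac{2k}{(\log n)^{10^{10p}}}},
\]
where the last step absorbs the small extra terms ($\Delta'_k$ is $n^{-\Omega(1)}$, hence negligible against a single $\frac{1}{(\log n)^{10^{10p}}}$) into the increment from $\varepsilon_{k-1}$ to $\varepsilon_k = \frac{2k}{(\log n)^{10^{10p}}}$. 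This closes the induction. The main obstacle I anticipate is bookkeeping the constants: making sure the exponents $10^{10p}$ in the error term and $200$ in the failure probability are genuinely consistent across the $\leq p+1$ inductive steps, and verifying that $\Delta_{k-1}$ really is polynomially large in $n$ for every $k \leq p+1$ (which requires $\Delta = (L/(\log n)^{10^{10p}})^{1/p}$ to be at least $n^{\Omega(1)}$, using $L = \lceil n^\beta \rceil$ with $\beta$ a positive constant and $p$ fixed) so that the Azuma bound actually beats $n^{-200}$ — this is where the "straightforward but tedious" calculations live.
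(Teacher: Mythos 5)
Your proposal reaches the right conclusion but by a genuinely different concentration argument than the paper. Both proofs share the same skeleton: induct on $k$, condition on a realization $S$ of $\rv{N}^{k-1}(s)$ of size at least $\Delta_{k-1}\paren[\big]{1-\tfrac{2(k-1)}{(\log n)^{10^{10p}}}}$ (legitimate because this event is determined by $\rv{E}_{-k}$, which is independent of $\rv{E}_k$), and then prove a lower-tail bound on the number of vertices of $V_{k+1}$ hit by $S$. For that last step the paper couples the exact-$\Delta$-subset model to an i.i.d.\ Bernoulli edge model (the auxiliary $\rv{E}'_k$, $\rv{E}''_k$), paying as total-variation error the probability that some out-degree exceeds $\Delta$; in that model the indicators ``$v\in V_{k+1}$ is hit by some vertex of $S$'' are mutually independent, and a multiplicative Chernoff bound (\autoref{lemma:chernoff}) over these $n$ indicators, whose total mean is of order $\Delta_k$, finishes uniformly for all $k\ge 2$. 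You instead stay in the exact model, expose the sets $S_v$, $v\in T$, one by one, and run a supermartingale/Azuma bound (\autoref{lemma:azuma-hoeffding}) on the union size with increments bounded by $\Delta$. This avoids the coupling, but your exponent is of order $\Delta_{k-1}/(\log n)^{O(1)}$ rather than $\Delta_k/(\log n)^{O(1)}$, and it degenerates exactly at $k=2$, where $t=|\rv{N}^1(s)|=1$ and Azuma gives nothing; there you must simply observe that $\rv{N}^2(s)=S_v$ has size exactly $\Delta_2$, so the claim is deterministic. For $k\ge 3$ your exponent $\Delta_{k-1}/(\log n)^{2\cdot 10^{10p}}$ is indeed $n^{\Omega(1)}$ (since $\beta,p$ are constants), which beats $n^{-200}$, so the induction closes.

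Two small slips to repair in the write-up. First, $\Delta'_k$ is not $n^{-\Omega(1)}$ in general: for $\beta=1$ and $k=p+1$ one has $\Delta'_{p+1}=\Delta^p/n\approx (\log n)^{-10^{10p}}$, and your inequality ``$n\ge L^{1/\beta}$'' is reversed (since $L\ge n^{\beta}$). What you actually need, and do have, is $\Delta'_k\le \Delta^p/n\le L/\paren[\big]{n(\log n)^{10^{10p}}}\le (\log n)^{-10^{10p}}$, which is still small enough to be absorbed into the increment from your $\varepsilon_{k-1}$ to $\varepsilon_k$, provided you take $\lambda\Delta$ slightly below $\Delta_k(\log n)^{-10^{10p}}$ to leave room for the sum of the three error terms. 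Second, the Azuma bound you get is a deviation below $t\cdot\Delta(1-\Delta'_k)$ (the sum of the conditional-mean lower bounds), not below $\E[X_t]$; your final chain of inequalities in fact only uses the former, so state it that way.
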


\begin{corollary}[Corollary of \autoref{obs:smallnbr} and \autoref{lemma:largenbr}]
\label{cor:nbr}
For all $k \in [p+1]$, we have:
\[
\Pr\paren*{ \Delta_k \cdot \paren*{ 1 - \frac{1}{ \paren*{ \log n }^{ 10^{ 10p } - 2 } } } \leq \card*{ \rv{N}^k(s) } \leq \Delta_k } \geq 1 - \frac{1}{n^{150}} .
\]
\end{corollary}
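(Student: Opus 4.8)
The plan is to combine the deterministic upper bound from \autoref{obs:smallnbr} with the high-probability lower bound from \autoref{lemma:largenbr}, after a one-line arithmetic comparison of the two error terms. By \autoref{item:obs:smallnbr3} of \autoref{obs:smallnbr} and the definition $\Delta_k = \Delta^{k-1}$ (see~\eqref{eq:deltak}), we have $\card*{\rv N^k(s)} \leq \Delta_k$ with probability $1$, so the only event that can violate the claimed two-sided bound is $\set*{ \card*{\rv N^k(s)} < \Delta_k \paren*{ 1 - (\log n)^{-(10^{10p}-2)} } }$, and it suffices to show this has probability at most $1/n^{150}$.

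To do so, I would first note that since $k \leq p+1$ and $n$ is taken sufficiently large as a function of $p$ (the convention fixed at the start of \autoref{sec:distprop}), we have $2k \leq (\log n)^2$, whence
\[
\Delta_k \cdot \paren*{ 1 - \frac{2k}{ \paren*{\log n}^{10^{10p}} } } \;\geq\; \Delta_k \cdot \paren*{ 1 - \frac{ \paren*{\log n}^2 }{ \paren*{\log n}^{10^{10p}} } } \;=\; \Delta_k \cdot \paren*{ 1 - \frac{1}{ \paren*{\log n}^{10^{10p}-2} } } .
\]
Consequently the event $\set*{ \card*{\rv N^k(s)} < \Delta_k \paren*{ 1 - (\log n)^{-(10^{10p}-2)} } }$ is contained in the event $\set*{ \card*{\rv N^k(s)} \leq \Delta_k \paren*{ 1 - 2k(\log n)^{-10^{10p}} } }$, and by \autoref{lemma:largenbr} the latter has probability at most $k/n^{200}$. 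Finally, again using $k \leq p+1$ and $n$ large relative to $p$ (so that $k \leq n^{50}$), we get $k/n^{200} \leq 1/n^{150}$. A union bound over the probability-zero failure of the upper bound and this lower-bound failure then yields the corollary.

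There is essentially no hard step here; the statement is a routine consolidation of two already-established facts. The only point requiring a little care is exactly the one flagged above: the ``sufficiently large $n$'' in the corollary must be permitted to depend on $p$ (to absorb the factors $2k$ and $k$ by $(\log n)^2$ and $n^{50}$ respectively), which is consistent with how $n$ is quantified throughout this section.
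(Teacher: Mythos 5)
Your proposal is correct and is exactly the intended argument: the paper states \autoref{cor:nbr} without proof as an immediate consequence of the deterministic upper bound in \autoref{obs:smallnbr} and the tail bound in \autoref{lemma:largenbr}, with the same absorption of the factors $2k$ and $k$ into $(\log n)^2$ and the polynomial slack (using that $n$ is large relative to $p$).
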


\subsubsection{Entropy of $\rv{N}^k(s)$}
\label{sec:distprop2}

\begin{lemma}
\label{lemma:entropynbrub}
For all $v \in V \setminus \paren*{ V_1 \cup V_{p+2} }$ and any event $E$, we have:
\begin{align*}
\H\paren*{ \rv{N}(v) \mid E } &\leq n \cdot \h\paren*{ \Delta' } \cdot \paren*{ 1 + \frac{1}{ \paren*{ \log n }^{ 10^{10p} } } } \text{ and} \\
\H\paren*{ \rv{N}(v) } &\geq n \cdot \h\paren*{ \Delta' } \cdot \paren*{ 1 - \frac{1}{ \paren*{ \log n }^{ 10^{10p} } } } .
\end{align*}
\end{lemma}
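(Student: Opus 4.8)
The starting point is the observation that, for any $v \in V \setminus (V_1 \cup V_{p+2})$, the construction of $\dist_{n,p}$ makes $\rv{N}(v)$ literally equal to the sampled set $S_v$: a uniformly random $\Delta$-element subset of the $n$-vertex layer immediately following $v$, drawn independently of everything else. Hence $\rv{N}(v)$ is uniformly distributed over a ground set of size $\binom{n}{\Delta}$, so $\H(\rv{N}(v)) = \log\binom{n}{\Delta}$. Moreover, conditioning on any event $E$ can only shrink the support of $\rv{N}(v)$, and entropy is at most the logarithm of the support size, so $\H(\rv{N}(v) \mid E) \le \log\binom{n}{\Delta}$ as well. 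Thus both inequalities in the lemma follow from the single two-sided estimate
\[
n \cdot \h(\Delta') \cdot \left(1 - \frac{1}{(\log n)^{10^{10p}}}\right) \;\le\; \log\binom{n}{\Delta} \;\le\; n \cdot \h(\Delta') \cdot \left(1 + \frac{1}{(\log n)^{10^{10p}}}\right) .
\]

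To prove this, I would apply \autoref{lemma:binomial} with $k = \Delta$, together with the elementary identity $\Delta\log(n/\Delta) + (n-\Delta)\log(n/(n-\Delta)) = n \cdot \h(\Delta/n) = n \cdot \h(\Delta')$. Taking logarithms in \autoref{lemma:binomial} then yields $n\,\h(\Delta') - \log(10n^2) \le \log\binom{n}{\Delta} \le n\,\h(\Delta') + \log n$, so it remains only to absorb the additive slack: it suffices to check that both $\log n$ and $\log(10n^2)$ are at most $n\,\h(\Delta') \cdot (\log n)^{-10^{10p}}$, i.e.\ that $n\,\h(\Delta') \ge (\log n)^{10^{10p}+2}$ once $n$ is large enough in terms of $\beta$ and $p$.

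This last bound is where the hypothesis "$n$ sufficiently large" is used, and it is routine. From $\h(x) \ge x\log(1/x)$ we get $n\,\h(\Delta') \ge \Delta\log(n/\Delta)$; since $\Delta = \left(L/(\log n)^{10^{10p}}\right)^{1/p} \ge n^{\beta/p}/(\log n)^{10^{10p}/p}$ (using $L = \lceil n^\beta\rceil \ge n^\beta$) and $\log(n/\Delta) \ge 1$ for large $n$ (as $\Delta < n/2$, which follows from $\beta \le 1 \le p$), we obtain $n\,\h(\Delta') \ge n^{\beta/p}/(\log n)^{10^{10p}/p}$. Because $\beta/p$ is a fixed positive constant, $n^{\beta/p}$ eventually dominates any fixed power of $\log n$, which gives the claim for all sufficiently large $n$.

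I do not expect a genuine obstacle here: the lemma is the standard entropy-counting estimate $\log\binom{n}{\Delta} = n\,\h(\Delta/n) \pm O(\log n)$ combined with the fact that $\rv{N}(v)$ is a uniform random set of size $\Delta$. The two points deserving a little care are that the upper bound must hold after conditioning on an \emph{arbitrary} event $E$ (which is why I handle it by noting the support can only shrink, rather than by an independence argument) and that $0 < \Delta < n$ so that \autoref{lemma:binomial} is applicable, which is immediate from $\beta \le 1 \le p$ and $n$ large.
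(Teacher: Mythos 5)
Your proof is correct and follows essentially the same route as the paper: bound $\H(\rv{N}(v)\mid E)$ by $\log\binom{n}{\Delta}$, note $\H(\rv{N}(v)) = \log\binom{n}{\Delta}$ by uniformity, and apply \autoref{lemma:binomial} to get $\log\binom{n}{\Delta} = n\,\h(\Delta') \pm O(\log n)$, absorbing the additive slack into the $(1\pm(\log n)^{-10^{10p}})$ factor. You merely spell out the routine slack-absorption estimate $n\,\h(\Delta') \ge \mathrm{poly}(\log n)$ that the paper leaves implicit.
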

\begin{corollary}
\label{cor:entropynbrub}
For all $1 < k \leq p+1$ and any event $E$, we have:
\begin{align*}
\H\paren*{ \rv{E}_k \mid E } &\leq n^2 \cdot \h\paren*{ \Delta' } \cdot \paren*{ 1 + \frac{1}{ \paren*{ \log n }^{ 10^{10p} } } } \text{ and}\\
\H\paren*{ \rv{E}_k } &\geq n^2 \cdot \h\paren*{ \Delta' } \cdot \paren*{ 1 - \frac{1}{ \paren*{ \log n }^{ 10^{10p} } } } .
\end{align*}
\end{corollary}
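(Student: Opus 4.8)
The plan is to reduce \autoref{cor:entropynbrub} to \autoref{lemma:entropynbrub} by observing that, for every $1 < k \leq p+1$, the random edge set $\rv{E}_k$ is precisely the tuple of neighborhoods $\paren*{ \rv{N}(v) }_{v \in V_k}$, and that in $\dist_{n,p}$ these $n = \card*{V_k}$ neighborhoods are drawn independently of one another.

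First I would set up the bookkeeping. By the definition of $\dist_{n,p}$, for $1 < i < p+2$ we have $E_i = \set*{ (v, v') \mid v \in V_i,\ v' \in S_v }$ with $S_v = \rv{N}(v)$; hence the map $E_k \mapsto \paren*{ \rv{N}(v) }_{v \in V_k}$ is a bijection, so $\H\paren*{ \rv{E}_k } = \H\paren*{ \paren*{ \rv{N}(v) }_{v \in V_k} }$, and the same identity holds after conditioning on any event $E$. Since $1 < k \leq p+1$ forces $V_k \subseteq V \setminus \paren*{ V_1 \cup V_{p+2} }$, every $\rv{N}(v)$ with $v \in V_k$ is a uniformly random $\Delta$-subset of $V_{k+1}$, so \autoref{lemma:entropynbrub} applies to each of the $n$ of them, and recall $\card*{V_k} = n$.

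For the lower bound I would use that the subsets $\set*{ S_v }_{v \in V_k}$ are sampled independently in the construction, so $\H\paren*{ \rv{E}_k } = \sum_{v \in V_k} \H\paren*{ \rv{N}(v) }$; applying the lower bound of \autoref{lemma:entropynbrub} to each of the $n$ summands gives $\H\paren*{ \rv{E}_k } \geq n^2 \cdot \h\paren*{ \Delta' } \cdot \paren*{ 1 - \frac{1}{ \paren*{ \log n }^{ 10^{10p} } } }$. For the conditional upper bound, independence may be destroyed by conditioning on $E$, so instead I would apply subadditivity of Shannon entropy inside the conditional distribution $\Pr\paren*{ \cdot \mid E }$, namely $\H\paren*{ \rv{E}_k \mid E } \leq \sum_{v \in V_k} \H\paren*{ \rv{N}(v) \mid E }$, and bound each of the $n$ terms by the conditional upper bound of \autoref{lemma:entropynbrub}.

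There is no real obstacle here; the two points worth a line of care are (i) that $\rv{E}_k \leftrightarrow \paren*{ \rv{N}(v) }_{v \in V_k}$ is a genuine bijection, so it preserves entropy both with and without conditioning on $E$, and (ii) that the subadditivity step is legitimate for the conditional law, which follows from $\H(X,Y \mid E) = \H(X \mid E) + \H(Y \mid X, E)$ together with the fact that conditioning on the extra variable $X$ only decreases entropy within the measure $\Pr(\cdot \mid E)$. No new error accounting is needed: the multiplicative slack $1 \pm \frac{1}{ \paren*{ \log n }^{ 10^{10p} } }$ is inherited verbatim from \autoref{lemma:entropynbrub}.
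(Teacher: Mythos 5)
Your proposal is correct and matches the paper's proof: the paper likewise bounds $\H\paren*{ \rv{E}_k \mid E }$ by $\sum_{v \in V_k} \H\paren*{ \rv{N}(v) \mid E }$ via the chain rule and the fact that conditioning reduces entropy, and uses the independence of the $\rv{N}(v)$ to get equality for the unconditional lower bound, then applies \autoref{lemma:entropynbrub} termwise. Your extra remarks on the bijection $E_k \leftrightarrow \paren*{ N(v) }_{v \in V_k}$ and on subadditivity within the conditional law are just explicit versions of what the paper does implicitly.
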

 
\begin{lemma}
\label{lemma:entropynbrlb}
We have $\H\paren*{ \rv{N}(s) } = \log n$ and, for all $1 < k \leq p+1$:
\[
\H\paren*{ \rv{N}^k(s) } \geq n \cdot \h\paren*{ \Delta'_k } \cdot \paren*{ 1 - \frac{1}{ \paren*{ \log n }^{ 10^{8p} } } } .
\]
\end{lemma}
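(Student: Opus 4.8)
The plan is to handle the two parts separately. The first is immediate from the construction of $\dist_{n,p}$: the edge set $\rv{E}_1$ is the singleton $\set*{(s,v)}$ with $v$ drawn uniformly from $V_2$, so $\rv{N}(s)$ is a single uniformly random vertex of the $n$-element set $V_2$, and hence $\H\paren*{\rv{N}(s)} = \log n$. For the second part, fix $1 < k \le p+1$ and recall $\rv{N}^k(s) \subseteq V_{k+1}$. The crucial structural observation is that, viewed as a random subset of $V_{k+1}$, the set $\rv{N}^k(s)$ is \emph{exchangeable}: its law is invariant under relabelings of $V_{k+1}$. Indeed $\rv{N}^k(s) = \bigcup_{u \in \rv{N}^{k-1}(s)} S_u$, where the sets $S_u \subseteq V_{k+1}$ (for $u \in V_k$) are i.i.d.\ uniform $\Delta$-subsets and $\rv{N}^{k-1}(s) \subseteq V_k$ is determined by $\rv{E}_1,\dots,\rv{E}_{k-1}$, none of which involves $V_{k+1}$, so a permutation of $V_{k+1}$ merely permutes the i.i.d.\ collection $\set*{S_u}$. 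Hence, conditioned on $\abs*{\rv{N}^k(s)} = m$, the set $\rv{N}^k(s)$ is uniform over all $\binom{n}{m}$ size-$m$ subsets of $V_{k+1}$, and since $\abs*{\rv{N}^k(s)}$ is a function of $\rv{N}^k(s)$ this gives
\[
\H\paren*{\rv{N}^k(s)} \;\ge\; \H\paren*{\rv{N}^k(s) \mid \abs*{\rv{N}^k(s)}} \;=\; \E\sq*{ \log \binom{n}{\,\abs*{\rv{N}^k(s)}\,} } .
\]

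Next I would lower-bound this expectation using the concentration already proved. By \autoref{obs:smallnbr} and \autoref{cor:nbr}, except with probability at most $n^{-150}$ we have $\abs*{\rv{N}^k(s)} \ge \Delta_k\paren*{1 - (\log n)^{-(10^{10p}-2)}}$, while on the complement $\log\binom{n}{\,\abs*{\rv{N}^k(s)}\,} \ge 0$. Since $\Delta_k = \Delta^{k-1} \le \Delta^p = L/(\log n)^{10^{10p}} \ll n$, the coefficient $\binom{n}{m}$ is increasing in $m$ over this range, so with $m_0 \coloneqq \bigl\lfloor \Delta_k\paren*{1 - (\log n)^{-(10^{10p}-2)}} \bigr\rfloor$ we obtain $\E\sq*{\log\binom{n}{\,\abs*{\rv{N}^k(s)}\,}} \ge (1-n^{-150})\log\binom{n}{m_0}$. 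Applying \autoref{lemma:binomial} gives $\log\binom{n}{m_0} \ge n\cdot\h(m_0/n) - \log(10n^2)$; and since $\Delta_k \ge \Delta = n^{\Omega(1)}$ the loss from the deviation and from rounding is tiny, $m_0/n \ge \Delta'_k(1-\eps')$ with $\eps' \ll (\log n)^{-10^{8p}}$, so the concavity of $\h$ through the origin, $\h(tx) \ge t\,\h(x)$ for $t\in[0,1]$, yields $\h(m_0/n) \ge (1-\eps')\,\h(\Delta'_k)$.

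Finally I would assemble the estimates into $\H\paren*{\rv{N}^k(s)} \ge (1-n^{-150})\paren*{n(1-\eps')\h(\Delta'_k) - \log(10n^2)}$. The additive slack $\log(10n^2)$ is negligible because $\h(x) \ge x\log(1/x)$ gives $n\,\h(\Delta'_k) \ge \Delta_k \log(n/\Delta_k) \ge \Delta = n^{\Omega(1)}$ (using $n/\Delta_k \ge n/\Delta^p = n(\log n)^{10^{10p}}/L \ge (\log n)^{10^{10p}-1}$), so $\log(10n^2)/\paren*{n\,\h(\Delta'_k)} \le n^{-\Omega(1)}$. Every multiplicative loss above is at most $1 - (\log n)^{-(10^{10p}-2)}$ or smaller, and since $10^{10p}-2 > 10^{8p}$ for all $p \ge 1$, these combine to give $\H\paren*{\rv{N}^k(s)} \ge n\,\h(\Delta'_k)\paren*{1 - (\log n)^{-10^{8p}}}$ for $n$ large enough, which is the claim.

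The conceptual core is the exchangeability observation — that $\rv{N}^k(s)$ is uniform over subsets of its size — after which the statement follows from \autoref{cor:nbr} and Stirling-type bounds. The remaining work is bookkeeping: checking that the losses from the deviation of $\abs*{\rv{N}^k(s)}$, from rounding to the integer $m_0$, and from the $\log(10n^2)$ slack in \autoref{lemma:binomial} are all comfortably below $(\log n)^{-10^{8p}}$, which is precisely where the large gap between the exponents $10^{10p}$ and $10^{8p}$ is used. As with the other properties in this subsection, the detailed computation is deferred to \autoref{sec:missing-proofs-distprop}.
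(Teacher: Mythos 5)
Your proposal is correct and follows essentially the same route as the paper: condition on $\card*{\rv{N}^k(s)}$, use symmetry to argue the set is uniform over subsets of its size so $\H\paren*{\rv{N}^k(s)} \ge \E\sq*{\log\binom{n}{\card*{\rv{N}^k(s)}}}$, then combine \autoref{cor:nbr}, \autoref{lemma:binomial}, and concavity/monotonicity of $\h$ to absorb all losses into the $(\log n)^{-10^{8p}}$ slack. The only differences are cosmetic (you justify the exchangeability more explicitly and restrict to the good event via the \autoref{cor:nbr} threshold rather than the paper's intermediate $z_k$).
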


\subsubsection{Entropy of $\rv{P}^k(s)$}
\label{sec:distprop3}

\begin{lemma}
\label{lemma:entropypath}
For all events $E$, we have $\H\paren*{ \rv{P}(s) \mid E } \leq \log n$ and, for all $1 < k \leq p+1$:
\[
\H\paren*{ \rv{P}^k(s) \mid E } \leq \Delta'_{k-1} \cdot \paren*{ 1 + \frac{1}{ \paren*{ \log n }^{ 10^{9p} } } } \cdot \H\paren*{ \rv{E}_k }.
\]
\end{lemma}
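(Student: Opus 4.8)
The plan is to prove both inequalities by peeling off the layers of $G$ one at a time, i.e. by a recursion on $k$. The $k=1$ bound is immediate: $\rv P(s)$ is the singleton $\{(s,\rv v)\}$ where $\rv v$ is the unique out-neighbor of $s$ and ranges over $V_2$, so $\rv P(s)$ takes at most $|V_2|=n$ values and hence $\H(\rv P(s)\mid E)\le\log n$ for every event $E$ (conditioning only shrinks the support). This will also serve as the base of the recursion.

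For $1<k\le p+1$, the step I would use rests on one structural observation: once we fix a value $\rv P^{k-1}(s)=P$, the endpoint set $N^{k-1}(s)\subseteq V_k$ is determined, and $\rv P^k(s)$ becomes a deterministic function of the tuple $(\rv N(v))_{v\in N^{k-1}(s)}$, since a length-$k$ path from $s$ is just a length-$(k-1)$ path from $s$ extended by one out-edge of its endpoint. Combining the chain rule, the fact that $\H(f(X))\le\H(X)$, and subadditivity of entropy, this yields
\[
\H(\rv P^k(s)\mid E)\ \le\ \H(\rv P^{k-1}(s)\mid E)\ +\ \E_{P}\Bigl[\ \sum_{v\in N^{k-1}(s)}\H\bigl(\rv N(v)\mid \rv P^{k-1}(s)=P,\,E\bigr)\Bigr].
\]
Now \autoref{obs:smallnbr} bounds the number of summands by $|N^{k-1}(s)|\le\Delta_{k-1}=\Delta^{k-2}$, and since each such $v$ lies in a middle layer, \autoref{lemma:entropynbrub} bounds each inner entropy by $n\,\h(\Delta')\bigl(1+(\log n)^{-10^{10p}}\bigr)$; so the second term is at most $\Delta_{k-1}\cdot n\,\h(\Delta')\bigl(1+(\log n)^{-10^{10p}}\bigr)$.

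Next I would unroll this recursion down to $k=1$, obtaining $\H(\rv P^k(s)\mid E)\le\log n + n\,\h(\Delta')\bigl(1+(\log n)^{-10^{10p}}\bigr)\sum_{j=2}^k\Delta^{j-2}$. Bounding the geometric sum by $\Delta^{k-2}\cdot\tfrac{\Delta}{\Delta-1}=\Delta_{k-1}\bigl(1+O(1/\Delta)\bigr)$, writing $\Delta_{k-1}=n\,\Delta'_{k-1}$, and converting $n^2\,\h(\Delta')$ into $\H(\rv E_k)$ through the lower bound $\H(\rv E_k)\ge n^2\h(\Delta')\bigl(1-(\log n)^{-10^{10p}}\bigr)$ of \autoref{cor:entropynbrub}, gives $\H(\rv P^k(s)\mid E)\le\log n+\Delta'_{k-1}\H(\rv E_k)\bigl(1+O((\log n)^{-10^{10p}})+O(1/\Delta)\bigr)$. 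The last thing to verify is that all of this fits inside the target factor $\bigl(1+(\log n)^{-10^{9p}}\bigr)$: since $\Delta=\bigl(L/(\log n)^{10^{10p}}\bigr)^{1/p}$ is polynomial in $n$ (as $L=\lceil n^\beta\rceil$), the term $1/\Delta$ and the additive $\log n$ (which is $\ll\Delta'_{k-1}\H(\rv E_k)$, itself of order $\Delta^{k-1}\log(n/\Delta)\gg\log n$ since $\Delta\gg\log n$) are both negligible, and the accumulated $\le p+1$ factors of $(1+(\log n)^{-10^{10p}})$ — together with the reciprocal from \autoref{cor:entropynbrub} — still multiply out below $1+(\log n)^{-10^{9p}}$ because $10^{10p}$ is enormously larger than $10^{9p}$.

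I expect the only real work to be this last epsilon-accounting: the structural observation and the entropy inequalities are short, and the bookkeeping goes through comfortably precisely because the construction's exponents ($10^{10p}$ versus $10^{9p}$) are spread so far apart and $\Delta$ is polynomially large. I do not anticipate a genuine obstacle; the proof is essentially a careful chain-rule computation built on \autoref{obs:smallnbr}, \autoref{lemma:entropynbrub}, and \autoref{cor:entropynbrub}.
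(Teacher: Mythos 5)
Your proposal is correct and follows essentially the same route as the paper's proof: the base case from \autoref{obs:smallnbr}, then for $k>1$ the chain rule plus subadditivity conditioned on $\rv{P}^{k-1}(s)$, with \autoref{obs:smallnbr} bounding $|N^{k-1}(s)|\le\Delta_{k-1}$, \autoref{lemma:entropynbrub} bounding each per-vertex entropy, and \autoref{cor:entropynbrub} converting $n^2\h(\Delta')$ into $\H(\rv{E}_k)$. The only (cosmetic) difference is that the paper folds the lower-order term back in via the induction hypothesis at each step, while you unroll the recursion and absorb the geometric sum and the additive $\log n$ at the end; the epsilon-accounting works out either way for the same reasons you give.
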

\begin{lemma}
\label{lemma:entropypathlb}
We have $\H\paren*{ \rv{P}(s) } = \log n$ and, for all $1 < k \leq p+1$:
\[
\H\paren*{ \rv{P}^k(s) } \geq \Delta'_{k-1} \cdot \paren*{ 1 - \frac{1}{ \paren*{ \log n }^{ 10^{9p} } } } \cdot \H\paren*{ \rv{E}_k }.
\]
\end{lemma}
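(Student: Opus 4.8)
The plan is to prove the equality $\H(\rv P(s)) = \log n$ directly, and then prove the bound for $1 < k \le p+1$ by a chain‑rule decomposition of $\rv P^k(s)$ that peels off one layer of edges. For the base case, note that $\rv P(s)$ is determined by and determines the unique random edge of $\rv E_1$, i.e.\ $\rv P(s) = \{(s,\rv v)\}$ with $\rv v$ uniform over $V_2$; since $\card{V_2} = n$ this gives $\H(\rv P(s)) = \log n$.

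For the main bound, the key structural observation is a bijection: a length‑$k$ path from $s$ is exactly a length‑$(k-1)$ path from $s$ extended by one outgoing edge from its endpoint. Concretely, $\rv P^k(s)$ is in bijection with the pair $\big(\rv P^{k-1}(s),\,(\rv S_v)_{v\in\rv N^{k-1}(s)}\big)$ (formally, the latter component is the partial function $v\mapsto \rv S_v=\rv N(v)$ restricted to domain $\rv N^{k-1}(s)$): from $\rv P^k(s)$ one reads off the set of its length‑$(k-1)$ prefixes, which is $\rv P^{k-1}(s)$ and hence also $\rv N^{k-1}(s)$ as the set of their endpoints, and for each $v\in\rv N^{k-1}(s)$ the set $\rv S_v$ as the vertices following $v$ at the last step; conversely these data rebuild $\rv P^k(s)$. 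By the chain rule,
\begin{align*}
\H(\rv P^k(s)) &= \H(\rv P^{k-1}(s)) + \H\!\big((\rv S_v)_{v\in\rv N^{k-1}(s)} \,\big|\, \rv P^{k-1}(s)\big)\\
&\ge \H\!\big((\rv S_v)_{v\in\rv N^{k-1}(s)} \,\big|\, \rv P^{k-1}(s)\big).
\end{align*}
Now I would condition on any realization $\rv P^{k-1}(s)=q$, which fixes $\rv N^{k-1}(s)$ to some set $T_q\subseteq V_k$. Since $2\le k\le p+1$, the layer $V_k$ is one of the ``middle'' layers, so each $v\in T_q$ has $\rv S_v$ an independent uniform $\Delta$‑subset of $V_{k+1}$; and since $\rv P^{k-1}(s)$ is a function of $\rv E_1,\dots,\rv E_{k-1}$ while $\rv E_k$ (equivalently $(\rv S_v)_{v\in V_k}$) is independent of those, conditioning does not change the law of $(\rv S_v)_{v\in T_q}$. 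Hence $\H\big((\rv S_v)_{v\in T_q}\mid \rv P^{k-1}(s)=q\big)=\card{T_q}\cdot\log\binom{n}{\Delta}$, and averaging over $q$ gives $\H(\rv P^k(s)) \ge \E\big[\card{\rv N^{k-1}(s)}\big]\cdot\log\binom{n}{\Delta}$.

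To finish, I would plug in two facts. First, $\rv E_k$ is itself in bijection with $n$ independent uniform $\Delta$‑subsets of $[n]$, so $\H(\rv E_k)=n\log\binom{n}{\Delta}$, i.e.\ $\log\binom{n}{\Delta}=\tfrac1n\H(\rv E_k)$; since $\Delta_{k-1}=n\cdot\Delta'_{k-1}$ the bound becomes $\H(\rv P^k(s))\ge \tfrac{1}{\Delta_{k-1}}\E[\card{\rv N^{k-1}(s)}]\cdot\Delta'_{k-1}\H(\rv E_k)$. Second, \autoref{cor:nbr} (applied with index $k-1\in[p+1]$) gives $\card{\rv N^{k-1}(s)}\ge \Delta_{k-1}\big(1-(\log n)^{-(10^{10p}-2)}\big)$ except with probability $n^{-150}$, so $\E[\card{\rv N^{k-1}(s)}]\ge \Delta_{k-1}\big(1-(\log n)^{-10^{9p}}\big)$ for $n$ large — the combined loss $(\log n)^{-(10^{10p}-2)}+n^{-150}$ is comfortably below $(\log n)^{-10^{9p}}$ because $10^{10p}-2>10^{9p}+1$ for every $p\ge1$. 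Combining the two displays yields exactly $\H(\rv P^k(s))\ge \Delta'_{k-1}\big(1-(\log n)^{-10^{9p}}\big)\H(\rv E_k)$.

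I do not expect a real obstacle here beyond careful bookkeeping; the matching upper direction is \autoref{lemma:entropypath}, already available. The two points to handle cleanly are: (i) the tuple $(\rv S_v)$ is indexed by the \emph{random} set $\rv N^{k-1}(s)$, which is a deterministic function of the conditioning variable $\rv P^{k-1}(s)$, so the conditioning and averaging must be done at the level of fixed realizations $q$ as above; and (ii) one must check that $V_k$ is a middle layer (true since $2\le k\le p+1$), so that the per‑vertex neighbourhood is genuinely a uniform $\Delta$‑subset and $\rv E_k$ is independent of $\rv E_1,\dots,\rv E_{k-1}$, for every $k$ in the stated range.
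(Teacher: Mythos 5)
Your proof is correct and follows essentially the same route as the paper's: both reduce $\H(\rv{P}^k(s))$ to $\H\paren*{\rv{P}(\rv{N}^{k-1}(s)) \mid \rv{P}^{k-1}(s)}$ (you via the chain rule on the prefix/last-edge decomposition, the paper via conditioning reduces entropy), then use independence of $\rv{E}_k$ from $\rv{E}_{-k}$ and independence across vertices to get $\E\sq*{\card*{\rv{N}^{k-1}(s)}}\cdot\H\paren*{\rv{E}_k}/n$, and finally lower-bound $\E\sq*{\card*{\rv{N}^{k-1}(s)}}$ by \autoref{cor:nbr} (equivalently \autoref{lemma:largenbr}). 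The bookkeeping with the constants and the large-$n$ assumption checks out, so no gap.
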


\begin{lemma}
\label{lemma:minentropypath}
For all events $E$, it holds that:
\[
2^{ - \H_{\infty}\paren*{ \rv{P}^{p+1}(s) \mid E } } \leq 1 - \frac{ \H\paren*{ \rv{P}^{p+1}(s) \mid E } - 1 }{ \Delta'_p \cdot \paren*{ 1 + \frac{1}{ \paren*{ \log n }^{ 10^{9p} } } } \cdot \H\paren*{ \rv{E}_{p+1} } } .
\]
\end{lemma}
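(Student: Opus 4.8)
The plan is to reduce the claim to two ingredients: an elementary inequality bounding the mode probability of a random variable by its Shannon entropy and its support size, and an upper bound on $\card*{\supp(\rv P^{p+1}(s))}$. For the first, note that for any random variable $X$ supported on a set of size $N\ge 2$ with mode probability $p_{\max}=2^{-\H_\infty(X)}$, letting $B$ be the indicator of the event that $X$ equals its mode, the chain rule gives $\H(X)=\H(B)+\H(X\mid B)\le \h(p_{\max})+(1-p_{\max})\log(N-1)\le 1+(1-p_{\max})\log N$, hence $p_{\max}\le 1-\tfrac{\H(X)-1}{\log N}$. I will apply this with $X=(\rv P^{p+1}(s)\mid E)$, whose support is contained in $\supp(\rv P^{p+1}(s))$ since conditioning never enlarges a support. (The degenerate cases $N\le 1$ or $\H(\rv P^{p+1}(s)\mid E)\le 1$ make the claimed right-hand side at least $1$ and are immediate.)

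\textbf{Bounding the support size.} Processing the layered graph from $s$ outward, set $R_1=\{s\}$ and inductively $R_{i+1}=\bigcup_{v\in R_i}\rv N(v)$; then $\rv P^{p+1}(s)$ is a deterministic function of the restrictions $(\rv E_i|_{R_i})_{i=1}^{p+1}$. By \autoref{obs:smallnbr} we have $\card*{R_1}=\card*{R_2}=1$ and $\card*{R_{k+1}}\le\Delta^{k-1}=\Delta_k$ for each $k\in[p]$, and each restriction $\rv E_i|_{R_i}$ is specified by one $\Delta$-subset of $[n]$ per vertex of $R_i$. A layer-by-layer count therefore yields $\card*{\supp(\rv P^{p+1}(s))}\le n\cdot\binom n\Delta^{\sum_{k=1}^p\Delta_k}\le n\cdot\binom n\Delta^{(1+1/(\Delta-1))\Delta_p}$, using $\sum_{k=1}^p\Delta^{k-1}\le\tfrac{\Delta}{\Delta-1}\Delta^{p-1}$. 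Since $\rv E_{p+1}$ is a tuple of $n$ independent uniform $\Delta$-subsets, $\H(\rv E_{p+1})=n\log\binom n\Delta$, so $\Delta'_p\,\H(\rv E_{p+1})=\Delta_p\log\binom n\Delta$, and it remains to verify $\log n+(1+\tfrac1{\Delta-1})\Delta_p\log\binom n\Delta\le(1+\tfrac1{(\log n)^{10^{9p}}})\Delta_p\log\binom n\Delta$, i.e. $\log n\le(\tfrac1{(\log n)^{10^{9p}}}-\tfrac1{\Delta-1})\Delta_p\log\binom n\Delta$. Using $\binom n\Delta\ge(n/\Delta)^\Delta$ gives $\Delta_p\log\binom n\Delta\ge\Delta^p\log(n/\Delta)$, and since $\Delta^p=L/(\log n)^{10^{10p}}$ with $L=\lceil n^\beta\rceil$ and $\Delta\le L^{1/p}$, this is $n^{\Omega(1)}$ up to polylog factors (and $\Delta\ge2$ for large $n$, so $\tfrac1{\Delta-1}$ is dwarfed by $\tfrac1{(\log n)^{10^{9p}}}$); hence the inequality holds for $n$ sufficiently large.

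\textbf{Combining, and the main obstacle.} Writing $N=\card*{\supp(\rv P^{p+1}(s)\mid E)}$, the two ingredients give $2^{-\H_\infty(\rv P^{p+1}(s)\mid E)}\le 1-\tfrac{\H(\rv P^{p+1}(s)\mid E)-1}{\log N}\le 1-\tfrac{\H(\rv P^{p+1}(s)\mid E)-1}{\Delta'_p(1+(\log n)^{-10^{9p}})\H(\rv E_{p+1})}$, where the second step uses $\log N\le\Delta'_p(1+(\log n)^{-10^{9p}})\H(\rv E_{p+1})$ together with $\H(\rv P^{p+1}(s)\mid E)\ge1$ (the case $\H(\rv P^{p+1}(s)\mid E)<1$ being trivial as noted); this is exactly the claimed bound. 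The conceptual content is just the one-line mode-versus-entropy inequality of the first paragraph; the only real work is the bookkeeping in the support-size estimate — checking that the crude layer-by-layer count stays below the slightly inflated factor $\Delta'_p(1+(\log n)^{-10^{9p}})\H(\rv E_{p+1})$ — which is of the same tedious-but-routine character as the other calculations deferred to \autoref{sec:missing-proofs-distprop}.
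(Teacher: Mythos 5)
Your proposal is correct and follows essentially the same route as the paper: the paper applies its \autoref{lemma:min-entropy-lb} (exactly your mode-versus-entropy inequality, proved by the same chain-rule/indicator decomposition) to the conditional variable $\rv{P}^{p+1}(s)\mid E$, and then bounds $\log\card*{\supp\paren*{\rv{P}^{p+1}(s)}}$ by $\Delta'_p\paren*{1+(\log n)^{-10^{9p}}}\cdot\H\paren*{\rv{E}_{p+1}}$, handling the same degenerate cases implicitly. The only divergence is how that support bound is justified: the paper cites \autoref{lemma:entropypath} in one line (a slightly loose step, since an entropy upper bound does not formally imply a support-size bound), whereas your explicit layer-by-layer count, $\card*{\supp\paren*{\rv{P}^{p+1}(s)}}\le n\cdot\binom{n}{\Delta}^{\sum_{k=1}^{p}\Delta_k}$ combined with $\H\paren*{\rv{E}_{p+1}}=n\log\binom{n}{\Delta}$ and the check that $\log n$ and the $\frac{1}{\Delta-1}$ overhead are absorbed by the $(\log n)^{-10^{9p}}$ slack for large $n$, spells out the counting that actually underlies that lemma, making your write-up self-contained (at the cost of the routine calculation) while buying the same final estimate.
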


\subsection{The Communication Lower Bound} 
\label{sec:cclb}

\begin{reminder}{\autoref{thm:lb}}
\statementthmlb
\end{reminder}

\paragraph*{Communication game.} To show our lower bound, we consider a communication game $\Pi$ with $p+1$ players. The edges $E_{p+2}$ are known to all players. In addition, player $i \in [p+1]$ also knows the edges $E_{p+2-i}$. Define $T = p(p+1)$. The communication takes place in $T - 1$ rounds, where in round $i$, player $i$ sends a message $M_i$ to player $i+1$ (indices taken modulo $p+1$) based on its input and all the messages received so far. After these $T - 1$ rounds have taken place, player $p+1$ outputs an answer based on its input and all the received messages. We treat the output as the $T^{\text{th}}$ message and denote it by $M_T$. We use $\norm*{\Pi}$ to denote the maximum (over all inputs) total communication in $\Pi$ (excluding the output).


\begin{proof}[Proof of \autoref{thm:lb}]
To start, note that we can assume that $\A$ is deterministic without loss of generality. The algorithm $\A$ implies a deterministic communication game $\Pi$ as above satisfying 
\begin{align*}
\norm*{\Pi} \leq T \cdot \frac{n \cdot L^{1/p}}{ \paren*{ \log n }^{ 10^{15p} } } \leq \frac{n \cdot \Delta}{ \paren*{ \log n }^{ 10^{10p} } }.
\end{align*}
For $j \in [T]$, We shall use $\rv{M}_j$ to denote the random variable corresponding to the $j^{\text{th}}$ message of the protocol. This random variable is over the probability space defined by $\dist_{n,p}$. Also, define $\rv{M}_{<j}$ and $\rv{M}_{\leq j}$ to be the random variables $\rv{M}_1 \cdots \rv{M}_{j-1}$ and $\rv{M}_1 \cdots \rv{M}_j$ respectively. 
For $k \in [p+1]$, define:
\[
\eps_k = \paren*{ \frac{1}{ \log n } }^{ 10^{ 5(p+1) - 2k } } .
\]

Observe that $\norm*{\Pi} \leq \eps_1^2 \cdot \H\paren*{ \rv{E}_k }$. We first show the following conditional independence result:

\begin{lemma}
\label{lemma:ind}
For all $0 \leq j \leq T$ and $i \in [p + 1]$, we have:
\[
\I\paren*{ \rv{E}_i : \rv{E}_{-i} \mid \rv{M}_{\leq j} } = 0 .
\]
\end{lemma}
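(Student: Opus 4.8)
The plan is to prove the statement by induction on $j$, the number of messages exchanged. For the base case $j = 0$ there are no messages and $\I(\rv{E}_i : \rv{E}_{-i}) = 0$ holds because the hard distribution $\dist_{n,p}$ samples the edge sets $\rv{E}_1, \dotsc, \rv{E}_{p+1}$ (mutually) independently: each $\rv{E}_i$ is a function only of the independently-chosen neighborhoods $\{S_v : v \in V_i\}$, so any $\rv{E}_i$ is independent of the collection $\rv{E}_{-i}$ of all the others. For the inductive step, assume $\I(\rv{E}_i : \rv{E}_{-i} \mid \rv{M}_{\leq j-1}) = 0$ for every $i \in [p+1]$; we must show the same conditioned on $\rv{M}_{\leq j}$. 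The key point is the structure of round $j$: the $j$-th message $\rv{M}_j$ is sent by some fixed player, say player $a = a(j) \in [p+1]$ (index taken modulo $p+1$), and is a deterministic function of that player's input $\rv{E}_{p+2-a}$ together with the previously sent messages $\rv{M}_{\leq j-1}$. In other words, $\rv{M}_j$ is a deterministic function of $(\rv{E}_{p+2-a}, \rv{M}_{\leq j-1})$.

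The main step is then a standard ``information chain rule / rectangle'' manipulation showing that conditioning on such a one-sided message cannot create correlation between $\rv{E}_i$ and $\rv{E}_{-i}$, for each $i$. Write $c = p+2-a$ for the index of the input held by the sender of message $j$. There are two cases. If $i = c$: then $\rv{M}_j$ is a function of $(\rv{E}_i, \rv{M}_{\leq j-1})$, so conditioning further on $\rv{M}_j$ only reveals information about $\rv{E}_i$'s ``side'' of the partition; formally, $\I(\rv{E}_i : \rv{E}_{-i} \mid \rv{M}_{\leq j}) \le \I(\rv{E}_i, \rv{M}_j : \rv{E}_{-i} \mid \rv{M}_{\leq j-1}) = \I(\rv{E}_i : \rv{E}_{-i} \mid \rv{M}_{\leq j-1}) + \I(\rv{M}_j : \rv{E}_{-i} \mid \rv{M}_{\leq j-1}, \rv{E}_i)$, and the last term vanishes because $\rv{M}_j$ is determined by $(\rv{E}_i, \rv{M}_{\leq j-1})$, while the first is $0$ by induction; monotonicity of conditional mutual information under conditioning on a determined variable handles the inequality. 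If $i \neq c$: then $c$ is one of the ``$-i$'' indices, and $\rv{M}_j$ is a function of $(\rv{E}_{-i}, \rv{M}_{\leq j-1})$; by the symmetric argument with the roles of the two sides swapped, $\I(\rv{E}_i : \rv{E}_{-i} \mid \rv{M}_{\leq j}) \le \I(\rv{E}_i : \rv{E}_{-i}, \rv{M}_j \mid \rv{M}_{\leq j-1}) = \I(\rv{E}_i : \rv{E}_{-i} \mid \rv{M}_{\leq j-1}) + \I(\rv{E}_i : \rv{M}_j \mid \rv{M}_{\leq j-1}, \rv{E}_{-i}) = 0 + 0$. In both cases we conclude $\I(\rv{E}_i : \rv{E}_{-i} \mid \rv{M}_{\leq j}) = 0$, completing the induction.

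The one technical subtlety to be careful about — and the part I expect to require the most care — is the bookkeeping of \emph{which} player sends message $j$ and hence which input is on which side of the cut $(\rv{E}_i, \rv{E}_{-i})$: the protocol $\Pi$ runs $T - 1 = p(p+1) - 1$ rounds cycling through the $p+1$ players, and one must verify that at every round the sender's entire knowledge (its fixed input plus all prior messages) lies on a single side of the partition for every choice of $i \in [p+1]$, which is exactly the dichotomy $i = c$ vs.\ $i \ne c$ above. Since the edge set $E_{p+2}$ is common knowledge and fixed (not random), it contributes nothing and can be ignored throughout. Everything else is an application of the chain rule for mutual information together with the fact that conditioning on a function of the variables already being conditioned on and one side of the partition does not change the relevant conditional mutual information; no quantitative estimates or properties of $\Delta$ are needed for this lemma — it is purely structural.
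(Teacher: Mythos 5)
Your proof is correct and is essentially the paper's argument: the paper also removes the messages from the conditioning one at a time (its Lemma~\ref{lemma:removecondition} is exactly the chain-rule/conditional-independence step you spell out), using the same dichotomy of whether message $j'$ is sent by the player holding $\rv{E}_i$ or by one holding a component of $\rv{E}_{-i}$, and then invokes $\I(\rv{E}_i : \rv{E}_{-i}) = 0$ from the product structure of $\dist_{n,p}$. Your induction on $j$ is just an explicit packaging of that repeated application, so the two proofs coincide.
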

\begin{proof}
We repeatedly apply \autoref{lemma:removecondition} to remove the conditioning on $\rv{M}_{\leq j}$. This is possible as for all $j' \in [j]$, either message $j'$ is not sent by player $p + 2 - i$ in which case $\rv{M}_{j'}$ is independent of $\rv{E}_i$ given $\rv{E}_{-i}$ and $\rv{M}_{<j'}$, or message $j'$ is sent by player $p + 2 - i$, in which case $\rv{M}_{j'}$ is independent of $\rv{E}_{-i}$ given $\rv{E}_i$ and $\rv{M}_{<j'}$. Using \autoref{lemma:removecondition}, we have:
\[
\I\paren*{ \rv{E}_i : \rv{E}_{-i} \mid \rv{M}_{\leq j} } \leq \I\paren*{ \rv{E}_i : \rv{E}_{-i} } = 0 ,
\]
by definition of $\dist_{n,p}$.

\end{proof}

The following lemma shows that the entropy of the set $\rv{P}^{p+1}(s)$ remains high after knowing all messages $M_{\le (p+1) p}$.

\begin{lemma}
\label{lemma:induction}
For all $1 \le k \leq p + 1$, we have:
\begin{align*}
\H\paren*{ \rv{P}^k(s) \mid \rv{M}_{\leq kp} } &= \log n , &\text{~if~} k = 1\\
\H\paren*{ \rv{P}^k(s) \mid \rv{M}_{\leq kp} } &\geq \paren*{ 1 - \eps_k } \cdot \Delta'_{k-1} \cdot \H\paren*{ \rv{E}_k } , &\text{~if~} k > 1.
\end{align*}
\end{lemma}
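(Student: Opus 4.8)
The plan is to prove the two claims by induction on $k$. The base case $k=1$ is immediate: $\rv{P}(s) = \rv{E}_1$ is the single random edge $(s,v)$ with $v$ uniform in $V_2$, and the messages $\rv{M}_{\leq p}$ are the first $p$ messages of the protocol, which by Lemma~\ref{lemma:ind} carry zero information about $\rv{E}_1$ (the first $p$ messages are all sent by players who do not see $E_1 = E_{p+2-(p+1)}$, so conditioning on them leaves $\rv{E}_1$ untouched by the conditional-independence argument); hence $\H(\rv{P}(s)\mid \rv{M}_{\leq p}) = \H(\rv{P}(s)) = \log n$. For the inductive step, assume the bound for $k-1$ and consider the block of $p$ new messages $\rv{M}_{(k-1)p+1},\dotsc,\rv{M}_{kp}$. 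The key structural point is that during this block, the only player who can inject information about the layer-$k$ edges $\rv{E}_k$ is the one holding $\rv{E}_k$, and that player must ``see'' a path through the first $k-1$ layers before it can describe where $\rv{P}^k(s)$ goes; combined with the small total communication $\norm*{\Pi} \le \eps_1^2 \H(\rv{E}_k)$, this forces the entropy drop per message to be small.

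More concretely, I would split $\rv{P}^k(s)$ into two pieces: the neighborhood set $\rv{N}^{k-1}(s)$ (determined by $\rv{E}_{\leq k-1}$) and, for each vertex $u \in \rv{N}^{k-1}(s)$, its out-neighborhood $\rv{N}(u)$ (a piece of $\rv{E}_k$). By the chain rule and the inductive hypothesis on $\rv{P}^{k-1}(s)$ (which pins down $\rv{N}^{k-1}(s)$ up to the small deficiency from Corollary~\ref{cor:nbr}), it suffices to lower bound $\sum_{u \in \rv{N}^{k-1}(s)} \H(\rv{N}(u) \mid \rv{M}_{\leq kp}, \text{earlier layers})$. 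Here I would use Lemma~\ref{lemma:ind} to decouple $\rv{E}_k$ from $\rv{E}_{-k}$ given the messages, so that conditioning on the other layers does not hurt; then the only loss is from the messages themselves. Since the messages sent by the player holding $\rv{E}_k$ total at most $\norm*{\Pi}$ bits across the whole protocol, and $\rv{N}^{k-1}(s)$ has size roughly $\Delta_{k-1} = \Delta^{k-2}$ (Corollary~\ref{cor:nbr}) while $\H(\rv{E}_k) = n^2 \h(\Delta')(1\pm o(1))$ (Corollary~\ref{cor:entropynbrub}), the per-vertex entropy loss averages to $O(\norm*{\Pi}/\Delta_{k-1})$, which is $\eps_1^2 \H(\rv{E}_k)/\Delta_{k-1} \ll (\eps_k - \eps_{k-1})\cdot \Delta'_{k-1}\H(\rv{E}_k)/\Delta_{k-1}\cdot(\text{room})$. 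Tracking constants, one checks $(1-\eps_{k-1})$ times the inductive factor, minus these losses, minus the deficiency from Corollary~\ref{cor:nbr}, still exceeds $(1-\eps_k)$; this is where the rapidly-growing gaps between the $\eps_k$'s (doubly-exponential in $k$) are designed to absorb every polylogarithmic slack term.

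The main obstacle, I expect, is making the ``only the $\rv{E}_k$-holder can add information'' intuition fully rigorous in the presence of $p$ interleaved messages within one block: a message sent by another player in this block could in principle reveal information about $\rv{N}^{k-1}(s)$ (which lives in $\rv{E}_{<k}$) in a way that is correlated with what is still unknown about $\rv{E}_k$. The resolution is exactly Lemma~\ref{lemma:ind}: conditioned on \emph{all} messages $\rv{M}_{\leq kp}$, the layers $\rv{E}_k$ and $\rv{E}_{-k}$ are independent, so I can first condition on everything about $\rv{E}_{<k}$ (in particular on the identity of $\rv{N}^{k-1}(s)$) for free, and only then account for the entropy that messages from the $\rv{E}_k$-player removed from $\rv{E}_k$. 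A secondary technical nuisance is that $\rv{N}^{k-1}(s)$ is itself random and only \emph{approximately} of size $\Delta_{k-1}$; I would handle this by conditioning on the high-probability event from Corollary~\ref{cor:nbr} and absorbing the $n^{-150}$ failure probability and the $(1-(\log n)^{-(10^{10p}-2)})$ size deficiency into the $\eps_k$ slack, using that $\H(\cdot)$ of a bounded-support variable is at most $\log$ of its support size to control the contribution of the bad event.
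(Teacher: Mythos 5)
Your skeleton matches the paper's: induction on $k$, the same base case (the first $p$ messages are determined by $\rv{E}_{-1}$, which is independent of $\rv{E}_1$), the decomposition of $\H\paren*{\rv{P}^k(s)\mid \rv{M}_{\leq kp}}$ into a weighted sum of out-neighborhood entropies over vertices of $V_k$, and the use of \autoref{lemma:ind} to discard the conditioning on $\rv{E}_{-k}$. But the quantitative heart of the inductive step is missing and, as written, incorrect. After the decomposition one must lower bound $\sum_i \Pr\paren*{v^{(i)}_k\in\rv{N}^{k-1}(s)\mid M_{\leq (k-1)p}}\cdot\H\paren*{\rv{P}(v^{(i)}_k)\mid\cdots}$, and your plan is to charge an average loss of $O(\norm*{\Pi}/\Delta_{k-1})$ per vertex of $\rv{N}^{k-1}(s)$. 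That averaging is precisely what needs proof: nothing so far prevents the $\norm*{\Pi}\le\eps_1^2\,\H\paren*{\rv{E}_k}$ bits of communication from being spent entirely on the out-edges of the few vertices that actually lie in $\rv{N}^{k-1}(s)$, whose total entropy is only about $\Delta'_{k-1}\H\paren*{\rv{E}_k}$. Since $\Delta'_{k-1}=\Delta^{k-2}/n$ is polynomially small in $n$ while $\eps_1$ is only polylogarithmically small, $\Delta'_{k-1}\H\paren*{\rv{E}_k}\ll\norm*{\Pi}$ (already at $k=2$, where the relevant entropy is about $n\,\h(\Delta')$), so a transcript that ``knew'' $\rv{N}^{k-1}(s)$ could zero out the quantity you are bounding. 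For the same reason your displayed comparison $\eps_1^2\H\paren*{\rv{E}_k}/\Delta_{k-1}\ll(\eps_k-\eps_{k-1})\cdot\Delta'_{k-1}\H\paren*{\rv{E}_k}/\Delta_{k-1}$ is false: it reduces to $\eps_1^2\ll\eps_k\cdot\Delta'_{k-1}$, which fails because $\Delta'_{k-1}$ is polynomially small.

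The missing idea is to use the induction hypothesis in the opposite direction from the way you invoke it: $\rv{N}^{k-1}(s)$ is \emph{not} pinned down by $\rv{M}_{\leq (k-1)p}$ --- it retains almost all of its entropy --- and this is exactly what protects the relevant part of $\rv{E}_k$. The paper first converts the inductive bound into a mutual-information bound plus a Markov argument, yielding a bad set $\B_1$ of transcripts with $\Pr\paren*{\rv{M}_{\leq(k-1)p}\in\B_1}\le\sqrt{\eps_{k-1}}$ outside of which $\H\paren*{\rv{N}^{k-1}(s)\mid M_{\leq(k-1)p}}$ is nearly maximal (\autoref{lemma:bad1}), and then proves a flatness statement (\autoref{lemma:size}): for good transcripts, all but $\eps_k^5\cdot n$ indices satisfy $\Pr\paren*{v^{(i)}_k\in\rv{N}^{k-1}(s)\mid M_{\leq(k-1)p}}\geq\Delta'_{k-1}/(1+\eps_k^5)$. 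This is a genuinely nontrivial binary-entropy computation and is not a consequence of \autoref{cor:nbr}, which only controls the unconditional size of the neighborhood, not its conditional distribution given the messages. With these weights the correct accounting multiplies the communication loss by $\Delta'_{k-1}$ --- the loss is $\Delta'_{k-1}\cdot\norm*{\Pi}\le\eps_1^2\cdot\Delta'_{k-1}\H\paren*{\rv{E}_k}$, which the slack $\eps_k$ absorbs --- rather than dividing $\norm*{\Pi}$ by $\Delta_{k-1}$. A smaller omission: to replace conditioning on $\rv{M}_{\leq kp}$ by conditioning on $\rv{M}_{\leq(k-1)p}$ one needs the structural fact that the player holding $\rv{E}_k$ sends none of the messages $(k-1)p+1,\dotsc,kp$, so these messages are determined by $\paren*{\rv{M}_{\leq(k-1)p},\rv{E}_{-k}}$; \autoref{lemma:ind} alone does not give this.
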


Before proving~\autoref{lemma:induction}, we need the following two important technical lemmas, whose proofs can be found in~\autoref{sec:missing-proofs-cclb}. Let $v^{(1)}_k, v^{(2)}_k, \cdots, v^{(n)}_k$ be the vertices in layer $V_k$ and, for $i \in [n]$, define the set $V^{<i}_k = \set*{ v^{(1)}_k, \cdots, v^{(i-1)}_k }$, and define $t = kp$ and $t' = (k-1)p$.

\begin{lemma}
	\label{lemma:bad1}
	For all $1 < k \leq p + 1$, assuming~\autoref{lemma:induction} holds for $k-1$, there exists a set $\B_1 \subseteq \supp\paren*{ \rv{M}_{\leq (k-1)p} }$ such that $\Pr\paren*{ \rv{M}_{\leq (k-1)p} \in \B_1 } \leq \sqrt{ \eps_{k-1} }$ and for all $M_{\leq (k-1)p} \notin \B_1$, we have:
	\begin{align*}
	&\H\paren*{ \rv{N}^{k-1}(s) \mid M_{\leq (k-1)p} } \geq \H\paren*{ \rv{N}^{k-1}(s) } - 10 \cdot \sqrt{ \eps_{k-1} } \cdot \Delta'_{k-2} \cdot \H\paren*{ \rv{E}_{k-1} } , &\text{~if~} k > 2 \\
	&\H\paren*{ \rv{N}^{k-1}(s) \mid M_{\leq (k-1)p} } = \H\paren*{ \rv{N}^{k-1}(s) } = \log n, &\text{~if~} k = 2.
	\end{align*}
\end{lemma}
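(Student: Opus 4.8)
The plan is to obtain \autoref{lemma:bad1} by converting an \emph{average} entropy guarantee into a \emph{pointwise} one via a shifted Markov inequality. The base case $k=2$ is essentially immediate: here $\rv{N}^1(s)$ is determined by the single edge $\rv{E}_1$, which is held only by player $p+1$, and since player $p+1$ sends no message in the first $p$ rounds, the transcript $\rv{M}_{\leq p}$ is a deterministic function of $\rv{E}_2,\dotsc,\rv{E}_{p+1}$ alone; as these are independent of $\rv{E}_1$ under $\dist_{n,p}$, we get that $\rv{N}^1(s)$ is independent of $\rv{M}_{\leq p}$, so $\H(\rv{N}^1(s) \mid M_{\leq p}) = \H(\rv{N}^1(s)) = \log n$ for every transcript and we take $\B_1 = \emptyset$.

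For $k > 2$, I would first bound the mutual information $\I(\rv{N}^{k-1}(s) : \rv{M}_{\leq (k-1)p})$ from above. Since $\rv{N}^{k-1}(s)$ is a deterministic function of the path set $\rv{P}^{k-1}(s)$, the data-processing inequality gives $\I(\rv{N}^{k-1}(s) : \rv{M}_{\leq (k-1)p}) \leq \H(\rv{P}^{k-1}(s)) - \H(\rv{P}^{k-1}(s) \mid \rv{M}_{\leq (k-1)p})$. By \autoref{lemma:entropypath} (applied with the trivial event), $\H(\rv{P}^{k-1}(s)) \leq \Delta'_{k-2}(1 + \tfrac{1}{(\log n)^{10^{9p}}})\H(\rv{E}_{k-1})$, and --- invoking the hypothesis that \autoref{lemma:induction} holds for $k-1$, which is legitimate since $k-1 > 1$ --- we have $\H(\rv{P}^{k-1}(s) \mid \rv{M}_{\leq (k-1)p}) \geq (1-\eps_{k-1})\Delta'_{k-2}\H(\rv{E}_{k-1})$. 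Subtracting, and noting $\tfrac{1}{(\log n)^{10^{9p}}} \leq \eps_{k-1}$ (a comparison of the constants in the exponents, using $k\geq 3$), I get $\I(\rv{N}^{k-1}(s) : \rv{M}_{\leq (k-1)p}) \leq 2\eps_{k-1}\Delta'_{k-2}\H(\rv{E}_{k-1})$.

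The crucial point is that both $\H(\rv{N}^{k-1}(s))$ and \emph{every} conditional entropy $\H(\rv{N}^{k-1}(s) \mid M_{\leq (k-1)p})$ are sandwiched around $n\,\h(\Delta'_{k-1})$ up to a $\bigl(1 \pm \tfrac{1}{(\log n)^{10^{8p}}}\bigr)$ factor: the lower bound is \autoref{lemma:entropynbrlb} at index $k-1$, and the upper bound follows from the crude count that, by \autoref{obs:smallnbr}, $\rv{N}^{k-1}(s)$ always lies in a family of at most $\sum_{j \leq \Delta_{k-1}}\binom{n}{j} \leq 2^{n\,\h(\Delta'_{k-1}) + O(\log n)}$ sets, together with the fact that $n\,\h(\Delta'_{k-1}) = n^{\Omega(1)}$ dwarfs $\log n$. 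Writing $c \coloneqq \tfrac{2n\,\h(\Delta'_{k-1})}{(\log n)^{10^{8p}}}$, the random variable $Y \coloneqq \H(\rv{N}^{k-1}(s)) - \H(\rv{N}^{k-1}(s) \mid \rv{M}_{\leq (k-1)p}) + c$ is then nonnegative and has expectation $\I(\rv{N}^{k-1}(s) : \rv{M}_{\leq (k-1)p}) + c \leq 2\eps_{k-1}\Delta'_{k-2}\H(\rv{E}_{k-1}) + c$. I would take $\B_1$ to be the set of transcripts with $Y > 10\sqrt{\eps_{k-1}}\Delta'_{k-2}\H(\rv{E}_{k-1}) + c$, so that every transcript outside $\B_1$ meets the claimed entropy bound; Markov's inequality then yields $\Pr(\rv{M}_{\leq (k-1)p} \in \B_1) \leq \sqrt{\eps_{k-1}}$, using that $c$ is negligible compared with $\eps_{k-1}\Delta'_{k-2}\H(\rv{E}_{k-1})$ --- which in turn follows from $\Delta'_{k-2}\H(\rv{E}_{k-1}) = \Omega(n\,\h(\Delta'_{k-1}))$ (combine \autoref{lemma:entropypath}, $\H(\rv{P}^{k-1}(s)) \geq \H(\rv{N}^{k-1}(s))$, and \autoref{lemma:entropynbrlb}) plus one more comparison of exponents.

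I expect the only genuine subtlety to be this passage from the averaged mutual-information estimate to the pointwise statement: a naive Markov argument fails because an individual conditional entropy $\H(\rv{N}^{k-1}(s) \mid M_{\leq (k-1)p})$ can exceed the unconditional $\H(\rv{N}^{k-1}(s))$, and the remedy is precisely to observe that the excess is at most the tiny quantity $c$, since both entropies are pinned near $n\,\h(\Delta'_{k-1})$. The remainder is bookkeeping --- confirming that each $\tfrac{1}{(\log n)^{10^{\Theta(p)}}}$ error term that appears is dominated by $\sqrt{\eps_{k-1}}$ (which is itself $\ll 1$), which amounts to comparing the constants sitting in the towered exponents.
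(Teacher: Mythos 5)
Your proposal is correct and follows essentially the same route as the paper's proof: bound $\I\paren*{\rv{N}^{k-1}(s) : \rv{M}_{\leq (k-1)p}}$ by data-processing through $\rv{P}^{k-1}(s)$ using the induction hypothesis together with \autoref{lemma:entropypath}, establish a pointwise bound showing the entropy deficit cannot be appreciably negative (the paper's Claim inside the proof, proved via \autoref{obs:smallnbr}, \autoref{lemma:binomial} and \autoref{lemma:entropynbrlb}, exactly your ``$Y\geq 0$'' step with a slightly different shift constant), and conclude with a shifted Markov inequality; the $k=2$ case is the same independence fact, argued directly rather than through the base case of \autoref{lemma:induction}. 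The only differences are bookkeeping choices of the shift term, which you correctly verify is dominated by $\eps_{k-1}\cdot\Delta'_{k-2}\cdot\H\paren*{\rv{E}_{k-1}}$.
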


\begin{lemma}
	\label{lemma:size}
	For all $1 < k \leq p + 1$, assuming~\autoref{lemma:induction} holds for $k-1$, and let $\B_1$ be the set promised by \autoref{lemma:bad1}. For all $M_{\leq t'} \notin \B_1$, we have 
	\[
	\card*{ \set*{ i \in [n] \mid \Pr\paren*{ v^{(i)}_k \in \rv{N}^{k-1}(s) \mid M_{\leq t'} } \leq \frac{ \Delta'_{k-1} }{ 1 + \eps_k^5 } } } \leq \eps_k^5 \cdot n.
	\]
\end{lemma}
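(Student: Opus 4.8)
## Proof Proposal for Lemma~\ref{lemma:size}

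\textbf{Overall strategy.} The plan is to argue by contradiction via an entropy-counting argument. If too many vertices $v^{(i)}_k$ in layer $V_k$ have small conditional probability of lying in $\rv{N}^{k-1}(s)$, then $\rv{N}^{k-1}(s)$ is ``compressed'' — it concentrates on a biased subset of $V_k$ — which forces its conditional entropy to drop well below $\H(\rv{N}^{k-1}(s))$, contradicting the lower bound on $\H(\rv{N}^{k-1}(s) \mid M_{\leq t'})$ furnished by \autoref{lemma:bad1}. So the key is to upper bound $\H(\rv{N}^{k-1}(s) \mid M_{\leq t'})$ in terms of the number of ``light'' indices $i$ (those with $\Pr(v^{(i)}_k \in \rv{N}^{k-1}(s) \mid M_{\leq t'}) \leq \Delta'_{k-1}/(1+\eps_k^5)$) and compare it with the lower bound.

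\textbf{Key steps, in order.} First, fix $M_{\leq t'} \notin \B_1$ and write $q_i = \Pr(v^{(i)}_k \in \rv{N}^{k-1}(s) \mid M_{\leq t'})$ for each $i \in [n]$. Since $\rv{N}^{k-1}(s)$ is a subset of $V_k$ and, conditioned on being in the typical regime of \autoref{cor:nbr}, has size at most $\Delta_{k-1}$, we have a subadditivity bound $\H(\rv{N}^{k-1}(s) \mid M_{\leq t'}) \leq \sum_{i=1}^n \h(q_i) + (\text{small error from the atypical event})$; the small error is controlled because the atypical event from \autoref{cor:nbr} has probability $\le n^{-150}$ and $\rv{N}^{k-1}(s)$ takes at most $\binom{n}{\le \Delta_{k-1}}$ values, so its contribution to the entropy is negligible compared to $\eps_k^5 \cdot \Delta'_{k-1} \cdot \H(\rv E_k)$. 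Also, $\sum_i q_i = \E[\,|\rv{N}^{k-1}(s)|\, \mid M_{\leq t'}] \leq \Delta_{k-1} = n \cdot \Delta'_{k-1}$ by \autoref{obs:smallnbr}. Second, suppose for contradiction that the set $B$ of light indices has $|B| > \eps_k^5 \cdot n$. By concavity of $\h$ and the fact that $\h$ is increasing on $(0,1/2)$ (and $\Delta'_{k-1} \ll 1/2$), the sum $\sum_i \h(q_i)$ subject to $\sum_i q_i \leq n\Delta'_{k-1}$ and $q_i \leq \Delta'_{k-1}/(1+\eps_k^5)$ for $i \in B$ is maximized by pushing mass to $q_i = \Delta'_{k-1}/(1+\eps_k^5)$ on $B$ and spreading the rest uniformly; a short computation shows this is at most $n \cdot \h(\Delta'_{k-1}) - \Omega(\eps_k^5 \cdot n) \cdot \h(\Delta'_{k-1})$ — intuitively, replacing a uniform profile $q_i \equiv \Delta'_{k-1}$ by one that is deficient on an $\eps_k^5$ fraction of coordinates costs a constant fraction (in $\eps_k^5$) of the per-coordinate entropy. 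More precisely one should show $\sum_i \h(q_i) \le n\h(\Delta'_{k-1})(1 - c\,\eps_k^5)$ for some absolute constant $c > 0$, using that $\h(x(1-\delta)) \le \h(x)(1-\Omega(\delta/\log(1/x)))$ for small $x,\delta$ and that $\log(1/\Delta'_{k-1}) = \poly\log n$. Third, combine: $\H(\rv{N}^{k-1}(s) \mid M_{\leq t'}) \leq n\h(\Delta'_{k-1})(1 - c\eps_k^5) + o(\cdot)$, while \autoref{lemma:bad1} together with \autoref{lemma:entropynbrlb} gives $\H(\rv{N}^{k-1}(s) \mid M_{\leq t'}) \geq n\h(\Delta'_{k-1})(1 - \frac{1}{(\log n)^{10^{8p}}}) - 10\sqrt{\eps_{k-1}}\Delta'_{k-2}\H(\rv E_{k-1})$. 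Using \autoref{cor:entropynbrub} to bound $\Delta'_{k-2}\H(\rv E_{k-1}) \le \Delta'_{k-2} n^2 \h(\Delta') (1+o(1)) = n\h(\Delta'_{k-1})(1+o(1))$ (since $\Delta'_{k-1} \approx \Delta'_{k-2}\Delta'$ up to the entropy-vs-degree conversion), the lower bound reads $\geq n\h(\Delta'_{k-1})(1 - O(\sqrt{\eps_{k-1}}))$. Since $\eps_{k-1} = (\log n)^{-10^{5(p+1)-2(k-1)}}$ and $\eps_k^5 = (\log n)^{-5 \cdot 10^{5(p+1)-2k}}$, we have $\sqrt{\eps_{k-1}} \ll \eps_k^5$ (the exponent $10^{5(p+1)-2k+2}/2$ dwarfs $5 \cdot 10^{5(p+1)-2k}$), so the lower bound strictly exceeds the upper bound — contradiction. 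Hence $|B| \leq \eps_k^5 \cdot n$.

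\textbf{Main obstacle.} The delicate point is the quantitative entropy-deficiency estimate in the second step: showing that a $q$-profile which is uniformly deficient by a factor $1+\eps_k^5$ on an $\eps_k^5$-fraction of coordinates loses an $\Omega(\eps_k^5)$-fraction of the total entropy, \emph{even after reallocating the freed-up probability mass optimally}. The subtlety is that $\h$ near $0$ is very steep, so moving a tiny bit of mass from a light coordinate can add a disproportionate amount of entropy elsewhere; one must verify that, because every coordinate is capped at roughly $\Delta'_{k-1}$ (the maximum is attained when $\rv{N}^{k-1}(s)$ is spread as evenly as possible, and indeed $\E|\rv{N}^{k-1}(s)| \le \Delta_{k-1}$), the reallocation can only raise the other $q_i$ to about $\Delta'_{k-1}(1 + O(\eps_k^5))$, at which point $\h$ is still essentially linear-in-$\log$ and the net loss remains $\Omega(\eps_k^5 \cdot n \cdot \h(\Delta'_{k-1}))$. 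Carefully carrying out this convexity/Lagrangian optimization — and tracking that all the accumulated $o(1)$-type errors ($n^{-150}$ from atypical neighborhoods, the $1/(\log n)^{10^{8p}}$ slack in \autoref{lemma:entropynbrlb}, and the $\sqrt{\eps_{k-1}}$ from \autoref{lemma:bad1}) are genuinely smaller than $\eps_k^5$ — is the technical heart of the argument; everything else is bookkeeping with the hierarchy of $\eps$ and $\Delta$ parameters.
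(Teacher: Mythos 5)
Your skeleton is the same as the paper's: fix $M_{\leq t'}\notin\B_1$, argue by contradiction, bound $\H(\rv{N}^{k-1}(s)\mid M_{\leq t'})$ above by $\sum_i \h(\alpha_i)$ where $\alpha_i=\Pr(v^{(i)}_k\in\rv{N}^{k-1}(s)\mid M_{\leq t'})$, use the mass constraint $\sum_i\alpha_i\le\Delta_{k-1}$ together with the cap $\alpha_i\le\Delta'_{k-1}/(1+\eps_k^5)$ on the light set, and contradict the lower bound coming from \autoref{lemma:bad1} and \autoref{lemma:entropynbrlb}. However, the key quantitative claim you rely on is false. You assert that after optimally reallocating the freed-up mass the entropy loss is still $\Omega(\eps_k^5\cdot n\cdot\h(\Delta'_{k-1}))$, i.e.\ $\sum_i\h(q_i)\le n\h(\Delta'_{k-1})(1-c\,\eps_k^5)$. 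The extremal profile (cap the $\eps_k^5 n$ light coordinates at $\Delta''_{k-1}=\Delta'_{k-1}/(1+\eps_k^5)$, raise the rest to $\Delta''_{k-1}/(1-\eps_k^5)$) is a \emph{mean-preserving} perturbation of the constant profile $\Delta'_{k-1}$ of relative size $\eps_k^5$; the first-order terms in the Taylor expansion of $\h$ cancel, so the per-coordinate deficiency is governed by $\tfrac12\lvert\h''(\Delta'_{k-1})\rvert$ times the variance, which is of order $\Delta'_{k-1}\cdot\eps_k^{15}$ — not $\eps_k^5\,\h(\Delta'_{k-1})\approx\eps_k^5\,\Delta'_{k-1}\log(1/\Delta'_{k-1})$. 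This is exactly what the paper's Claim inside the proof (and the display bounding the deficiency below by $\Delta'_{k-1}\cdot\eps_k^{25}$) computes; your claimed bound overshoots by a factor of roughly $\eps_k^{-10}\log n$, so the step as written fails, and the "main obstacle" paragraph resolves the reallocation issue incorrectly rather than verifying it.

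The gap is fixable, and the fix is the paper's route: carry out the concavity/interpolation argument explicitly (their $\lambda$-averaging plus the expansion of $\h$ around $\Delta''_{k-1}$) to get a deficiency of at least $\Delta'_{k-1}\eps_k^{25}$ per coordinate, then check that this still dominates the error terms $\eps_{k-1}\h(\Delta'_{k-1})+\tfrac{10}{n}\sqrt{\eps_{k-1}}\,\Delta'_{k-2}\H(\rv{E}_{k-1})\le 2\log n\cdot\eps_k^{30}\cdot\Delta'_{k-1}$, which holds only because $\sqrt{\eps_{k-1}}$ is smaller than any fixed power of $\eps_k$ divided by $\poly\log n$ in the paper's $\eps$-ladder; your comparison "$\sqrt{\eps_{k-1}}\ll\eps_k^5$" is performed against the wrong (inflated) deficiency, so the needed inequality is never actually established in your write-up. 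Two minor points: the subadditivity step $\H(\rv{N}^{k-1}(s)\mid M_{\leq t'})\le\sum_i\h(\alpha_i)$ and the bound $\sum_i\alpha_i\le\Delta_{k-1}$ hold unconditionally (by \autoref{obs:smallnbr}), so no typicality event from \autoref{cor:nbr} and no extra error terms are needed; and the case $k=2$ must be treated separately (as in the paper, where the conditional distribution of $\rv{N}(s)$ is uniform and the set in question is empty), since \autoref{lemma:entropynbrlb} in the form $n\,\h(\Delta'_{k-1})(1-\cdot)$ only applies for $k>2$.
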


Now we are ready to prove~\autoref{lemma:induction}.

\begin{proofof}{\autoref{lemma:induction}}
Induction on $k$. For the base case, we have $k = 1$. Recall that $\rv{M}_{\leq p}$ is determined by $\rv{E}_{-1}$ while $\rv{P}(s)$ is determined by $\rv{E}_1$. As these two are independent, we have that:
\[
\H\paren*{ \rv{P}(s) \mid \rv{M}_{\leq p} } = \H\paren*{ \rv{P}(s) } = \log n ,
\]
as desired. For the induction step, we show the lemma holds for $k > 1$ assuming it holds for $k-1$. 

Let $\B_1$ be the set promised by \autoref{lemma:bad1}. Define, for all $M_{\leq t'}$, the set:
\begin{equation}
\label{eq:set}
S( M_{\leq t'} ) = \set*{ i \in [n] \mid \Pr\paren*{ v^{(i)}_k \in \rv{N}^{k-1}(s) \mid M_{\leq t'} } \leq \frac{ \Delta'_{k-1} }{ 1 + \eps_k^5 } } .
\end{equation}

By~\autoref{lemma:size}, for all $M_{\leq t'} \notin \B_1$ we have $|S( M_{\leq t'} )| \le \eps_k^5 \cdot n$. Letting 
\[
\nabla \coloneqq \H\paren*{ \rv{P}^k(s) \mid \rv{M}_{\leq t} }
\]
for simplicity, we have:
\begin{align*} 
\nabla&\geq \H\paren*{ \rv{P}(\rv{N}^{k-1}(s)) \mid \rv{M}_{\leq t} \rv{E}_{-k} } \tag{\autoref{lemma:cond-reduce} and $\rv{P}^k(s)$ determines $\rv{P}(\rv{N}^{k-1}(s))$} \\
&\geq \H\paren*{ \rv{P}(\rv{N}^{k-1}(s)) \mid \rv{M}_{\leq t'} \rv{E}_{-k} } \tag{As $\paren*{ \rv{M}_{\leq t'}, \rv{E}_{-k} }$ and $\paren*{ \rv{M}_{\leq t}, \rv{E}_{-k} }$ determine each other} \\
&\geq \sum_{ M_{\leq t'} } \Pr\paren*{ M_{\leq t'} } \cdot \sum_{ E_{-k} } \Pr\paren*{ E_{-k} \mid M_{\leq t'} } \cdot \H\paren*{ \rv{P}(N^{k-1}(s)) \mid M_{\leq t'}, E_{-k} } \tag{\autoref{def:cond-entropy} and $E_{-k}$ determines $N^{k-1}(s)$} .
\end{align*}

To continue, note that $\rv{P}(N^{k-1}(s))$ is determined by $\rv{E}_k$ and is independent of $\rv{E}_{-k}$ conditioned on $\rv{M}_{\leq t'}$ by \autoref{lemma:ind}. We get:
\begin{align*}
\nabla&\geq \sum_{ M_{\leq t'} } \Pr\paren*{ M_{\leq t'} } \cdot \sum_{ E_{-k} } \Pr\paren*{ E_{-k} \mid M_{\leq t'} } \cdot \H\paren*{ \rv{P}(N^{k-1}(s)) \mid M_{\leq t'} } \\
	  &\geq \sum_{ M_{\leq t'} } \Pr\paren*{ M_{\leq t'} } \cdot \sum_{ E_{-k} } \Pr\paren*{ E_{-k} \mid M_{\leq t'} } \cdot \sum_{ \substack{ i \in [n] \\ v^{(i)}_k \in N^{k-1}(s) } } \H\paren*{ \rv{P}(v^{(i)}_k) \mid \rv{P}(V^{<i}_k), M_{\leq t'} } \tag{\autoref{lemma:entropy-chain}, \autoref{lemma:cond-reduce}} \\
	  &\geq \sum_{ M_{\leq t'} } \Pr\paren*{ M_{\leq t'} } \cdot \sum_{i = 1}^n \Pr\paren*{ v^{(i)}_k \in \rv{N}^{k-1}(s) \mid M_{\leq t'} } \cdot \H\paren*{ \rv{P}(v^{(i)}_k) \mid \rv{P}(V^{<i}_k), M_{\leq t'} }.
\end{align*}

\newcommand{\rvZ}{\rv{Z}}

To ease the notation, for each $i \in [n]$, we define $\rvZ_i = \rv{P}(v_k^{(i)})$, and $\rvZ_{<i} = \{\rvZ_1,\dotsc,\rvZ_{i-1}\}$. By only considering $M_{\leq t'} \notin \B_1$ and $i \notin S(M_{\leq t'})$, we have:

\begin{align*}
\nabla&\geq \sum_{ M_{\leq t'} \notin \B_1 } \Pr\paren*{ M_{\leq t'} } \cdot \sum_{ i \notin S( M_{\leq t'} ) } \Pr\paren*{ v^{(i)}_k \in \rv{N}^{k-1}(s) \mid M_{\leq t'} } \cdot \H\paren*{ \rvZ_i \mid \rvZ_{<i}, M_{\leq t'} } \\
	  &\geq \frac{ \Delta'_{k-1} }{ 1 + \eps_k^5 } \cdot \sum_{ M_{\leq t'} \notin \B_1 } \Pr\paren*{ M_{\leq t'} } \cdot \sum_{ i \notin S( M_{\leq t'} ) } \H\paren*{ \rvZ_i \mid \rvZ_{<i}, M_{\leq t'} } \tag{\autoref{eq:set}}.
\end{align*}

Now, note that by~\autoref{lemma:entropy-chain}, we have:
\begin{align*}
\H\paren*{ \rv{E}_k \mid M_{\leq t'} } &= \sum_{i=1}^{n} \H\paren*{ \rvZ_i \mid \rvZ_{<i}, M_{\leq t'} } \\
									   &= \sum_{i \in S( M_{\leq t'} )} \H\paren*{ \rvZ_i \mid \rvZ_{<i}, M_{\leq t'} }	+
									      \sum_{i \not\in S( M_{\leq t'} )} \H\paren*{ \rvZ_i \mid \rvZ_{<i}, M_{\leq t'}}.
\end{align*}
Plugging in, we have:
\begin{align*}
\nabla\geq \frac{ \Delta'_{k-1} }{ 1 + \eps_k^5 } \cdot \sum_{ M_{\leq t'} \notin \B_1 } \Pr\paren*{ M_{\leq t'} } \cdot \paren*{ \H\paren*{ \rv{E}_k \mid M_{\leq t'} } - \sum_{ i \in S( M_{\leq t'} ) } \H\paren*{ \rvZ_i \mid \rvZ_{<i}, M_{\leq t'} } }\tag{\autoref{lemma:entropy-chain}} .
\end{align*}

Next, observe from \autoref{lemma:cond-reduce} that $\H\paren*{ \rvZ_i \mid \rvZ_{<i}, M_{\leq t'} } \leq \H\paren*{ \rvZ_i \mid M_{\leq t'} }$. From \autoref{lemma:entropynbrub}, we can extend as $\H\paren*{ \rvZ_i \mid \rvZ_{<i}, M_{\leq t'} } \leq 2n \cdot \h\paren*{ \Delta' }$. Finally, using \autoref{lemma:size}, we get:
\begin{align*} 
\nabla &\geq \frac{ \Delta'_{k-1} }{ 1 + \eps_k^5 } \cdot \sum_{ M_{\leq t'} \notin \B_1 } \Pr\paren*{ M_{\leq t'} } \cdot \paren*{ \H\paren*{ \rv{E}_k \mid M_{\leq t'} } - \eps_k^5 \cdot 2n^2 \cdot \h\paren*{ \Delta' } } \\
	   &\geq \frac{ \Delta'_{k-1} }{ 1 + \eps_k^5 } \cdot \paren*{ \H\paren*{ \rv{E}_k \mid \rv{M}_{\leq t'} } - \eps_k^5 \cdot 2n^2 \cdot \h\paren*{ \Delta' } - \sum_{ M_{\leq t'} \in \B_1 } \Pr\paren*{ M_{\leq t'} } \cdot \H\paren*{ \rv{E}_k \mid M_{\leq t'} } } \tag{\autoref{def:cond-entropy}}. 
\end{align*}

Recall that by~\autoref{cor:entropynbrub}, we have, for all $M_{\leq t'}$, that:
\begin{align}
\H\paren*{ \rv{E}_k \mid M_{\leq t'} } &\le 2 \cdot n^2 \cdot h(\Delta') \label{eq:rvE-cond-upper}\\
\H\paren*{ \rv{E}_k} &\ge n^2 \cdot h(\Delta') \cdot (1 - (\eps_k)^5) \label{eq:rvE-lower}.
\end{align}

Plugging in,
\begin{align*}
\nabla &\geq \frac{ \Delta'_{k-1} }{ 1 + \eps_k^5 } \cdot \paren*{ \H\paren*{ \rv{E}_k \mid \rv{M}_{\leq t'} } - \eps_k^5 \cdot 2n^2 \cdot \h\paren*{ \Delta' } - 2 \cdot \Pr\paren*{ \rv{M}_{\leq t'} \in \B_1 } \cdot n^2 \cdot \h\paren*{ \Delta' } } \tag{\autoref{eq:rvE-cond-upper}} \\
&\geq \frac{ \Delta'_{k-1} }{ 1 + \eps_k^5 } \cdot \paren*{ \H\paren*{ \rv{E}_k \mid \rv{M}_{\leq t'} } - \eps_k^5 \cdot 2n^2 \cdot \h\paren*{ \Delta' } - \eps_k^{5} \cdot n^2 \cdot \h\paren*{ \Delta' } } \tag{As $\Pr\paren*{ \rv{M}_{\leq t'} \in \B_1 } \leq \sqrt{ \eps_{k-1} } \le \eps_k^{5}/2$} \\
&\geq \frac{ \Delta'_{k-1} }{ 1 + \eps_k^5 } \cdot \paren*{ \H\paren*{ \rv{E}_k \mid \rv{M}_{\leq t'} } - \eps_k^3 \cdot \H\paren*{ \rv{E}_k } } \tag{\autoref{eq:rvE-lower} and $\eps_k^3 \ge 10 \cdot \eps_k^5$} \\
&\geq \frac{ \Delta'_{k-1} }{ 1 + \eps_k^5 } \cdot \paren*{ \H\paren*{ \rv{E}_k } \cdot \paren*{ 1 - \eps_k^3 } - \H\paren*{ \rv{M}_{\leq t'} } } \tag{\autoref{lemma:entropy-chain}} \\
&\geq \paren*{ 1 - \eps_k } \cdot \Delta'_{k-1} \cdot \H\paren*{ \rv{E}_k } \tag{As $\H\paren*{ \rv{M}_{\leq t'} } \leq \norm*{\Pi} \leq \eps_1^2 \cdot \H\paren*{ \rv{E}_k }$} .
\end{align*}

\end{proofof}

With the help of \autoref{lemma:induction}, we now continue the proof of \autoref{thm:lb}. As \autoref{lemma:induction} holds for $k = p+1$, we have that $\H\paren*{ \rv{P}^{p+1}(s) \mid \rv{M}_{\leq T} } \geq \paren*{ 1 - \eps_{p+1} } \cdot \Delta'_p \cdot \H\paren*{ \rv{E}_{p+1} }$. The following lemma is analogous to \autoref{lemma:bad1}.
\begin{lemma}
	\label{lemma:bad2}
	There exists a set $\B^* \subseteq \supp\paren*{ \rv{M}_{\leq T} }$ such that $\Pr\paren*{ \rv{M}_{\leq T} \in \B^* } \leq \sqrt{ \eps_{p+1} }$ and for all $M_{\leq T} \notin \B^*$, we have:
	\[
	\H\paren*{ \rv{P}^{p+1}(s) \mid M_{\leq T} } \geq \H\paren*{ \rv{P}^{p+1}(s) } - 10 \cdot \sqrt{ \eps_{p+1} } \cdot \Delta'_p \cdot \H\paren*{ \rv{E}_{p+1} } .
	\]
\end{lemma}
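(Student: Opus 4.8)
The plan is to mimic the structure of the proof of \autoref{lemma:bad1}. We have already established, as the $k=p+1$ case of \autoref{lemma:induction}, that $\H\paren*{ \rv{P}^{p+1}(s) \mid \rv{M}_{\leq T} } \geq \paren*{ 1 - \eps_{p+1} } \cdot \Delta'_p \cdot \H\paren*{ \rv{E}_{p+1} }$; that is, conditioning on the entire transcript decreases the entropy of $\rv{P}^{p+1}(s)$ by only a tiny amount \emph{on average} over $\rv{M}_{\leq T}$. The goal here is to upgrade this average statement into one that holds for all but a $\sqrt{\eps_{p+1}}$-fraction of transcripts, which should be a routine Markov-inequality argument. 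The one subtlety is that the individual ``entropy drop'' $\H\paren*{ \rv{P}^{p+1}(s) } - \H\paren*{ \rv{P}^{p+1}(s) \mid M_{\leq T} }$ need not be non-negative for a particular value of $M_{\leq T}$, so Markov's inequality cannot be applied to it directly.

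To sidestep this, I would instead work with the random variable $g(M_{\leq T}) \coloneqq (1+\eta)\cdot\Delta'_p\cdot\H\paren*{\rv{E}_{p+1}} - \H\paren*{\rv{P}^{p+1}(s)\mid M_{\leq T}}$, where $\eta \coloneqq \paren*{\log n}^{-10^{9p}}$. Applying \autoref{lemma:entropypath} to the event $\set*{\rv{M}_{\leq T} = M_{\leq T}}$ shows that $g \geq 0$ pointwise. Taking the expectation over $\rv{M}_{\leq T}$, using $\E_{\rv{M}_{\leq T}}\sq*{\H\paren*{\rv{P}^{p+1}(s)\mid \rv{M}_{\leq T}}}=\H\paren*{\rv{P}^{p+1}(s)\mid\rv{M}_{\leq T}}$ together with the $k=p+1$ case of \autoref{lemma:induction}, gives $\E[g(\rv{M}_{\leq T})]\leq(\eps_{p+1}+\eta)\cdot\Delta'_p\cdot\H\paren*{\rv{E}_{p+1}}\leq 2\eps_{p+1}\cdot\Delta'_p\cdot\H\paren*{\rv{E}_{p+1}}$, where the final step uses $\eta \leq \eps_{p+1}$ (which holds since $\eps_{p+1} = (\log n)^{-10^{3(p+1)}}$ and $9p \geq 3(p+1)$ for $p\geq 1$). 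Then I would set $\B^* \coloneqq \set*{M_{\leq T}\mid g(M_{\leq T})\geq 2\sqrt{\eps_{p+1}}\cdot\Delta'_p\cdot\H\paren*{\rv{E}_{p+1}}}$, so that Markov's inequality immediately gives $\Pr\paren*{\rv{M}_{\leq T}\in\B^*}\leq\sqrt{\eps_{p+1}}$, as claimed.

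It remains to verify the entropy bound for $M_{\leq T}\notin\B^*$. By the definition of $\B^*$, such a transcript satisfies $\H\paren*{\rv{P}^{p+1}(s)\mid M_{\leq T}}\geq(1+\eta)\cdot\Delta'_p\cdot\H\paren*{\rv{E}_{p+1}}-2\sqrt{\eps_{p+1}}\cdot\Delta'_p\cdot\H\paren*{\rv{E}_{p+1}}$; applying \autoref{lemma:entropypath} once more, now with the trivial event, to bound $\H\paren*{\rv{P}^{p+1}(s)}\leq(1+\eta)\cdot\Delta'_p\cdot\H\paren*{\rv{E}_{p+1}}$ then yields $\H\paren*{\rv{P}^{p+1}(s)\mid M_{\leq T}}\geq\H\paren*{\rv{P}^{p+1}(s)}-2\sqrt{\eps_{p+1}}\cdot\Delta'_p\cdot\H\paren*{\rv{E}_{p+1}}$, which is in particular at least $\H\paren*{\rv{P}^{p+1}(s)}-10\sqrt{\eps_{p+1}}\cdot\Delta'_p\cdot\H\paren*{\rv{E}_{p+1}}$, finishing the proof. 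I do not anticipate any genuine obstacle; the only thing requiring care is the hierarchy of error parameters --- that $\eta$ is dominated by $\eps_{p+1}$ and that the slack between the $2\sqrt{\eps_{p+1}}$ we obtain and the $10\sqrt{\eps_{p+1}}$ claimed comfortably absorbs any lower-order terms --- which is immediate from the doubly-exponential gaps built into the definition of the $\eps_k$.
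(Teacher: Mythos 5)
Your proposal is correct and follows essentially the same route as the paper: both arguments take the $k=p+1$ case of \autoref{lemma:induction}, shift the entropy deficit by a small additive term so that it becomes a pointwise non-negative random variable, and then convert the average bound into a for-all-but-$\sqrt{\eps_{p+1}}$-mass statement via Markov's inequality. The only cosmetic difference is that you obtain non-negativity by measuring the deficit against the upper bound of \autoref{lemma:entropypath} (your $(1+\eta)\Delta'_p\H\paren*{\rv{E}_{p+1}}$), while the paper measures it against $\H\paren*{\rv{P}^{p+1}(s)}$ itself and invokes \autoref{lemma:entropypathlb} to control the shift; both yield the stated constants.
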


Let $\B^*$ be the set from \autoref{lemma:bad2}. We apply \autoref{lemma:minentropypath} for all $M_{\leq T} \notin \B^*$ to get:
\begin{align*}
2^{ - \H_{\infty}\paren*{ \rv{P}^{p+1}(s) \mid M_{\leq T} } } &\leq 1 - \frac{ \H\paren*{ \rv{P}^{p+1}(s) } - 15 \cdot \sqrt{ \eps_{p+1} } \cdot \Delta'_p \cdot \H\paren*{ \rv{E}_{p+1} } }{ \Delta'_p \cdot \paren*{ 1 + \epsilon_{p+1} } \cdot \H\paren*{ \rv{E}_{p+1} } } \\
&\leq 1 - \frac{ 1 - \epsilon_{p+1} - 15 \cdot \sqrt{ \eps_{p+1} } }{ 1 + \epsilon_{p+1} } \tag{\autoref{lemma:entropypathlb}} \\
&\leq 20 \cdot \sqrt{ \eps_{p+1} } .
\end{align*}

It follows that:
\[
\Pr_{\rv{G} \sim \dist_{n,p}}\paren*{ \A\paren*{ \rv{G} } = \rv{P}^{p+1}(s) } \leq 25 \cdot \sqrt{ \eps_{p+1} } \leq \frac{1}{ \paren*{ \log n }^{50} } .
\]

\end{proof}


\section*{Acknowledgments} Lijie Chen is supported by an IBM Fellowship. Zhao Song is supported in part by Schmidt Foundation, Simons Foundation, NSF, DARPA/SRC, Google and Amazon AWS. We would like to thank Rajesh Jayaram for discussions on $\ell_1$ heavy hitters.

\bibliographystyle{alpha}
\bibliography{ref}

\newcommand{\etalchar}[1]{$^{#1}$}
\begin{thebibliography}{HPIMV16}

\bibitem[ACK19]{ack19_soda}
Sepehr Assadi, Yu~Chen, and Sanjeev Khanna.
\newblock Sublinear algorithms for (${\Delta}$ + 1) vertex coloring.
\newblock In {\em Proceedings of the Thirtieth Annual ACM-SIAM Symposium on
  Discrete Algorithms (SODA)}, pages 767--786. SIAM, 2019.

\bibitem[ACL07]{acl07}
Reid Andersen, Fan Chung, and Kevin Lang.
\newblock Using pagerank to locally partition a graph.
\newblock {\em Internet Mathematics}, 4(1):35--64, 2007.

\bibitem[AKL16]{akl16}
Sepehr Assadi, Sanjeev Khanna, and Yang Li.
\newblock Tight bounds for single-pass streaming complexity of the set cover
  problem.
\newblock In {\em 48th Annual ACM SIGACT Symposium on Theory of Computing
  (STOC)}, pages 698--711. Association for Computing Machinery, 2016.

\bibitem[AP09]{ap09}
Reid Andersen and Yuval Peres.
\newblock Finding sparse cuts locally using evolving sets.
\newblock In {\em Proceedings of the forty-first annual ACM symposium on Theory
  of computing}, pages 235--244, 2009.

\bibitem[Azu67]{azuma1967weighted}
Kazuoki Azuma.
\newblock Weighted sums of certain dependent random variables.
\newblock {\em Tohoku Mathematical Journal, Second Series}, 19(3):357--367,
  1967.

\bibitem[BP98]{bp98}
Sergey Brin and Lawrence Page.
\newblock The anatomy of a large-scale hypertextual web search engine.
\newblock {\em Computer networks and ISDN systems}, 30(1-7):107--117, 1998.

\bibitem[CCFC02]{ccf02}
Moses Charikar, Kevin Chen, and Martin Farach-Colton.
\newblock Finding frequent items in data streams.
\newblock In {\em International Colloquium on Automata, Languages, and
  Programming (ICALP)}, pages 693--703. Springer, 2002.

\bibitem[CDK19]{cdk19}
Graham Cormode, Jacques Dark, and Christian Konrad.
\newblock Independent sets in vertex-arrival streams.
\newblock In {\em 46th International Colloquium on Automata, Languages, and
  Programming (ICALP)}. Schloss Dagstuhl-Leibniz-Zentrum fuer Informatik, 2019.

\bibitem[CGMV20]{cgmv20}
Amit Chakrabarti, Prantar Ghosh, Andrew McGregor, and Sofya Vorotnikova.
\newblock Vertex ordering problems in directed graph streams.
\newblock In {\em Proceedings of the Fourteenth Annual ACM-SIAM Symposium on
  Discrete Algorithms (SODA)}, pages 1786--1802. SIAM, 2020.

\bibitem[Che52]{c52}
Herman Chernoff.
\newblock A measure of asymptotic efficiency for tests of a hypothesis based on
  the sum of observations.
\newblock {\em The Annals of Mathematical Statistics}, pages 493--507, 1952.

\bibitem[COP03]{cop03}
Moses Charikar, Liadan O'Callaghan, and Rina Panigrahy.
\newblock Better streaming algorithms for clustering problems.
\newblock In {\em Proceedings of the thirty-fifth annual ACM symposium on
  Theory of computing}, pages 30--39, 2003.

\bibitem[CT06]{CT06}
Thomas~M. Cover and Joy~A. Thomas.
\newblock {\em Elements of information theory {(2.} ed.)}.
\newblock Wiley, 2006.

\bibitem[CW16]{cw16}
Amit Chakrabarti and Anthony Wirth.
\newblock Incidence geometries and the pass complexity of semi-streaming set
  cover.
\newblock In {\em Proceedings of the twenty-seventh annual ACM-SIAM symposium
  on Discrete algorithms (SODA)}, pages 1365--1373. SIAM, 2016.

\bibitem[ER14]{er14}
Yuval Emek and Adi Ros{\'e}n.
\newblock Semi-streaming set cover.
\newblock In {\em International Colloquium on Automata, Languages, and
  Programming (ICALP)}, pages 453--464. Springer, 2014.

\bibitem[FKM{\etalchar{+}}04]{fkms+04}
Joan Feigenbaum, Sampath Kannan, Andrew McGregor, Siddharth Suri, and Jian
  Zhang.
\newblock On graph problems in a semi-streaming model.
\newblock In {\em International Colloquium on Automata, Languages, and
  Programming (ICALP)}, pages 531--543. Springer, 2004.

\bibitem[FKM{\etalchar{+}}09]{fkms+09}
Joan Feigenbaum, Sampath Kannan, Andrew McGregor, Siddharth Suri, and Jian
  Zhang.
\newblock Graph distances in the data-stream model.
\newblock {\em SIAM Journal on Computing}, 38(5):1709--1727, 2009.

\bibitem[GGK{\etalchar{+}}18]{ggkm+18}
Mohsen Ghaffari, Themis Gouleakis, Christian Konrad, Slobodan Mitrovi{\'c}, and
  Ronitt Rubinfeld.
\newblock Improved massively parallel computation algorithms for mis, matching,
  and vertex cover.
\newblock In {\em Proceedings of the 2018 ACM Symposium on Principles of
  Distributed Computing (PODC)}, pages 129--138, 2018.

\bibitem[GKK12]{gkk12}
Ashish Goel, Michael Kapralov, and Sanjeev Khanna.
\newblock On the communication and streaming complexity of maximum bipartite
  matching.
\newblock In {\em Proceedings of the twenty-third annual ACM-SIAM symposium on
  Discrete Algorithms (SODA)}, pages 468--485. SIAM, 2012.

\bibitem[GKMS19]{gkms19}
Buddhima Gamlath, Sagar Kale, Slobodan Mitrovic, and Ola Svensson.
\newblock Weighted matchings via unweighted augmentations.
\newblock In {\em Proceedings of the 2019 ACM Symposium on Principles of
  Distributed Computing (PODC)}, pages 491--500, 2019.

\bibitem[GO16]{go16}
Venkatesan Guruswami and Krzysztof Onak.
\newblock Superlinear lower bounds for multipass graph processing.
\newblock {\em Algorithmica}, 76(3):654--683, 2016.

\bibitem[Hoe94]{hoeffding1994probability}
Wassily Hoeffding.
\newblock Probability inequalities for sums of bounded random variables.
\newblock In {\em The Collected Works of Wassily Hoeffding}, pages 409--426.
  Springer, 1994.

\bibitem[HPIMV16]{himv16}
Sariel Har-Peled, Piotr Indyk, Sepideh Mahabadi, and Ali Vakilian.
\newblock Towards tight bounds for the streaming set cover problem.
\newblock In {\em Proceedings of the 35th ACM SIGMOD-SIGACT-SIGAI Symposium on
  Principles of Database Systems (PODS)}, pages 371--383, 2016.

\bibitem[HRR98]{hrr98}
Monika~Rauch Henzinger, Prabhakar Raghavan, and Sridhar Rajagopalan.
\newblock Computing on data streams.
\newblock {\em External memory algorithms}, 50:107--118, 1998.

\bibitem[Jin19]{Jin19}
Ce~Jin.
\newblock Simulating random walks on graphs in the streaming model.
\newblock In Avrim Blum, editor, {\em 10th Innovations in Theoretical Computer
  Science Conference, {ITCS} 2019, January 10-12, 2019, San Diego, California,
  {USA}}, volume 124 of {\em LIPIcs}, pages 46:1--46:15. Schloss Dagstuhl -
  Leibniz-Zentrum f{\"{u}}r Informatik, 2019.

\bibitem[JS89]{js89}
Mark Jerrum and Alistair Sinclair.
\newblock Approximating the permanent.
\newblock {\em SIAM journal on computing}, 18(6):1149--1178, 1989.

\bibitem[JVV86]{jvv86}
Mark~R Jerrum, Leslie~G Valiant, and Vijay~V Vazirani.
\newblock Random generation of combinatorial structures from a uniform
  distribution.
\newblock {\em Theoretical computer science}, 43:169--188, 1986.

\bibitem[JW18]{JayaramW18}
Rajesh Jayaram and David~P. Woodruff.
\newblock Perfect lp sampling in a data stream.
\newblock In Mikkel Thorup, editor, {\em 59th {IEEE} Annual Symposium on
  Foundations of Computer Science, {FOCS} 2018, Paris, France, October 7-9,
  2018}, pages 544--555. {IEEE} Computer Society, 2018.

\bibitem[Kap13]{k13}
Michael Kapralov.
\newblock Better bounds for matchings in the streaming model.
\newblock In {\em Proceedings of the twenty-fourth annual ACM-SIAM symposium on
  Discrete algorithms (SODA)}, pages 1679--1697. SIAM, 2013.

\bibitem[McG05]{m05}
Andrew McGregor.
\newblock Finding graph matchings in data streams.
\newblock In {\em Approximation, Randomization and Combinatorial Optimization.
  Algorithms and Techniques}, pages 170--181. Springer, 2005.

\bibitem[MN20]{mn20}
Sagnik Mukhopadhyay and Danupon Nanongkai.
\newblock Weighted min-cut: sequential, cut-query, and streaming algorithms.
\newblock In {\em Proceedings of the 52nd Annual ACM SIGACT Symposium on Theory
  of Computing (STOC)}, pages 496--509, 2020.

\bibitem[NW91]{NW91}
Noam Nisan and Avi Wigderson.
\newblock Rounds in communication complexity revisited.
\newblock In {\em Proceedings of the 23rd Annual {ACM} Symposium on Theory of
  Computing, May 5-8, 1991, New Orleans, Louisiana, {USA}}, pages 419--429.
  {ACM}, 1991.

\bibitem[Rei08]{r08}
Omer Reingold.
\newblock Undirected connectivity in log-space.
\newblock {\em Journal of the ACM (JACM)}, 55(4):1--24, 2008.

\bibitem[RSW18]{rsw18}
Aviad Rubinstein, Tselil Schramm, and Seth~Matthew Weinberg.
\newblock Computing exact minimum cuts without knowing the graph.
\newblock In {\em 9th Innovations in Theoretical Computer Science (ITCS)},
  page~39. Schloss Dagstuhl-Leibniz-Zentrum fur Informatik GmbH, Dagstuhl
  Publishing, 2018.

\bibitem[Sch18]{s18}
Aaron Schild.
\newblock An almost-linear time algorithm for uniform random spanning tree
  generation.
\newblock In {\em Proceedings of the 50th Annual ACM SIGACT Symposium on Theory
  of Computing (STOC)}, pages 214--227, 2018.

\bibitem[SGP11]{SarmaGP11}
Atish~Das Sarma, Sreenivas Gollapudi, and Rina Panigrahy.
\newblock Estimating pagerank on graph streams.
\newblock {\em J. {ACM}}, 58(3):13:1--13:19, 2011.

\bibitem[ST13]{st13}
Daniel~A Spielman and Shang-Hua Teng.
\newblock A local clustering algorithm for massive graphs and its application
  to nearly linear time graph partitioning.
\newblock {\em SIAM Journal on computing}, 42(1):1--26, 2013.

\bibitem[Vit85]{Vitter85}
Jeffrey~Scott Vitter.
\newblock Random sampling with a reservoir.
\newblock {\em {ACM} Trans. Math. Softw.}, 11(1):37--57, 1985.

\bibitem[Yao77]{y77}
Andrew Chi-Chin Yao.
\newblock Probabilistic computations: Toward a unified measure of complexity.
\newblock In {\em 18th Annual Symposium on Foundations of Computer Science
  (FOCS)}, pages 222--227. IEEE Computer Society, 1977.

\bibitem[Zel11]{z11}
Mariano Zelke.
\newblock Intractability of min-and max-cut in streaming graphs.
\newblock {\em Information Processing Letters}, 111(3):145--150, 2011.

\end{thebibliography}

\appendix
\section*{Appendix}

\section{Preliminaries in Information Theory}
\label{sec:info}

Throughout this subsection, we use sans-serif letters to denote random variables and reserve $E$ to denote an arbitrary event. All random variables will be assumed to be discrete and we shall adopt the convention $0 \log \frac{1}{0} = 0$. All logarithms are taken with base $2$. 

\subsection{Entropy}

\begin{definition}[Entropy] 
\label{def:entropy}
The (binary) entropy of $\rv{X}$ is defined as:
\[
\H(\rv{X}) = \sum_{x \in \supp(\rv{X})} \Pr(x) \cdot \log \frac{1}{\Pr(x)} .
\]
The entropy of $\rv{X}$ conditioned on $E$ is defined as:
\[
\H(\rv{X} \mid E) = \sum_{x \in \supp(\rv{X})} \Pr(x \mid E) \cdot \log \frac{1}{\Pr(x \mid E)} .
\]

\end{definition}

\begin{definition}[Conditional Entropy] 
\label{def:cond-entropy}
We define the conditional entropy of $\rv{X}$ given $\rv{Y}$ and $E$ as:
\[
\H(\rv{X} \mid \rv{Y}, E) = \sum_{y \in \supp(\rv{Y})} \Pr(y \mid E) \cdot \H(\rv{X} \mid \rv{Y} = y, E) .
\]
\end{definition}

Henceforth, we shall omit writing the $\supp(\cdot)$ when it is clear from context.

\begin{lemma}[Chain Rule for Entropy]
\label{lemma:entropy-chain}
It holds for all $\rv{X}$, $\rv{Y}$ and $E$ that:
\[
\H(\rv{X} \rv{Y} \mid E) = \H(\rv{X} \mid E) + \H(\rv{Y} \mid \rv{X}, E) .
\]
\end{lemma}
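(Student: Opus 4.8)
The plan is to prove this identity by direct expansion of both sides using \autoref{def:entropy} and \autoref{def:cond-entropy}, combined with the factorization $\Pr(x, y \mid E) = \Pr(x \mid E) \cdot \Pr(y \mid \rv{X} = x, E)$ of the conditional joint law. Concretely, I would start from
\[
\H(\rv{X}\rv{Y} \mid E) = \sum_{x, y} \Pr(x, y \mid E) \cdot \log \frac{1}{\Pr(x, y \mid E)},
\]
where it suffices to let the sum range over those $(x,y)$ with $\Pr(x,y\mid E) > 0$, since all remaining terms vanish under the convention $0 \log \frac{1}{0} = 0$. For such terms one has $\Pr(x \mid E) > 0$, so the factorization above is valid and $\log \frac{1}{\Pr(x,y\mid E)} = \log\frac{1}{\Pr(x\mid E)} + \log\frac{1}{\Pr(y \mid \rv{X}=x, E)}$; substituting this splits the sum into two pieces.

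For the first piece, summing $\Pr(x,y\mid E)$ over $y$ first collapses it to $\sum_x \Pr(x\mid E) \log\frac{1}{\Pr(x\mid E)} = \H(\rv{X}\mid E)$. For the second piece, writing $\Pr(x, y\mid E) = \Pr(x\mid E)\Pr(y\mid \rv{X}=x,E)$ and grouping the terms by $x$ yields
\[
\sum_x \Pr(x\mid E)\sum_y \Pr(y\mid \rv{X}=x,E)\log\frac{1}{\Pr(y\mid \rv{X}=x,E)} = \sum_x \Pr(x\mid E)\cdot \H(\rv{Y}\mid \rv{X}=x, E),
\]
which is exactly $\H(\rv{Y}\mid\rv{X},E)$ by \autoref{def:cond-entropy}. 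Adding the two pieces gives the claimed equality.

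There is essentially no real obstacle here; the only point deserving (minor) care is the bookkeeping around values of probability zero — one must check that restricting every sum to the appropriate support is justified by the convention $0\log\frac{1}{0}=0$, and that the factorization $\Pr(x,y\mid E)=\Pr(x\mid E)\Pr(y\mid \rv{X}=x,E)$ is only invoked for terms with $\Pr(x\mid E)>0$, which is automatic for the surviving terms. Everything else is a routine rearrangement of finite sums.
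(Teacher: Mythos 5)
Your proof is correct and follows essentially the same route as the paper: expand $\H(\rv{X}\rv{Y}\mid E)$ by definition, factor $\Pr(x,y\mid E)=\Pr(x\mid E)\Pr(y\mid x,E)$, split the logarithm, and collapse the two resulting sums into $\H(\rv{X}\mid E)$ and $\H(\rv{Y}\mid\rv{X},E)$. The extra care you take with zero-probability terms is fine but implicit in the paper's argument via the convention $0\log\frac{1}{0}=0$.
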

\begin{proof}
We have:
\begin{align*}
\H(\rv{X} \rv{Y} \mid E) &= \sum_{x,y} \Pr(x,y \mid E) \cdot \log \frac{1}{\Pr(x,y \mid E)} \\
&= \sum_{x,y} \Pr(x,y \mid E) \cdot \log \frac{1}{\Pr(x \mid E) \cdot \Pr(y \mid x, E)} \\
&= \sum_{x,y} \Pr(x,y \mid E) \cdot \log \frac{1}{\Pr(x \mid E)} + \sum_{x,y} \Pr(x,y \mid E) \cdot \log \frac{1}{\Pr(y \mid x, E)} \\
&= \H(\rv{X} \mid E) + \sum_x \Pr(x \mid E) \cdot \sum_y \Pr(y \mid x,E) \cdot \log \frac{1}{\Pr(y \mid x, E)} \\
&= \H(\rv{X} \mid E) + \H(\rv{Y} \mid \rv{X}, E) . \qedhere
\end{align*}
\end{proof}

\begin{lemma}[Conditioning reduces Entropy]
\label{lemma:cond-reduce}
It holds for all $\rv{X}$, $\rv{Y}$ and $E$ that:
\[
\H(\rv{X} \mid \rv{Y}, E) \leq \H(\rv{X} \mid E) .
\]
Equality holds if and only if $\rv{X}$ and $\rv{Y}$ are independent conditioned on $E$.
\end{lemma}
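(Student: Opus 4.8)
The statement is the standard ``conditioning reduces entropy'' fact, and the plan is to rewrite the gap $\H(\rv{X}\mid E)-\H(\rv{X}\mid\rv{Y},E)$ as an average of relative entropies and then appeal to the non-negativity of relative entropy, which is itself a one-line consequence of Jensen's inequality for the concave function $\log$.

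First I would unfold both terms using \autoref{def:entropy} and \autoref{def:cond-entropy}. Substituting $\Pr(x\mid E)=\sum_{y}\Pr(y\mid E)\,\Pr(x\mid\rv{Y}=y,E)$ inside the first term and using $\Pr(y\mid E)\,\Pr(x\mid\rv{Y}=y,E)=\Pr(x,y\mid E)$ to match the two sums, one obtains
\[
\H(\rv{X}\mid E)-\H(\rv{X}\mid\rv{Y},E)=\sum_{y}\Pr(y\mid E)\cdot\sum_{x}\Pr(x\mid\rv{Y}=y,E)\,\log\frac{\Pr(x\mid\rv{Y}=y,E)}{\Pr(x\mid E)}.
\]
Thus the gap is the expectation over $y\sim(\rv{Y}\mid E)$ of the relative entropy between the distribution of $(\rv{X}\mid\rv{Y}=y,E)$ and that of $(\rv{X}\mid E)$. (As a consistency check, applying \autoref{lemma:entropy-chain} to $\H(\rv{X}\rv{Y}\mid E)$ in the two orders shows this gap also equals $\H(\rv{Y}\mid E)-\H(\rv{Y}\mid\rv{X},E)$.)

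Next I would record the elementary fact that for any two probability distributions $P,Q$ on a common finite set, $\sum_{a}P(a)\log\frac{P(a)}{Q(a)}\ge 0$, with equality iff $P=Q$: by concavity of $\log$, $\sum_{a}P(a)\log\frac{Q(a)}{P(a)}\le\log\sum_{a}P(a)\frac{Q(a)}{P(a)}=\log\!\big(\sum_{a:\,P(a)>0}Q(a)\big)\le\log 1=0$, and equality throughout forces $Q(a)/P(a)$ to be constant on $\{a:P(a)>0\}$ and that constant to be $1$, i.e.\ $P=Q$. Applying this with $P=\Pr(\cdot\mid\rv{Y}=y,E)$ and $Q=\Pr(\cdot\mid E)$ makes every inner sum nonnegative, and since the outer weights $\Pr(y\mid E)$ are nonnegative we conclude $\H(\rv{X}\mid E)-\H(\rv{X}\mid\rv{Y},E)\ge 0$. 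For the equality case, an average of nonnegative terms vanishes iff each term with positive weight vanishes, i.e.\ $\Pr(\cdot\mid\rv{Y}=y,E)=\Pr(\cdot\mid E)$ for every $y$ with $\Pr(y\mid E)>0$, which is exactly conditional independence of $\rv{X}$ and $\rv{Y}$ given $E$.

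The argument is routine; the only point I would be careful about is the bookkeeping of supports together with the convention $0\log\frac10=0$. Here this causes no trouble, since $\Pr(x,y\mid E)>0$ implies both $\Pr(x\mid E)>0$ and $\Pr(y\mid E)>0$, so none of the ratios above ever has a vanishing denominator on its range of summation, and pairs $(x,y)$ with $\Pr(x,y\mid E)=0$ contribute $0$ and may be discarded throughout.
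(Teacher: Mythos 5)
Your proof is correct and is essentially the same argument as the paper's: both reduce the claim to concavity of $\log$ (Jensen), with your version merely repackaging the inequality as nonnegativity of the relative entropy between $(\rv{X}\mid \rv{Y}=y,E)$ and $(\rv{X}\mid E)$ averaged over $y$. A small bonus of your write-up is that it spells out the equality case (each conditional distribution must coincide with the marginal, i.e.\ conditional independence), which the paper's proof asserts in the lemma statement but leaves implicit.
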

\begin{proof}
We have:
\begin{align*}
\H(\rv{X} \mid \rv{Y}, E) &= \sum_y \Pr(y \mid E) \cdot \H(\rv{X} \mid \rv{Y} = y, E) \\
&= \sum_{x, y} \Pr(y \mid E) \cdot \Pr(x \mid y, E) \cdot \log \frac{1}{\Pr(x \mid y, E)} \\
&= \sum_{x, y} \Pr(x \mid E) \cdot \Pr(y \mid x, E) \cdot \log \frac{\Pr(y \mid E)}{\Pr(x \mid E) \cdot \Pr(y \mid x, E)} \\
&\leq \sum_x \Pr(x \mid E) \cdot \log \frac{1}{\Pr(x \mid E)} \tag{Concavity of $\log(\cdot)$} \\
&= \H(\rv{X} \mid E) . \qedhere
\end{align*}
\end{proof}

\begin{lemma}
\label{lemma:entropy-ub}
It holds for all $\rv{X}$ and $E$ that:
\[
0 \leq \H(\rv{X} \mid E) \leq \log \paren*{ \card*{ \supp(\rv{X}) } } .
\]
The second inequality is tight if and only if $\rv{X}$ conditioned on $E$ is the uniform distribution over $\supp(\rv{X})$.
\end{lemma}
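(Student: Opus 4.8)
The plan is to prove the two inequalities separately and then read off the equality condition from the second argument, reusing the concavity-of-$\log$ trick already used in the proof of \autoref{lemma:cond-reduce}.

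For the lower bound, I would just observe that each summand in \autoref{def:entropy} is non-negative: for every $x$ we have $\Pr(x \mid E) \in [0,1]$, hence $\log \frac{1}{\Pr(x \mid E)} \ge 0$, and with the convention $0 \log \frac{1}{0} = 0$ the whole expression is a sum of non-negative terms, so $\H(\rv{X} \mid E) \ge 0$.

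For the upper bound I would set $\calS = \set*{ x \in \supp(\rv{X}) \mid \Pr(x \mid E) > 0 }$, so that
\[
\H(\rv{X} \mid E) = \sum_{x \in \calS} \Pr(x \mid E) \cdot \log \frac{1}{\Pr(x \mid E)} ,
\]
and then apply Jensen's inequality using the concavity of $\log(\cdot)$, with the weights $\Pr(x \mid E)$ and the values $\tfrac{1}{\Pr(x \mid E)}$:
\[
\H(\rv{X} \mid E) \le \log\paren*{ \sum_{x \in \calS} \Pr(x \mid E) \cdot \frac{1}{\Pr(x \mid E)} } = \log\paren*{ \card*{ \calS } } \le \log\paren*{ \card*{ \supp(\rv{X}) } } ,
\]
where the final step uses $\calS \subseteq \supp(\rv{X})$ and monotonicity of $\log$.

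For the tightness claim, I would track the two places where the chain above can lose: the Jensen step is an equality exactly when the values $\tfrac{1}{\Pr(x \mid E)}$ are all equal over $x \in \calS$, i.e.\ when $\Pr(\cdot \mid E)$ is constant on $\calS$, which (as the probabilities sum to $1$) means $\rv{X} \mid E$ is uniform on $\calS$; and $\log\card*{\calS} = \log\card*{\supp(\rv{X})}$ exactly when $\calS = \supp(\rv{X})$. Combining the two conditions gives precisely that $\rv{X}$ conditioned on $E$ is the uniform distribution over $\supp(\rv{X})$. Since every step here is elementary, there is no real obstacle; the only thing to be slightly careful about is restricting the Jensen argument to $\calS$ (to avoid dividing by a zero probability) and then keeping track of both sources of slack when characterizing the equality case.
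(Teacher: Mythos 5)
Your proposal is correct and follows essentially the same route as the paper: the paper's proof is exactly the concavity-of-$\log$ (Jensen) bound $\sum_x \Pr(x \mid E)\log\frac{1}{\Pr(x\mid E)} \le \log\card*{\supp(\rv{X})}$, which you reproduce, just spelled out more carefully (separating the zero-probability terms via $\calS$ and tracking the equality conditions, which the paper leaves implicit).
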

\begin{proof}
The first inequality is direct. For the second, we have by the concavity of $\log(\cdot)$ that:
\[
\H(\rv{X} \mid E) = \sum_x \Pr(x \mid E) \cdot \log \frac{1}{\Pr(x \mid E)} \leq \log \paren*{ \card*{ \supp(\rv{X}) } } . \qedhere 
\]
\end{proof}

\subsection{Min-Entropy}

\begin{definition}[Min-Entropy] 
\label{def:min-entropy}
The min-entropy of a discrete random variable $\rv{X}$ is 
\[
\H_{\infty}(\rv{X}) = \min_{x : \Pr(x) > 0} \log \frac{1}{\Pr(x)}.
\]
\end{definition}

\begin{fact}
\label{fact:min-entropy-ub}
If the random variable $\rv{X}$ takes values in the set $\Omega$, it holds that
\[
0 \leq \H_{\infty}(\rv{X}) \leq \H(\rv{X}) \leq \log \card*{ \Omega }
\]
\end{fact}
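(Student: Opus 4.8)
The statement bundles three inequalities, and the plan is to establish each of $0 \le \H_{\infty}(\rv{X})$, $\H_{\infty}(\rv{X}) \le \H(\rv{X})$, and $\H(\rv{X}) \le \log\card*{\Omega}$ separately; all three are elementary. First I would dispatch $\H_{\infty}(\rv{X}) \ge 0$: every atom $\Pr(x)$ with $x \in \supp(\rv{X})$ satisfies $0 < \Pr(x) \le 1$, so $\log\frac{1}{\Pr(x)} \ge 0$, and \autoref{def:min-entropy} defines $\H_{\infty}(\rv{X})$ as a minimum of such nonnegative numbers.

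Next I would prove $\H_{\infty}(\rv{X}) \le \H(\rv{X})$. Let $p^* = \max_{x}\Pr(x)$ be the largest atom, so that $\H_{\infty}(\rv{X}) = \log\frac{1}{p^*}$ by \autoref{def:min-entropy}. Since $\Pr(x) \le p^*$ for every $x \in \supp(\rv{X})$, monotonicity of $\log$ gives $\log\frac{1}{\Pr(x)} \ge \log\frac{1}{p^*}$; averaging these with the weights $\Pr(x)$, which sum to $1$, yields
\[
\H(\rv{X}) = \sum_{x} \Pr(x)\log\frac{1}{\Pr(x)} \geq \log\frac{1}{p^*} \cdot \sum_{x}\Pr(x) = \log\frac{1}{p^*} = \H_{\infty}(\rv{X}).
\]

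Finally, $\H(\rv{X}) \le \log\card*{\Omega}$ is the unconditioned instance of \autoref{lemma:entropy-ub} applied with the trivial event, combined with $\supp(\rv{X}) \subseteq \Omega$; concretely, concavity of $\log(\cdot)$ (Jensen's inequality) gives $\H(\rv{X}) \le \log\card*{\supp(\rv{X})} \le \log\card*{\Omega}$. I do not expect any genuine obstacle: the only points needing a moment of care are the convention $0\log\frac{1}{0} = 0$ (so that values outside $\supp(\rv{X})$ contribute nothing to either entropy) and the degenerate case where $\rv{X}$ is a point mass — there all three quantities equal $0$ and every inequality holds with equality.
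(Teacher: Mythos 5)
Your proof is correct; in fact the paper states \autoref{fact:min-entropy-ub} without proof, and your argument is exactly the standard one it implicitly relies on: nonnegativity of each $\log\frac{1}{\Pr(x)}$, the averaging bound $\H(\rv{X}) \geq \log\frac{1}{p^*} = \H_{\infty}(\rv{X})$, and \autoref{lemma:entropy-ub} (Jensen/concavity) for $\H(\rv{X}) \leq \log\card*{\supp(\rv{X})} \leq \log\card*{\Omega}$. The only blemish is the closing aside: for a point mass all inequalities need not hold with equality, since $\H_{\infty}(\rv{X}) = \H(\rv{X}) = 0$ but $\log\card*{\Omega}$ can be positive when $\Omega$ is larger than the support; this does not affect the proof itself.
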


Recall that $\h(x) = - x \log(x) - (1-x) \log (1-x)$ for $x \in [0,1]$ is the binary entropy function.

\begin{lemma}
\label{lemma:min-entropy-lb}
If the random variable $\rv{X}$ takes values in the set $\Omega$ and $\card*{\Omega} > 1$, it holds that
\[
2^{ - \H_{\infty}(\rv{X}) } \leq 1 - \frac{ \H(\rv{X}) - 1 }{ \log \paren*{ \card*{ \Omega } } } .
\]
\end{lemma}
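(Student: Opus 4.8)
The plan is to reduce the bound to the chain rule for entropy applied to a single well-chosen indicator variable. Write $p_{\max} = 2^{-\H_\infty(\rv{X})} = \max_{x} \Pr(x)$ and let $x^\star$ be an element achieving this maximum. The claim is equivalent, after clearing the (positive, since $\card*{\Omega}>1$) denominator $\log \card*{\Omega}$, to the inequality $\H(\rv{X}) \leq 1 + (1 - p_{\max}) \cdot \log \card*{\Omega}$, so it suffices to establish this form.

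First I would introduce the Boolean random variable $\rv{Y} = \mathbbm{1}[\rv{X} = x^\star]$, which is a deterministic function of $\rv{X}$, so that $\H(\rv{X}) = \H(\rv{X}\rv{Y})$. By \autoref{lemma:entropy-chain}, $\H(\rv{X}\rv{Y}) = \H(\rv{Y}) + \H(\rv{X} \mid \rv{Y})$. I would bound the two terms separately. For the first term, $\rv{Y}$ is supported on $\set*{0,1}$, so by \autoref{lemma:entropy-ub} we have $\H(\rv{Y}) \leq \log 2 = 1$. For the second term, expand by \autoref{def:cond-entropy}: $\H(\rv{X} \mid \rv{Y}) = \Pr(\rv{Y}=1)\cdot \H(\rv{X}\mid \rv{Y}=1) + \Pr(\rv{Y}=0)\cdot \H(\rv{X}\mid \rv{Y}=0)$. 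Conditioned on $\rv{Y}=1$, $\rv{X}$ equals $x^\star$ deterministically, so $\H(\rv{X}\mid \rv{Y}=1)=0$; conditioned on $\rv{Y}=0$, $\rv{X}$ is supported on $\Omega \setminus \set*{x^\star}$, so \autoref{lemma:entropy-ub} gives $\H(\rv{X}\mid \rv{Y}=0) \leq \log(\card*{\Omega}-1) \leq \log\card*{\Omega}$. Since $\Pr(\rv{Y}=1)=p_{\max}$, this yields $\H(\rv{X}\mid\rv{Y}) \leq (1-p_{\max})\cdot\log\card*{\Omega}$.

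Combining the two bounds gives $\H(\rv{X}) \leq 1 + (1-p_{\max})\cdot\log\card*{\Omega}$, and rearranging (dividing by $\log\card*{\Omega} > 0$ and moving terms) produces $p_{\max} \leq 1 - \frac{\H(\rv{X})-1}{\log\card*{\Omega}}$, which is the desired inequality upon substituting $p_{\max} = 2^{-\H_\infty(\rv{X})}$. There is no real obstacle here; the only points requiring a moment of care are checking that $\log\card*{\Omega}$ is strictly positive (guaranteed by the hypothesis $\card*{\Omega}>1$) and that the support of $\rv{X}$ conditioned on $\rv{Y}=0$ is genuinely contained in $\Omega \setminus \set*{x^\star}$ so that the conditional-entropy bound with $\card*{\Omega}-1$ terms applies; both are immediate.
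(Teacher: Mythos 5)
Your proof is correct and follows essentially the same route as the paper: the paper's expansion $\H(\rv{X}) = \h(\Pr(x^*)) + (1-\Pr(x^*))\cdot\H(\rv{X}\mid\rv{X}\neq x^*)$ is exactly your chain-rule decomposition with the indicator $\rv{Y}=\mathbbm{1}[\rv{X}=x^\star]$, followed by the same bounds $\h(\cdot)\leq 1$ and the support-size bound on the conditional entropy. Making the indicator explicit is only a cosmetic difference.
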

\begin{proof}
If $\rv{X}$ is a point mass, there is nothing to show. Otherwise, let $x^*$ be such that $\Pr(x^*)$ is the largest possible (breaking ties arbitrarily). We have:
\begin{align*}
\H(\rv{X}) &= \Pr(x^*) \cdot \log \frac{1}{\Pr(x^*)} + \sum_{x \neq x^*} \Pr(x) \cdot \log \frac{1}{\Pr(x)} \\
&= \h(\Pr(x^*)) - \paren*{ 1 - \Pr(x^*) } \cdot \log \frac{1}{ 1 - \Pr(x^*) } + \sum_{x \neq x^*} \Pr(x) \cdot \log \frac{1}{\Pr(x)} \\
&= \h(\Pr(x^*)) + \paren*{ 1 - \Pr(x^*) } \cdot \sum_{x \neq x^*} \frac{ \Pr(x) }{ 1 - \Pr(x^*) } \cdot \log \frac{ 1 - \Pr(x^*) }{ \Pr(x) } \\
&= \h(\Pr(x^*)) + \paren*{ 1 - \Pr(x^*) } \cdot \H(\rv{X} \mid \rv{X} \neq x^*) .
\end{align*}
Using the fact that $\h(\cdot) \leq 1$ and \autoref{fact:min-entropy-ub}, we have:
\[
\H(\rv{X}) \leq 1 + \paren*{ 1 - \Pr(x^*) } \cdot \log \paren*{ \card*{ \Omega } } .
\]
Rearranging gives:
\[
2^{ - \H_{\infty}(\rv{X}) } = \Pr(x^*) \leq 1 - \frac{ \H(\rv{X}) - 1 }{ \log \paren*{ \card*{ \Omega } } } .
\]
\end{proof}

\subsection{Mutual Information}

\begin{definition}[Mutual Information] 
\label{def:mutualinfo} 
The mutual information between $\rv{X}$ and $\rv{Y}$ is defined as:
\[
\I(\rv{X}:\rv{Y}) = \H(\rv{X}) - \H(\rv{X}\mid \rv{Y}) = \H(\rv{Y}) - \H(\rv{Y}\mid \rv{X}).
\]
The mutual information between $\rv{X}$ and $\rv{Y}$ conditioned on $\rv{Z}$ is defined as:
\[
\I(\rv{X}:\rv{Y} \mid \rv{Z}) = \H(\rv{X}\mid \rv{Z}) - \H(\rv{X}\mid \rv{Y}\rv{Z}) = \H(\rv{Y}\mid \rv{Z}) - \H(\rv{Y}\mid \rv{X}\rv{Z}).
\]
\end{definition}

\begin{fact} 
\label{fact:infbound} 
We have $0 \leq \I(\rv{X} : \rv{Y}\mid \rv{Z})\leq \H(\rv{X})$.
\end{fact}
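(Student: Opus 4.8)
The plan is to obtain both inequalities directly from the basic entropy facts already established in this section: conditioning reduces entropy (\autoref{lemma:cond-reduce}) and non-negativity of conditional entropy (\autoref{lemma:entropy-ub}). Unfolding \autoref{def:mutualinfo}, we have $\I(\rv{X}:\rv{Y}\mid\rv{Z}) = \H(\rv{X}\mid\rv{Z}) - \H(\rv{X}\mid\rv{Y}\rv{Z})$, so each of the two desired bounds reduces to a comparison between two conditional entropies.

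For the lower bound $\I(\rv{X}:\rv{Y}\mid\rv{Z})\ge 0$, it suffices to show $\H(\rv{X}\mid\rv{Y}\rv{Z})\le\H(\rv{X}\mid\rv{Z})$. The only wrinkle is that \autoref{lemma:cond-reduce} is phrased with a conditioning \emph{event} $E$ rather than a conditioning random variable, so I would first expand, using \autoref{def:cond-entropy}, $\H(\rv{X}\mid\rv{Y}\rv{Z}) = \sum_z \Pr(z)\cdot\H(\rv{X}\mid\rv{Y},\rv{Z}=z)$, apply \autoref{lemma:cond-reduce} with $E$ the event $\{\rv{Z}=z\}$ to each summand to get $\H(\rv{X}\mid\rv{Y},\rv{Z}=z)\le\H(\rv{X}\mid\rv{Z}=z)$, and recombine the right-hand side into $\H(\rv{X}\mid\rv{Z})$.

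For the upper bound $\I(\rv{X}:\rv{Y}\mid\rv{Z})\le\H(\rv{X})$, I would combine two observations. First, $\H(\rv{X}\mid\rv{Y}\rv{Z})\ge 0$: by \autoref{def:cond-entropy} it is a nonnegative combination of the terms $\H(\rv{X}\mid\rv{Y}=y,\rv{Z}=z)$, each of which is nonnegative by \autoref{lemma:entropy-ub}; hence $\I(\rv{X}:\rv{Y}\mid\rv{Z})\le\H(\rv{X}\mid\rv{Z})$. Second, $\H(\rv{X}\mid\rv{Z})\le\H(\rv{X})$, which again follows from \autoref{lemma:cond-reduce} by the same averaging-over-the-values-of-$\rv{Z}$ argument (now with the outer event taken to be trivial). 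Chaining the two yields $\I(\rv{X}:\rv{Y}\mid\rv{Z})\le\H(\rv{X})$.

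There is essentially no genuine obstacle here; the statement is a standard consequence of the entropy inequalities already proved. The only point requiring a bit of care is the translation between the event-conditioned form in which \autoref{lemma:cond-reduce} and \autoref{lemma:entropy-ub} are stated and the random-variable-conditioned quantities appearing in the fact, which is dispatched by the routine averaging step described above.
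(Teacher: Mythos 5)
Your proof is correct: expanding \autoref{def:mutualinfo} and combining \autoref{lemma:cond-reduce} (applied with $E=\{\rv{Z}=z\}$ and averaged via \autoref{def:cond-entropy}) with the nonnegativity of conditional entropy from \autoref{lemma:entropy-ub} gives both bounds, and the averaging step you flag is handled correctly. The paper states \autoref{fact:infbound} without proof as a standard fact, and your argument is precisely the routine derivation from the lemmas already proved in that section, so there is nothing to compare beyond noting agreement.
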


\begin{fact}[Chain Rule for Mutual Information]  
\label{fact:infchain} 
If $\rv{A}$, $\rv{B}$, $\rv{C}$, $\rv{D}$ are random variables, then
\[
\I(\rv{A}\rv{B} : \rv{C} \mid \rv{D}) =  \I(\rv{A} : \rv{C} \mid \rv{D})+ \I(\rv{B} : \rv{C} \mid \rv{A}\rv{D}).
\]
\end{fact}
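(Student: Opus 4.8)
The plan is to derive this identity directly from the definition of conditional mutual information in terms of entropy (\autoref{def:mutualinfo}) together with the chain rule for entropy (\autoref{lemma:entropy-chain}); no probabilistic or structural input is needed.

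First I would unfold the right-hand side using \autoref{def:mutualinfo}. We have $\I(\rv{A}:\rv{C}\mid\rv{D}) = \H(\rv{A}\mid\rv{D}) - \H(\rv{A}\mid\rv{C}\rv{D})$ and $\I(\rv{B}:\rv{C}\mid\rv{A}\rv{D}) = \H(\rv{B}\mid\rv{A}\rv{D}) - \H(\rv{B}\mid\rv{A}\rv{C}\rv{D})$. Adding these and regrouping the four entropy terms, the right-hand side becomes $\big(\H(\rv{A}\mid\rv{D}) + \H(\rv{B}\mid\rv{A}\rv{D})\big) - \big(\H(\rv{A}\mid\rv{C}\rv{D}) + \H(\rv{B}\mid\rv{A}\rv{C}\rv{D})\big)$.

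Second I would collapse each of the two bracketed sums using the entropy chain rule. Since \autoref{lemma:entropy-chain} is stated for conditioning on a fixed event $E$, I would apply it pointwise: for each value $d$ of $\rv{D}$, $\H(\rv{A}\rv{B}\mid\rv{D}=d) = \H(\rv{A}\mid\rv{D}=d) + \H(\rv{B}\mid\rv{A},\rv{D}=d)$, and then take the $\Pr(d)$-weighted average, which by \autoref{def:cond-entropy} gives $\H(\rv{A}\rv{B}\mid\rv{D}) = \H(\rv{A}\mid\rv{D}) + \H(\rv{B}\mid\rv{A}\rv{D})$. Applying the same reasoning with the pair $(\rv{C},\rv{D})$ in the role of the conditioning yields $\H(\rv{A}\rv{B}\mid\rv{C}\rv{D}) = \H(\rv{A}\mid\rv{C}\rv{D}) + \H(\rv{B}\mid\rv{A}\rv{C}\rv{D})$. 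Substituting, the right-hand side equals $\H(\rv{A}\rv{B}\mid\rv{D}) - \H(\rv{A}\rv{B}\mid\rv{C}\rv{D})$, which is exactly $\I(\rv{A}\rv{B}:\rv{C}\mid\rv{D})$ by \autoref{def:mutualinfo}, as desired.

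The only point requiring any care — and the closest thing to an obstacle — is the bookkeeping around conditioning on a tuple of random variables versus on a single event, since the paper's chain rule is phrased for events; the fix is simply to invoke \autoref{lemma:entropy-chain} for each fixed value of the conditioning variables and average via \autoref{def:cond-entropy}. Everything else is routine algebraic rearrangement.
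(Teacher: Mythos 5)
Your proposal is correct. The paper states this chain rule as a standard fact without giving a proof, and your derivation is exactly the canonical argument implicit in the paper's setup: expand both conditional mutual informations via \autoref{def:mutualinfo}, regroup, and collapse each bracket with the entropy chain rule to get $\H(\rv{A}\rv{B}\mid\rv{D}) - \H(\rv{A}\rv{B}\mid\rv{C}\rv{D}) = \I(\rv{A}\rv{B}:\rv{C}\mid\rv{D})$. Your care in applying \autoref{lemma:entropy-chain} pointwise for each fixed value of the conditioning variables and averaging via \autoref{def:cond-entropy} is also the right way to bridge the paper's event-based phrasing, so there is nothing to fix.
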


The following lemmas are standard.

\begin{lemma}
\label{lemma:addcondition}
For random variables $\rv{A}, \rv{B}, \rv{C}, \rv{D}$, if $\rv{A}$ is independent of $\rv{D}$ given $\rv{C}$, then, 
\[
\I\paren*{ \rv{A} : \rv{B} \mid \rv{C} } \leq \I\paren*{ \rv{A} : \rv{B} \mid \rv{C}, \rv{D} } .
\]
\end{lemma}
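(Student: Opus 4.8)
The plan is to prove this via two applications of the chain rule for mutual information (\autoref{fact:infchain}) to the single quantity $\I\paren*{ \rv{A} : \rv{B}\rv{D} \mid \rv{C} }$, expanded in the two possible orders. First I would peel off $\rv{D}$:
\[
\I\paren*{ \rv{A} : \rv{B}\rv{D} \mid \rv{C} } = \I\paren*{ \rv{A} : \rv{D} \mid \rv{C} } + \I\paren*{ \rv{A} : \rv{B} \mid \rv{C}\rv{D} } .
\]
The hypothesis that $\rv{A}$ and $\rv{D}$ are independent conditioned on $\rv{C}$ gives $\I\paren*{ \rv{A} : \rv{D} \mid \rv{C} } = 0$: this is exactly the equality case of \autoref{lemma:cond-reduce} (conditioning reduces entropy), since that independence is equivalent to $\H\paren*{ \rv{A} \mid \rv{C} } = \H\paren*{ \rv{A} \mid \rv{C}\rv{D} }$. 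Hence the display simplifies to $\I\paren*{ \rv{A} : \rv{B}\rv{D} \mid \rv{C} } = \I\paren*{ \rv{A} : \rv{B} \mid \rv{C}\rv{D} }$.

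Next I would expand the same quantity by peeling off $\rv{B}$ instead:
\[
\I\paren*{ \rv{A} : \rv{B}\rv{D} \mid \rv{C} } = \I\paren*{ \rv{A} : \rv{B} \mid \rv{C} } + \I\paren*{ \rv{A} : \rv{D} \mid \rv{B}\rv{C} } ,
\]
and since mutual information is nonnegative (\autoref{fact:infbound}), the last term is $\geq 0$, so $\I\paren*{ \rv{A} : \rv{B}\rv{D} \mid \rv{C} } \geq \I\paren*{ \rv{A} : \rv{B} \mid \rv{C} }$. Combining the two displays yields $\I\paren*{ \rv{A} : \rv{B} \mid \rv{C} } \leq \I\paren*{ \rv{A} : \rv{B} \mid \rv{C}\rv{D} }$, which is the claim.

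There is essentially no obstacle here; the only point requiring a moment's care is picking the two expansion orders so that the independence assumption annihilates precisely the term $\I\paren*{ \rv{A} : \rv{D} \mid \rv{C} }$, while the term that is merely discarded as nonnegative, $\I\paren*{ \rv{A} : \rv{D} \mid \rv{B}\rv{C} }$, is the one for which we have no vanishing guarantee. (If one wanted the version conditioned on an ambient event $E$, as elsewhere in the paper, the identical argument goes through verbatim with every mutual information and entropy additionally conditioned on $E$.)
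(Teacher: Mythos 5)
Your proof is correct, but it takes a different route from the paper. The paper argues directly at the entropy level: it writes $\I\paren*{ \rv{A} : \rv{B} \mid \rv{C} } = \H\paren*{ \rv{A} \mid \rv{C} } - \H\paren*{ \rv{A} \mid \rv{B}, \rv{C} }$, uses the conditional independence (via the equality case of \autoref{lemma:cond-reduce}) to replace $\H\paren*{ \rv{A} \mid \rv{C} }$ by $\H\paren*{ \rv{A} \mid \rv{C}, \rv{D} }$, and then uses $\H\paren*{ \rv{A} \mid \rv{B}, \rv{C}, \rv{D} } \leq \H\paren*{ \rv{A} \mid \rv{B}, \rv{C} }$ to arrive at $\I\paren*{ \rv{A} : \rv{B} \mid \rv{C}, \rv{D} }$ — a three-line computation with no appeal to the chain rule for mutual information. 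You instead expand $\I\paren*{ \rv{A} : \rv{B}\rv{D} \mid \rv{C} }$ in both orders via \autoref{fact:infchain}, kill $\I\paren*{ \rv{A} : \rv{D} \mid \rv{C} }$ by the hypothesis, and discard $\I\paren*{ \rv{A} : \rv{D} \mid \rv{B}\rv{C} } \geq 0$ by \autoref{fact:infbound}; this is the standard ``double chain-rule expansion'' and it correctly identifies which term must vanish and which may merely be dropped. The two arguments rest on the same underlying facts (conditional independence kills one term, monotonicity/nonnegativity handles the other), but yours stays entirely at the mutual-information level, which makes the structure reusable (it is essentially the same manipulation used to prove \autoref{lemma:removecondition}), while the paper's is a slightly more bare-hands entropy calculation. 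One minor bookkeeping point: your appeal to the equality case of \autoref{lemma:cond-reduce} uses a random variable $\rv{C}$ in place of the event $E$ in that lemma's statement; this is harmless (and the paper's own proof does exactly the same), but strictly speaking it is the statement $\I\paren*{ \rv{A} : \rv{D} \mid \rv{C} } = 0$ for conditionally independent $\rv{A}, \rv{D}$ that is being invoked, averaged over the values of $\rv{C}$.
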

\begin{proof}
Since $\rv{A}$ and $\rv{D}$ are independent conditioned on $\rv{C}$, by
\autoref{lemma:cond-reduce}, $\H(\rv{A} \mid  \rv{C}) = \H(\rv{A} \mid \rv{C}, \rv{D})$ and $\H(\rv{A} \mid  \rv{B}, \rv{C}, \rv{D}) \leq \H(\rv{A} \mid  \rv{B}, \rv{C})$.  We have,
\begin{align*}
	\I\paren*{ \rv{A} : \rv{B} \mid \rv{C} } &= \H(\rv{A} \mid \rv{C}) - \H(\rv{A} \mid \rv{B}, \rv{C}) \\
	&= \H(\rv{A} \mid  \rv{C}, \rv{D}) - \H(\rv{A} \mid \rv{B}, \rv{C}) \\
	&\leq \H(\rv{A} \mid \rv{C}, \rv{D}) - \H(\rv{A} \mid \rv{B}, \rv{C}, \rv{D}) = \I\paren*{ \rv{A} : \rv{B} \mid \rv{C}, \rv{D} }.
\end{align*} 
	
\end{proof}

\begin{lemma}
\label{lemma:removecondition}
For random variables $\rv{A}, \rv{B}, \rv{C},\rv{D}$, if $\rv{A}$ is independent of $\rv{D}$ given $\rv{B}, \rv{C}$, then, 
\[
\I\paren*{ \rv{A} : \rv{B} \mid \rv{C}, \rv{D} } \leq \I\paren*{ \rv{A} : \rv{B} \mid \rv{C} } .
\]
\end{lemma}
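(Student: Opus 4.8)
The plan is to reduce the statement, via the definition of conditional mutual information (\autoref{def:mutualinfo}), to two elementary facts about conditional entropy, in the same spirit as the proof of \autoref{lemma:addcondition} but run in the opposite direction. Concretely, I would expand
\[
\I\paren*{ \rv{A} : \rv{B} \mid \rv{C}, \rv{D} } = \H\paren*{ \rv{A} \mid \rv{C}, \rv{D} } - \H\paren*{ \rv{A} \mid \rv{B}, \rv{C}, \rv{D} } \quad\text{and}\quad \I\paren*{ \rv{A} : \rv{B} \mid \rv{C} } = \H\paren*{ \rv{A} \mid \rv{C} } - \H\paren*{ \rv{A} \mid \rv{B}, \rv{C} },
\]
so that it suffices to compare these two differences term by term.

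First I would handle the terms conditioned on $\rv{B},\rv{C}$: the hypothesis that $\rv{A}$ is independent of $\rv{D}$ given $(\rv{B},\rv{C})$ means that for every value $(b,c)$ in $\supp(\rv{B},\rv{C})$, the variables $\rv{A}$ and $\rv{D}$ are independent conditioned on the event $\{\rv{B}=b,\rv{C}=c\}$; the equality clause of \autoref{lemma:cond-reduce} then gives $\H(\rv{A} \mid \rv{D},\rv{B}=b,\rv{C}=c) = \H(\rv{A}\mid \rv{B}=b,\rv{C}=c)$, and averaging over $(b,c)$ through \autoref{def:cond-entropy} yields $\H(\rv{A}\mid \rv{B},\rv{C},\rv{D}) = \H(\rv{A}\mid \rv{B},\rv{C})$. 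For the remaining terms, \autoref{lemma:cond-reduce} directly gives $\H(\rv{A}\mid \rv{C},\rv{D}) \le \H(\rv{A}\mid \rv{C})$. Subtracting the two equal quantities $\H(\rv{A}\mid \rv{B},\rv{C},\rv{D})$ and $\H(\rv{A}\mid \rv{B},\rv{C})$ from the ordered inequality $\H(\rv{A}\mid \rv{C},\rv{D}) \le \H(\rv{A}\mid \rv{C})$ gives exactly $\I(\rv{A}:\rv{B}\mid \rv{C},\rv{D}) \le \I(\rv{A}:\rv{B}\mid \rv{C})$.

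An alternative, perhaps slicker, route avoids the entropy bookkeeping and uses only the chain rule for mutual information (\autoref{fact:infchain}): expanding $\I(\rv{A} : \rv{B}\rv{D}\mid \rv{C})$ in the two orders gives $\I(\rv{A}:\rv{D}\mid \rv{C}) + \I(\rv{A}:\rv{B}\mid \rv{C},\rv{D}) = \I(\rv{A}:\rv{B}\mid \rv{C}) + \I(\rv{A}:\rv{D}\mid \rv{B},\rv{C})$; the last term vanishes by the hypothesis, and $\I(\rv{A}:\rv{D}\mid \rv{C}) \ge 0$ by \autoref{fact:infbound}, so rearranging finishes the argument. There is essentially no obstacle here: this is a standard data-processing-type inequality, and the only point demanding a moment's care is translating the conditional-independence hypothesis into the entropy equality $\H(\rv{A}\mid \rv{B},\rv{C},\rv{D}) = \H(\rv{A}\mid \rv{B},\rv{C})$, which is precisely the equality clause already recorded in \autoref{lemma:cond-reduce}.
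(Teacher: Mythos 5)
Your main argument is correct and is essentially the paper's own proof: the paper likewise uses the conditional-independence hypothesis together with \autoref{lemma:cond-reduce} to get $\H(\rv{A} \mid \rv{B},\rv{C},\rv{D}) = \H(\rv{A} \mid \rv{B},\rv{C})$, pairs it with $\H(\rv{A} \mid \rv{C},\rv{D}) \leq \H(\rv{A} \mid \rv{C})$, and subtracts. Your alternative chain-rule route is also valid, but the primary argument already matches the paper.
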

\begin{proof}
Since $\rv{A}$ and $\rv{D}$ are independent conditioned on $\rv{B}, \rv{C}$, by \autoref{lemma:cond-reduce}, $\H(\rv{A} \mid \rv{B},\rv{C}) = \H(\rv{A} \mid \rv{B},\rv{C},\rv{D})$. Moreover, since conditioning can only reduce the entropy (again by \autoref{lemma:cond-reduce}), 
\begin{align*}
	\I\paren*{ \rv{A} : \rv{B} \mid \rv{C}, \rv{D} } &= \H(\rv{A} \mid \rv{C},\rv{D}) - \H(\rv{A} \mid \rv{B},\rv{C},\rv{D}) \\ 
	&= \H(\rv{A} \mid \rv{C},\rv{D}) - \H(\rv{A} \mid \rv{B},\rv{C}) \\ 
	&\leq \H(\rv{A} \mid \rv{C}) - \H(\rv{A} \mid \rv{B},\rv{C}) = \I\paren*{ \rv{A} : \rv{B} \mid \rv{C} } .
\end{align*}
 
\end{proof}


\section{Missing Proofs in~\autoref{sec:lb}}\label{sec:missing-proofs}

In this section we provide the missing proofs in~\autoref{sec:lb}.

\subsection{Missing Proofs in~\autoref{sec:distprop}}\label{sec:missing-proofs-distprop}

\begin{reminder}{\autoref{lemma:largenbr}}
	For all $k \in [p+1]$, we have:
	\[
	\Pr\paren*{ \card*{ \rv{N}^k(s) } \leq \Delta_k \cdot \paren*{ 1 - \frac{2k}{ \paren*{ \log n }^{ 10^{ 10p } } } } } \leq \frac{k}{n^{200}} .
	\]
\end{reminder}
\begin{proof}
Proof by induction on $k$. The base case $k = 1$ follows from \autoref{obs:smallnbr}. We show the result for $k > 1$ by assuming it holds for $k - 1$. Letting $z_k = \Delta_k \cdot \paren*{ 1 - \frac{2k}{ \paren*{ \log n }^{ 10^{ 10p } } } }$ for convenience, we have:
\begin{align*}
\Pr\paren*{ | \rv{N}^k(s) | \leq z_k } &\leq \Pr\paren*{ | \rv{N}^{k-1}(s) | \leq z_{k-1} } + \Pr\paren*{ | \rv{N}^k(s) | \leq z_k ~\Big|~ | \rv{N}^{k-1}(s) | > z_{k-1} } \\
&\leq \frac{k-1}{n^{200}} + \Pr\paren*{ | \rv{N}^k(s) | \leq z_k ~\Big|~ | \rv{N}^{k-1}(s) | > z_{k-1} } \tag{Induction Hypothesis} .
\end{align*}
Thus, it is sufficient to show that the second term is at most $\frac{1}{n^{200}}$. To show this we fix an set $S$ such that $\card*{S} > z_{k-1}$ and show that:
\begin{equation}
\label{eq:largenbr0}
\Pr\paren*{ | \rv{N}^k(s) | \leq z_k ~\Big|~ \rv{N}^{k-1}(s) = S } \leq \frac{1}{n^{200}} .
\end{equation}
To see why this holds, first note that conditioned on $\rv{N}^{k-1}(s) = S$, the set $\rv{N}^k(s)$ is just the set of vertices in later $V_{k+1}$ that can be reached from vertices in $S$ (which itself is a subset of $V_k$). Thus, we have:
\begin{equation}
\label{eq:largenbr1}
\begin{split}
&\Pr\paren*{ | \rv{N}^k(s) | \leq z_k ~\Big|~ \rv{N}^{k-1}(s) = S } \\
&\hspace{1cm} \leq \Pr\paren*{ \card*{ \set*{ v \in V_{k+1} \mid \exists u \in S \mathrm{~s.t.~} (u,v) \in \rv{E}_k } } \leq z_k ~\Big|~ \rv{N}^{k-1}(s) = S } \\
&\hspace{1cm} = \Pr\paren*{ \card*{ \set*{ v \in V_{k+1} \mid \exists u \in S \mathrm{~s.t.~} (u,v) \in \rv{E}_k } } \leq z_k } ,
\end{split}
\end{equation}
where the last step is because $\rv{N}^{k-1}(s) = S$ is determined by $\rv{E}_{-k}$, which is independent of $\rv{E}_k$. We now want to upper bound the probability of an event defined by $\rv{E}_k$ but instead of analyzing it directly, we first define two auxiliary random variables $\rv{E}'_k$ and $\rv{E}''_k$. The values taken by the random variables $\rv{E}'_k$ and $\rv{E}''_k$ are just a set of edges between $V_k$ and $V_{k+1}$. Let $z$ denote $z = \Delta' \cdot \paren*{ 1 - \frac{1}{ \paren*{ \log n }^{ 10^{ 10p } } } }$. In the random variable $\rv{E}'_k$, each edge $(u,v)$ for $u \in V_k, v \in V_{k+1}$ is included independently with probability $z$. In the random variable $\rv{E}''_k$, we first sample edges as in $\rv{E}'_k$ and then, if the number of edges coming out of any vertex $u \in V_k$ is $d'_u \leq \Delta$, we make it equal to $\Delta$ by sample $\Delta - d'_u$ edges uniformly at random (and do nothing if $d'_u > \Delta$). Denoting by $\mathsf{dist}(\rv{X})$ the distribution of the random variable $\rv{X}$, note first that
\begin{equation}
\label{eq:largenbr2}
\mathsf{dist}\paren*{ \rv{E}''_k \mid \forall u \in V_k \mathrm{~s.t.~} d'_u \leq \Delta } = \mathsf{dist}\paren*{ \rv{E}_k } \implies \tvd{ \mathsf{dist}\paren*{ \rv{E}''_k } }{ \mathsf{dist}\paren*{ \rv{E}_k } } \leq \Pr_{\rv{E}'_k}\paren*{ \exists u \in V_k \mathrm{~s.t.~} d'_u > \Delta } .
\end{equation}
Plugging \autoref{eq:largenbr2} into \autoref{eq:largenbr1} and noting that $\rv{E}'_k$ samples at most as many edges as $\rv{E}''_k$, we have:
\begin{equation}
\label{eq:largenbr3}
\begin{split}
&\Pr\paren*{ | \rv{N}^k(s)| \leq z_k ~\Big|~ \rv{N}^{k-1}(s) = S } \\
&\hspace{1cm} \leq \Pr_{\rv{E}'_k}\paren*{ \card*{ \set*{ v \in V_{k+1} \mid \exists u \in S \mathrm{~s.t.~} (u,v) \in \rv{E}'_k } } \leq z_k } + \Pr_{\rv{E}'_k}\paren*{ \exists u \in V_k \mathrm{~s.t.~} d'_u > \Delta } \\
&\hspace{1cm} \leq \Pr_{\rv{E}'_k}\paren*{ \card*{ \set*{ v \in V_{k+1} \mid \exists u \in S \mathrm{~s.t.~} (u,v) \in \rv{E}'_k } } \leq z_k } + \sum_{u \in V_k} \Pr_{\rv{E}'_k}\paren*{ \card*{ \set*{ v \in V_{k+1} \mid (u,v) \in \rv{E}'_k } } > \Delta } .
\end{split}
\end{equation}
Now, for $v \in V_{k+1}$, define the indicator random variable $\rv{X}_v$ to be $1$ if and only if $\exists u \in S: (u,v) \in \rv{E}'_k$. Also, for $u \in V_k$, define the indicator random variable $\rv{Y}_{u,v}$ to be $1$ if and only if $(u,v) \in \rv{E}'_k$. Clearly, the random variable $\rv{X}_v$ are mutually independent for all $v \in V_{k+1}$ and so are the random variables $\rv{Y}_{u,v}$ for $u \in V_k, v \in V_{k+1}$. Moreover, we have, for all $u \in V_k$ and $v \in V_{k+1}$ that:
\begin{equation}
\label{eq:largenbr4}
 \E\sq*{ \rv{X}_v } = 1 - \paren*{ 1 - z }^{ \card*{ S } } \geq z \cdot z_{k-1} \cdot \paren*{ 1 - \frac{0.5}{ \paren*{ \log n }^{ 10^{10p} } } } \hspace{1cm}\text{and}\hspace{1cm} \E\sq*{ \rv{Y}_{u,v} } = z ,
\end{equation}
as $1 - x \leq \e^{-x} \leq 1 - x + \frac{x^2}{2}, \forall x \in (0,1/10)$ and $z_{k-1} < \card*{S} \leq \Delta_{k-1}$ by \autoref{obs:smallnbr}. We now continue \autoref{eq:largenbr3} using a Chernoff bound (\autoref{lemma:chernoff}).
\begin{align*}
\Pr\paren*{ | \rv{N}^k(s) | \leq z_k ~\Big|~ \rv{N}^{k-1}(s) = S } &\leq \Pr_{\rv{E}'_k}\paren*{ \sum_{v \in V_{k+1}} \rv{X}_v \leq z_k } + \sum_{u \in V_k} \Pr_{\rv{E}'_k}\paren*{ \sum_{v \in V_{k+1}} \rv{Y}_{u,v} > \Delta } \\
&\leq \exp \Big( {- \frac{0.5}{ \paren*{ \log n }^{ 20^{10p} } } \cdot \frac{\Delta}{10} } \Big) + \sum_{u \in V_k} \exp \Big( {-\frac{\Delta}{10} \cdot \frac{1}{ \paren*{ \log n }^{ 20^{ 10p } } } } \Big) \tag{\autoref{lemma:chernoff}} \\
&\leq \frac{1}{n^{200}} ,
\end{align*}
as required for \autoref{eq:largenbr0}.

\end{proof}

\begin{reminder}{\autoref{lemma:entropynbrub}}
	For all $v \in V \setminus \paren*{ V_1 \cup V_{p+2} }$ and any event $E$, we have:
	\begin{align*}
	\H\paren*{ \rv{N}(v) \mid E } &\leq n \cdot \h\paren*{ \Delta' } \cdot \paren*{ 1 + \frac{1}{ \paren*{ \log n }^{ 10^{10p} } } } \text{ and} \\
	\H\paren*{ \rv{N}(v) } &\geq n \cdot \h\paren*{ \Delta' } \cdot \paren*{ 1 - \frac{1}{ \paren*{ \log n }^{ 10^{10p} } } } .
	\end{align*}
\end{reminder}
\begin{proof}
	As $v \notin V_1 \cup V_{p+2}$, we get from \autoref{lemma:binomial} that:
	\[
	\H\paren*{ \rv{N}(v) \mid E } \leq \log \binom{n}{\Delta} \leq \log n + n \cdot \h\paren*{ \Delta' } \leq n \cdot \h\paren*{ \Delta' } \cdot \paren*{ 1 + \frac{1}{ \paren*{ \log n }^{ 10^{10p} } } } .
	\]
	For the furthermore part, note that \autoref{lemma:binomial} also says that:
	\[
	\H\paren*{ \rv{N}(v) \mid E } = \log \binom{n}{\Delta} \geq -3\log n + n \cdot \h\paren*{ \Delta' } \geq n \cdot \h\paren*{ \Delta' } \cdot \paren*{ 1 - \frac{1}{ \paren*{ \log n }^{ 10^{10p} } } } .
	\]
\end{proof}

\begin{reminder}{\autoref{cor:entropynbrub}}
	For all $1 < k \leq p+1$ and any event $E$, we have:
	\begin{align*}
	\H\paren*{ \rv{E}_k \mid E } &\leq n^2 \cdot \h\paren*{ \Delta' } \cdot \paren*{ 1 + \frac{1}{ \paren*{ \log n }^{ 10^{10p} } } } \text{ and}\\
	\H\paren*{ \rv{E}_k } &\geq n^2 \cdot \h\paren*{ \Delta' } \cdot \paren*{ 1 - \frac{1}{ \paren*{ \log n }^{ 10^{10p} } } } .
	\end{align*}
\end{reminder}
\begin{proof}
	From \autoref{lemma:entropy-chain} and \autoref{lemma:cond-reduce}, we have that:
	\[
	\H\paren*{ \rv{E}_k \mid E } \leq \sum_{v \in V_k} \H\paren*{ \rv{N}(v) \mid E } \leq n^2 \cdot \h\paren*{ \Delta' } \cdot \paren*{ 1 + \frac{1}{ \paren*{ \log n }^{ 10^{10p} } } } \tag{\autoref{lemma:entropynbrub}} .
	\]
	From \autoref{lemma:entropy-chain} and independence of $\rv{N}(v)$ for all $v \in V_k$, we have that:
	\[
	\H\paren*{ \rv{E}_k } = \sum_{v \in V_k} \H\paren*{ \rv{N}(v) } \geq n^2 \cdot \h\paren*{ \Delta' } \cdot \paren*{ 1 - \frac{1}{ \paren*{ \log n }^{ 10^{10p} } } } \tag{\autoref{lemma:entropynbrub}} .
	\]
\end{proof}

\begin{reminder}{\autoref{lemma:entropynbrlb}}
	We have $\H\paren*{ \rv{N}(s) } = \log n$ and, for all $1 < k \leq p+1$:
	\[
	\H\paren*{ \rv{N}^k(s) } \geq n \cdot \h\paren*{ \Delta'_k } \cdot \paren*{ 1 - \frac{1}{ \paren*{ \log n }^{ 10^{8p} } } } .
	\]
\end{reminder}
\begin{proof}
	That $\H\paren*{ \rv{N}(s) } = \log n$ follows from \autoref{obs:smallnbr}. For the rest, fix $k > 1$ and define $z_k = \Delta_k \cdot \paren*{ 1 - \frac{1}{ \paren*{ \log n }^{ 10^{9p} } } }$. From \autoref{lemma:cond-reduce}, we can conclude that $\H\paren*{ \rv{N}^k(s) } \geq \H\paren*{ \rv{N}^k(s) ~\Big|~ | \rv{N}^k(s) | }$. By \autoref{def:cond-entropy}, this implies:
	\[
	\H\paren*{ \rv{N}^k(s) } \geq \sum_{z \geq z_k} \Pr\paren*{ | \rv{N}^k(s) | = z } \cdot \H\paren*{ \rv{N}^k(s) ~\Big|~ | \rv{N}^k(s) | = z } .
	\]
	By symmetry, conditioned $\card*{ \rv{N}^k(s) } = z$, $\rv{N}^k(s)$ is just a uniformly random subset of size $z$. Thus, we have from \autoref{lemma:entropy-ub} that:
	\begin{align*}
	\H\paren*{ \rv{N}^k(s) } &\geq \sum_{z \geq z_k} \Pr\paren*{ | \rv{N}^k(s) | = z } \cdot \log \binom{n}{z} \\
	&\geq \sum_{z \geq z_k} \Pr\paren*{ | \rv{N}^k(s) | = z } \cdot \paren*{ n \cdot \h(z/n) - 3 \log n } \tag{\autoref{lemma:binomial}} \\
	&\geq \paren*{ n \cdot \h(z_k/n) - 3 \log n } \cdot \sum_{z \geq z_k} \Pr\paren*{ | \rv{N}^k(s) | = z } \tag{As $\h(\cdot)$ is increasing on $0 < x < \frac{1}{2}$} \\
	&\geq \paren*{ n \cdot \h\paren*{ \Delta'_k \cdot \paren*{ 1 - \frac{1}{ \paren*{ \log n }^{ 10^{9p} } } } } - 3 \log n } \cdot \sum_{z \geq z_k} \Pr\paren*{ | \rv{N}^k(s) | = z } \tag{Definition of $z_k$} .
	\end{align*}
	As $k > 1$,  the first factor is non-negative. Bounding the second by \autoref{cor:nbr}, we have:
	\begin{align*}
	\H\paren*{ \rv{N}^k(s) } &\geq \paren*{ n \cdot \h\paren*{ \Delta'_k \cdot \paren*{ 1 - \frac{1}{ \paren*{ \log n }^{ 10^{9p} } } } } - 3 \log n } \cdot \paren*{ 1 - \frac{1}{n^{150}} } \\
	&\geq \paren*{ n \cdot \paren*{ 1 - \frac{1}{ \paren*{ \log n }^{ 10^{9p} } } } \cdot \h\paren*{ \Delta'_k } - 3 \log n } \cdot \paren*{ 1 - \frac{1}{n^{150}} } \tag{Concavity of $\h(\cdot)$} \\
	&\geq n \cdot \h\paren*{ \Delta'_k } \cdot \paren*{ 1 - \frac{1}{ \paren*{ \log n }^{ 10^{8p} } } } \tag{As $k > 1$} .
	\end{align*}
\end{proof}

\begin{reminder}{\autoref{lemma:entropypath}}
	For all events $E$, we have $\H\paren*{ \rv{P}(s) \mid E } \leq \log n$ and, for all $1 < k \leq p+1$:
	\[
	\H\paren*{ \rv{P}^k(s) ~\Big|~ E } \leq \Delta'_{k-1} \cdot \paren*{ 1 + \frac{1}{ \paren*{ \log n }^{ 10^{9p} } } } \cdot \H\paren*{ \rv{E}_k }.
	\]
\end{reminder}
\begin{proof}
	Proof by induction. The base case $k = 1$ follows from \autoref{obs:smallnbr}. We show the result for $k > 1$ assuming it holds for $k-1$. As $\rv{P}^k(s)$ is determined by $\rv{P}^{k-1}(s)$ and $\rv{P}(\rv{N}^{k-1}(s))$, we have:
	\begin{align*}
	\H\paren*{ \rv{P}^k(s) ~\Big|~ E } &\leq \H\paren*{ \rv{P}^{k-1}(s) \rv{P}(\rv{N}^{k-1}(s)) ~\Big|~ E } \\
	&\leq \H\paren*{ \rv{P}^{k-1}(s) ~\Big|~ E } + \H\paren*{ \rv{P}(\rv{N}^{k-1}(s)) ~\Big|~ \rv{P}^{k-1}(s), E } \tag{\autoref{lemma:entropy-chain}} \\
	&\leq \H\paren*{ \rv{P}^{k-1}(s) \mid E } + \sum_{P^{k-1}(s)} \Pr\paren*{ P^{k-1}(s) ~\Big|~ E } \cdot \H\paren*{ \rv{P}(\rv{N}^{k-1}(s)) ~\Big|~ P^{k-1}(s), E } \tag{\autoref{def:cond-entropy}} \\
	&\leq \H\paren*{ \rv{P}^{k-1}(s) \mid E } + \sum_{P^{k-1}(s)} \Pr\paren*{ P^{k-1}(s) ~\Big|~ E } \cdot \H\paren*{ \rv{P}(N^{k-1}(s)) ~\Big|~ P^{k-1}(s), E } \tag{As $\rv{P}^{k-1}(s)$ determines $\rv{N}^{k-1}(s)$} .
	\end{align*}
	To continue, we again use \autoref{lemma:entropy-chain} followed by \autoref{lemma:cond-reduce}. 
	\begin{align*}
	\H\paren*{ \rv{P}^k(s) ~\Big|~ E } &\leq \H\paren*{ \rv{P}^{k-1}(s) ~\Big|~ E } + \sum_{P^{k-1}(s)} \Pr\paren*{ P^{k-1}(s) ~\Big|~ E } \cdot \sum_{ v \in N^{k-1}(s) } \H\paren*{ \rv{P}(v) ~\Big|~ P^{k-1}(s), E } \\
	&\leq \H\paren*{ \rv{P}^{k-1}(s) ~\Big|~ E } + \sum_{P^{k-1}(s)} \Pr\paren*{ P^{k-1}(s) ~\Big|~ E } \cdot | N^{k-1}(s) | \cdot n \h\paren*{ \Delta' } \paren*{ 1 + \frac{1}{ \paren*{ \log n }^{ 10^{10p} } } } \tag{\autoref{lemma:entropynbrub}} \\
	&\leq \H\paren*{ \rv{P}^{k-1}(s) ~\Big|~ E } + \sum_{P^{k-1}(s)} \Pr\paren*{ P^{k-1}(s) ~\Big|~ E } \cdot \Delta_{k-1} \cdot n \h\paren*{ \Delta' } \paren*{ 1 + \frac{1}{ \paren*{ \log n }^{ 10^{10p} } } } \tag{\autoref{obs:smallnbr}} \\
	&\leq \H\paren*{ \rv{P}^{k-1}(s) ~\Big|~ E } + \Delta_{k-1} \cdot n \h\paren*{ \Delta' } \paren*{ 1 + \frac{1}{ \paren*{ \log n }^{ 10^{10p} } } } \\
	&\leq \H\paren*{ \rv{P}^{k-1}(s) ~\Big|~ E } + \Delta'_{k-1} \cdot \H\paren*{ \rv{E}_k } \cdot \paren*{ 1 + \frac{5}{ \paren*{ \log n }^{ 10^{10p} } } } \tag{\autoref{cor:entropynbrub}} .
	\end{align*}

	Finally, we bound the term $\H\paren*{ \rv{P}^{k-1}(s) \mid E }$ using the induction hypothesis. When $k = 2$, this term is at most $\log n \leq \frac{ \H\paren*{ \rv{E}_k } }{ n \cdot \paren*{ \log n }^{ 10^{10p} } } = \frac{ \Delta'_{k-1} }{ \paren*{ \log n }^{ 10^{10p} } } \cdot \H\paren*{ \rv{E}_k }$. Otherwise, when $k > 2$, we have $\H\paren*{ \rv{E}_{k-1} } = \H\paren*{ \rv{E}_k }$ and this term is at most $\Delta'_{k-2} \cdot \paren*{ 1 + \frac{1}{ \paren*{ \log n }^{ 10^{9p} } } } \cdot \H\paren*{ \rv{E}_k } \leq \frac{ \Delta'_{k-1} }{ \paren*{ \log n }^{ 10^{10p} } } \cdot \H\paren*{ \rv{E}_k }$. Plugging in, we have:
	\[
	\H\paren*{ \rv{P}^k(s) } \leq \Delta'_{k-1} \cdot \paren*{ 1 + \frac{1}{ \paren*{ \log n }^{ 10^{9p} } } } \cdot \H\paren*{ \rv{E}_k } .
	\]

\end{proof}

\begin{reminder}{\autoref{lemma:entropypathlb}}
We have $\H\paren*{ \rv{P}(s) } = \log n$ and, for all $1 < k \leq p+1$:
\[
\H\paren*{ \rv{P}^k(s) } \geq \Delta'_{k-1} \cdot \paren*{ 1 - \frac{1}{ \paren*{ \log n }^{ 10^{9p} } } } \cdot \H\paren*{ \rv{E}_k }.
\]
\end{reminder}
\begin{proof}
	The case $k = 1$ follows from \autoref{obs:smallnbr}. We show the result for $k > 1$. As $\rv{P}^k(s)$ determines $\rv{P}(\rv{N}^{k-1}(s))$, we have:
	\begin{align*}
	\H\paren*{ \rv{P}^k(s) } &\geq \H\paren*{ \rv{P}(\rv{N}^{k-1}(s)) } \\
	&\geq \H\paren*{ \rv{P}(\rv{N}^{k-1}(s)) ~\Big|~ \rv{P}^{k-1}(s) } \tag{\autoref{lemma:cond-reduce}} \\
	&\geq \sum_{P^{k-1}(s)} \Pr\paren*{ P^{k-1}(s) } \cdot \H\paren*{ \rv{P}(\rv{N}^{k-1}(s)) ~\Big|~ P^{k-1}(s) } \tag{\autoref{def:cond-entropy}} \\
	&\geq \sum_{P^{k-1}(s)} \Pr\paren*{ P^{k-1}(s) } \cdot \H\paren*{ \rv{P}(N^{k-1}(s)) ~\Big|~ P^{k-1}(s) } \tag{As $\rv{P}^{k-1}(s)$ determines $\rv{N}^{k-1}(s)$} .
	\end{align*}
	Note that $\rv{P}(N^{k-1}(s))$ is determined by $\rv{E}_k$ and $\rv{P}^{k-1}(s)$ is determined by $\rv{E}_{-k}$. As these are independent, we have:
	\[
	\H\paren*{ \rv{P}^k(s) } \geq \sum_{P^{k-1}(s)} \Pr\paren*{ P^{k-1}(s) } \cdot \H\paren*{ \rv{P}(N^{k-1}(s)) } .
	\]
	As $\rv{P}(v)$ are mutually independent for all $v \in N^{k-1}(s)$, we have:
	\begin{align*}
	\H\paren*{ \rv{P}^k(s) } &\geq \sum_{P^{k-1}(s)} \Pr\paren*{ P^{k-1}(s) } \cdot \sum_{ v \in N^{k-1}(s) } \H\paren*{ \rv{P}(v) } \\
	&\geq \sum_{P^{k-1}(s)} \Pr\paren*{ P^{k-1}(s) } \cdot \card*{ N^{k-1}(s) } \cdot \frac{ \H\paren*{ \rv{E}_k } }{n} .
	\end{align*}
	Using \autoref{lemma:largenbr}, we get:
	\begin{align*}
	\H\paren*{ \rv{P}^k(s) } &\geq \Delta_{k-1} \cdot \paren*{ 1 - \frac{1}{ \paren*{ \log n }^{ 10^{ 9p } } } } \cdot \frac{ \H\paren*{ \rv{E}_k } }{n} \\
	&\geq \Delta'_{k-1} \cdot \paren*{ 1 - \frac{1}{ \paren*{ \log n }^{ 10^{9p} } } } \cdot \H\paren*{ \rv{E}_k } .
	\end{align*}
\end{proof}

\begin{reminder}{\autoref{lemma:minentropypath}}
For all events $E$, it holds that:
\[
2^{ - \H_{\infty}\paren*{ \rv{P}^{p+1}(s) \mid E } } \leq 1 - \frac{ \H\paren*{ \rv{P}^{p+1}(s) \mid E } - 1 }{ \Delta'_p \cdot \paren*{ 1 + \frac{1}{ \paren*{ \log n }^{ 10^{9p} } } } \cdot \H\paren*{ \rv{E}_{p+1} } } .
\]
\end{reminder}
\begin{proof}
Let $\Omega$ denote the support of $\rv{P}^{p+1}(s)$ and note that \autoref{lemma:entropypath} implies that $\log \paren*{ \card*{ \Omega } } \leq \Delta'_p \cdot \paren*{ 1 + \epsilon_p } \cdot \H\paren*{ \rv{E}_{p+1} }$. Applying \autoref{lemma:min-entropy-lb} on the random variable $\rv{P}^{p+1}(s) \mid E$, we have:
\begin{align*}
2^{ - \H_{\infty}\paren*{ \rv{P}^{p+1}(s) \mid E } } &\leq 1 - \frac{ \H\paren*{ \rv{P}^{p+1}(s) \mid E } - 1 }{ \log \paren*{ \card*{ \Omega } } } \\
&\leq 1 - \frac{ \H\paren*{ \rv{P}^{p+1}(s) \mid E } - 1 }{ \Delta'_p \cdot \paren*{ 1 + \epsilon_p } \cdot \H\paren*{ \rv{E}_{p+1} } } .
\end{align*}
\end{proof}

\subsection{Missing Proofs in~\autoref{sec:cclb}}\label{sec:missing-proofs-cclb}

\subsubsection{Proof of~\autoref{lemma:bad1}}

\begin{reminder}{\autoref{lemma:bad1}}
	For all $1 < k \leq p + 1$, assuming~\autoref{lemma:induction} holds for $k-1$, there exists a set $\B_1 \subseteq \supp\paren*{ \rv{M}_{\leq (k-1)p} }$ such that $\Pr\paren*{ \rv{M}_{\leq (k-1)p} \in \B_1 } \leq \sqrt{ \eps_{k-1} }$ and for all $M_{\leq (k-1)p} \notin \B_1$, we have:
	\begin{align*}
	&\H\paren*{ \rv{N}^{k-1}(s) \mid M_{\leq (k-1)p} } \geq \H\paren*{ \rv{N}^{k-1}(s) } - 10 \cdot \sqrt{ \eps_{k-1} } \cdot \Delta'_{k-2} \cdot \H\paren*{ \rv{E}_{k-1} } , &\text{~if~} k > 2 \\
	&\H\paren*{ \rv{N}^{k-1}(s) \mid M_{\leq (k-1)p} } = \H\paren*{ \rv{N}^{k-1}(s) } = \log n, &\text{~if~} k = 2.
	\end{align*}
\end{reminder}
\vspace{-\bigskipamount}
\vspace{-\bigskipamount}
\begin{proof}
	If $k = 2$, then we have from the assumption that \autoref{lemma:induction} holds for $k-1$ that $\H\paren*{ \rv{P}(s) \mid \rv{M}_{\leq t'} } = \H\paren*{ \rv{P}(s) } = \log n$. Thus, we have $\I\paren*{ \rv{N}(s) : \rv{M}_{\leq t'} } \leq \I\paren*{ \rv{P}(s) : \rv{M}_{\leq t'} } = 0$ and the result follows.
	
	If $k > 2$, we have from the assumption that \autoref{lemma:induction} holds for $k-1$ that $\H\paren*{ \rv{P}^{k-1}(s) \mid \rv{M}_{\leq t'} } \geq \paren*{ 1 - \epsilon_{k-1} } \cdot \Delta'_{k-2} \cdot \H\paren*{ \rv{E}_{k-1} }$. Combining with \autoref{lemma:entropypath}, we have that:
	\[
	\I\paren*{ \rv{N}^{k-1}(s) : \rv{M}_{\leq t'} } \leq \I\paren*{ \rv{P}^{k-1}(s) : \rv{M}_{\leq t'} } \leq 2 \epsilon_{k-1} \cdot \Delta'_{k-2} \cdot \H\paren*{ \rv{E}_{k-1} } .
	\]
	Using \autoref{def:mutualinfo} and \autoref{def:cond-entropy}, we get that:
	\begin{equation}
	\label{eq:bad1:1}
	\E_{ M_{\leq t'} }\sq*{ \H\paren*{ \rv{N}^{k-1}(s) } - \H\paren*{ \rv{N}^{k-1}(s) \mid M_{\leq t'} } } \leq 2 \epsilon_{k-1} \cdot \Delta'_{k-2} \cdot \H\paren*{ \rv{E}_{k-1} } .
	\end{equation}
	
	We claim that:
	\begin{claim}
		\label{claim:bad1:1}
		It holds for all $M_{\leq t'}$ that:
		\[
		\H\paren*{ \rv{N}^{k-1}(s) } - \H\paren*{ \rv{N}^{k-1}(s) \mid M_{\leq t'} } \geq - 4 \epsilon_{k-1} \cdot \Delta'_{k-2} \cdot \H\paren*{ \rv{E}_{k-1} } .
		\]
	\end{claim}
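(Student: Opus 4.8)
The plan is to bound $\H\paren*{\rv{N}^{k-1}(s) \mid M_{\le t'}}$ from above by a crude support bound and $\H\paren*{\rv{N}^{k-1}(s)}$ from below by the already-established \autoref{lemma:entropynbrlb}; the two quantities turn out to be so close that the (possibly positive) slack is comfortably absorbed by $4\eps_{k-1} \cdot \Delta'_{k-2} \cdot \H\paren*{\rv{E}_{k-1}}$. Throughout I use that this claim is stated in the branch $k > 2$ of the proof of \autoref{lemma:bad1}, so $k - 1 \ge 2$ and in particular $p \ge 2$.

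First I would observe that, by \autoref{item:obs:smallnbr3}, $\card*{\rv{N}^{k-1}(s)} \le \Delta_{k-1}$ holds with probability $1$ regardless of conditioning, and that $\Delta_{k-1} = \Delta^{k-2} \le \Delta^p = L/(\log n)^{10^{10p}} < n/2$ for $n$ large. Hence $\rv{N}^{k-1}(s) \mid M_{\le t'}$ is supported on the subsets of $V_k$ of size at most $\Delta_{k-1}$, of which there are at most $(\Delta_{k-1}+1)\binom{n}{\Delta_{k-1}} \le n\binom{n}{\Delta_{k-1}}$ (using that $\binom{n}{j}$ increases for $j \le \Delta_{k-1}$). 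By \autoref{lemma:entropy-ub} and \autoref{lemma:binomial},
\[
\H\paren*{\rv{N}^{k-1}(s) \mid M_{\le t'}} \le \log n + \log\binom{n}{\Delta_{k-1}} \le 2\log n + n \cdot \h\paren*{\Delta'_{k-1}} .
\]
On the other hand, since $k-1 \ge 2$, \autoref{lemma:entropynbrlb} gives $\H\paren*{\rv{N}^{k-1}(s)} \ge n \cdot \h\paren*{\Delta'_{k-1}} \cdot \paren*{1 - (\log n)^{-10^{8p}}}$. Subtracting, it remains to prove
\[
2\log n + \frac{n \cdot \h\paren*{\Delta'_{k-1}}}{(\log n)^{10^{8p}}} \le 4\eps_{k-1} \cdot \Delta'_{k-2} \cdot \H\paren*{\rv{E}_{k-1}} .
\]

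For the dominant term I would use the elementary inequality $\h(\lambda x) \ge \lambda\h(x)$ for $\lambda \in [0,1]$ (a consequence of concavity of $\h$ and $\h(0)=0$), applied with $\lambda = 1/\Delta_{k-2}$ and $x = \Delta'_{k-1} = \Delta_{k-2}\cdot\Delta'$, to get $\h\paren*{\Delta'_{k-1}} \le \Delta_{k-2}\cdot\h\paren*{\Delta'}$, so that $n\cdot\h\paren*{\Delta'_{k-1}} \le \Delta'_{k-2}\cdot n^2\cdot\h\paren*{\Delta'} \le 2\cdot\Delta'_{k-2}\cdot\H\paren*{\rv{E}_{k-1}}$, the last step using $\H\paren*{\rv{E}_{k-1}} \ge \tfrac12 n^2\h\paren*{\Delta'}$ from \autoref{cor:entropynbrub}. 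Since $k-1 \in \{2,\dots,p\}$, the exponent defining $\eps_{k-1}$ satisfies $10^{5(p+1)-2(k-1)} \le 10^{5p+1} \le 10^{8p}$, hence $\eps_{k-1} \ge (\log n)^{-10^{8p}}$, so the dominant term is at most $2\eps_{k-1}\cdot\Delta'_{k-2}\cdot\H\paren*{\rv{E}_{k-1}}$. Finally the $2\log n$ is negligible: $\Delta'_{k-2}\cdot\H\paren*{\rv{E}_{k-1}} \ge \tfrac1n\cdot\tfrac12 n^2\h\paren*{\Delta'} = \tfrac12 n\h\paren*{\Delta'} \ge \tfrac12\Delta$ (using $\h(y)\ge y$ for $y \le \tfrac12$), and since $\Delta = \Theta\paren*{n^{\beta/p}/\poly\log n}$ outgrows any fixed power of $\log n$, we get $\tfrac12\Delta\cdot\eps_{k-1} \ge \log n$ for $n$ large. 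Adding the two estimates yields the claim.

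The routine content here is only the binomial estimates of \autoref{lemma:binomial} and the entropy bounds \autoref{lemma:entropynbrlb} and \autoref{cor:entropynbrub}; the single thing to check carefully is that the (doubly-exponential-in-$p$) exponents line up so that $\eps_{k-1} \ge (\log n)^{-10^{8p}}$ and that $\Delta$ is a large enough polynomial in $n$ to swallow the $O(\log n)$ slack — both hold with room to spare given how the parameters were chosen. I do not anticipate a genuine obstacle: the claim is a bookkeeping statement asserting that conditioning on the transcript cannot inflate the entropy of $\rv{N}^{k-1}(s)$ past its trivial support bound by more than the allowed error.
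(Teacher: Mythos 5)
Your proposal is correct and takes essentially the same route as the paper: both upper-bound $\H\paren*{ \rv{N}^{k-1}(s) \mid M_{\leq t'} }$ by the trivial support bound $2\log n + n \cdot \h\paren*{ \Delta'_{k-1} }$ (via \autoref{obs:smallnbr} and \autoref{lemma:binomial}) and compare it against the lower bound on $\H\paren*{ \rv{N}^{k-1}(s) }$ from \autoref{lemma:entropynbrlb}. The only (immaterial) difference is the final bookkeeping: the paper absorbs the slack by passing through $\H\paren*{ \rv{P}^{k-1}(s) }$ and \autoref{lemma:entropypath}, whereas you bound it directly via $\h\paren*{ \Delta'_{k-1} } \leq \Delta_{k-2} \cdot \h\paren*{ \Delta' }$ and \autoref{cor:entropynbrub}, which checks out given the parameter choices.
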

	\begin{proof}
		We derive:
		\begin{align*}
		\H\paren*{ \rv{N}^{k-1}(s) ~\Big|~ M_{\leq t'} } &\leq \H\paren*{ | \rv{N}^{k-1}(s) | ~\Big|~ M_{\leq t'} } + \H\paren*{ \rv{N}^{k-1}(s) ~\Big|~ | \rv{N}^{k-1}(s) |, M_{\leq t'} } \tag{\autoref{lemma:entropy-chain}} \\
		&\leq \log n + \log \binom{n}{\Delta_{k-1}} \tag{\autoref{lemma:entropy-ub} and $\card*{ \rv{N}^{k-1}(s) } \leq \Delta_{k-1}$ by \autoref{obs:smallnbr}} \\
		&\leq 2 \log n + n \cdot \h\paren*{ \Delta'_{k-1} } \tag{\autoref{lemma:binomial}} \\
		&\leq \H\paren*{ \rv{N}^{k-1}(s) } \cdot \paren*{ 1 + 2 \epsilon_{k-1} } \tag{\autoref{lemma:entropynbrlb}} \\
		&\leq \H\paren*{ \rv{N}^{k-1}(s) } + \H\paren*{ \rv{P}^{k-1}(s) } \cdot 2 \epsilon_{k-1} \tag{$\H\paren*{ \rv{N}^{k-1}(s) } \le \H\paren*{ \rv{P}^{k-1}(s) }$}\\
		&\leq \H\paren*{ \rv{N}^{k-1}(s) } + 4 \epsilon_{k-1} \cdot \Delta'_{k-2} \cdot \H\paren*{ \rv{E}_{k-1} } \tag{\autoref{lemma:entropypath}} .
		\end{align*}
	\end{proof}

	Conclude from \autoref{eq:bad1:1} that:
	\[
	\E_{ M_{\leq t'} }\sq*{ \H\paren*{ \rv{N}^{k-1}(s) } - \H\paren*{ \rv{N}^{k-1}(s) ~\Big|~ M_{\leq t'} } + 4 \epsilon_{k-1} \cdot \Delta'_{k-2} \cdot \H\paren*{ \rv{E}_{k-1} } } \leq 10 \epsilon_{k-1} \cdot \Delta'_{k-2} \cdot \H\paren*{ \rv{E}_{k-1} } .
	\]
	As the left hand side above is always non-negative by \autoref{claim:bad1:1}, we can apply Markov's inequality to conclude that:
	\[
	\Pr_{ M_{\leq t'} }\paren*{ \H\paren*{ \rv{N}^{k-1}(s) } - \H\paren*{ \rv{N}^{k-1}(s) ~\Big|~ M_{\leq t'} } \geq 10 \cdot \sqrt{ \epsilon_{k-1} } \cdot \Delta'_{k-2} \cdot \H\paren*{ \rv{E}_{k-1} } } \leq \sqrt{ \epsilon_{k-1} } . 
	\]

\end{proof}

\subsubsection{Proof of~\autoref{lemma:size}}

\begin{reminder}{\autoref{lemma:size}}
	For all $1 < k \leq p + 1$, assuming~\autoref{lemma:induction} holds for $k-1$, and let $\B_1$ be the set promised by \autoref{lemma:bad1}. For all $M_{\leq t'} \notin \B_1$, we have 
	\[
	\card*{ \set*{ i \in [n] \mid \Pr\paren*{ v^{(i)}_k \in \rv{N}^{k-1}(s) ~\Big|~ M_{\leq t'} } \leq \frac{ \Delta'_{k-1} }{ 1 + \eps_k^5 } } } \leq \eps_k^5 \cdot n.
	\]
\end{reminder}
\begin{proof}
	Fix $M_{\leq t'} \notin \B_1$ and let 
	\begin{align*}
	S = \set*{ i \in [n] ~\Big|~ \Pr\paren*{ v^{(i)}_k \in \rv{N}^{k-1}(s) \mid M_{\leq t'} } \leq \frac{ \Delta'_{k-1} }{ 1 + \eps_k^5 } }
	\end{align*}
	for convenience. For $k = 2$, our definitions imply $S = \emptyset$, so we assume $k > 2$ and proceed by contradiction. Suppose that $\card*{S} > \epsilon_k^5 \cdot n$.  For $i \in [n]$, define the indicator random variable $\rv{X}_i$ to be $1$ if and only if $v^{(i)}_k \in \rv{N}^{k-1}(s)$ and define $\alpha_i = \Pr\paren*{ v^{(i)}_k \in \rv{N}^{k-1}(s) \mid M_{\leq t'} } = \Pr\paren*{ \rv{X}_i = 1 \mid M_{\leq t'} }$. By \autoref{lemma:bad1}, we have:
	\begin{align*}
	\H\paren*{ \rv{X}_1 \cdots \rv{X}_n \mid M_{\leq t'} } &\geq \H\paren*{ \rv{N}^{k-1}(s) \mid M_{\leq t'} } \\
	&\geq \H\paren*{ \rv{N}^{k-1}(s) } - 10 \cdot \sqrt{ \epsilon_{k-1} } \cdot \Delta'_{k-2} \cdot \H\paren*{ \rv{E}_{k-1} } \\
	&\geq n \cdot \h\paren*{ \Delta'_{k-1} } \cdot \paren*{ 1 - \epsilon_{k-1} } - 10 \cdot \sqrt{ \epsilon_{k-1} } \cdot \Delta'_{k-2} \cdot \H\paren*{ \rv{E}_{k-1} } \tag{\autoref{lemma:entropynbrlb}} .
	\end{align*}
	
	By \autoref{lemma:entropy-chain} and \autoref{lemma:cond-reduce}, we also have $\H\paren*{ \rv{X}_1 \cdots \rv{X}_n \mid M_{\leq t'} } \leq \sum_{i = 1}^n \h\paren*{ \alpha_i }$ implying that:
	\begin{equation}
	\label{eq:size:1}
	n \cdot \h\paren*{ \Delta'_{k-1} } \cdot \paren*{ 1 - \epsilon_{k-1} } - 10 \cdot \sqrt{ \epsilon_{k-1} } \cdot \Delta'_{k-2} \cdot \H\paren*{ \rv{E}_{k-1} } \leq \sum_{i = 1}^n \h\paren*{ \alpha_i } .
	\end{equation}
	
	Using the notation $\Delta''_{k-1} = \frac{ \Delta'_{k-1} }{ 1 + \epsilon_k^5 }$, we use the following claim to upper bound $\sum_{i = 1}^n \h\paren*{ \alpha_i }$. 
	\begin{claim}
		\label{claim:size:1}
		It holds that:
		\[
		\sum_{i = 1}^n \h\paren*{ \alpha_i } \leq \epsilon_k^5 \cdot n \cdot \h\paren*{ \Delta''_{k-1} } + n \cdot \paren*{ 1 - \epsilon_k^5 } \cdot \h\paren*{ \frac{ \Delta''_{k-1} }{ 1 - \epsilon_k^5 } } .
		\]
	\end{claim}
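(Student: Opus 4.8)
The plan is to prove \autoref{claim:size:1} as a self‑contained extremal inequality about the numbers $\alpha_i$, using only the two pieces of information available in the proof of \autoref{lemma:size}: each of the (more than $\eps_k^5 n$) coordinates $i\in S$ satisfies $\alpha_i\le\Delta''_{k-1}$, and $\sum_{i=1}^n\alpha_i\le\Delta_{k-1}$ (the latter because $\sum_i\rv{X}_i=\card*{\rv{N}^{k-1}(s)}\le\Delta_{k-1}$ by \autoref{obs:smallnbr}, so taking conditional expectations given $M_{\le t'}$ yields $\sum_i\alpha_i\le\Delta_{k-1}$). Thus it suffices to show: for any $(\alpha_1,\dots,\alpha_n)\in[0,1]^n$ with $\sum_i\alpha_i\le\Delta_{k-1}$ and with at least $\eps_k^5 n$ coordinates bounded by $\Delta''_{k-1}$, one has $\sum_i\h(\alpha_i)\le\eps_k^5 n\,\h(\Delta''_{k-1})+(1-\eps_k^5)n\,\h\paren*{\tfrac{\Delta''_{k-1}}{1-\eps_k^5}}$.

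First I would \emph{relax} the constraint. Fix a subset $S_0\subseteq S$ of size $m:=\eps_k^5 n$ (ignoring integrality) and forget that the coordinates of $S\setminus S_0$ are bounded; this can only increase the maximum of $\sum_i\h(\alpha_i)$, so it is enough to work with $S_0$. Writing $\mu=\sum_{i\in S_0}\alpha_i\le m\Delta''_{k-1}$, concavity of $\h$ (Jensen, applied separately to $S_0$ and to $[n]\setminus S_0$) gives $\sum_i\h(\alpha_i)\le m\,\h(\mu/m)+(n-m)\,\h\paren*{\tfrac{\sum_{i\notin S_0}\alpha_i}{n-m}}$. Since $\sum_{i\notin S_0}\alpha_i\le\Delta_{k-1}-\mu$ and $\tfrac{\Delta_{k-1}-\mu}{n-m}\le\tfrac{\Delta_{k-1}}{n-m}=\tfrac{\Delta'_{k-1}}{1-\eps_k^5}<\tfrac12$ (valid for $n$ large, as $\Delta'_{k-1}=\Delta^{k-2}/n=n^{\beta(k-2)/p-1}/\poly\log n=o(1)$ for $2\le k-1\le p$), and $\h$ is increasing on $[0,\tfrac12]$, I obtain
\[
\sum_{i=1}^n\h(\alpha_i)\le g(\mu):=m\,\h\!\paren*{\tfrac{\mu}{m}}+(n-m)\,\h\!\paren*{\tfrac{\Delta_{k-1}-\mu}{n-m}},\qquad \mu\in[0,m\Delta''_{k-1}].
\]

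Next I would show $g$ is maximized at the right endpoint $\mu=m\Delta''_{k-1}$. The function $g$ is concave in $\mu$ (each summand is $\h$ composed with an affine map of $\mu$), so it suffices that $g'(\mu)=\h'(\mu/m)-\h'\paren*{\tfrac{\Delta_{k-1}-\mu}{n-m}}$ is positive at $\mu=m\Delta''_{k-1}$. There $\mu/m=\Delta''_{k-1}$, while the identity $\Delta_{k-1}=n\Delta'_{k-1}=n(1+\eps_k^5)\Delta''_{k-1}$ (immediate from $\Delta''_{k-1}=\Delta'_{k-1}/(1+\eps_k^5)$) gives $\tfrac{\Delta_{k-1}-m\Delta''_{k-1}}{n-m}=\tfrac{\Delta''_{k-1}}{1-\eps_k^5}>\Delta''_{k-1}$; as $\h'(x)=\log\tfrac{1-x}{x}$ is strictly decreasing on $(0,1)$, we get $g'(m\Delta''_{k-1})>0$, and by concavity $g'>0$ on all of $[0,m\Delta''_{k-1}]$. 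Hence $\sum_i\h(\alpha_i)\le g(\mu)\le g(m\Delta''_{k-1})=\eps_k^5 n\,\h(\Delta''_{k-1})+(1-\eps_k^5)n\,\h\paren*{\tfrac{\Delta''_{k-1}}{1-\eps_k^5}}$, which is exactly the claimed bound.

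All computations here are routine; the only delicate point is the clean evaluation $g'(m\Delta''_{k-1})>0$, which hinges on the exact relation among $\Delta_{k-1},\Delta'_{k-1},\Delta''_{k-1}$ together with the side conditions $\Delta''_{k-1}<\tfrac12$ and $\Delta'_{k-1}/(1-\eps_k^5)<\tfrac12$ needed to keep all arguments in the increasing, concave regime of $\h$ — both follow from the bound $\Delta=(L/(\log n)^{10^{10p}})^{1/p}\le n^{1/p}$ in the construction. (Alternatively one could keep all $\card*{S}$ capped coordinates and argue that the extremal value is non‑increasing in $\card*{S}$, but relaxing down to exactly $\eps_k^5 n$ capped coordinates sidesteps that and lands on the target expression exactly.)
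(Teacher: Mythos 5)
Your proof is correct and reaches the paper's exact bound, but by a somewhat different route than the paper's own argument. The paper splits into cases according to whether $\sum_i \alpha_i$ exceeds $n\Delta''_{k-1}$; in the hard case it introduces an interpolation parameter $\lambda$ so that the capped coordinates of $S$ together with a $\lambda$-fraction of the remaining ones average to exactly $\Delta''_{k-1}$, applies concavity of $\h$ with those weights, then invokes $\card*{S} > \eps_k^5 \cdot n$ (an implicit monotonicity-in-$\card*{S}$ step) and finally the bound $\sum_i \alpha_i \le \Delta_{k-1}$ from \autoref{obs:smallnbr} together with monotonicity of $\h$. You instead keep only $\eps_k^5 \cdot n$ capped coordinates, apply Jensen separately to the two blocks, and reduce to maximizing the one-variable concave function $g(\mu)$ on $[0, m\Delta''_{k-1}]$, where the sign of $g'$ at the right endpoint follows from $\Delta'_{k-1} = (1+\eps_k^5)\Delta''_{k-1}$ and the monotonicity of $\h'(x)=\log\frac{1-x}{x}$; this needs no case split and makes the step the paper justifies only by ``As $\card*{S} > \eps_k^5 \cdot n$'' explicit (and unnecessary). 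The one loose end, which you flagged yourself, is that $\eps_k^5 \cdot n$ need not be an integer, so you cannot literally choose $S_0$ of that size; the routine fix is a fractional-weight Jensen (split one coordinate of $S$ between the two blocks so that the capped block has total weight exactly $\eps_k^5 \cdot n$), after which your $\mu$-argument goes through verbatim --- and this weighting device is in effect the same interpolation the paper performs with its $\lambda$. Your side conditions are also in order: $\sum_i \alpha_i \le \Delta_{k-1}$ holds by taking conditional expectations of $\card*{\rv{N}^{k-1}(s)} \le \Delta_{k-1}$, and all arguments of $\h$ you compare lie below $\tfrac12$ since $\Delta'_{k-1} = o(1)$ in the regime $2 < k \le p+1$.
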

	\begin{proof}
		We break the proof into two cases. The easy case is when $\sum_{i = 1}^n \alpha_i \leq n \Delta''_{k-1}$. In this case, we simply use the concavity and monotonicity of $\h(\cdot)$ on $0 < x < \frac{1}{2}$ to get:
		\[
		\sum_{i = 1}^n \h\paren*{ \alpha_i } \leq n \cdot \h\paren*{ \Delta''_{k-1} } \leq \epsilon_k^5 \cdot n \cdot \h\paren*{ \Delta''_{k-1} } + n \cdot \paren*{ 1 - \epsilon_k^5 } \cdot \h\paren*{ \frac{ \Delta''_{k-1} }{ 1 - \epsilon_k^5 } } ,
		\]
		and the claim follows. We now deal with the hard case $\sum_{i = 1}^n \alpha_i > n \Delta''_{k-1}$. In this case, by the definition of $S$, there exists a $\lambda \in [0,1]$ satisfying $\frac{ \sum_{i \in S} \alpha_i + \lambda \cdot \sum_{i \notin S} \alpha_i  }{ \card*{S} + \lambda \cdot \card*{\overline{S}} } = \Delta''_{k-1}$. This is equivalent to:
		\begin{equation}
		\label{eq:lambda}
		\sum_{i = 1}^n \alpha_i - \Delta''_{k-1} \cdot \card*{S} =  \paren*{ 1 - \lambda } \cdot \sum_{i \notin S} \alpha_i + \lambda \cdot \Delta''_{k-1} \cdot \card*{\overline{S}} .
		\end{equation}
		Using the concavity of $\h$ multiple times, we have:
		\begin{align*}
		\sum_{i = 1}^n \h\paren*{ \alpha_i } &\leq \sum_{i \in S} \h\paren*{ \alpha_i } + \lambda \cdot \sum_{i \notin S} \h\paren*{ \alpha_i } + \paren*{ 1- \lambda } \cdot \sum_{i \notin S} \h\paren*{ \alpha_i } \\
		&\leq \paren*{ \card*{S} + \lambda \cdot \card*{\overline{S}} } \cdot \h\paren*{ \Delta''_{k-1} } + \paren*{ 1- \lambda } \cdot \card*{\overline{S}} \cdot \h\paren*{ \frac{ \sum_{i \notin S} \alpha_i }{ \card*{\overline{S}} } } \\
		&\leq \card*{S} \cdot \h\paren*{ \Delta''_{k-1} } + \card*{\overline{S}} \cdot \h\paren*{ \frac{ \sum_{i = 1}^n \alpha_i - \Delta''_{k-1} \cdot \card*{S} }{ \card*{\overline{S}} } } \tag{\autoref{eq:lambda}} \\
		&\leq \epsilon_k^5 \cdot n \cdot \h\paren*{ \Delta''_{k-1} } + n \cdot \paren*{ 1 - \epsilon_k^5 } \cdot \h\paren*{ \frac{ \sum_{i = 1}^n \alpha_i - n \cdot \epsilon_k^5 \cdot \Delta''_{k-1} }{ n \cdot \paren*{ 1 - \epsilon_k^5 } } } \tag{As $\card*{S} > \epsilon_k^5 \cdot n$} .
		\end{align*}
		To continue, note by \autoref{obs:smallnbr} that $\sum_{i = 1}^n \alpha_i = \E\sq*{ \card*{ \rv{N}^{k-1}(s) } \mid M_{\leq t'} } \leq \Delta_{k-1}$. This gives:
		\[
		\sum_{i = 1}^n \h\paren*{ \alpha_i } \leq \epsilon_k^5 \cdot n \cdot \h\paren*{ \Delta''_{k-1} } + n \cdot \paren*{ 1 - \epsilon_k^5 } \cdot \h\paren*{ \frac{ \Delta''_{k-1} }{ 1 - \epsilon_k^5 } } .
		\]
		
	\end{proof}

	Combining \autoref{claim:size:1} and \autoref{eq:size:1} and rearranging, we get:
	\begin{multline}
	\label{eq:size:2}
	\h\paren*{ \Delta'_{k-1} } - \epsilon_k^5 \cdot \h\paren*{ \Delta''_{k-1} } - \paren*{ 1 - \epsilon_k^5 } \cdot \h\paren*{ \frac{ \Delta''_{k-1} }{ 1 - \epsilon_k^5 } }  \\
	\leq \epsilon_{k-1} \cdot \h\paren*{ \Delta'_{k-1} } + \frac{10}{n} \cdot \sqrt{ \epsilon_{k-1} } \cdot \Delta'_{k-2} \cdot \H\paren*{ \rv{E}_{k-1} } .
	\end{multline}
	
	To derive a contradiction, we show that \autoref{eq:size:2} cannot hold. For this, we first lower bound the left hand side. Recall that $\h(x) = - x \log(x) - (1-x) \log (1-x)$ and observe that both these terms are concave. Thus, we can lower bound:
	\begin{align*}
	&\h\paren*{ \Delta'_{k-1} } - \epsilon_k^5 \cdot \h\paren*{ \Delta''_{k-1} } - \paren*{ 1 - \epsilon_k^5 } \cdot \h\paren*{ \frac{ \Delta''_{k-1} }{ 1 - \epsilon_k^5 } } \\
	&\hspace{0.5cm}\geq \Delta'_{k-1} \cdot \log \frac{1}{ \Delta'_{k-1} } - \epsilon_k^5 \cdot \Delta''_{k-1} \cdot \log \frac{1}{ \Delta''_{k-1} } - \Delta''_{k-1} \cdot \log \frac{ 1 - \epsilon_k^5 }{ \Delta''_{k-1} } \\
	&\hspace{0.5cm}\geq \Delta''_{k-1} \cdot \paren*{ \epsilon_k^5 \cdot \log \frac{1}{ 1 + \epsilon_k^5 } + \log \frac{1}{ 1 - \epsilon_k^{10} } } \tag{As $\Delta''_{k-1} = \frac{ \Delta'_{k-1} }{ 1 + \epsilon_k^5 }$} \\
	&\hspace{0.5cm}\geq \log \e \cdot \Delta''_{k-1} \cdot \paren*{\frac{ \epsilon_k^{15} }{2} + \frac{ \epsilon_k^{20} }{6} } \tag{As $\log (1 + x) \leq \log \e \cdot ( x - x^2/2 + x^3/3 )$ and $\log \frac{1}{1 - x} \geq \log \e \cdot ( x + x^2/2)$} .
	\end{align*}
	Simplifying, we get:
	\begin{equation}
	\label{eq:size:3}
	\h\paren*{ \Delta'_{k-1} } - \epsilon_k^5 \cdot \h\paren*{ \Delta''_{k-1} } - \paren*{ 1 - \epsilon_k^5 } \cdot \h\paren*{ \frac{ \Delta''_{k-1} }{ 1 - \epsilon_k^5 } } \geq \Delta'_{k-1} \cdot \epsilon_k^{25} .
	\end{equation}
	We now upper bound the right hand side of \autoref{eq:size:2}. 
	\begin{align*}
	&\epsilon_{k-1} \cdot \h\paren*{ \Delta'_{k-1} } + \frac{10}{n} \cdot \sqrt{ \epsilon_{k-1} } \cdot \Delta'_{k-2} \cdot \H\paren*{ \rv{E}_{k-1} } \\
	&\hspace{0.5cm}\leq \epsilon_k^{100} \cdot \h\paren*{ \Delta'_{k-1} } + \frac{10}{n} \cdot \epsilon_k^{50} \cdot \Delta'_{k-2} \cdot \H\paren*{ \rv{E}_k } \tag{Definition of $\epsilon_k$ and $k > 2$} \\
	&\hspace{0.5cm}\leq \epsilon_k^{100} \cdot \h\paren*{ \Delta'_{k-1} } + \epsilon_k^{40} \cdot \Delta_{k-2} \cdot \h\paren*{ \Delta' } \tag{\autoref{cor:entropynbrub}} .
	\end{align*}
	Now, note that, for $\frac{1}{n} < x < \frac{1}{2}$, we have $\h(x) \leq -2x \log x \leq 2x \log n$. We get:
	\begin{equation}
	\label{eq:size:4}
	\epsilon_{k-1} \cdot \h\paren*{ \Delta'_{k-1} } + \frac{10}{n} \cdot \sqrt{ \epsilon_{k-1} } \cdot \Delta'_{k-2} \cdot \H\paren*{ \rv{E}_{k-1} } \leq 2 \log n \cdot \epsilon_k^{30} \cdot \Delta'_{k-1} .
	\end{equation}
	
	\autoref{eq:size:2}, \autoref{eq:size:3}, and \autoref{eq:size:4} cannot all hold together, a contradiction.
	
\end{proof}

\begin{reminder}{\autoref{lemma:bad2}}
	There exists a set $\B^* \subseteq \supp\paren*{ \rv{M}_{\leq T} }$ such that $\Pr\paren*{ \rv{M}_{\leq T} \in \B^* } \leq \sqrt{ \eps_{p+1} }$ and for all $M_{\leq T} \notin \B^*$, we have:
	\[
	\H\paren*{ \rv{P}^{p+1}(s) \mid M_{\leq T} } \geq \H\paren*{ \rv{P}^{p+1}(s) } - 10 \cdot \sqrt{ \eps_{p+1} } \cdot \Delta'_p \cdot \H\paren*{ \rv{E}_{p+1} } .
	\]
\end{reminder}

\begin{proof}
As $\H\paren*{ \rv{P}^{p+1}(s) \mid \rv{M}_{\leq T} } \geq \paren*{ 1 - \eps_{p+1} } \cdot \Delta'_p \cdot \H\paren*{ \rv{E}_{p+1} }$, we can conclude from \autoref{lemma:entropypath} that:
\[
\I\paren*{ \rv{P}^{p+1}(s) : \rv{M}_{\leq T} } \leq 2 \cdot \eps_{p+1} \cdot \Delta'_p \cdot \H\paren*{ \rv{E}_{p+1} } .
\]
Using \autoref{def:mutualinfo} and \autoref{def:cond-entropy}, we get that:
\begin{equation}
\label{eq:bad2:1}
\E_{M_{\leq T}}\sq*{ \H\paren*{ \rv{P}^{p+1}(s) } - \H\paren*{ \rv{P}^{p+1}(s) \mid M_{\leq T} } } \leq 2 \cdot \eps_{p+1} \cdot \Delta'_p \cdot \H\paren*{ \rv{E}_{p+1} } .
\end{equation}
Using \autoref{lemma:entropypath} and \autoref{lemma:entropypathlb}, we get that, for all $M_{\leq T}$:
\begin{equation}
\label{eq:bad2:2}
\H\paren*{ \rv{P}^{p+1}(s) } - \H\paren*{ \rv{P}^{p+1}(s) \mid M_{\leq T} } \geq - 2 \cdot \eps_{p+1} \cdot \Delta'_p \cdot \H\paren*{ \rv{E}_{p+1} } .
\end{equation}
Conclude from \autoref{eq:bad2:1} that:
\[
\E_{M_{\leq T}}\sq*{ \H\paren*{ \rv{P}^{p+1}(s) } - \H\paren*{ \rv{P}^{p+1}(s) \mid M_{\leq T} } + 2 \cdot \eps_{p+1} \cdot \Delta'_p \cdot \H\paren*{ \rv{E}_{p+1} } } \leq 4 \cdot \eps_{p+1} \cdot \Delta'_p \cdot \H\paren*{ \rv{E}_{p+1} } .
\]
As the left hand side above is always non-negative by \autoref{eq:bad2:2}, we can apply Markov's inequality to conclude that:
\[
\Pr_{M_{\leq T}}\sq*{ \H\paren*{ \rv{P}^{p+1}(s) } - \H\paren*{ \rv{P}^{p+1}(s) \mid M_{\leq T} } \geq 10 \cdot \sqrt{ \eps_{p+1} } \cdot \Delta'_p \cdot \H\paren*{ \rv{E}_{p+1} } } \leq \sqrt{ \eps_{p+1} } .
\]
The lemma follows.

\end{proof}


\section{Lower Bounds against Starting Vertex Oblivious Streaming Algorithms}\label{sec:lowb-input-oblivious}

In this section we prove~\autoref{theo:lowb-main-oblivious-algo} (restated below). See also~\autoref{defi:starting-vertex-oblivious} for a formal definition of starting vertex oblivious streaming algorithm for simulating random walks.

\begin{reminder}{\autoref{theo:lowb-main-oblivious-algo}}
	Let $n \geq 1$ be a sufficiently large integer and let $L$ be a integer satisfying that $L \in [\log^{40} n, n]$.  Any randomized algorithm that is {\em oblivious to the start vertex} and given an $n$-vertex {\em directed} graph $G = (V,E)$ and a starting vertex $\ustart \in V$, samples from a distribution $\calD$ such that $\tvd{\calD}{\RW^G_L(\ustart)} \le 1 - \frac{1}{ \log^{10} n }$ requires  $\WT{\Omega}(n \cdot \sqrt{L})$ space.
\end{reminder}





The following inequality will be useful for the proof.
\begin{lemma}[{Fano's inequality (see, \eg,~\cite[Page~38]{CT06})}]\label{lemma:Fano}
	Let $\rv{Z}$ and $\rv{Z}'$ be two jointly distributed random variable over the same set $\mathcal{Z}$, it holds that
	\[
	\Pr[\rv{Z} \ne \rv{Z}'] \ge \frac{\H(\rv{Z} ~|~ \rv{Z}') - 1}{\log |\mathcal{Z}|} ~~\text{and}~~ \Pr[\rv{Z} = \rv{Z}'] \le \frac{\log |\calZ| - \H(\rv{Z} ~|~ \rv{Z}') + 1}{\log |\mathcal{Z}|}.
	\]
\end{lemma}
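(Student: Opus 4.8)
The plan is to give the textbook proof of Fano's inequality by applying the chain rule for entropy twice to the pair $(\rv{E}, \rv{Z})$ conditioned on $\rv{Z}'$, where $\rv{E}$ is the error indicator that equals $1$ when $\rv{Z}\neq\rv{Z}'$ and $0$ otherwise, and where $p_e = \Pr[\rv{Z}\neq\rv{Z}']$. Since the second displayed inequality is an immediate algebraic consequence of the first (using $\Pr[\rv{Z}=\rv{Z}']=1-p_e$), it suffices to establish $p_e \ge \big(\H(\rv{Z}\mid\rv{Z}')-1\big)/\log|\calZ|$.

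First I would note that $\rv{E}$ is a deterministic function of $(\rv{Z},\rv{Z}')$, so $\H(\rv{E}\mid\rv{Z}\rv{Z}')=0$ and the chain rule (\autoref{lemma:entropy-chain}) yields $\H(\rv{E}\rv{Z}\mid\rv{Z}') = \H(\rv{Z}\mid\rv{Z}')$. Expanding the same quantity in the opposite order gives $\H(\rv{E}\rv{Z}\mid\rv{Z}') = \H(\rv{E}\mid\rv{Z}') + \H(\rv{Z}\mid\rv{E}\rv{Z}')$. Since $\rv{E}$ takes only two values, \autoref{lemma:cond-reduce} together with \autoref{lemma:entropy-ub} give $\H(\rv{E}\mid\rv{Z}')\le\H(\rv{E})\le 1$. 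For the remaining term I would use \autoref{def:cond-entropy} to split on the value of $\rv{E}$: conditioned on $\rv{E}=0$ one has $\rv{Z}=\rv{Z}'$, so that contribution is $0$; conditioned on $\rv{E}=1$ (and on any fixed value of $\rv{Z}'$) the variable $\rv{Z}$ ranges over at most $|\calZ|-1\le|\calZ|$ values, so \autoref{lemma:entropy-ub} bounds its entropy by $\log|\calZ|$. Averaging over $\rv{E}$ (which equals $1$ with probability $p_e$) gives $\H(\rv{Z}\mid\rv{E}\rv{Z}')\le p_e\log|\calZ|$. Combining the three estimates, $\H(\rv{Z}\mid\rv{Z}')\le 1 + p_e\log|\calZ|$, and rearranging — here using that the lemma is only invoked with $|\calZ|\ge 2$ so that $\log|\calZ|>0$, while for $|\calZ|=1$ the first inequality is vacuous since its numerator is $\le 0$ — proves the claim, and then $\Pr[\rv{Z}=\rv{Z}']=1-p_e$ gives the second inequality.

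I do not anticipate any genuine obstacle: this is the classical Fano inequality and every ingredient — the entropy chain rule, the fact that conditioning does not increase entropy, the bound of $\log(\text{support size})$ on entropy, and the definition of conditional entropy — is already available in the information-theoretic preliminaries of \autoref{sec:info}. The only points requiring a little care are the observation that $\rv{E}$ is determined by $(\rv{Z},\rv{Z}')$, the two-way case split on $\rv{E}$, and the routine algebra converting the lower bound on $p_e$ into the stated upper bound on $\Pr[\rv{Z}=\rv{Z}']$.
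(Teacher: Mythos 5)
Your proof is correct: it is the classical Fano argument (chain-rule expansion of $\H(\rv{E}\rv{Z}\mid\rv{Z}')$ with the error indicator $\rv{E}$, bounding $\H(\rv{E}\mid\rv{Z}')\le 1$ and $\H(\rv{Z}\mid\rv{E}\rv{Z}')\le p_e\log|\calZ|$), and every ingredient you invoke is indeed available in \autoref{sec:info}. The paper does not prove this lemma at all but simply cites Cover--Thomas, and your argument is exactly the standard proof being cited, including the correct handling of the degenerate case $|\calZ|=1$ and the routine passage from the first inequality to the second.
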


We will also need the following variant of the standard $\INDEX$ problem.

\begin{definition}[Multi-output generalization of $\INDEX$]\label{defi:INDEX-one-way}
	In the $\INDEX_{m,\ell}$ problem, Alice gets $\ell$ strings $X_1,\dotsc,X_\ell \in \{0,1\}^m$ and Bob gets an index $i \in [\ell]$. Alice sends a message to Bob and then Bob is required to output the string $X_i$.
\end{definition}

The lower bound for $\INDEX_{m,\ell}$ below will be crucial for our proof of~\autoref{theo:lowb-main-oblivious-algo}.

\begin{lemma}[One-way communication lower bound for $\INDEX_{m,\ell}$]\label{lemma:lowb-INDEX}
	Let $\calD_{\sf \INDEX}^{m,\ell}$ be the input distribution that Alice gets $\ell$ independent random strings each is uniformly distributed over $\{0,1\}^m$ and Bob gets a uniformly random index from $[\ell]$ that is independent of Alice's input. Solving $\INDEX_{m,\ell}$ over $\calD_{\sf \INDEX}$ with success probability at least $1/\log^{15} m$ requires Alice to send at least $m\ell/\log^{20} m$ bits to Bob.
\end{lemma}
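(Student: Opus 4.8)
The plan is to prove the communication lower bound for $\INDEX_{m,\ell}$ via an information-theoretic argument. Let $M$ denote Alice's message, and let $\rv{X} = (\rv{X}_1, \dots, \rv{X}_\ell)$ be her input and $\rv{i}$ Bob's index, all drawn from $\calD_{\sf \INDEX}^{m,\ell}$. First I would observe that, since $\rv{i}$ is independent of $\rv{X}$ and the protocol is one-way, the message $\rv{M}$ is independent of $\rv{i}$, and so for each fixed value $i \in [\ell]$ we have $\H(\rv{X}_i \mid \rv{M}, \rv{i} = i) = \H(\rv{X}_i \mid \rv{M})$. Summing the chain rule over blocks and using that conditioning reduces entropy (\autoref{lemma:cond-reduce}, \autoref{lemma:entropy-chain}),
\[
\sum_{i=1}^\ell \H(\rv{X}_i \mid \rv{M}) \ge \H(\rv{X} \mid \rv{M}) \ge \H(\rv{X}) - \H(\rv{M}) = m\ell - \H(\rv{M}).
\]
If Alice sends at most $m\ell/\log^{20} m$ bits, then $\H(\rv{M}) \le m\ell / \log^{20} m$, so the average of $\H(\rv{X}_i \mid \rv{M})$ over $i$ is at least $m(1 - \log^{-20} m)$.

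Next I would fix a uniformly random $i$ and consider Bob's output $\rv{Z}' = \rv{Z}'(\rv{M}, \rv{i})$, which is a function of $\rv{M}$ and $\rv{i}$ and takes values in $\{0,1\}^m$. Applying Fano's inequality (\autoref{lemma:Fano}) conditioned on $\rv{i} = i$ and then averaging over $i$, the success probability is bounded by
\[
\Pr[\rv{X}_{\rv{i}} = \rv{Z}'] = \E_{i}\big[\Pr[\rv{X}_i = \rv{Z}' \mid \rv{i}=i]\big] \le \E_i\left[ \frac{ m - \H(\rv{X}_i \mid \rv{Z}', \rv{i}=i) + 1}{m} \right].
\]
Since $\rv{Z}'$ is a deterministic function of $(\rv{M}, \rv{i})$, we have $\H(\rv{X}_i \mid \rv{Z}', \rv{i}=i) \ge \H(\rv{X}_i \mid \rv{M}, \rv{i} = i) = \H(\rv{X}_i \mid \rv{M})$. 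Plugging in the average entropy bound from the first paragraph yields
\[
\Pr[\rv{X}_{\rv{i}} = \rv{Z}'] \le \frac{m - m(1 - \log^{-20}m) + 1}{m} = \frac{1}{\log^{20} m} + \frac{1}{m},
\]
which is smaller than $1/\log^{15} m$ for $m$ large enough. This contradicts the assumed success probability, establishing the lower bound.

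The argument is essentially routine once the independence structure is set up correctly, so I do not anticipate a serious obstacle; the one point requiring care is handling the randomness of the protocol. To reduce to the deterministic case I would fix the best choice of public/private coins by an averaging (Yao-style) argument — since the claimed success probability $1/\log^{15} m$ is over both the input distribution and the protocol's randomness, there exists a deterministic protocol achieving at least that success probability over $\calD_{\sf \INDEX}^{m,\ell}$, and then $\rv{Z}'$ really is a deterministic function of $(\rv{M},\rv{i})$ as used above. A second minor subtlety is the additive $+1$ and $1/m$ slack terms in Fano; these are harmless since $m \ge \log^{40} n \to \infty$, and the gap between $\log^{-20} m$ and $\log^{-15} m$ absorbs them comfortably. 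I would state the lemma's proof in this order: (i) reduce to deterministic protocols; (ii) lower bound $\sum_i \H(\rv{X}_i \mid \rv{M})$ using independence and the entropy chain rule; (iii) apply Fano per-index and average; (iv) combine to get the contradiction.
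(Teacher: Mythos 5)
Your proposal is correct and follows essentially the same route as the paper's proof: reduce to deterministic protocols via Yao's principle, bound the average per-block information $\E_i[\I(\rv{X}_i : \rv{M})]$ (you phrase it equivalently as a lower bound on $\E_i[\H(\rv{X}_i \mid \rv{M})]$) using the chain rule and independence of the blocks, and then apply Fano's inequality per index and average. The only difference is cosmetic (entropy vs.\ mutual-information notation), so no further comparison is needed.
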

\begin{proof}
	\newcommand{\rvX}{\rv{X}}
	\newcommand{\rvM}{\rv{M}}
	\newcommand{\rvI}{\rv{I}}
	Over the input distribution $\calD_{\sf \INDEX}^{m,\ell}$, Alice gets $\ell$ strings $\rvX_1,\dotsc,\rvX_{\ell}$, all distributed uniformly over $\{0,1\}^m$. Bob gets a uniformly random index $\rvI$ from $[\ell]$. By Yao's minimax theorem, to prove the theorem it suffices to bound the success probability of all deterministic one-way commutation protocols between Alice and Bob in which Alice sends at most $m\ell/\log^{20} m$ bits. In the following we fix such a protocol.
	
	Let $\rvM = \rvM(\rvX_1,\dotsc,\rvX_{\ell})$ be the message sent from Alice to Bob. We need the following claim.
	
	\begin{claim}\label{claim:mutual-info-bound}
		It holds that
		\[
		\E_{i \in [\ell]} \I(\rvX_i : \rvM) \le m/\log^{20} m.
		\]
	\end{claim}
	\begin{proof}
		For $i \in [\ell]$, let $\rvX_{<i} = (\rvX_1,\dotsc,\rvX_{i-1})$, by~\autoref{fact:infchain} and~\autoref{lemma:addcondition}, we have
		\[
		\sum_{i=1}^{\ell} \I(\rvX_i : \rvM) \le \sum_{i=1}^{\ell} \I(\rvX_i : \rvM ~|~ \rvX_{<i}) = \I(\rvX_1,\dotsc,\rvX_\ell : \rvM).
		\]
		Further noting that $\I(\rvX_1,\dotsc,\rvX_\ell : \rvM) \le |\rvM| \le m\ell/\log^{20} m$, the claim follows by taking an average.
	\end{proof}
	
	Now, let $f_i(M)$ be the (deterministic) output of Bob when receiving message $M$ from Alice and getting input $i$. Since Alice and Bob have independent inputs, the success probability of the protocol can be written as $\E_{i \in [\ell]} \Pr[f_i(\rvM) = \rvX_i]$.
	
	From the definition of mutual information, we have
	\[
	\H(\rvX_i ~|~ f_i(\rvM)) = \H(\rvX_i) - \I(\rvX_i : f_i(\rvM)) = m - \I(\rvX_i : f_i(\rvM)).
	\]
	
	Combing the above with~\autoref{lemma:Fano}, we have
	\begin{align*}
	\E_{i \in [\ell]} \Pr[f_i(\rvM) = \rvX_i]  &\le \frac{\E_{i \in [\ell]} \I(\rvX_i : f_i(\rvM)) + 1}{m} \\
											   &\le \frac{\E_{i \in [\ell]} \I(\rvX_i : \rvM) + 1}{m} \\
											   &\le 2/\log^{20} m \le 1/\log^{15} m,
	\end{align*}
	which completes the proof.
\end{proof}

Now we are ready to prove~\autoref{theo:lowb-main-oblivious-algo}.
\begin{proofof}{\autoref{theo:lowb-main-oblivious-algo}}
	Let $\tau = \sqrt{L}/\log^{20} n$. For a string $X \in \{0,1\}^{\tau^2}$ and $(i,j) \in [\tau]^2$, we use $X_{i,j}$ to denote the $((i-1)\tau + j)$-th bit in $X$. We will also need the following construction of gadget graphs.
	
	\begin{construction}{Gadget Construction $H_{\tau}(X)$}
		
		\begin{itemize} 

			\item \textbf{Setup:} Given a string $X \in \{0,1\}^{\tau^2}$.
			
			\item \textbf{Vertices:} We construct a layered graph $G$ with $2$ layers $V_1, V_2$ satisfying $\card*{V_1} = \tau$ and $\card*{V_2} = \tau + 1$. For convenience, we always use $V_{i,j}$ to denote the $j$-th vertex in the layer $V_i$.
			
			We will call $V_{2,\tau+1}$ as the starting vertex of $H_{\tau}(X)$.
			
			\item \textbf{Edges:} For every $(i,j) \in [\tau]^2$, we first add an edge from $V_{2,j}$ to $V_{1,i}$, and then also add an edge from $V_{1,i}$ to $V_{2,j}$ if $X_{i,j} = 1$. For every $i \in [\tau]$, we add an edge from $V_{1,i}$ to $V_{2,\tau + 1}$ and another edge from $V_{2,\tau + 1}$ back to $V_{1,i}$.
			
			\item \textbf{Edge ordering:} The edges are given in the lexicographically order. (Note that the edge ordering is not important for the lower bound, and we specify it only for concreteness.)
		\end{itemize}
	\end{construction}

\newcommand{\Xrec}{X_{\sf rec}}

	Now, let $m = \tau^2$ and $\ell = n/(2\tau + 1)$. Suppose there is a starting vertex oblivious streaming algorithm $\algo$ for simulating $L$-step random walks with space complexity $n \cdot \sqrt{L} / \log^{100} n$ and statistical distance at most $1 - 1/\log^{10} n$, we show it implies a one-way communication protocol solving $\INDEX_{m,\ell}$ over $\calD_{\sf \INDEX}^{m,\ell}$ that contradicts~\autoref{lemma:lowb-INDEX}.
	
	Let $\prealgo$ and $\sampalgo$ be the preprocessing subroutine and the sampling subroutine of $\algo$, respectively. The protocol is described as follows:
	
	\begin{construction}{Protocol $\Pi$ for $\INDEX_{m,\ell}$}
		\begin{enumerate} 
			\item Given input strings $X_1,\dotsc,X_\ell \in \{0,1\}^{\tau^2}$, Alice generates a graph $G = \bigsqcup_{i=1}^{\ell} H_{\tau}(X_i)$. That is, $G$ is an $n$-vertex graphs consists of $\ell$ clusters with the $i$-th cluster being $H_\tau(X_i)$. From now on, we will use $H_{\tau}(X_i)$ to denote the corresponding subgraph of $G$.
			
			\item Alice then simulates the preprocessing subroutine $\prealgo$ on the graph $G$, and sends its output to Bob.
			
			\item Bob gets an index $i \in [\ell]$ and sets the starting vertex in $H_\tau(X_i)$ to be $\ustart$. Bob then simulates $\sampalgo$ with Alice's message and $\ustart$ as the input to obtain a walk $W$.
			
			\item Given an $L$-step walk $\bar{W}$ starting from the starting vertex of $H_{\tau}(X_i)$, we define a string $\Xrec(\bar{W}) \in \{0,1\}^{\tau^2}$ as follows: let $V_1,V_2$ be the two layers in $H_\tau(X_i)$, $\Xrec(W)_{i,j} = 1$ if and only if $\bar{W}$ passes an edge from $V_{1,i}$ to $V_{2,j}$.
			
			\item Bob outputs $\Xrec(W)$.
		\end{enumerate}
	\end{construction}

	We first show that with high probability, an $L$-step random walk starting from the starting vertex in $H_{\tau}(X_i)$ determines the string $X_i$. Formally,  The following claim captures what we need.
	
	\begin{claim}\label{claim:recover}
		For every $X \in \{0,1\}^{\tau^2}$, letting $\ustart$ be the starting vertex of $H_\tau(X)$, it holds that
		\[
		\Pr_{W \sim \RW_{L}^{H_\tau(X)}(\ustart)}[\Xrec(W) = X] \ge 1 - \exp(-\log^{20} n).
		\]
	\end{claim}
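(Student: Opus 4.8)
The plan is to show that an $L$-step random walk from $\ustart=V_{2,\tau+1}$ inside a single gadget $H_\tau(X)$ traverses every edge of the form $(V_{1,i},V_{2,j})$ with $X_{i,j}=1$ at least once, except with probability $\exp(-\log^{20}n)$; once all such edges are traversed, $\Xrec(W)$ recovers $X$ exactly (the bits $X_{i,j}=0$ correspond to absent edges, which $W$ trivially cannot traverse, so those bits are automatically correct). First I would record the basic structure of a walk in $H_\tau(X)$: the graph is bipartite between $V_1$ and $V_2\cup\{\ustart\}$, so from $\ustart$ the walk alternates $\ustart\to V_1\to\{\,\ustart\ \text{or}\ V_2\,\}\to V_1\to\cdots$. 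The out-degree of $\ustart$ is $\tau$ (edges to each $V_{1,i}$), the out-degree of $V_{1,i}$ is $1+(\text{number of }j\text{ with }X_{i,j}=1)\le \tau+1$, and the out-degree of each $V_{2,j}$ is $\tau$ (edges back to each $V_{1,i}$). In particular, from any current vertex the walk reaches a prescribed next vertex with probability at least $1/(\tau+1)$, so within any window of $O(1)$ steps the walk visits a prescribed pair $(V_{1,i},V_{2,j})$ consecutively with probability $\Omega(1/\tau^2)$.

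The key step is a ``renewal'' argument exploiting that the walk returns to $\ustart$ frequently. From $\ustart$, with probability exactly $1/\tau$ the walk goes to $V_{1,i}$, and then from $V_{1,i}$ with probability $\ge 1/(\tau+1)$ it goes to $V_{2,j}$ (if $X_{i,j}=1$); so in a single ``excursion'' of length $2$ starting at $\ustart$, the walk traverses the specific edge $(V_{1,i},V_{2,j})$ with probability $\ge 1/(\tau(\tau+1))\ge 1/(2\tau^2)$. Moreover, regardless of where the length-$2$ excursion ends, the walk is back in $V_2\cup\{\ustart\}$, and from $V_2$ it returns to $\ustart$ with probability $1/\tau$ within two more steps; so every $O(1)$ steps there is a fresh independent-ish chance, conditioned on the past, to return to $\ustart$ and then hit the target edge. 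Concretely I would fix a target edge $e=(V_{1,i},V_{2,j})$ with $X_{i,j}=1$, partition the first $L$ steps into $\Theta(L)$ disjoint blocks of constant length, and lower-bound, via the Markov property, the conditional probability that a given block traverses $e$ by some $c/\tau^2$ with $c$ an absolute constant. Then $\Pr[e\text{ never traversed in }L\text{ steps}]\le (1-c/\tau^2)^{\Theta(L/\tau^2\cdot \tau^2/L \cdot L)}$ — more carefully, there are $\Theta(L)$ blocks, each traversing $e$ with conditional probability $\ge c/\tau^2$, so the failure probability is at most $(1-c/\tau^2)^{\Theta(L)} \le \exp(-\Omega(L/\tau^2))$. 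Since $\tau=\sqrt L/\log^{20}n$, we have $L/\tau^2 = \log^{40}n$, giving failure probability $\le \exp(-\Omega(\log^{40}n)) \le \exp(-\log^{30}n)$ per edge. A union bound over the at most $\tau^2\le L$ edges costs a factor $L=n^{O(1)}\le \exp(\log^2 n)$, still leaving $\exp(-\log^{20}n)$ overall, as desired. (One should double-check the constants: we need $\Omega(\log^{40}n) - O(\log n) \ge \log^{20}n$, which holds for large $n$.)

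The main obstacle I anticipate is making the block-independence argument fully rigorous: the events ``block $b$ traverses $e$'' are not independent, so the clean bound $(1-p)^{\#\text{blocks}}$ requires a martingale/conditioning argument. The right tool is to define, for each block $b$, an indicator $Y_b$ that block $b$ (a constant-length sub-walk) both \emph{starts at} $\ustart$ and traverses $e$, and to argue that $\E[Y_b \mid \text{history}] \ge c/\tau^2$ whenever the walk is at $\ustart$ at the block's start — but the walk need not be at $\ustart$ at every block boundary. A cleaner framing: consider the subsequence of times the walk is at $\ustart$; since the walk returns to $\ustart$ with probability $\ge 1/\tau^2$ every $4$ steps (Markov property, uniformly over history), a Chernoff/Azuma argument (as in \autoref{cor:azuma-hoeffding}) shows the walk visits $\ustart$ at least $\Omega(L/\tau^2) = \Omega(\log^{40}n)$ times in $L$ steps except with probability $\exp(-\Omega(\log^{40}n))$; and conditioned on each visit to $\ustart$, the next two steps traverse $e$ with probability $\ge 1/(2\tau^2)$ independently of the past (again by the Markov property), so another application of \autoref{cor:azuma-hoeffding} over these $\Omega(\log^{40}n)$ independent trials shows $e$ is traversed with probability $1-\exp(-\Omega(\log^{40}n))$. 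Chaining these two high-probability events and union-bounding over all $\le L$ target edges completes the proof.
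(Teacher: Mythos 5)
Your middle paragraph is, in essence, the paper's own proof, and it is already complete as stated: because the walk alternates between the two layers and every $V_2$-vertex has an out-edge to \emph{every} $V_1$-vertex, each two-step window beginning at an even time traverses a fixed present edge $(V_{1,i},V_{2,j})$ with probability at least $1/(\tau(\tau+1))$ conditioned on the entire history, so the failure probability for that edge telescopes to $(1-1/(\tau(\tau+1)))^{\lfloor L/2\rfloor}\le \exp(-\Omega(L/\tau^2))=\exp(-\Omega(\log^{40}n))$, and a union bound over the at most $\tau^2\le n$ one-bits gives the claim. No independence or martingale machinery is needed for this step; the product of uniform conditional bounds is exact. (The paper phrases the windows as ``independent,'' but the uniform conditional bound is the honest statement, and you identified that correctly.)

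The genuine problem is your final paragraph, which you offer as the rigorous completion but which does not work. Restricting attention to excursions that begin at $\ustart$ loses a factor of $\tau$: take $X$ all-ones, so every $V_1$-vertex has out-degree $\tau+1$; then the walk is at $\ustart$ on only about a $1/(\tau+1)$-fraction of its $V_2$-visits, i.e.\ about $L/(2\tau)$ times, and each such visit traverses a fixed edge in the next two steps with probability only about $1/\tau^2$. The expected number of traversals charged to $\ustart$-renewals is therefore about $L/\tau^3=\log^{60}n/\sqrt{L}$, which is below $1$ as soon as $L\gg\log^{120}n$ (and $L$ may be as large as $n$), so no concentration inequality can rescue that accounting. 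In particular, your final step --- $\Omega(\log^{40}n)$ trials each succeeding with probability $1/(2\tau^2)$ implies traversal with probability $1-\exp(-\Omega(\log^{40}n))$ --- is false, since $\log^{40}n/\tau^2\ll1$ for large $\tau$. Separately, the first half of that framing also overreaches: \autoref{cor:azuma-hoeffding} is an additive bound, and applied with per-block probability $1/\tau^2$ over $L/4$ blocks it only yields deviations of the form $\exp(-\Omega(L/\tau^4))$, which is vacuous for large $L$. The fix is simply to discard the renewal-at-$\ustart$ reformulation and keep your block argument over all even-time windows (using that every $V_2$-vertex, not just $\ustart$, reaches any prescribed $V_1$-vertex with probability exactly $1/\tau$), which is exactly what the paper does.
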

	\begin{proof}
		We will first bound the probability that $\Xrec(W)_{i,j} \ne X_{i,j}$ for each $(i,j) \in [\tau]^2$ and then apply a union bound. Fix $(i,j) \in [\tau]^2$, if $X_{i,j} = 0$, since there is no edge from $V_{1,i}$ to $V_{2,j}$, clearly $\Xrec(W)_{i,j}$ is always $0$ as well. Hence we only need to consider the case that $X_{i,j} = 1$.
		
		In this case, one can observe that for every two steps, the random walk visits the edge between $V_{1,i}$ to $V_{2,j}$ with probability at least $1/(\tau + 1)^2$. And moreover, all these events are independent. 
		
		Hence, a random walk with $L = \tau^2 \cdot \log^{40} n$ steps visits the edge between $V_{1,i}$ to $V_{2,j}$ with probability
		\[
		1 - (1 - 1/(\tau + 1)^2)^{L/2} \ge 1 - \exp(-\Omega(1/\tau^2 \cdot L)) \ge 1 - \exp(-\Omega(\log^{40} n)).
		\]
		
		The claim then follows from a union bound.
	\end{proof}

	Finally, since our streaming algorithm $\algo$ has space complexity $n \cdot \sqrt{L} / \log^{100} n < m \ell / \log^{20} m$ and sampling error at most $1 - 1/\log^{10} n$. Protocol $\Pi$ also has communication complexity less than $m \ell / \log^{20} m$, and success probability at least $1/\log^{10} n - \exp(-\log^{20} n)$, which contradicts~\autoref{lemma:lowb-INDEX}.
\end{proofof} 
\end{document}